\newcommand{\hemant}[1]{\textcolor{black}{#1}}
\newcommand{\steph}[1]{\textcolor{black}{#1}}
\newtheorem{theorem}{Theorem}
\newtheorem{claim}{Claim}
\newtheorem{corollary}{Corollary}
\newtheorem{definition}{Definition}
\newtheorem{lemma}{Lemma}
\newtheorem{proposition}{Proposition}
\newtheorem{remark}{Remark}
\numberwithin{equation}{section}
\newcommand{\norm}[1]{\|{#1}\|}
\newcommand{\abs}[1]{|{#1}|}
\newcommand{\set}[1]{\left\{{#1}\right\}}
\newcommand{\est}[1]{\widehat{#1}}
\newcommand{\matR}{\ensuremath{\mathbb{R}}}
\newcommand{\matZ}{\ensuremath{\mathbb{Z}}}
\newcommand{\ftil}{\ensuremath{\widetilde{f}}}
\newcommand{\noiseint}{\ensuremath{\eta'}}
\newcommand{\Ahat}{\ensuremath{\est{A}}}
\newcommand{\Bhat}{\ensuremath{\est{B}}}
\newcommand{\Uhat}{\ensuremath{\est{U}}}
\newcommand{\noisemax}{\ensuremath{\eta_{\mathrm{max}}}}
\newcommand{\umax}{\ensuremath{u_{\max}}}
\newcommand{\umin}{\ensuremath{u_{\min}}}
\newcommand{\urel}{\ensuremath{u_{\mathrm{rel}}}}
\newcommand{\condnum}{\ensuremath{\kappa}}
\newcommand{\sep}{\ensuremath{\triangle}}
\newcommand{\sigmamin}{\ensuremath{\sigma_{\text{min}}}}
\newcommand{\sigmamax}{\ensuremath{\sigma_{\text{max}}}}
\newcommand{\chord}{\ensuremath{\chi}}
\newcommand{\mdist}{\ensuremath{\text{md}_{\chord}}}
\newcommand{\kerfn}{\ensuremath{g}}
\newcommand{\kerfnvar}{\ensuremath{\mu}}
\newcommand{\kerfour}{\ensuremath{\bar{g}}}
\newcommand{\perm}{\ensuremath{\phi}}
\newcommand{\diag}{\text{diag}}
\newcommand{\mup}[1]{\ensuremath{M_{{#1},\mathrm{up}}}}
\newcommand{\fnoise}{\ensuremath{\ftil}}
\newcommand\convloin{\hspace{3mm}\mathrel{\mathop{\kern 0pt \put(-14,3){\vector(1,0){28}}}
		\limits_{n\to +\infty}^{loi}}\hspace{3mm}}
\newcommand\convpst{\hspace{3mm}\mathrel{\mathop{\kern 0pt \put(-14,3){\vector(1,0){28}}}
		\limits_{t\to +\infty}^{p.s.}}\hspace{3mm}}
\title{Multi-kernel unmixing and super-resolution using the Modified Matrix Pencil method}
\author{St\'ephane Chr\'etien\thanks{National Physical Laboratory, Teddington, UK and the Alan Turing Institure, London, UK.}\\ \texttt{stephane.chretien@npl.co.uk}
\and Hemant Tyagi\thanks{\hemant{INRIA Lille-Nord Europe, France. This work was done by the author while affiliated to the 
Alan Turing Institute, London, UK, and School of Mathematics, University of Edinburgh, Edinburgh, UK. 
This author's work was supported by EPSRC grant EP/N510129/1.}}\\ \texttt{hemant.tyagi@inria.fr}} 
\begin{document}
\maketitle

\begin{abstract}
Consider $L$ groups of point sources or spike trains, with the $l^{\text{th}}$ group represented by $x_l(t)$.
For a function $g:\matR \rightarrow \matR$, let $g_l(t) = g(t/\kerfnvar_l)$ denote a point spread function with 
scale $\kerfnvar_l > 0$, and with $\kerfnvar_1 < \cdots < \kerfnvar_L$. 
With $y(t) = \sum_{l=1}^{L} (g_l \star x_l)(t)$, our goal is to recover the 
source parameters given samples of $y$, or given the Fourier samples of $y$. This problem is a generalization of 
the usual super-resolution setup wherein $L = 1$; we call this the multi-kernel unmixing super-resolution problem. 
Assuming access to Fourier samples of $y$, we derive an algorithm for this problem for estimating the source parameters 
of each group, along with precise non-asymptotic guarantees. Our approach involves estimating the group parameters sequentially 
in the order of increasing scale parameters, i.e., from group $1$ to $L$. In particular, the estimation process at stage 
$1 \leq l \leq L$ involves (i) carefully sampling the tail of the Fourier transform of $y$, 
(ii) a \emph{deflation} step wherein we subtract the contribution of the groups processed thus far from the obtained Fourier samples, 
and (iii) applying Moitra's modified Matrix Pencil method on a deconvolved version of the samples in (ii). 
\end{abstract}

{\noindent \bf Key words:} Matrix Pencil, super-resolution, unmixing kernels, mixture models, sampling, approximation, signal recovery. \\\\
{\noindent \bf Mathematics Subject Classifications (2010):} 15B05, 42A82, 65T99, 65F15, 94A20. \\

\section{Introduction} \label{sec:intro}
\subsection{Background on super-resolution}
Super-resolution consists of estimating a signal $x$, given blurred observations 
obtained after convolution with a point spread function $g$ which is assumed to represent the 
impulse response of the measurement system, such as for e.g., a microscope in high density 
single molecule imaging. Mathematically, $x$ is typically modeled as a superposition of $K$ Dirac's, 
i.e., a \emph{sparse} atomic measure of the form
\begin{equation*} 
x(t) = \sum_{i=1}^K u_i \delta(t-t_i); \quad u_i \in \mathbb{C}, t_i \in [0,1),
\end{equation*}
while $g$ is a low pass filter. Denoting 
\begin{equation} \label{eq:y_conv_exp}
y(t) = \sum_{i=1}^K u_i g(t-t_i)
\end{equation} 
to be the convolution of $x$ and $g$, one is usually given information about $x$ either as samples of $y$, 
or the Fourier samples of $y$.
This problem has a number of important applications arising for instance in geophysics \cite{khaidukov04}, 
astronomy \cite{puschmann05}, medical imaging \cite{greenspan09} etc. The reader is referred to \cite{candes2014towards} 
for a more comprehensive list of applications.
Super-resolution can be seen as a highly non-trivial ``off the grid'' extension of the finite dimensional sparse estimation problem in compressed sensing \cite{foucart2013mathematical}, \cite{eldar2012compressed} and statistics \cite{buhlmann2011statistics}. In the new setting, instead of estimating a sparse vector in a finite dimensional space, the goal is to estimate a sparse measure over the real line $\mathbb R$ endowed with its Borel $\sigma$-algebra.

Recently, the problem of super-resolution has received a great deal of interest in the signal processing 
community, triggered by the seminal work of Cand\`es and Fernandez-Granda \cite{candes2014towards,candes2013super}. 
They considered the setup where one is given the first few Fourier coefficients of $x$, 
i.e., for a cut-off frequency $m \in \matZ^{+}$, we observe $f(s) \in \mathbb{C}$ where 
\begin{equation} \label{eq:four_meas_model}
f(s) = \int_{0}^{1} e^{\iota 2\pi st} x(dt) = \sum_{i=1}^K u_i e^{\iota 2\pi s t_i}; \quad s \in \set{-m,-m+1,\dots,m}.
\end{equation}
Note that this corresponds to taking $g(t) = 2m\ \text{sinc}(2mt)$ in \eqref{eq:y_conv_exp}, 
and sampling the Fourier transform of $y$ on the regular grid $\set{-m,-m+1,\dots,m}$.

\subsubsection{Total variation and atomic norm-based approaches}

\hemant{In \cite{candes2014towards}, the authors consider the noiseless setting and propose solving an optimization problem over the space of measures which involves 
minimizing the total variation (TV) norm amongst all measures which are consistent with the observations.  
The resulting minimization problem is an infinite dimensional convex program over Borel measures on $\matR$.
It was shown that the dual of this problem can be formulated as a semi-definite program (SDP) 
in finitely many variables, and thus can be solved in polynomial computational time. Since then, there have been numerous 
follow-up works such as by Schiebinger et al. \cite{schiebingersuperresolution}, 
Duval and Peyre \cite{duval2015exact}, Denoyelle et al. \cite{denoyelle2015asymptotic}, 
Bendory et al. \cite{bendory2016robust}, Aza\"is et al. \cite{castro15} and many others. 
For instance, \cite{schiebingersuperresolution} considers the noiseless setting by taking real-valued 
samples of $y$ with a more general choice of $g$ (such as a Gaussian) and also assumes $x$ to be non-negative. 
Their proposed approach again involves TV norm minimization with linear constraints. 
Bendory et al. \cite{bendory2016robust} consider $g$ to be Gaussian or Cauchy, do not place sign assumptions on 
$x$, and also analyze TV norm minimization with linear fidelity constraints for estimating $x$ from noiseless samples of $y$. 
The approach adopted in \cite{duval2015exact,denoyelle2015asymptotic} is to solve a 
least-squares-type minimization procedure with a TV norm based penalty term (also referred to as the Beurling 
LASSO (see for e.g., \cite{castro15})) for recovering $x$ from samples of $y$.  
The approach in \cite{duval2017sparse} considers a natural finite approximation on the grid to the 
continuous problem, and studies the limiting behaviour as the grid becomes finer; see also \cite{duval2017sparse2}.}

\hemant{From a statistical view point, Cand\`es and Fernand\`ez-Granda \cite{candes2014towards} showed that their 
approach exactly recovers $x$ in the noiseless case provided $m \geq 2/\sep$, where $\sep$ denotes the minimum separation 
between the spike locations. Similar results for other choices of $g$ were shown by Schiebinger 
et al. \cite{schiebingersuperresolution} (for positive measures and without 
any minimum separation condition), and by Bendory et al. \cite{bendory2016robust} (for 
signed measures and with a separation condition).   
In the noisy setting, the state of affairs is radically different since it 
is known (see for e.g., \cite[Section 3.2]{candes2014towards}, \cite[Corollary 1.4]{moitra15}) that some separation between the spike locations is indeed necessary 
for stable recovery. When sufficient separation is assumed and provided the noise level is small enough,  
then stable recovery of $x$ is possible (see for e.g., \cite{grandanoisy13,duval2015exact,denoyelle2015asymptotic,castro15}).}

\hemant{Recently also, Tang et al. \cite{tang2013compressed,tang_minimax15} studied approaches based on atomic norm minimization, 
which can be formulated as a SDP. In \cite[Theorem 1.1]{tang2013compressed}, the authors considered the signs of the 
amplitudes of $u_j$ to be generated randomly, with noiseless samples. 
It was shown that if $m \geq 2/\sep$, and if the Fourier samples are obtained at $n = \Omega(K \log K \log m)$ 
indices selected uniformly at random from $\set{-m,\dots,m}$, then $x$ is recovered exactly with high probability.  
In \cite[Thorem 1]{tang_minimax15}, the authors considered Gaussian noise in the samples and showed that if $m \geq 2/\sep$, 
then the solution returned by the SDP estimates the vector of original Fourier samples (i.e., $(f(-m),\dots,f(m))^T$) at a 
mean square rate $O(\sigma^2 K \frac{\log m}{m})$, with $\sigma^2$ denoting variance of noise. Moreover, they also 
show \cite[Theorem 2]{tang_minimax15} that the spike locations and amplitude terms corresponding to the SDP 
solution are localized around the true values.} 

\hemant{From a computational perspective, the aforementioned approaches all admit a finite dimensional 
dual problem with an infinite number of linear constraints; this is a semi infinite program (SIP) for which 
there is an extensive literature \cite{SIP93}.
For the particular case of non-negative $x$, Boyd et al. \cite{boyd2017alternating} proposed an improved 
Frank-Wolfe algorithm in the primal. In certain instances, for e.g., with Fourier 
samples (such as in \cite{candes2014towards,candes2013super}), this SIP can also be reformulated as a SDP.
From a practical point of view, SDP is notoriously slow for moderately large number of variables.  
The algorithm of \cite{boyd2017alternating} is a first order scheme with potential local 
correction steps, and is practically more viable. }

\subsubsection{Prony, ESPRIT, MUSIC and extensions} \label{subsubsec:prony_intro_disc}
\hemant{When one is given the first few Fourier samples of the spike train $x$ (i.e., \eqref{eq:four_meas_model}), 
then there exist other approaches that can be employed. Prony's method \cite{prony} is a classical method that 
involves finding the roots of a polynomial, whose coefficients form a null vector of a certain Hankel matrix.
Prony's method and its variants have also been recently studied by Potts and Tasche (for e.g., \cite{potts2010parameter,potts13}), 
Plonka and Tasche \cite{plonka2014prony}, and others. 
The matrix pencil (MP) method \cite{mpmethod} is another classical approach that involves computing 
the generalized eigenvalues of a suitable matrix pencil. Both these 
methods recover $x$ exactly in the absence of noise provided $m \geq K$ (so $2K+1$ samples), 
and are also computationally feasible. Recently, Moitra \cite[Theorem 2.8]{moitra15} showed that the 
MP method is stable in the presence of noise provided the noise level is not too large, and $m > \frac{1}{\sep} + 1$. 
Moitra also showed \cite[Corollary 1.4]{moitra15} that such a dependence is necessary, in the 
sense that if $m < (1-\epsilon)/\sep$, then the noise level would have to be $O(2^{-\epsilon K})$ to be able to stably recover $x$.
Very similar in spirit to the MP method for sum of exponential estimation are the Prony-like methods ESPRIT and MUSIC which can also be  used for spike train estimation using the same Fourier domain measurement trick as in \cite{moitra15}. These methods can be interpreted as model reduction methods based on low-rank approximation of Hankel matrices \cite{markovsky2008structured}. The ESPRIT method  was studied in \cite{potts2017error} based on previous results\footnote{The estimation error on the $t_i$'s, $i=1,\ldots,K$, can be deduced from \cite[Theorem 5.1]{potts2017error}, while the error in the coefficients $u_i$, $i=1,\ldots,K$ comes as a result of a perturbation analysis based on the condition number of the Vandermonde matrix associated with the frequencies.} from \cite{bazan2003error}. Another line of research is about relationships with \hemant{AAK theory of Adamjan, Arov, and Krein} as developed in  \cite{plonka2014prony} (\hemant{see also \cite{plonka2016application}}). The MUSIC method was also studied in great detail in \cite[Theorem 4]{liao2016music}. The modified Matrix Pencil method, on the one hand, is often considered as less computationally expensive than MUSIC and, on the other hand, is amenable to an error analysis quite similar to the one in \cite{potts2017error}.} 

\subsection{Super-resolution with multiple kernels}
In this paper, we consider a generalization of the standard super-resolution problem by assuming that the measurement process 
now involves a \textit{superposition} of convolutions of several spike trains with \textit{different} point spread functions. 
This problem, which we call the ``\textit{multi-kernel unmixing super-resolution problem}'' 
appears naturally in many practical applications such as single molecule 
spectroscopy \cite{huang20153d}, spike sorting in neuron activity analysis \cite{lewicki1998review,brown2004multiple}, 
DNA sequencing \cite{li2002dna,li2004deconvolution}, spike hunting in 
galaxy spectra \cite{brutti2005spike,liu2007estimating} etc. 

A problem of particular interest at the National Physical Laboratory, the UK's national metrology institute, 
is isotope identification \cite{lu2014photopeak,stinnett2016automated} which is of paramount 
importance in nuclear security. Hand-held radio-isotope identifiers are known to suffer from 
poor performance \cite{stinnett2016automated} and new and precise algorithms have to be devised 
for this problem. While it is legitimately expected for Radio Isotope Identifier Devices to be 
accurate across all radioactive isotopes, the US Department of Homeland Security requires all 
future identification systems to be able to meet a minimum identification standard set forth in ANSI N42.34. Effective identification from low resolution information is known to be reliably achievable by trained spectroscopists whereas automated identification using standard algorithms sometimes fails up to three fourth of the time \cite{stinnett2014bayesian}. For instance, the spectrum of $^{232}Th$ is plotted in Figure \ref{fig:spectrumTh232}, which is extracted from \cite[p.9]{stinnett2014bayesian}. 
\begin{figure}
    \centering
    \includegraphics[scale = 0.4]{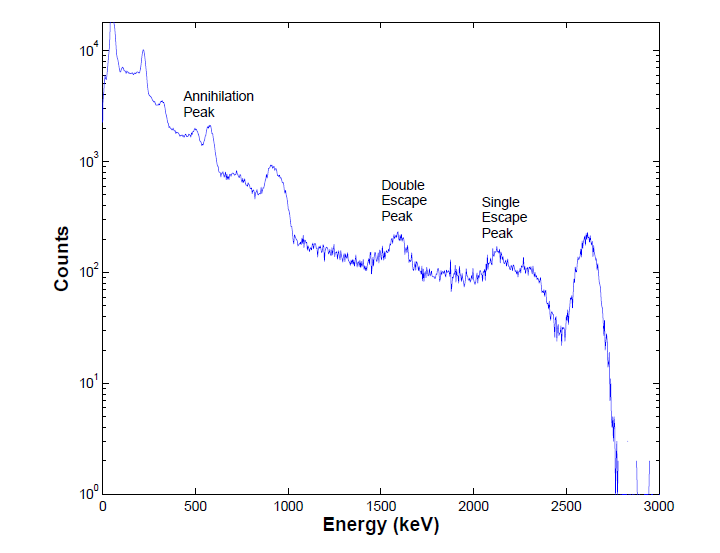}
    \caption{$^{232}Th$ spectrum showing escape peaks and annihilation peaks \cite[p.9]{stinnett2014bayesian}}
    \label{fig:spectrumTh232}
\end{figure}
Isotope identification involves the estimation of a set of peak locations in the gamma spectrum where the signal is blurred by convolution with kernels of different window sizes. Mixtures of different spectra can be particularly difficult to analyse using traditional methods and a need for precise unmixing algorithms in such generalised super-resolution problems may play an important role in future applications such as reliable isotope identification.

Another application of interest is DNA sequencing in the vein of \cite{li2002dna}. Sequencing is usually performed using some of the variants of the enzymatic method invented by Frederick Sanger \cite{adams2012automated}. Sequencing is based on enzymatic reactions, electrophoresis, and some detection technique.
Electrophoresis is a technique used to separate the DNA sub-fragments produced as the output of four specific chemical reactions, as described in more detail in \cite{li2002dna}. DNA fragments are negatively charged in solution. An 
efficient color-coding strategy has been developed to permit sizing of all four kinds of DNA sub-fragments by electrophoresis in a single lane of a slab gel or in a capillary. In each of the four chemical reactions, the primers are labeled by one of four different fluorescent dyes. The dyes are then excited by a laser based con-focal fluorescent detection system, in a region within the slab gel or capillary. In that process, fluorescence intensities are emitted in four wavelength bands as shown with different color codes in Figure \ref{fig:DNA} below. However, quoting \cite{berno1996graph}, ``\emph{because the electrophoretic process often fails to
separate peaks adequately, some form of deconvolution filter must be applied to the data to resolve
overlapping events. This process is complicated by the variability of peak shapes, meaning
that conventional deconvolution often fails.}'' 
The methods developed in the present paper aim at achieving accurate deconvolution with different 
peak shapes and might therefore be useful for practical deconvolution problems in DNA sequencing. 
\begin{figure}
    \centering
    \includegraphics[width=15cm]{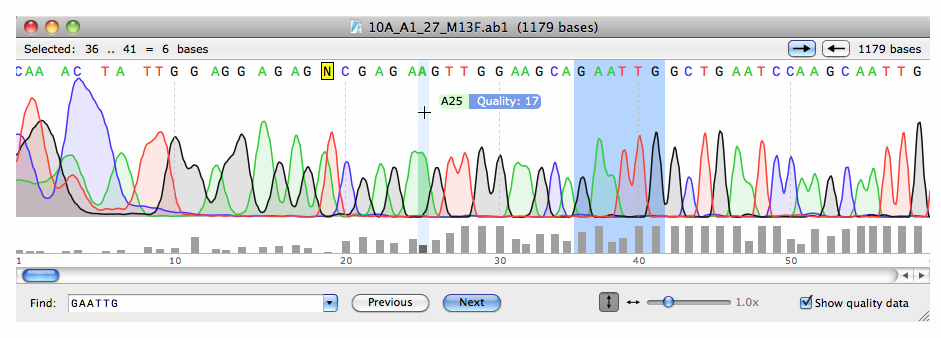}
    \caption{Capture of the result of a sequencing using SnapGene\_Viewer \cite{snapgene}.}
    \label{fig:DNA}
\end{figure}

We will now provide the mathematical formulation of the problem and describe the 
main idea of our approach along with our contributions, and discuss related work for this problem.

\subsection{Problem formulation} \label{subsec:prob_form}
Say we have $L$ groups of point sources where $\set{t_{l,i}}_{i=1}^{K} \subset [0,1)$ and 
$(u_{l,i})_{i=1}^{K}$ (with $u_{l,i} \in \mathbb{C}$) denote the locations and (complex-valued) amplitudes 
respectively of the sources in the $l^{th}$ group. Our signal of interest is defined as
\begin{align*}
x(t) = \sum_{l=1}^L x_l(t) = \sum_{l=1}^L \left(\sum_{j=1}^K \ u_{l,j} \delta(t-t_{l,j})\right).
\end{align*}	
Let $\kerfn \in L_1(\matR)$ be a positive definite function\footnote{Recall from \cite[Theorem 6.11]{wendland04} 
that a continuous function $g \in L_1(\matR)$ is positive definite if and only if $g$ is bounded and 
its Fourier transform is nonnegative and non vanishing.} with its Fourier transform 
$\kerfour(s) = \int_{\mathbb R} \kerfn(t) \exp \left( \iota 2\pi st\right) dt$ for $s \in \matR$. 
Consider $L$  distinct kernels $\kerfn_l(t) = \kerfn(t/\kerfnvar_l)$, $l=1,\ldots,L$
where $0 < \kerfnvar_1 < \cdots < \kerfnvar_L$.
Let $y(t) = \sum_{l=1}^L (\kerfn_l \star x_l)(t) = \sum_{l=1}^L \sum_{j=1}^K u_{l,j} \ \kerfn_l(t-t_{l,j})$ 
where $\star$ denotes the convolution operator.
Let $f$ denote the Fourier transform of $y$, i.e., 
$f(s) = \int_{\mathbb R} y(t) \exp \left( \iota 2\pi st\right) dt$. Denoting the Fourier transform 
of $\kerfn_l$ by $\kerfour_l$, we get
%
%
\begin{align} \label{eq:fourier_gauss_mix} 
f(s) & = \sum_{l=1}^L \underbrace{\kerfour_l(s) \left( \sum_{j=1}^K u_{l,j} \exp \left(\iota 2\pi st_{l,j}\right)\right)}_{f_{l}(s)}. 
\end{align}	
%
%
Assuming black box access to the complex valued function $f$, our aim is to recover estimates 
of $\set{t_{l,i}}_{i=1}^{K}$ and $(u_{l,i})_{i=1}^{K}$ for each $l = 1,\dots,L$ from 
few \textit{possibly noisy} samples $\tilde f(s) = f(s) + w(s)$ of $f$. Here, $w(s) \in \mathbb{C}$ 
denotes measurement noise at location $s$. \hemant{We remark that the choice of having $K$ summands for each 
group is only for ease of exposition, one can more generally consider $K_l \leq K$ summands for each $l = 1,\dots,L$.}

\paragraph{Gaussian kernels.} 
For ease of exposition, we will from now on consider $g$ to be a Gaussian, i.e., $g(t) = \exp(-t^2/2)$ so that 
$\kerfn_l(t) = \exp(-t^2/(2 \kerfnvar_l^2)$, $l=1,\ldots,L$. 
It is well known that $g$ is positive definite \cite[Theorem 6.10]{wendland04}, 
moreover, $\kerfour_l(s) = \sqrt{2\pi} \kerfnvar_l \exp(-2\pi^2 s^2 \kerfnvar_l^2)$.
We emphasize that our restriction to Gaussian kernels is only 
to minimize the amount of tedious computations in the proof. However, our proof technique 
can \hemant{likely} be extended to handle more general positive definite $g$ \hemant{possessing a strictly positive Fourier transform. Examples of such functions are: (a)  (Laplace kernel) $g(t) = \exp(-\abs{t})$, and (b) (Cauchy kernel) $g(t) = \frac{1}{1+t^2}$.}
%
%
\subsection{Main idea of our work: Fourier tail sampling} \label{subsec:main_idea}
To explain the main idea, let us consider the noiseless setting $w(s) = 0$. 
Our main algorithmic idea stems from observing \eqref{eq:fourier_gauss_mix}, 
wherein we notice that for $s$ sufficiently large, $f(s)$ is equal to $f_1(s)$ plus a perturbation term 
arising from the tails of $f_2(s),\dots,f_{L}(s)$. 
Thus, $f(s)/\kerfour_1(s)$ is equal to \hemant{$\sum_{j=1}^K u_{1,j} \exp \left(\iota 2\pi st_{1,j}\right)$} (which is 
a weighted sum of complex exponentials) plus a perturbation term. 
We control this perturbation by choosing $s$ to be sufficiently large, and recover estimates 
$\widehat t_{1,j}, \widehat u_{1,j}$ (up to a permutation $\perm_1$) 
via the Modified Matrix Pencil (MMP) method of Moitra \cite{moitra15} (outlined as Algorithm \ref{algo:mmp_method}).
Given these, we form the estimate $\widehat f_1(s)$ to $f_1(s)$ where
\begin{align*}
\widehat f_1(s) =\kerfour_1(s) \sum_{j=1}^K \widehat u_{1,\perm_1(j)} \exp(\iota 2 \pi s \widehat t_{1,\perm_1(j)}).
\end{align*} 
Provided the estimates are good enough, we will have $f(s) - \widehat f_1(s) \approx \sum_{l=2}^{L} f_l(s)$. Therefore, by  
applying the above procedure to $f(s) - \widehat f_1(s)$, we can hope to recover estimates of  
$t_{2,j}$'s and $u_{2,j}$'s as well. By proceeding recursively, it is clear that we can perform the above procedure to 
recover estimates to each $\set{t_{l,i}}_{i=1}^{K} \subset [0,1)$ and $(u_{l,i})_{i=1}^{K}$ for all $l=1,\dots,L$. 
A delicate issue that needs to be addressed for each intermediate group $1 < l < L$ is the following. 
While estimating the parameters for group $1 < l < L$, the samples $\frac{f(s) - \sum_{i=1}^{l-1} \widehat f_i(s)}{\kerfour_l(s)}$ 
that we obtain will have perturbation arising due to (a) the tails of $f_{l+1}(s), \dots, f_{L}(s)$ and, 
(b) the estimates $\widehat f_{1}(s),\dots,\widehat f_{l-1}(s)$ computed thus far.  
In particular, going ``too deep'' in to the tail of $f_l(s)$ would blow up 
the perturbation term in (b), while not going sufficiently deep would blow up the perturbation term in (a). 
Therefore, in order to obtain stable estimates to the parameters $t_{l,j}, u_{l,j}$, we will need to control 
these perturbation terms by carefully choosing the locations, as well as the number of sampling points in the tail.

\paragraph{Further remarks.} Our choice of using the MMP method for estimating the spike locations and amplitudes at 
each stage is primarily dictated by two reasons. Firstly, it is extremely simple to implement in 
practice. Moreover, it comes with precise quantitative error bounds (see Theorem \ref{thm:mmp_main_thm}) 
for estimating the source parameters -- especially for the setting 
of adversarial bounded noise -- which fit seamlessly in the theoretical analysis of our method. 
\hemant{Of course, in practice, other methods such as ESPRIT, MUSIC   
could also be used, and as discussed in Section \ref{subsubsec:prony_intro_disc}, there also exist some error bounds in the literature for these methods. To our knowledge, error bounds of the form in 
Theorem \ref{thm:mmp_main_thm} do not currently exist for these other methods. In any case, a full discussion in this regard is outside the scope of the paper, and analyzing the performance of our algorithm with methods other than the MMP 
method is a direction for future work.}


\subsection{Main results}  \label{subsec:main_res}
Our algorithm based on the aforementioned idea is outlined as Algorithm \ref{algo:mmp_gaussian_unmix} which we name 
KrUMMP (Kernel Unmixing via Modified Matrix Pencil). \hemant{At each stage $l = 1,\dots,L$, we choose a ``sampling offset'' $s_l$ and obtain the (potentially noisy) Fourier samples at $2 m_l$ locations $s_l +i$ for $i=-m_l\dots,m_l - 1$.} Our main result 
for the noiseless setting ($w \equiv 0$) is stated as Theorem \ref{thm:gen_case_main} in its full generality. 
We state its following informal version assuming the spike amplitudes in each group to 
be\footnote{\hemant{The symbols $\lesssim$  and $\asymp$ are used to hide positive constants, see Section \ref{sec:prob_setup_prelim}.}} \hemant{$\asymp 1$}. Note that $d_w:[0,1]^2 \rightarrow [0,1/2]$ is the usual wrap around distance 
on $[0,1]$ (see \eqref{eq:wrap_arnd_dist}).
%
\begin{theorem}[Noiseless case] \label{thm:main_noiseless_informal}
Denote $\sep_l := \min_{i,j} d_w(t_{l,i},t_{l,j}) > 0$ for each $1 \leq l \leq L$. 
Let $0 < \varepsilon_1 \leq \varepsilon_2 \leq \cdots \leq \varepsilon_L$ satisfy 
$\varepsilon_l \lesssim \sep_l/2$ for each $l$. Moreover, let $\varepsilon_{L-1} \lesssim \alpha \varepsilon_L^2$ 
and $\varepsilon_l \lesssim \beta_l(\varepsilon_{l+1})^{2(1+\gamma_l)}$ hold for $1 \leq l \leq L-1$ with 
$\alpha,\beta_l,\gamma_l > 0$ depending on the problem parameters (see \eqref{eq:eps_main_alpha}, \eqref{eq:eps_main_beta}). 
Finally, in Algorithm \ref{algo:mmp_gaussian_unmix}, let $m_L \asymp 1/\sep_L$, $s_L = 0$, and 
\begin{equation*}
m_l \asymp 1/\sep_l, \quad 
s_l \asymp m_l + \frac{1}{(\kerfnvar_{l+1}^2 - \kerfnvar_l^2)^{1/2}} 
\log^{1/2}\left(\frac{K^{3/2}(L-l)\kerfnvar_L}{\varepsilon_l\kerfnvar_l}\right) ; \quad 1 \leq l \leq L-1.
\end{equation*}
Then, for each $l=1,\dots,L$, there exists a permutation $\perm_l: [K] \rightarrow [K]$ such that 
\begin{align*}
d_w(\widehat t_{l,\perm_l(j)}, t_{l,j}) &\leq \varepsilon_l, \quad 
\abs{\widehat u_{l,\perm_l(j)} - u_{l,j}} \lesssim E_l(\varepsilon_l) \varepsilon_l; \quad j=1,\dots,K.
\end{align*}
Here, $E_L(\varepsilon_L) \lesssim \frac{K}{\sep_L}$,  
$E_l(\varepsilon_l) \lesssim \left(\frac{K}{\sep_l} + 
\frac{1}{(\kerfnvar_{l+1}^2 - \kerfnvar_l^2)^{1/2}} \log^{1/2}\left(\frac{K^{3/2}(L-l)\kerfnvar_L}{\varepsilon_l\kerfnvar_l}\right)\right)$ for $1 \leq l \leq L-1$.
\end{theorem}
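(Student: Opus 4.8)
The plan is to argue by induction on the stage index $l=1,\dots,L$, showing that when Algorithm~\ref{algo:mmp_gaussian_unmix} reaches stage $l$ it is effectively handed --- after deflation and deconvolution --- a noisy sum of $K$ complex exponentials to which Moitra's guarantee (Theorem~\ref{thm:mmp_main_thm}) applies with an effective noise level kept small by the choices of $m_l,s_l$ and by the recursive conditions on the $\varepsilon_l$. Fix $l$, and assume inductively that for every $l'<l$ we already have $\perm_{l'}$, $\est t_{l',j}$, $\est u_{l',j}$ with $d_w(\est t_{l',\perm_{l'}(j)},t_{l',j})\le\varepsilon_{l'}$, $\abs{\est u_{l',\perm_{l'}(j)}-u_{l',j}}\le E_{l'}(\varepsilon_{l'})$, together with the associated estimates $\est f_{l'}$. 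Evaluating \eqref{eq:fourier_gauss_mix} at $s=s_l+p$, $p=0,\dots,2m_l$, dividing by $\kerfour_l(s)=\sqrt{2\pi}\,\kerfnvar_l\exp(-2\pi^2 s^2\kerfnvar_l^2)$ and subtracting $\sum_{l'<l}\est f_{l'}(s)/\kerfour_l(s)$ yields, for each $p$, the quantity $\sum_{j=1}^{K}v_{l,j}e^{\iota 2\pi p t_{l,j}}+\delta_l(s_l+p)$ with $v_{l,j}:=u_{l,j}e^{\iota 2\pi s_l t_{l,j}}$: a sum of $K$ exponentials in $p$ with frequencies $t_{l,j}$ (separation $\sep_l$), amplitudes $v_{l,j}$ of modulus $\abs{u_{l,j}}$, and perturbation
\begin{equation*}
\delta_l(s)=\sum_{l'>l}\frac{\kerfour_{l'}(s)}{\kerfour_l(s)}\sum_{j}u_{l',j}e^{\iota 2\pi s t_{l',j}}\;+\;\sum_{l'<l}\frac{f_{l'}(s)-\est f_{l'}(s)}{\kerfour_l(s)}.
\end{equation*}
Thus everything reduces to bounding $\max_{0\le p\le 2m_l}\abs{\delta_l(s_l+p)}$ by the noise threshold of Theorem~\ref{thm:mmp_main_thm} corresponding to target accuracy $\varepsilon_l$, a threshold that scales linearly in $\varepsilon_l$ up to problem-dependent factors.

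\emph{The tail of later groups.} For $s\ge s_l$ one has $\kerfour_{l'}(s)/\kerfour_l(s)=(\kerfnvar_{l'}/\kerfnvar_l)\exp(-2\pi^2 s^2(\kerfnvar_{l'}^2-\kerfnvar_l^2))$, and since $\kerfnvar_{l'}^2-\kerfnvar_l^2\ge\kerfnvar_{l+1}^2-\kerfnvar_l^2$ for $l'>l$, the first sum in $\delta_l$ is at most $(L-l)(\kerfnvar_L/\kerfnvar_l)K\umax\exp(-2\pi^2 s_l^2(\kerfnvar_{l+1}^2-\kerfnvar_l^2))$. The prescribed $s_l=\Theta\!\big(m_l+(\kerfnvar_{l+1}^2-\kerfnvar_l^2)^{-1/2}\log^{1/2}(K^{3/2}(L-l)\kerfnvar_L/(\varepsilon_l\kerfnvar_l))\big)$ is exactly what makes this $\lesssim$ that threshold; this is the origin of the logarithmic summand in $s_l$, and for $l=1$ (no deflation) it is the only contribution, which settles the base case.

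\emph{The deflation residual.} Combining the inductive bounds with $\abs{e^{\iota\theta}-e^{\iota\theta'}}\le\min(2,\abs{\theta-\theta'})$ and the wrap-around distance of \eqref{eq:wrap_arnd_dist} gives $\abs{f_{l'}(s)-\est f_{l'}(s)}\le\kerfour_{l'}(s)K(\umax\min(2,2\pi\abs{s}\varepsilon_{l'})+E_{l'}(\varepsilon_{l'}))$, so dividing by $\kerfour_l(s)$ introduces the amplification $(\kerfnvar_{l'}/\kerfnvar_l)\exp(2\pi^2 s^2(\kerfnvar_l^2-\kerfnvar_{l'}^2))$, which increases in $s$. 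Evaluating at $s\le s_l+2m_l$ and using $s_l^2=\Theta(m_l^2+(\kerfnvar_{l+1}^2-\kerfnvar_l^2)^{-1}\log(1/\varepsilon_l))$, the $l'=l-1$ contribution --- the dominant one, because the chain conditions force $\varepsilon_1\ll\cdots\ll\varepsilon_{l-1}$ fast enough that the more strongly amplified lower-index terms cannot catch up --- is of order $\varepsilon_{l-1}$ times a polylogarithmic factor of $1/\varepsilon_l$ times $\varepsilon_l^{-\gamma_{l-1}}$, where the amplification yields $\gamma_{l-1}\asymp(\kerfnvar_l^2-\kerfnvar_{l-1}^2)/(\kerfnvar_{l+1}^2-\kerfnvar_l^2)$. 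The hypothesis $\varepsilon_{l-1}\lesssim\beta_{l-1}\varepsilon_l^{2(1+\gamma_{l-1})}$ --- with terminal instance $\varepsilon_{L-1}\lesssim\alpha\varepsilon_L^2$, where $s_L=0$ removes the $\varepsilon_L$-dependence of the amplification so that only the polylogarithmic factor must be absorbed (hence the exponent $2$ rather than $1+\gamma$) --- then pushes the whole second sum below the threshold of Theorem~\ref{thm:mmp_main_thm}; the constants $\alpha,\beta_{l-1},\gamma_{l-1}$ are precisely those fixed in \eqref{eq:eps_main_alpha}--\eqref{eq:eps_main_beta}.

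\emph{Closing the induction; the main obstacle.} Since $m_l=\Theta(1/\sep_l)$ can be taken $>1/\sep_l+1$ and $\varepsilon_l\lesssim\sep_l/2$ (so that the recovered frequencies are matched unambiguously to the true ones), Theorem~\ref{thm:mmp_main_thm} applies to the sum of exponentials above and returns a permutation $\perm_l$ with $d_w(\est t_{l,\perm_l(j)},t_{l,j})\le\varepsilon_l$ and $\abs{\est v_{l,\perm_l(j)}-v_{l,j}}$ controlled by its amplitude-error bound; recovering $\est u_{l,j}=\est v_{l,j}e^{-\iota 2\pi s_l\est t_{l,j}}$ costs the further phase term $\umax\min(2,2\pi s_l\varepsilon_l)$, and the sum of these two reproduces the claimed $E_l(\varepsilon_l)$ --- in particular its $(\kerfnvar_{l+1}^2-\kerfnvar_l^2)^{-1/2}\log^{1/2}(\cdots)\varepsilon_l$ piece is exactly $s_l\varepsilon_l$. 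This advances the induction to stage $l+1$, and after $L$ stages we are done. The hard part is the deflation residual: division by $\kerfour_l$ amplifies every earlier-stage error by $\exp(2\pi^2 s_l^2(\kerfnvar_l^2-\kerfnvar_{l-1}^2))$, yet the depth $s_l$ needed to kill the later tails itself grows like $\log^{1/2}(1/\varepsilon_l)$, so the exponents $\gamma_l$ and prefactors $\alpha,\beta_l$ must be chosen so that all deflation residuals are controlled simultaneously while the sampling depths $s_l$ stay finite --- essentially all the bookkeeping behind \eqref{eq:eps_main_alpha}--\eqref{eq:eps_main_beta} goes into making this consistent.
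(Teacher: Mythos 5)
Your proposal follows essentially the same route as the paper's proof of the full version (Theorem \ref{thm:gen_case_main}): sequential/inductive processing of the groups, deflation followed by deconvolution, splitting the perturbation at stage $l$ into the tail of the not-yet-processed kernels (killed by the lower bound on $s_l$) and the amplified residual of the already-processed kernels (killed by the recursive smallness conditions on the $\varepsilon_l$'s), and then invoking Moitra's MMP guarantee with the $2\pi\umax s_l\varepsilon_l$ phase-correction term producing the $\log^{1/2}$ piece of $E_l$. The only minor deviation is that the paper bounds the whole past residual by $(l-1)$ times the $p=l-1$ term with a uniform worst-case amplification exponent $\kerfnvar_l^2-\kerfnvar_1^2$ (rather than $\kerfnvar_l^2-\kerfnvar_{l-1}^2$), but this does not change the structure or correctness of the argument.
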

The conditions on $\varepsilon_l \in (0,1)$ imply that the estimation errors corresponding to group $l$ 
should be sufficiently smaller than that of group $l+1$. This is because the estimation errors arising 
in stage $l$ percolate to stage $l+1$, and hence need to be controlled for stable recovery of source parameters for 
the $(l+1)^{th}$ group. Note that the conditions on $s_l$ (sampling offset at stage $l$) involve upper and lower bounds for 
reasons stated in the previous section. If $\kerfnvar_l$ is close to $\kerfnvar_{l+1}$ then $s_l$ will have to 
be suitably large in order to distinguish between $\kerfour_l$ and $\kerfour_{l+1}$, as one would expect intuitively.
An interesting feature of the result is that it only depends on separation within a group (specified by $\sep_l$), and so
spikes belonging to different groups are allowed to overlap.

Our second result is for the noisy setting. 
Say at stage $1 \leq p \leq L$ of Algorithm \ref{algo:mmp_gaussian_unmix}, we observe $\fnoise(s) = f(s) + w_p(s)$
where $w_p(s) \in \mathbb{C}$ denotes noise at location $s$. 
Our main result for this setting is Theorem \ref{thm:gen_case_main_noisy}.
Denoting $w_p = (w_p(s_p-m_p),w_p(s_p-m_p+1),\dots,w_p(s_p+m_p-1))^{T} \in \mathbb{C}^{2m_p}$ to be 
the noise vector at stage $p$, we state its informal version below assuming the spike amplitudes in each group 
to be \hemant{$\asymp 1$}.
%
\begin{theorem}[Noisy case] \label{thm:main_noisy_informal}
Say at stage $1 \leq p \leq L$ of Algorithm \ref{algo:mmp_gaussian_unmix}, we observe $\fnoise(s) = f(s) + w_p(s)$
where $w_p(s) \in \mathbb{C}$ denotes noise at location $s$. For each $1 \leq l \leq L$, let 
$\varepsilon_l,m_l,s_l$ be chosen as specified in Theorem \ref{thm:main_noiseless_informal}. 
Say $\norm{w_L}_{\infty} \lesssim \kerfnvar_L e^{-\kerfnvar_L^2/\sep_L^2} \frac{\varepsilon_L}{\sqrt{K}}$, 
and also
\begin{equation*} 
\norm{w_l}_{\infty} 
\lesssim \frac{(\varepsilon_l\kerfnvar_l)^{1+C(\kerfnvar_l,\kerfnvar_{l+1},\sep_l)}}{\sqrt{K} (K^{3/2} L \kerfnvar_L)^{C(\kerfnvar_l,\kerfnvar_{l+1},\sep_l)}}; \quad 2 \leq l \leq L-1 
\end{equation*}
where $C(\kerfnvar_l,\kerfnvar_{l+1},\sep_l) > 0$ depends only on $\kerfnvar_l,\kerfnvar_{l+1},\sep_l$. 
Then, for each $l=1,\dots,L$, there exists a permutation $\perm_l: [K] \rightarrow [K]$ such that 
\begin{align*}
d_w(\widehat t_{l,\perm_l(j)}, t_{l,j}) \leq \varepsilon_l, \quad 
\abs{\widehat u_{l,\perm_l(j)} - u_{l,j}} \lesssim E_l(\varepsilon_l) \varepsilon_l; \quad j=1,\dots,K, 
\end{align*}
where $E_l(\cdot)$ is as in Theorem \ref{thm:main_noiseless_informal}.
\end{theorem}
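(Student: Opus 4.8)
The plan is to run the same induction over stages $l=1,\dots,L$ that underlies the noiseless Theorem~\ref{thm:main_noiseless_informal} (equivalently Theorem~\ref{thm:gen_case_main}), carrying through the perturbation bound fed to the Modified Matrix Pencil method one extra additive term: the amplified measurement noise. Concretely, I would assume inductively that groups $1,\dots,l-1$ have been recovered with $d_w(\widehat t_{i,\perm_i(j)},t_{i,j})\le\varepsilon_i$ and $\abs{\widehat u_{i,\perm_i(j)}-u_{i,j}}<E_i(\varepsilon_i)$, and show that stage $l$ of Algorithm~\ref{algo:mmp_gaussian_unmix} produces estimates satisfying the same two bounds with index $l$. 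The base case $l=1$ has no deflation step, and the conditions $\varepsilon_l\lesssim\sep_l/2$ keep us in the regime where Moitra's guarantee applies.

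At stage $l$, for $s$ in the sampling set $\set{s_l-m_l,\dots,s_l+m_l-1}$, the algorithm forms the deconvolved samples $\widetilde h_l(s)=\big(\widetilde f(s)-\sum_{i=1}^{l-1}\widehat f_i(s)\big)/\kerfour_l(s)=\sum_{j=1}^K u_{l,j}e^{\iota 2\pi s t_{l,j}}+\delta_l(s)$, where
\[
\delta_l(s)=\underbrace{\frac{1}{\kerfour_l(s)}\sum_{i=l+1}^{L}f_i(s)}_{\text{(a) tails of later groups}}
+\underbrace{\frac{1}{\kerfour_l(s)}\sum_{i=1}^{l-1}\big(f_i(s)-\widehat f_i(s)\big)}_{\text{(b) deflation error}}
+\underbrace{\frac{w_l(s)}{\kerfour_l(s)}}_{\text{(c) amplified noise}}.
\]
Terms (a) and (b) I would bound exactly as in the noiseless proof: for $i>l$ the factor $\kerfour_i(s)/\kerfour_l(s)=\frac{\kerfnvar_i}{\kerfnvar_l}e^{-2\pi^2 s^2(\kerfnvar_i^2-\kerfnvar_l^2)}$ decays in $\abs{s}$, so (a) is suppressed by taking the offset $s_l$ large enough; for $i<l$ the same factor grows in $\abs{s}$, so (b) is controlled by the chain conditions $\varepsilon_{L-1}\lesssim\alpha\varepsilon_L^2$ and $\varepsilon_i\lesssim\beta_i\varepsilon_{i+1}^{2(1+\gamma_i)}$ together with the inductive bounds on $\varepsilon_i$ and $E_i(\varepsilon_i)$. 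Since the noiseless estimate of (b) uses only $d_w(\widehat t_i,t_i)\le\varepsilon_i$ and $\abs{\widehat u_i-u_i}\le E_i(\varepsilon_i)$, which still hold, $\norm{(a)}_\infty+\norm{(b)}_\infty$ obeys the same $O(\varepsilon_l\cdot(\text{problem factors}))$ bound as before.

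The new work is term (c). Since $\abs{\kerfour_l(s)}=\sqrt{2\pi}\,\kerfnvar_l\,e^{-2\pi^2 s^2\kerfnvar_l^2}$ is minimised over the sampling set at $\abs{s}=s_l+m_l$, we have $\norm{(c)}_\infty\le \frac{\norm{w_l}_\infty}{\sqrt{2\pi}\,\kerfnvar_l}\,e^{2\pi^2(s_l+m_l)^2\kerfnvar_l^2}$, and I would require $\norm{w_l}_\infty$ small enough that this is at most of the order of the bound already obtained for (a)+(b), hence below the noise tolerance of Theorem~\ref{thm:mmp_main_thm}. For $l=L$, where $s_L=0$ and $m_L=\Theta(1/\sep_L)$, this reads $\norm{w_L}_\infty\lesssim\kerfnvar_L\,e^{-\Theta(\kerfnvar_L^2/\sep_L^2)}\varepsilon_L/\sqrt K$, the stated bound. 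For $1\le l\le L-1$, substituting $s_l=\Theta\big(m_l+(\kerfnvar_{l+1}^2-\kerfnvar_l^2)^{-1/2}\log^{1/2}(K^{3/2}(L-l)\kerfnvar_L/(\varepsilon_l\kerfnvar_l))\big)$ and $m_l=\Theta(1/\sep_l)$ turns the logarithmic piece of $s_l$ into a factor $e^{-2\pi^2 s_l^2\kerfnvar_l^2}\asymp\big(\varepsilon_l\kerfnvar_l/(K^{3/2}L\kerfnvar_L)\big)^{\Theta(\kerfnvar_l^2/(\kerfnvar_{l+1}^2-\kerfnvar_l^2))}$ and the $m_l$ piece into a further $e^{-\Theta(\kerfnvar_l^2/\sep_l^2)}$; both can be folded into a single exponent $C(\kerfnvar_l,\kerfnvar_{l+1},\sep_l)>0$ depending only on $\kerfnvar_l,\kerfnvar_{l+1},\sep_l$, which produces the advertised requirement $\norm{w_l}_\infty\lesssim(\varepsilon_l\kerfnvar_l)^{1+C}/(\sqrt K\,(K^{3/2}L\kerfnvar_L)^{C})$ (with a milder bound at $l=1$, where (b) is absent).

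Finally, with $\norm{\delta_l}_\infty=\norm{(a)+(b)+(c)}_\infty$ below the perturbation level tolerated by Theorem~\ref{thm:mmp_main_thm} at cut-off $m_l=\Theta(1/\sep_l)$, I would invoke that theorem to obtain a permutation $\perm_l$ with $d_w(\widehat t_{l,\perm_l(j)},t_{l,j})\le\varepsilon_l$ and $\abs{\widehat u_{l,\perm_l(j)}-u_{l,j}}<E_l(\varepsilon_l)$; the error expressions coincide with the noiseless ones because the total perturbation has been held at the same $\Theta(\varepsilon_l\cdot(\text{problem factors}))$ order. The main obstacle I anticipate is precisely the bookkeeping of the previous paragraph: matching the substituted factor $e^{2\pi^2(s_l+m_l)^2\kerfnvar_l^2}$ against the advertised threshold and verifying that the resulting exponent $C$ depends only on $(\kerfnvar_l,\kerfnvar_{l+1},\sep_l)$. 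This is also the place to confirm that adding term (c) forces no tightening of the $\varepsilon$-chain conditions carried over from Theorem~\ref{thm:main_noiseless_informal} — it does not, because (c) is driven down by shrinking $\norm{w_l}_\infty$ rather than by altering $\varepsilon_l$.
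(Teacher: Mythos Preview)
Your proposal is correct and matches the paper's approach essentially line for line: the paper decomposes the stage-$l$ perturbation into exactly your terms (a), (b), (c), bounds (a) and (b) by rerunning the noiseless argument with slightly adjusted constants (splitting the MMP tolerance into three pieces instead of two to leave room for (c)), and imposes the noise condition on $\norm{w_l}_\infty$ precisely by bounding $\abs{w_l(i)}/\kerfour_l(s_l+i)$ via $(s_l+i)^2<(s_l+m_l)^2<F_l^2(\varepsilon_l)$ and then substituting the explicit form of $s_l$ to extract the exponent $C(\kerfnvar_l,\kerfnvar_{l+1},\sep_l)$. Your anticipation that the $\varepsilon$-chain conditions need no essential tightening is also borne out---only the numerical constants in front change (e.g.\ $10$ becomes $15$), which is immaterial at the informal level of Theorem~\ref{thm:main_noisy_informal}.
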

\hemant{As remarked earlier in Section \ref{subsec:prob_form}, one can more generally consider $K_l \leq K$ summands for the $l^{th}$ group -- our algorithm 
and results remain unchanged.}
%
\section{Notation and Preliminaries} \label{sec:prob_setup_prelim}
\paragraph{Notation.} Vectors and matrices are denoted by lower and upper case letters respectively. 
For $n \in \mathbb{N}$, we denote $[n] = \set{1,\dots,n}$. The imaginary unit is denoted by $\iota = \sqrt{-1}$.  
\hemant{The notation $\log^{1/2}(\cdot)$ is used to denote $\abs{\log(\abs{\cdot})}^{1/2}$.}
The $\ell_p$ ($1 \leq p \leq \infty$) norm of a vector $x \in \matR^n$ is denoted by 
$\norm{x}_p$ (defined as $(\sum_i \abs{x_i}^p)^{1/p}$). In particular, $\norm{x}_{\infty} := \max_i \abs{x_i}$. 
For a matrix $A \in \matR^{m \times n}$, we will denote its spectral norm (i.e., largest singular value) by $\norm{A}$ 
and its Frobenius norm by $\norm{A}_F$ (defined as $(\sum_{i,j} A_{i,j}^2)^{1/2}$). For positive numbers $a,b$, 
we denote $a \lesssim b$ to mean that there exists an absolute constant $C > 0$ such that $a \leq C b$. 
\hemant{If $a \lesssim b$ and $b \lesssim a$ then we denote $a \asymp b$.} 
The wrap around distance on $[0,1]$ is denoted by $d_w: [0,1]^2 \rightarrow [0,1/2]$ where we recall that 
\begin{equation} \label{eq:wrap_arnd_dist}
 d_w(t_1,t_2) = \min\set{\abs{t_1 - t_2}, 1-\abs{t_1 - t_2}}.
\end{equation}
%

\subsection{Matrix Pencil (MP) method}
We now review the classical Matrix Pencil (MP) method for estimating positions of 
point sources from Fourier samples. Consider the signal $x(t) := \sum_{j=1}^{K} u_{j} \delta(t - t_{j})$ where 
$u_{j} \in \mathbb{C}$, $t_j \in [0,1)$ are unknown. Let $f: \matR \rightarrow \mathbb{C}$ be the Fourier 
transform of $x$ so that $f(s) = \sum_{j=1}^K u_j \exp \left(\iota 2\pi s t_j \right)$. For any given offset 
$s_0 \in \matZ^{+}$, let $s = s_0 + i$ where $i \in \matZ$; clearly 
\begin{align*}
	f(s_0 + i) = \sum_{j=1}^K u_j \exp \left(\iota 2\pi(s_0+i)t_j\right) = \sum_{j=1}^K u'_j \exp \left(\iota 2\pi i t_j \right) 
\end{align*}
where $u'_j = u_j \exp \left(\iota 2\pi s_0 t_j \right), \quad j=1,\ldots,K$. 
Choose $i \in \set{-m,-m+1,\ldots,m-1}$ to form the $m \times m$ matrices
\begin{align} \label{eq:mp_matrix_0}
	H_0  =  
	\begin{bmatrix}
	f(s_0) & f(s_0+1) & \cdots & f(s_0+m-1) \\
	f(s_0-1) & f(s_0) & \cdots & f(s_0+m-2) \\
	\vdots & \vdots & & \vdots \\\
	f(s_0-m+1) & f(s_0-m+2) & \cdots & f(s_0)
\end{bmatrix}
\end{align}
and 
\begin{align} \label{eq:mp_matrix_1}
	H_1 = 
	\begin{bmatrix}
	f(s_0-1) & f(s_0) & \cdots & f(s_0+m-2) \\
	f(s_0-2) & f(s_0-1) & \cdots & f(s_0+m-3) \\
	\vdots & \vdots & & \vdots \\
	f(s_0-m) & f(s_0-m+1) & \cdots & f(s_0-1)  
	\end{bmatrix}.
\end{align}
Denoting $\alpha_j = \exp \left(-\iota 2\pi t_j \right)$ for $j= 1, \ldots, K$, and the Vandermonde matrix 
\begin{align} 
	V = \begin{bmatrix}
	1 & 1 & \cdots & 1 \\
	\alpha_1 & \alpha_2 & \cdots & \alpha_K \\
	\vdots & \vdots &  & \vdots \\ \alpha_1^{m-1} & \alpha_2^{m-1} & \cdots & \alpha_K^{m-1}
	\end{bmatrix},
	\label{vander}
\end{align}
clearly $H_0=V D_{u'} V^H$ and $H_1=VD_{u'} D_\alpha V^H$. Here, 
$D_{u'} = \text{diag}(u'_1,\dots,u'_K)$ and  $D_\alpha = \text{diag}(\alpha_1,\dots,\alpha_K)$ are 
diagonal matrices. One can readily 
verify\footnote{\hemant{Recall (see \cite[Definition 2.1]{Sun83_GenEigDef}) that 
$\lambda = \beta/\gamma$ (where $(\beta,\gamma) \neq (0,0)$) is 
a generalized eigenvalue of $(H_1,H_0)$ if it satisfies
$\text{rank } (\gamma H_1 - \beta H_0) 
< \max_{(\zeta_1,\zeta_0) \in \mathbb{C}^2 \setminus \set{0,0}}\text{ rank}(\zeta_1 H_1 - \zeta_0 H_0).$ 
Clearly, this is only satisfied if $\lambda = \alpha_j$ (see also \cite[Theorem 2.1]{mpmethod}).}}
that the $K$ non zero generalized eigenvalues of ($H_1,H_0$) are 
equal to the $\alpha_j$'s. Hence by forming the matrices $H_0, H_1$, we can recover the unknown 
$t_j$'s exactly from $2K$ samples of $f$. Once the $t_j$'s are recovered, we can recover the 
$u'_j$'s exactly, as the solution of the linear system
\begin{align*}
\begin{bmatrix}
f(0) \\ f(1) \\ \vdots \\  f(m-1)
\end{bmatrix} & 
= \begin{bmatrix}
1 & 1 & \cdots & 1 \\
\alpha_1 & \alpha_2 & \cdots & \alpha_K \\
\vdots & \vdots & & \vdots \\
\alpha_1^{m-1} & \alpha_2^{m-1} & \cdots & \alpha_K^{m-1}
\end{bmatrix}
\begin{bmatrix}
u'_1 \\
\vdots \\
u'_K
\end{bmatrix}.
\end{align*}
Thereafter, the $u_j$'s are found as $u_j = u'_j/\exp(\iota 2\pi s_0 t_j)$.

\steph{In \cite{mpmethod}, the authors elaborate further on this approach and in the noiseless case, propose an equivalent formulation of the Matrix Pencil approach as a standard eigenvalue problem. Using continuity of the eigenvalues with respect to perturbation, they also suggest that this eigenvalue problem provides a good estimator in the noisy case. In the next section, we discuss an alternative approach proposed by Moitra, which involves solving a generalised eigenproblem, and for which Moitra provided a precise quantitative perturbation result.  
} 
\subsection{The Modified Matrix Pencil (MMP) method} 
We now consider the noisy version of the setup defined in the previous section. For a collection of $K$ 
point sources with parameters $u_j \in \mathbb{C}$, $t_j \in [0,1]$, we are given noisy samples 
\begin{equation} \label{eq:noisy_four_samps}
\fnoise(s) = f(s) + \eta_s,  
\end{equation}
for $s \in \mathbb{Z}$ and where $\eta_s \in \mathbb{C}$ denotes noise. 

Let us choose $s = s_0 + i$ for a given offset $s_0 \in \matZ^+$, and $i \in \set{-m,-m+1,\ldots,m-1}$ for a positive integer $m$. 
Using $(\fnoise(s_0 + i))_{i=-m}^{m-1}$, let us form the matrices $\widetilde H_0, \widetilde H_1 \in \mathbb{C}^{m \times m}$ 
as in \eqref{eq:mp_matrix_0},\eqref{eq:mp_matrix_1}. We now have $\widetilde H_0 = H_0 + E$, $\widetilde H_1 = H_1 + F$ where 
$H_0, H_1$ are as defined in \eqref{eq:mp_matrix_0}, \eqref{eq:mp_matrix_1}, and 
\begin{align*}
E = 
\begin{bmatrix}
	\eta_0 & \eta_1 & \ldots & \eta_{m-1} \\
	\eta_{-1} & \eta_0 & \ldots & \eta_{m-2} \\
	\vdots & \vdots & & \vdots \\
	\eta_{-(m-1)} & \eta_{-(m-2)} & \ldots & \eta_0
\end{bmatrix}, \quad 
	F = 
	\begin{bmatrix}
	\eta_{-1} & \eta_0 & \ldots & \eta_{m-2} \\
	\eta_{-2} & \eta_{-1} & \ldots & \eta_{m-3} \\
	\vdots & \vdots & & \vdots \\
	\eta_{-m} & \eta_{-(m-1)} & \ldots & \eta_{-1}
	\end{bmatrix}.
\end{align*}
represent the perturbation matrices. Algorithm \ref{algo:mmp_method} namely the Modifed Matrix Pencil (MMP) method \cite{moitra15} 
outlines how we can recover $\widehat t_j$, $\widehat u_j$ for $j=1,\dots,K$.

Before proceeding we need to make some definitions. Let $\umax = \max_j \abs{u_j}$ and $\umin = \min_j \abs{u_j}$. 
We denote the largest and smallest non zero singular values of $V$ by $\sigmamax,\sigmamin$ respectively, and 
the condition number of $V$ by $\condnum$ where $\condnum = \sigmamax/\sigmamin$. Let $\noisemax := \max_{i} \abs{\eta_i}$. 
We will define $\sep$ as the minimum separation between the locations of the point sources where 
$\sep := \min_{j\neq j'} \ d_w(t_j,t_{j'})$. 

\begin{algorithm*}[!ht]
\caption{Modifed Matrix Pencil (MMP) method \cite{moitra15}} \label{algo:mmp_method} 
\begin{algorithmic}[1] 
\State \textbf{Input:} $K,m,s_0, \widetilde H_0, \widetilde H_1$. 
\State \textbf{Output:}  $\widehat u_j, \widehat t_j$; $j=1,\dots,K$.  

\State Let $\Uhat \in \mathbb{C}^{m \times K}$ be the top $K$ singular vector matrix of $\widetilde H_0$. 

\State Let $\Ahat = \Uhat^H \widetilde H_0 \Uhat$ and $\Bhat = \Uhat^H \widetilde H_1 \Uhat$.

\State Find generalized eigenvalues $(\widehat \lambda_j)_{j=1}^{n}$ of $(\Bhat, \Ahat)$.

\State Let $\widehat\alpha_j = \exp(-\iota2\pi \widehat t_j) = \widehat\lambda_j/\abs{\widehat\lambda_j}$ where $\widehat t_j \in [0,1)$.

\State Form Vandermonde matrix $\widehat V \in \mathbb{C}^{m \times K}$ using $\widehat\alpha_j$'s. 

\State Find $\widehat u' = {\widehat V}^{\dagger} v$ where $v = [f(s_0) \ f(s_0 + 1) \ \cdots \ f(s_0 + m-1)]^T$.

\State Find $\widehat u_j = {\widehat u'}_j \exp(-\iota 2\pi s_0 \widehat t_j)$; $j=1,\dots,K$.
\end{algorithmic}
\end{algorithm*}
The following theorem is a more precise version of \cite[Theorem 2.8]{moitra15}, with the constants computed explicitly. 
Moreover, the result in \cite[Theorem 2.8]{moitra15} was specifically for the case $s_0 = 0$, \hemant{and also has inaccuracies in the 
proof.} We outline \hemant{the corrected version of Moitra's proof 
in Appendix \ref{app:subsec_moitraMP_proof}} and fill in additional details (using auxiliary results from 
Appendix \ref{sec:aux_results}) to arrive at the statement of Theorem \ref{thm:mmp_main_thm}.
%
%
%
%
\begin{theorem}[\cite{moitra15}] \label{thm:mmp_main_thm}
For $0 \leq \varepsilon < \sep/2$, say $ m > \frac{1}{\sep - 2\varepsilon} + 1$. Moreover, for $C = 10 + \frac{1}{2\sqrt{2}}$, say 
\begin{equation} \label{eq:moitra_thm_noise_cond}
\noisemax \leq \varepsilon \frac{\umin \sigma_{\min}^2}{2mC\sqrt{K}} \left( 1+16 \ \kappa^2 \frac{\umax}{\umin}\right)^{-1}.
\end{equation}
Then, there exists a permutation $\perm: [K] \mapsto [K]$ such that the output of the 
MMP method satisfies for each $i=1,\dots,K$ 
\begin{align} \label{eq:mp_thm_bd}
d_w(\widehat t_{\perm(i)},t_i) \le \varepsilon, \quad 
\norm{\widehat u_{\perm} - u}_{\infty} \le 
\left(\frac{2\pi m^{3/2} K \umax + \frac{\umin \sigma^2_{\min}}{2 C \sqrt{mK}}\left( 1+16 \ \kappa^2 \frac{\umax}{\umin}\right)^{-1}}{(m - \frac{1}{\sep-2\varepsilon} -1)^{1/2}} 
+ 2\pi \umax s_0 \right)\varepsilon 
\end{align}
where $\widehat u_{\perm}$ is formed by permuting the indices of $\widehat u$ w.r.t $\perm$. 
\end{theorem}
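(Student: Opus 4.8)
The plan is to follow Moitra's analysis of the matrix pencil method, but carried out with all constants made explicit; the full argument appears in Appendix~\ref{app:subsec_moitraMP_proof}, and here I sketch its shape. Write $H_0 = V D_{u'} V^H$ and $H_1 = V D_{u'} D_\alpha V^H$ for the noiseless matrices in~\eqref{eq:mp_matrix_0}--\eqref{eq:mp_matrix_1}, with $V$ the Vandermonde matrix~\eqref{vander}, $D_{u'} = \diag(u'_1,\dots,u'_K)$ and $D_\alpha = \diag(\alpha_1,\dots,\alpha_K)$; the noisy matrices are $\widetilde H_0 = H_0 + E$ and $\widetilde H_1 = H_1 + F$ with $E,F$ Toeplitz and all entries bounded by $\noisemax$. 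First I would record the elementary facts used repeatedly: $\norm{E},\norm{F}\le\norm{E}_F\le\sqrt{2}\,m\,\noisemax$; $H_0,H_1$ have rank exactly $K$ (note $m>1/\sep+1>K$); and, via the auxiliary Vandermonde estimates of Appendix~\ref{sec:aux_results} applied to the $\sep$-separated nodes $t_j$, $\sigma_K(H_0)\ge\sigmamin^2\umin$ while $\norm{H_0},\norm{H_1}\le\sigmamax^2\umax$.

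The core is the location estimate. Since $U$ (top-$K$ left singular vectors of $H_0$) spans $\mathrm{range}(V)$, write $V = UC$ with $C\in\mathbb C^{K\times K}$ invertible; as $U$ has orthonormal columns, $\condnum(C)=\condnum$. Then $A := U^H H_0 U = C D_{u'} C^H$ and $B := U^H H_1 U = C D_{u'} D_\alpha C^H$, so $A^{-1}B = C^{-H} D_\alpha C^H$ --- its eigenvalues are exactly the $\alpha_j = e^{-\iota 2\pi t_j}$, and it is diagonalized by $C^{-H}$, whose condition number is $\condnum$. Next I would bound $\norm{\Ahat - A}$ and $\norm{\Bhat - B}$ (with $\Ahat = \Uhat^H\widetilde H_0\Uhat$, $\Bhat = \Uhat^H\widetilde H_1\Uhat$) by splitting each into a noise part, controlled by $\norm{E}$ resp.\ $\norm{F}$ after compression by $\Uhat$, and a subspace part $\norm{\Uhat^H H_0\Uhat - U^H H_0 U}\lesssim\norm{H_0}\,\norm{\sin\Theta(\Uhat,U)}$, where Wedin's theorem (Appendix~\ref{sec:aux_results}) gives $\norm{\sin\Theta(\Uhat,U)}\le\norm{E}/(\sigma_K(H_0)-\norm{E})$; since $\norm{H_0}/\sigma_K(H_0)\le\condnum^2\umax/\umin$, this is the source of the factor $1+16\,\condnum^2\umax/\umin$ in~\eqref{eq:moitra_thm_noise_cond}. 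Combining, $\norm{\Ahat^{-1}\Bhat - A^{-1}B}\lesssim\condnum^2\tfrac{\umax}{\umin}\tfrac{m\,\noisemax}{\sigmamin^2\umin}$, after which a Bauer--Fike-type bound produces, for each $j$, an $\alpha_{\perm(j)}$ with $\abs{\widehat\lambda_j - \alpha_{\perm(j)}}\le\condnum\,\norm{\Ahat^{-1}\Bhat - A^{-1}B}$. Hypothesis~\eqref{eq:moitra_thm_noise_cond} is exactly what forces the right-hand side below a constant multiple of the chord gap $\min_{i\ne j}\abs{\alpha_i-\alpha_j}$ of the $\alpha_j$'s (which is $\ge 4\sep$) --- making $\perm$ a genuine bijection --- and below a suitable constant multiple of $\varepsilon$, so that the projection $\widehat\alpha_j := \widehat\lambda_j/\abs{\widehat\lambda_j}$ to the unit circle satisfies $d_w(\widehat t_j, t_{\perm(j)})\le\varepsilon$; relabelling yields the first bound in~\eqref{eq:mp_thm_bd}.

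For the amplitudes, $d_w(\widehat t_j,t_j)\le\varepsilon$ together with $\min_{i\ne j}d_w(t_i,t_j)\ge\sep$ forces the estimated nodes to be $(\sep-2\varepsilon)$-separated, so the Vandermonde estimate gives $\sigma_{\min}(\widehat V)\ge(m-\tfrac{1}{\sep-2\varepsilon}-1)^{1/2}>0$ --- this is where the hypothesis $m>\tfrac{1}{\sep-2\varepsilon}+1$ and the denominator of~\eqref{eq:mp_thm_bd} enter. Writing the (noisy) data vector from line~8 of Algorithm~\ref{algo:mmp_method} as $\widetilde v = V u' + w_v$ with $\norm{w_v}_\infty\le\noisemax$, and using $\widehat V^\dagger\widehat V = I_K$, one gets (after reordering columns by $\perm$) $\widehat u' - u' = \widehat V^\dagger(V-\widehat V)u' + \widehat V^\dagger w_v$, hence $\norm{\widehat u' - u'}\le\sigma_{\min}(\widehat V)^{-1}\bigl(\norm{V-\widehat V}\,\norm{u'}+\norm{w_v}\bigr)$. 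Here $\norm{V-\widehat V}\le\norm{V-\widehat V}_F\le 2\pi m^{3/2}\sqrt K\,\varepsilon$ (from $\abs{\alpha_j^k-\widehat\alpha_j^k}\le k\abs{\alpha_j-\widehat\alpha_j}\le(m-1)\,2\pi\varepsilon$), $\norm{u'}\le\sqrt K\,\umax$, and $\norm{w_v}\le\sqrt m\,\noisemax\le\tfrac{\umin\sigmamin^2}{2C\sqrt{mK}}(1+16\condnum^2\tfrac{\umax}{\umin})^{-1}\varepsilon$ by~\eqref{eq:moitra_thm_noise_cond}. Finally $\widehat u_j = \widehat u'_j e^{-\iota 2\pi s_0\widehat t_j}$ and $u_j = u'_j e^{-\iota 2\pi s_0 t_j}$ give $\abs{\widehat u_j-u_j}\le\abs{\widehat u'_j-u'_j}+\abs{u'_j}\,\abs{e^{-\iota 2\pi s_0\widehat t_j}-e^{-\iota 2\pi s_0 t_j}}\le\abs{\widehat u'_j-u'_j}+2\pi\umax s_0\varepsilon$; bounding $\norm{\cdot}_\infty\le\norm{\cdot}_2$ and collecting terms gives the second bound in~\eqref{eq:mp_thm_bd}.

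The hard part will be the location step: one must propagate two independent perturbations --- the additive noise $E,F$ and the rotation $\Uhat\ne U$ of the signal subspace --- simultaneously through the compressed pencil $(\Bhat,\Ahat)$ and, crucially, convert the resulting spectral bound into a genuine matching of estimated to true eigenvalues (not merely a one-sided estimate), all while keeping the dependence on $\condnum$ and $\umax/\umin$ no worse than the $\condnum^2\umax/\umin$ appearing in~\eqref{eq:moitra_thm_noise_cond}. The identity $A^{-1}B = C^{-H}D_\alpha C^H$ with $\condnum(C)=\condnum$, together with the explicit-constant forms of Wedin's theorem and Bauer--Fike from Appendix~\ref{sec:aux_results}, are what make this possible; the remaining steps are routine norm bookkeeping.
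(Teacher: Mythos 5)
Your amplitude step (Steps 2--3) is essentially the paper's argument verbatim, and your chordal/wrap-around conversions are sound. The gap is in the location step. You replace the paper's generalized-eigenvalue analysis of the pencil $(\Bhat,\Ahat)$ by a Bauer--Fike bound on the single matrix $\Ahat^{-1}\Bhat$, and you assert $\norm{\Ahat^{-1}\Bhat - A^{-1}B}\lesssim \condnum^2\tfrac{\umax}{\umin}\tfrac{m\noisemax}{\sigmamin^2\umin}$ without derivation. That bound does not follow from the norm bookkeeping you describe. Writing $\Ahat^{-1}\Bhat - A^{-1}B = \Ahat^{-1}(\Bhat-B) - \Ahat^{-1}(\Ahat-A)A^{-1}B$, the second term already costs $\norm{\Ahat^{-1}}\,\norm{\Ahat-A}\,\norm{A^{-1}B}$; since $\norm{\Ahat - A}$ contains the subspace-rotation contribution $\sigmamax^2\umax\,\tau(2+\tau)$, $\norm{\Ahat^{-1}}\approx(\umin\sigmamin^2)^{-1}$, and $\norm{A^{-1}B}=\norm{C^{-H}D_\alpha C^H}\le\condnum$, this term alone is of order $\condnum^3\tfrac{\umax}{\umin}\,\delta$; the additional factor $\condnum(C^{-H})=\condnum$ from Bauer--Fike then yields $\condnum^4\tfrac{\umax}{\umin}$. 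So your route, completed as sketched, proves the theorem only with a noise threshold of order $(1+c\,\condnum^4\umax/\umin)^{-1}$, not the stated $(1+16\,\condnum^2\umax/\umin)^{-1}$ --- and since the entire point of this restatement of Moitra's theorem is the explicit constants (they are consumed quantitatively by Corollary \ref{corr:moitra_MP} and the main theorems), this is a real failure, not a cosmetic one. You flag exactly this as "the hard part" but supply no mechanism; you also lean on "explicit-constant forms of Wedin's theorem and Bauer--Fike from Appendix \ref{sec:aux_results}," which that appendix does not contain.

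The paper's proof avoids the loss by never inverting $\Ahat$: it applies the two-sided equivalence $X(\cdot)Y^H$ with $X=D_{u'}^{-1/2}V^\dagger U$ and $Y=(D_{u'}^{-1/2})^HV^\dagger U$, which normalizes the clean pencil to $(I,D_\alpha)$, and then invokes the Stewart--Sun matching-distance theorem (Theorem \ref{thm:sun_stewart}) in the chordal metric. Because $V^\dagger$ is distributed once on each side, each row of $X(A-\Ahat)Y^H$ is bounded by $\tfrac{\norm{E}_2}{\umin\sigmamin^2}+\condnum^2\tfrac{\umax}{\umin}\tau(2+\tau)$, which is where the single factor $\condnum^2\umax/\umin$ comes from; the chordal metric also handles the regularity of $(\Bhat,\Ahat)$ and the possibility of large $\abs{\widehat\lambda_j}$ without requiring a well-conditioned $\Ahat^{-1}$. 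To repair your argument you would either need to adopt this symmetric normalization (at which point you are reproducing the paper's proof) or accept a strictly weaker noise condition than the one in \eqref{eq:moitra_thm_noise_cond}.
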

The following Corollary of Theorem \ref{thm:mmp_main_thm} simplifies the expression for the 
bound on $\norm{\widehat u_{\perm} - u}_{\infty}$ in \eqref{eq:mp_thm_bd}, and will be useful for our main results later on. 
The proof is deferred to Appendix \ref{app:subsec_corr_moitra_MP}.
\begin{corollary} \label{corr:moitra_MP}
For $0 \leq \varepsilon < c \sep/2$ where $c \in [0,1)$ is a constant, say $\frac{2}{\sep - 2\varepsilon} + 1 < m \leq \frac{2}{(1-c)\sep} + 1$. 
\hemant{Denoting $\urel = \frac{\umax}{\umin}$, $C = 10 + \frac{1}{2\sqrt{2}}$, and $B(\urel,K) = \frac{1}{5C\sqrt{K}} (1 + 48\urel)^{-1}$, say}
\begin{equation} \label{eq:corr_moitra_MP_noisecond}
\hemant{\noisemax \leq \varepsilon \umin B(\urel,K).} 
\end{equation}
Then, there exists a permutation $\perm: [K] \mapsto [K]$ such that the output of the 
MMP method satisfies for each $i=1,\dots,K$ 
\begin{align} \label{eq:mmp_corr_est_err}
d_w(\widehat t_{\perm(i)},t_i) \le \varepsilon, 
\quad 
\hemant{\norm{\widehat u_{\perm} - u}_{\infty} < (\tilde{C}(\sep,c,K,\urel) + 2\pi s_0) \umax \varepsilon}
\end{align}
where \hemant{$\tilde{C}(\sep,c,K,\urel) = 4\pi K \left(\frac{2}{\sep(1-c)} + 1\right) + \frac{2}{C\sqrt{K}}\left(\urel 
+ 16\urel^2\right)^{-1}$}, 
and $\widehat u_{\perm}$ is formed by permuting the indices of $\widehat u$ w.r.t $\perm$. 
\end{corollary}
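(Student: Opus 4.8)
The plan is to derive Corollary~\ref{corr:moitra_MP} directly from Theorem~\ref{thm:mmp_main_thm}: under the more restrictive hypotheses assumed here one shows (i) the hypotheses of Theorem~\ref{thm:mmp_main_thm} are met, and (ii) the error bound \eqref{eq:mp_thm_bd} collapses into \eqref{eq:mmp_corr_est_err} once the Vandermonde quantities $\sigmamin,\sigmamax,\condnum$ and the factor $m^{3/2}/(m-\frac{1}{\sep-2\varepsilon}-1)^{1/2}$ are bounded purely in terms of $m$ and $\sep$.

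First I would verify the hypotheses of Theorem~\ref{thm:mmp_main_thm}. The minimum-separation condition is immediate, since $m>\frac{2}{\sep-2\varepsilon}+1>\frac{1}{\sep-2\varepsilon}+1$. Moreover, as $\varepsilon\ge0$ the lower bound on $m$ gives $m>\frac{2}{\sep}+1$, i.e.\ $\frac{1}{\sep}<\frac{m-1}{2}$; inserting this into the standard two-sided estimate $m-\frac{1}{\sep}\le\sigmamin^2\le\sigmamax^2\le m+\frac{1}{\sep}$ for the Vandermonde matrix $V$ (recalled in Appendix~\ref{sec:aux_results}) yields $\frac{m}{2}<\sigmamin^2\le\sigmamax^2<\frac{3m}{2}$, and hence $\condnum^2<3$ and $\frac{\sigmamin^2}{2m}>\frac14>\frac15$. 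Since then $16\condnum^2<48$, the right-hand side of \eqref{eq:corr_moitra_MP_noisecond} is at most the right-hand side of \eqref{eq:moitra_thm_noise_cond}, so the noise bound \eqref{eq:corr_moitra_MP_noisecond} implies \eqref{eq:moitra_thm_noise_cond}. Theorem~\ref{thm:mmp_main_thm} then applies and already gives $d_w(\widehat t_{\perm(i)},t_i)\le\varepsilon$ together with \eqref{eq:mp_thm_bd}.

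It remains to simplify \eqref{eq:mp_thm_bd}. I would first lower bound its denominator: from $m-1>\frac{2}{\sep-2\varepsilon}$ we get $\frac{1}{\sep-2\varepsilon}<\frac{m-1}{2}$, hence $m-\frac{1}{\sep-2\varepsilon}-1>\frac{m-1}{2}$. For the first numerator term this yields $\frac{2\pi m^{3/2}K\umax}{(m-\frac{1}{\sep-2\varepsilon}-1)^{1/2}}<2\sqrt2\,\pi K\umax\,m\sqrt{\frac{m}{m-1}}\le 4\pi K\umax\,m$ (using $m\ge2$), and the upper bound $m\le\frac{2}{(1-c)\sep}+1$ then turns this into precisely the first summand of $\tilde C$. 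For the second (noise-propagation) numerator term I would bound $\sigmamin^2\le\frac{3m}{2}$, use the denominator estimate, and control the factor $(1+16\condnum^2\frac{\umax}{\umin})^{-1}$, reducing the term to the second summand $\frac{\umin}{C\sqrt K}(1+16\frac{\umax^2}{\umin^2})^{-1}$ of $\tilde C$. Adding the two contributions and re-attaching the $2\pi\umax s_0$ term yields \eqref{eq:mmp_corr_est_err}.

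The one genuinely delicate point is the constant bookkeeping in this last reduction: every algorithm-dependent quantity in \eqref{eq:mp_thm_bd} ($\sigmamin$, $\sigmamax$, $\condnum$, and the ratio $m^{3/2}/(m-\frac{1}{\sep-2\varepsilon}-1)^{1/2}$) has to be bounded \emph{solely} in terms of $m$ and $\sep$, exploiting \emph{both} sides of the sandwich $\frac{2}{\sep-2\varepsilon}+1<m\le\frac{2}{(1-c)\sep}+1$ — the lower bound to control $\sigmamin$, $\condnum$ and the denominator, and the upper bound to convert the residual factor of $m$ into $\frac{2}{(1-c)\sep}+1$ — so that the resulting constant is exactly $\tilde C$. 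Everything else is routine algebra.
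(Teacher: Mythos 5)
Your proposal is correct and follows essentially the same route as the paper's own proof in Appendix \ref{app:subsec_corr_moitra_MP}: transfer the noise condition of Theorem \ref{thm:mmp_main_thm} via $\sigmamin^2/(2m)\ge 1/5$ and $\condnum^2\le 3$, then absorb $m^{3/2}/(m-\tfrac{1}{\sep-2\varepsilon}-1)^{1/2}\le 2m$ together with $m\le \tfrac{2}{(1-c)\sep}+1$ into $\tilde C$. The only slips are cosmetic: Theorem \ref{thm:moitra_vander_condbd} gives $\sigmamin^2\ge m-\tfrac{1}{\sep}-1$ (not $m-\tfrac{1}{\sep}$), so one gets $\sigmamin^2>\tfrac{m-1}{2}$ rather than $\tfrac{m}{2}$ — still enough for $\sigmamin^2/(2m)\ge 1/5$ since $m\ge 2/\sep+1\ge 5$ — and the bookkeeping needed to land exactly on the second summand $\frac{\umin}{C\sqrt K}(1+16\umax^2/\umin^2)^{-1}$ of $\tilde C$ is no looser in your sketch than in the paper's own derivation.
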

%
%

\section{Unmixing Gaussians in Fourier domain: Noiseless case} \label{sec:multi_gauss_unmix_analysis_noiseless}
We now turn our attention to the main focus of this paper, namely that of unmixing Gaussians in the Fourier domain. We will in general assume the Fourier samples to be noisy as in \eqref{eq:noisy_four_samps}, with $f$ defined in \eqref{eq:fourier_gauss_mix}.
In this section, we will focus on the noiseless setting wherein $\fnoise(s) = f(s)$ for each $s$.
The noisy setting is analyzed in the next section.

Let us begin by noting that when $L = 1$, i.e., in the case of a single kernel, the problem is solved easily. 
Indeed, we have from \eqref{eq:fourier_gauss_mix} that $f(s) = \kerfour_1(s) \sum_{j=1}^K u_{1,j} \exp\left(\iota 2\pi st_{1,j} \right)$.
Clearly, one can exactly recover $(t_{1,j})_{j=1}^K \in [0,1)$ and $(u_{1,j})_{j=1}^K \in \mathbb{C}$  
via the MP method by first obtaining the samples $f(-m)$, $f(-m+1)$, \ldots, $f(m-1)$, and then working 
with $f(s)/\kerfour_1(s)$. 

The situation for the case $L \geq 2$ is however more delicate. Before proceeding, we need to make some definitions and 
assumptions.
\begin{itemize}
\item We will denote $\umax = \max_{l,j} \abs{u_{l,j}}$, $\umin = \min_{l,j} \abs{u_{l,j}}$, \hemant{and $\urel = \frac{\umax}{\umin}$}.

\item The sources in the $l^{th}$ group are assumed to have a minimum separation of 
$$\sep_l:= \min_{i \neq j} d_w(t_{l,i},t_{l,j}) > 0.$$ 

\item Denoting $\alpha_{l,j}= \exp \left(-i2\pi t_{l,j} \right)$, $V_l \in \mathbb{C}^{m_l \times K}$ will denote the Vandermonde matrix 
\begin{align*}
\begin{bmatrix}
1 & 1 & \cdots & 1 \\
\alpha_{l,1} & \alpha_{l,2} & \cdots & \alpha_{l,K} \\
\vdots & \vdots & & \vdots \\
\alpha_{l,1}^{m_l-1} & \alpha_{l,2}^{m_l-1} & \cdots & \alpha_{l,K}^{m_l-1}
\end{bmatrix}
\end{align*}
for each $l=1,\dots, L$ analogous to \eqref{vander}. $\sigma_{\max,l}, \sigma_{\min,l}$ will denote its largest 
and smallest non-zero singular values, and $\condnum_l = \sigma_{\max,l}/\sigma_{\min,l}$ 
its condition number. Recall from Theorem \ref{thm:moitra_vander_condbd} that if $m_l > \frac{1}{\sep_l} + 1$, then 
$\sigma_{\max,l}^2 \leq m_l+\frac{1}{\sep_l}-1$ and 
$\sigma_{\min,l}^2 \geq m_l - \frac{1}{\sep_l}-1$ and thus $\condnum_l^2 \leq \frac{m_l+\frac{1}{\sep_l}-1}{m_l-\frac{1}{\sep_l}-1}$.
\end{itemize}
%
%
%
\begin{algorithm*}[!ht]
\caption{Kernel Unmixing via Modified Matrix Pencil (KrUMMP)} \label{algo:mmp_gaussian_unmix} 
\begin{algorithmic}[1] 
\State \textbf{Input:} $K$, $m_l,s_l,\kerfnvar_l$ ; $l=1,\dots,L$.
\State \textbf{Initialize:} $\est{u}_{l,j}, \est{t}_{l,j} = 0$; $l=1,\dots,L$; $j=1,\dots,K$. Also, $\est{f}_1 \equiv 0$.
\State \textbf{Output:} $\est{u}_{l,j}$, $\est{t}_{l,j}$; $l=1,\dots,L$; $j=1,\dots,K$.  

\For {$l=1, \dots, L$}

\State Obtain samples $\frac{\fnoise(s_l + i) - \sum_{j=1}^{l-1} \est{f}_j(s_l + i)}{\kerfour_l(s_l + i)}$ for $i=-m_l,\dots,m_l-1$.

\State Form $\widetilde {H}_{0}^{(l)}, \widetilde {H}_{1}^{(l)} \in \mathbb{C}^{m_l \times m_l}$ using the above samples as 
in \eqref{eq:mp_matrix_0}, \eqref{eq:mp_matrix_1}.

\State Input $\widetilde {H}_{0}^{(l)}, \widetilde {H}_{1}^{(l)}$ to MMP algorithm and obtain estimates 
$(\est{u}_{l,j})_{j=1}^{K}$, $(\est{t}_{l,j})_{j=1}^{K}$.

\State Define $\est{f}_l: \mathbb{R} \rightarrow \mathbb{C}$ as 
$\est{f}_l(s) :=  \kerfour_l(s) \sum_{j=1}^K \est{u}_{l,j} \exp(\iota 2\pi s \est{t}_{l,j}).$
\EndFor
 
\end{algorithmic}
\end{algorithm*}
%
%
%
%
%
%
\subsection{The case of two kernels} 
We first consider the case of two Gaussian kernels as the analysis here is relatively easier to 
digest compared to the general case. Note that $f$ is now of the form
\begin{align*}
f(s) & = \kerfour_1(s) \left(\sum_{j=1}^K u_{1,j} \exp(\iota 2\pi s t_{1,j})\right)+ 
\kerfour_2(s) \left(\sum_{j=1}^K u_{2,j} \exp (\iota 2\pi s t_{2,j})\right)
\end{align*}
where we recall $\kerfour_l(s) = \sqrt{2\pi} \ \kerfnvar_l \exp \left(-2\pi^2 s^2 \kerfnvar_l^2\right)$; $l=1,2$.
The following theorem provides sufficient conditions on the choice of the sampling parameters for 
approximate recovery of $t_{1,j}, t_{2,j}, u_{1,j}, u_{2,j}$ for each $j=1,\dots,K$. 
%
%
%
%
\begin{theorem} \label{thm:two_win_case}
Let $\hemant{0  <  \varepsilon_2 < c\sep_2/2}$ for a constant $c \in [0,1)$, $s_2 = 0$, 
and $m_2 \in \mathbb{Z}^{+}$ satisfy 
$\frac{2}{\sep_2 - 2\varepsilon_2} + 1 \leq m_2 < \frac{2}{\sep_2 (1-c)} + 1 \ (\ = \mup{2})$. 
For \hemant{$B(\urel,K)$ as in Corollary \ref{corr:moitra_MP}}, let $\hemant{0 <  \varepsilon_1 < c\sep_1/2}$ also satisfy
\begin{align}
e^{(2\pi^2 \mup{2}^2 (\kerfnvar_2^2 - \kerfnvar_1^2))} \left(\hemant{2\pi \mup{2}} + \bar{C}_1 + 
\bar{C}_2 \log^{1/2}\left(\frac{\bar{C}_3}{\varepsilon_1}\right)\right) \hemant{\urel} \varepsilon_1 
\leq 
\hemant{\varepsilon_2 \frac{\kerfnvar_2 B(\urel,K)}{K \kerfnvar_1}}, \label{eq:twowin_eps12_cond}
\end{align}
where $\bar{C}_1, \bar{C}_2, \bar{C}_3 > 0$ are constants depending (see \eqref{eq:twowin_proof_temp9}, \eqref{eq:twowin_proof_temp10}) 
\hemant{on $c,\urel,\kerfnvar_1,\kerfnvar_2,\sep_1,K$ and a constant $\widetilde{c} > 1$}. 
Say $m_1,s_1 \in \mathbb{Z}^{+}$ are chosen to satisfy 
\begin{align*}
\frac{2}{\sep_1 - 2\varepsilon_1} + 1 \leq m_1 < \frac{2}{\sep_1 (1-c)} + 1 \ (\ = \mup{1}), 
\quad 
S_1 \leq s_1 \leq \widetilde{c} S_1, 
\end{align*}
where $S_1 = m_1 + \frac{1}{(2\pi^2(\kerfnvar_2^2 - \kerfnvar_1^2))^{1/2}} 
\log^{1/2}\left(\frac{K\urel\kerfnvar_2}{\kerfnvar_1 B(\urel,K) \varepsilon_1} \right)$. 
Then, there exist permutations $\perm_1,\perm_2: [K] \rightarrow [K]$ such that for $j=1,\dots,K$, 
\begin{align*}
d_w(\widehat t_{1,\phi_1(j)}, t_{1,j}) &\leq \varepsilon_1, \quad 
\abs{\widehat u_{1,\phi_1(j)} - u_{1,j}} < \left(\bar{C}_1 + \bar{C}_2 \log^{1/2}\left(\frac{\bar{C}_3}{\varepsilon_1}\right)\right) \hemant{\umax} \varepsilon_1, \\
d_w(\widehat t_{2,\phi_2(j)}, t_{2,j}) &\leq \varepsilon_2, \quad \abs{\widehat u_{2,\phi_2(j)} - u_{2,j}} < \widetilde{C}_2 \hemant{\umax} \varepsilon_2,
\end{align*}
where 
\begin{align} \label{eq:twogauss_ctil_exp}
\hemant{\widetilde{C}_l = \widetilde{C}(\sep_l,c,K,\urel) = 4\pi K \mup{l} + \frac{2}{C\sqrt{K}}\left(\urel + 16\urel^2\right)^{-1}}; \quad l=1,2, 
\end{align}
\hemant{and $\widetilde{C}(\cdot), C$ are as defined in Corollary \ref{corr:moitra_MP}.}
\end{theorem}
%
%
\paragraph{Interpreting Theorem \ref{thm:two_win_case}.} 
Before proceeding to the proof, we make some useful observations.
\begin{itemize}
\item[(a)] We first choose the sampling parameters $\varepsilon$ (accuracy), $m$ (number of samples), $s$ (offset) for 
the inner kernel $\kerfour_2$ and then the outer kernel, i.e., $\kerfour_1$. 
For group $i$ ($= 1,2$), we first choose $\varepsilon_i$, then $m_i$ (depending on $\varepsilon_i$), and finally the offset $s_i$ 
(depending on $m_i,\varepsilon_i$).

\item[(b)] The choice of $\varepsilon_2$ is free, but the choice of $\varepsilon_1$ is constrained by $\varepsilon_2$ as seen from 
\eqref{eq:twowin_eps12_cond}. In particular, $\varepsilon_1$ needs to be sufficiently small with respect to $\varepsilon_2$ so that 
the perturbation arising due to the estimation errors for group $1$ are controlled when we estimate the 
parameters for group $2$.

\item[(c)] The lower bound on $s_1$ ensures that we are sufficiently deep in the tail of 
$\kerfour_2$, so that its effect is negligible. The upper bound on $s_1$ is to control 
the estimation errors of the source amplitudes for group $1$ (see \eqref{eq:mmp_corr_est_err}). 
Observe that $s_2 = 0$ since $\kerfour_2$ is the innermost kernel, and so there is no perturbation 
arising due to the tail of any other inner kernel.

\item[(d)] In theory, $\varepsilon_1$ can be chosen to be arbitrarily close to zero; however, this would 
result in the offset $s_1$ becoming large. Consequently, this will lead to numerical errors when we divide by 
$\kerfour_1(s_1+i)$; $i=-m_1,\dots,m_1-1$ while estimating the source parameters for group $1$. 
\end{itemize}
%
%
\paragraph{Order wise dependencies.} The Theorem is heavy in notation, so it would help to 
understand the order wise dependencies of the terms involved. Assume $\umax,\umin \asymp 1$. 
We have $\widetilde{C}_1 \asymp K/\sep_1$ and $\widetilde{C}_2 \asymp K/\sep_2$ which leads to
\begin{align*}
\bar{C}_1 \asymp  \frac{K}{\sep_1}, \quad 
\bar{C}_2 \asymp \frac{1}{(\kerfnvar_2^2 - \kerfnvar_1^2)^{1/2}}, \quad \bar{C}_3 \asymp \frac{K^{3/2}\kerfnvar_2}{\kerfnvar_1}.
\end{align*}
\begin{itemize}
\item[(a)] For group $2$, we have $\varepsilon_2 \lesssim \sep_2$, $m_2 \asymp 1/\sep_2$, $s_2 = 0$, and 
\begin{align*}
d_w(\widehat t_{2,\phi_2(j)}, t_{2,j}) \leq \varepsilon_2, \quad 
\abs{\widehat u_{2,\phi_2(j)} - u_{2,j}} \lesssim \frac{K}{\sep_2} \varepsilon_2.
\end{align*}

\item[(b)] For group $1$, $\varepsilon_1 \lesssim \sep_1$ and \eqref{eq:twowin_eps12_cond} translates 
to
\begin{align} \label{eq:twowin_eps12cond_ord}
\left(\frac{1}{\sep_2} + \frac{K}{\sep_1} + \frac{1}{(\kerfnvar_2^2 - \kerfnvar_1^2)^{1/2}} 
\log^{1/2}\left(\frac{K^{3/2}\kerfnvar_2}{\kerfnvar_1\varepsilon_1} \right) \right)\varepsilon_1 
\lesssim \varepsilon_2\left(\frac{\kerfnvar_2}{\kerfnvar_1 K^{3/2}}\right) \exp\left(-\frac{2\pi^2(\kerfnvar_2^2 - \kerfnvar_1^2)}{\sep_2^2}\right).
\end{align}
Moreover, $m_1 \asymp 1/\sep_1$ and 
$s_1 \asymp \frac{1}{\sep_1} + (\mu_2^2 - \mu_1^2)^{-1/2}\log^{1/2}(\frac{K^{3/2} \kerfnvar_2}{\kerfnvar_1 \varepsilon_1})$. 
Finally, 
\begin{align} \label{eq:twowin_error_1_ord}
d_w(\widehat t_{1,\phi_1(j)}, t_{1,j}) \leq \varepsilon_1, \quad 
\abs{\widehat u_{1,\phi_1(j)} - u_{1,j}}\lesssim \left(\frac{K}{\sep_1} + \frac{1}{(\kerfnvar_2^2 - \kerfnvar_1^2)^{1/2}} 
\log^{1/2}\left(\frac{K^{3/2} \kerfnvar_2}{\kerfnvar_1\varepsilon_1}\right)\right) \varepsilon_1.
\end{align}
\end{itemize}
%
\paragraph{Condition on $\varepsilon_1,\varepsilon_2$.}
It is not difficult to verify that a sufficient condition for \eqref{eq:twowin_eps12cond_ord} to 
hold is that for any given $\theta \in (0,1/2)$, it holds that
\begin{equation} \label{eq:eps_12_two_win}
  \varepsilon_1 \lesssim \varepsilon_2^{\frac{1}{1-\theta}} C(\kerfnvar_1,\kerfnvar_2,\sep_1,\sep_2,K,\theta), 
\end{equation}
where $C(\kerfnvar_1,\kerfnvar_2,\sep_1,\sep_2,K,\theta) > 0$ depends only on the indicated parameters. 
This is outlined in Appendix \ref{app:sec:conditions_eps_gen} for Theorem \ref{thm:gen_case_main} for the case of $L$ kernels.
In other words, $\varepsilon_1$ would have to be sufficiently small with respect to $\varepsilon_2$, 
so that the estimation errors carrying forward from the first group to the estimation of the parameters 
for the second group, are controlled. 
%
%
%
\paragraph{Effect of separation between $\kerfnvar_1,\kerfnvar_2$.}
Note that as $\kerfnvar_1 \rightarrow \kerfnvar_2$, then \eqref{eq:twowin_eps12cond_ord} becomes 
more and more difficult to satisfy; in particular, $C(\kerfnvar_1,\kerfnvar_2,\sep_1,\sep_2,K,\theta) \rightarrow 0$ in 
\eqref{eq:eps_12_two_win}. Hence, we would have to sample sufficiently deep in the tail of $\kerfour_1$ in order 
to distinguish $\kerfour_1,\kerfour_2$ as one would intuitively expect. 
Next, for fixed $\kerfnvar_2$ as $\kerfnvar_1 \rightarrow 0$, we see 
that \eqref{eq:twowin_eps12cond_ord} becomes easier to satisfy. 
This is because $\kerfour_1(s)$ is now small for all $s$, and hence the perturbation error arising from 
stage $1$ reduces accordingly. However, notice that $s_1$ now has to increase correspondingly in order to 
distinguish between $\kerfour_1,\kerfour_2$ (since $\kerfour_1(s) \approx 0$ for all $s$).
Therefore, in order to control the estimation error of the amplitudes (see \eqref{eq:twowin_error_1_ord}), 
$\varepsilon_1$ now has to reduce accordingly. For instance, $\varepsilon_1 = o(\kerfnvar_1^{1/3})$ suffices.
On the other hand, for fixed $\kerfnvar_1$, as $\kerfnvar_2 \rightarrow \infty$, 
satisfying \eqref{eq:twowin_eps12cond_ord} becomes more and more difficult. 
This is because the tail of $\kerfour_2$ becomes thinner, and so, the deconvolution step at stage $2$ blows up 
the error arising from stage $1$. 

\begin{proof}[Proof of Theorem \ref{thm:two_win_case}]
The proof is divided into two steps below.
\begin{itemize}
%
%
\item \textbf{Recovering source parameters for first group.}
For offset parameter $s_1\in \mathbb Z^{+}$ (the choice of which will be made clear later), we obtain the samples 
$(f(s_1+i))_{i=-m_1}^{m_1-1}$. Now, for any $i=-m_1,\ldots,m_1-1$, we have that 
\begin{align}
\frac{f(s_1+i)}{\kerfour_1(s_1+i)} & = 
\sum_{j=1}^K u_{1,j}\ \exp\left(\iota 2\pi (s_1+i)t_{1,j}\right)+ 
\frac{\kerfour_2(s_1+i)}{\kerfour_1(s_1+i)} \sum_{j=1}^K \ u_{2,j} \exp \left(\iota 2\pi (s_1+i)t_{2,j}\right) \nonumber \\
& = \sum_{j=1}^K \underbrace{u_{1,j} \exp\left( \iota 2\pi s_1 t_{1,j}\right)}_{u'_{1,j}} \exp\left( \iota 2\pi i t_{1,j}\right)
\nonumber \\
& \hspace{2cm} + 
\underbrace{\frac{\kerfnvar_2}{\kerfnvar_1}\exp\left(-2\pi^2(s_1+i)^2(\kerfnvar_2^2-\kerfnvar_1^2) \right)\sum_{j=1}^K u_{2,j} \exp (\iota 2\pi (s_1+i)t_{2,j})}_{\noiseint_{1,i}}
\nonumber \\
& = \sum_{j=1}^K u'_{1,j} \exp\left(\iota 2\pi i t_{1,j}\right)+\noiseint_{1,i}. \label{eq:twowin_proof_temp1}
\end{align}
Here, $\noiseint_{1,i} \in \mathbb{C}$ corresponds to ``perturbation'' arising from the tail of $f_2(s)$. Since the stated 
choice of $s_1$ implies $s_1 > m_1$, this means $\min_{i} (s_1 + i)^2 = (s_1-m_1)^2$, and hence clearly 
\begin{align*}
\abs{\noiseint_{1,i}} \leq \frac{\kerfnvar_2}{\kerfnvar_1}\exp\left(-2\pi^2(s_1-m_1)^2(\kerfnvar_2^2-\kerfnvar_1^2)\right) K\umax, \quad i=-m_1,\ldots,m_1-1. 
\end{align*}
From \eqref{eq:twowin_proof_temp1}, we can see that $\widetilde {H}_0^{(1)} = V_1 D_{u'_1} V_1^H + E^{(1)}$
and $\widetilde {H}_1^{(1)} = V_1 D_{u'_1} D_{\alpha_1} V_1^H+ F^{(1)}$. Here, $D_{u'_1} = \diag(u'_{1,1},\dots,u'_{1,K})$ and 
$D_{\alpha_1} = \diag(\alpha_{1,1},\dots,\alpha_{1,K})$, while $E^{(1)}, F^{(1)}$ denote the perturbation matrices 
consisting of $(\noiseint_{1,i})_{i=-m_1}^{m_1-1}$ terms, as in \eqref{eq:mp_matrix_0}, \eqref{eq:mp_matrix_1}.

We obtain estimates $\widehat t_{1,j}$, $\est{u}_{1,j}$, $j=1,\ldots,K$ via the MMP method.
Invoking Corollary \ref{corr:moitra_MP}, we have for 
$\varepsilon_1 < c \sep_1/2$, and $\frac{2}{\sep_1 - 2\varepsilon_1} + 1 \leq m_1 < \frac{2}{\sep_1(1-c)} + 1 (= \mup{1})$ that if 
$s_1$ satisfies
\begin{align} \label{eq:twowin_proof_temp11}
\frac{\kerfnvar_2}{\kerfnvar_1}\exp\left(-2\pi^2(s_1-m_1)^2(\kerfnvar_2^2-\kerfnvar_1^2)\right) K\umax 
\leq \hemant{\varepsilon_1 \umin B(\urel,K),} 
\end{align}
then there exists a permutation $\phi_1: [K] \rightarrow [K]$ such that for each $j=1,\dots,K$,  
\begin{align}
d_w(\widehat t_{1,\phi_1(j)}, t_{1,j}) \leq \varepsilon_1, \quad
\abs{\widehat u_{1,\phi_1(j)} - u_{1,j}} 
< \hemant{\left(\widetilde{C}_1 + 2\pi s_1\right) \umax \varepsilon_1}, \label{eq:twowin_proof_temp7}
\end{align}
where \hemant{$\widetilde{C}_1 = \widetilde{C}(\sep_1,c,K,\urel) = 4\pi K \mup{1} + \frac{2}{C\sqrt{K}}\left(\urel + 16\urel^2\right)^{-1}$}.
Clearly, the condition 
\begin{equation*}
s_1 \geq m_1 + \frac{1}{(2\pi^2(\kerfnvar_2^2 - \kerfnvar_1^2))^{1/2}} 
\hemant{\log^{1/2}\left(\frac{K\urel\kerfnvar_2}{\kerfnvar_1 \varepsilon_1 B(\urel,K)} \right)}
\end{equation*}
implies \eqref{eq:twowin_proof_temp11}. Moreover, since $s_1 \leq \widetilde{c} S_1$ and $m_1 < \mup{1}$, we obtain 
\begin{align}
s_1 < \widetilde{c} \left(\mup{1} + \frac{1}{(2\pi^2(\kerfnvar_2^2 - \kerfnvar_1^2))^{1/2}} 
\hemant{\log^{1/2}\left(\frac{K \urel \kerfnvar_2}{\kerfnvar_1 \varepsilon_1 B(\urel,K)} \right)}\right). 
\label{eq:twowin_proof_temp6}
\end{align}
Plugging \eqref{eq:twowin_proof_temp6} into \eqref{eq:twowin_proof_temp7} leads to the bound
\begin{align}
\abs{\widehat u_{1,\phi_1(j)} - u_{1,j}} 
< \left(\bar{C}_1 + \bar{C}_2 \log^{1/2}\left(\frac{\bar{C}_3}{\varepsilon_1}\right)\right) \hemant{\umax} \varepsilon_1; \quad j=1,\dots,K, \label{eq:twowin_proof_temp8}
\end{align}
where $\bar{C}_1, \bar{C}_2, \bar{C}_3 > 0$ are constants defined as follows.
\begin{align}
\bar{C}_1 &= \hemant{\widetilde{C}_1 + 2\pi\widetilde{c}\mup{1}}, \label{eq:twowin_proof_temp9} \\
\bar{C}_2 &= \hemant{\frac{2\pi\widetilde{c}}{(2\pi^2(\kerfnvar_2^2 - \kerfnvar_1^2))^{1/2}}}, \quad 
\bar{C}_3 = \hemant{\frac{K \urel \kerfnvar_2}{\kerfnvar_1 B(\urel,K)}}. \label{eq:twowin_proof_temp10}
\end{align}
%

%
\item \textbf{Recovering source parameters for second group.}
Let $\est{f}_1$ denote the estimate of $f_1$ obtained using the estimates $(\est{u}_{1,j})_{j=1}^K, (\est{t}_{1,j})_{j=1}^K$, defined as
\begin{align*}
\est{f}_1(s) = \kerfour_1(s) \sum_{j=1}^K \est{u}_{1,j} \exp \left(\iota 2\pi \est{t}_{1,j} s\right).
\end{align*}
For suitable $s_2, m_2 \in \mathbb Z_+$ (choice to be made clear later), we now obtain samples  
\begin{align*}
	\frac{f(s_2+i) - \est{f}_1(s_2+i)}{\kerfour_2(s_2+i)}; \quad  i=-m_2,\ldots,m_2-1.
\end{align*}
Let us note that  
\begin{align}
	& \frac{f(s_2+i)-\est{f}_1(s_2+i)}{\kerfour_2(s_2+i)} 
	 = \frac{\kerfour_1(s_2+i)}{\kerfour_2(s_2+i)} \sum_{j=1}^K\left(u_{1,j} \exp\left(\iota 2\pi (s_2+i) t_{1,j} \right)- 
	\widehat u_{1,\perm_1(j)} \exp \left(\iota 2\pi (s_2+i)\widehat t_{1,\perm_1(j)}\right) \right) \nonumber \\
	& \hspace{1cm} +\sum_{j=1}^K u_{2,j} \exp \left(\iota 2\pi (s_2 + i) t_{2,j} \right) \nonumber \\ 
	&= \underbrace{\frac{\kerfnvar_1}{\kerfnvar_2} \exp(2\pi^2 (s_2+i)^2 (\kerfnvar_2^2 - \kerfnvar_1^2)) \sum_{j=1}^K\left(u_{1,j} \exp(\iota 2\pi (s_2+i)t_{1,j}) - 
	\widehat u_{1,\perm_1(j)} \exp \left(\iota 2\pi (s_2+i)\widehat t_{1,\perm_1(j)}\right) \right)}_{\noiseint_{2,i}} \nonumber \\
	& \hspace{1cm} + \sum_{j=1}^K \underbrace{u_{2,j} \exp(\iota 2\pi s_2 t_{2,j})}_{u'_{2,j}} \exp(\iota 2\pi i t_{2,j}) \nonumber \\
	&=  \sum_{j=1}^K u'_{2,j} \exp(\iota 2\pi i t_{2,j}) + \noiseint_{2,i}. \label{eq:twowin_proof_temp3}
\end{align}
Here, $\noiseint_{2,i} \in \mathbb{C}$ corresponds to noise arising from the estimation errors for the parameters in the 
first group of sources. As a direct consequence of Proposition \ref{app:prop_useful_res_2}, we have for each $j=1,\dots,K$ that   
\begin{align}
& \abs{u_{1,j} \exp(\iota 2\pi (s_2+i)t_{1,j}) - \est{u}_{1,\perm_1(j)} \exp (\iota 2\pi (s_2+i)\widehat t_{1,\perm_1(j))}} \nonumber\\
&\leq 2\pi\umax \abs{s_2 + i} d_w(t_{1,j}, \est{t}_{1,\perm_1(j)}) + \abs{u_{1,j} - \est{u}_{1,\perm_1(j)}} \nonumber \\
&< 2\pi \umax \abs{s_2+i} \varepsilon_1 + \left(\bar{C}_1 + 
\bar{C}_2 \log^{1/2}\left(\frac{\bar{C}_3}{\varepsilon_1}\right)\right) \hemant{\umax}\varepsilon_1, \label{eq:twowin_proof_temp4}
\end{align}
where the last inequality above follows from the bounds on $\abs{u_{1,j} - \est{u}_{1,\perm_1(j)}}$, $d_w(t_{1,j}, t_{1,\perm_1(j)})$, 
derived earlier. Now for $s_2 = 0$, and using the fact $\abs{i} \leq m_2 < \frac{2}{\sep_2(1-c)} + 1 \ (\ = \mup{2})$, we obtain from 
\eqref{eq:twowin_proof_temp4} the following uniform bound on $\abs{\noiseint_{2,i}}$.
\begin{align} 
\abs{\noiseint_{2,i}} 
&< \frac{\kerfnvar_1}{\kerfnvar_2} K e^{(2\pi^2 m_2^2 (\kerfnvar_2^2 - \kerfnvar_1^2))} 
\left(\hemant{2\pi m_2} + \bar{C}_1 + \bar{C}_2 \log^{1/2}\left(\frac{\bar{C}_3}{\varepsilon_1}\right)\right) \hemant{\umax} \varepsilon_1 \nonumber \\
&< \frac{\kerfnvar_1}{\kerfnvar_2} K e^{(2\pi^2 \mup{2}^2 (\kerfnvar_2^2 - \kerfnvar_1^2))} 
\left(\hemant{2\pi \mup{2}} + \bar{C}_1 + \bar{C}_2 \log^{1/2}\left(\frac{\bar{C}_3}{\varepsilon_1}\right)\right) \hemant{\umax} \varepsilon_1. \label{eq:twowin_proof_temp5}
\end{align}
From \eqref{eq:twowin_proof_temp3}, we see that $\widetilde {H}_0^{(2)} = V_2 D_{u'_2} V_2^H + E^{(2)}$, 
$\widetilde {H}_1^{(2)} = V_2 D_{u'_2} D_{\alpha_2}V_2^H +F^{(2)}$. Here, $D_{u'_2} = \diag(u'_{2,1},\dots,u'_{2,K})$ and 
$D_{\alpha_2} = \diag(\alpha_{2,1},\dots,\alpha_{2,K})$, while $E^{(2)}, F^{(2)}$ denote the perturbation matrices 
consisting of $(\noiseint_{2,i})_{i=-m_2}^{m_2-1}$ terms, as in \eqref{eq:mp_matrix_0}, \eqref{eq:mp_matrix_1}.

We obtain the estimates $(\widehat t_{2,j})_{j=1}^K$ and $(\est{u}_{2,j})_{j=1}^K$ using the MMP method. 
Invoking Corollary \ref{corr:moitra_MP} and assuming $\varepsilon_2 < c \sep_2/2$, 
$\frac{2}{\sep_2 - 2\varepsilon_2} + 1 \leq m_2 < \mup{2}$ hold, it is sufficient that 
$\varepsilon_1$ satisfies the condition
\begin{align*}
\frac{\kerfnvar_1}{\kerfnvar_2} K e^{(2\pi^2 \mup{2}^2 (\kerfnvar_2^2 - \kerfnvar_1^2))} 
\left(\hemant{2\pi \mup{2}} + \bar{C}_1 + 
\bar{C}_2 \log^{1/2}\left(\frac{\bar{C}_3}{\varepsilon_1}\right)\right) \hemant{\umax} \varepsilon_1 
\leq \varepsilon_2 \hemant{\umin B(\urel,K)}. 
\end{align*}
Indeed, there then exists a permutation $\perm_2: [K] \rightarrow [K]$ such that for each $j=1,\dots,K$, 
\begin{align*}
d_w(\widehat t_{2,\phi_2(j)}, t_{2,j}) \leq \varepsilon_2, \quad \abs{\widehat u_{2,\phi_2(j)} - u_{2,j}} < \widetilde{C}_2 \varepsilon_2, 
\end{align*}
where $\hemant{\widetilde{C}_2 = \widetilde{C}(\sep_2,c,K,\urel) = 4\pi K \mup{2} + \frac{2}{C\sqrt{K}}\left(\urel + 16\urel^2 \right)^{-1}}$.
This completes the proof.
\end{itemize}
\end{proof}
%
%
\subsection{The general case} \label{sec:gen_noiseless_case}
We now move to the general case where $L \geq 1$. The function $f$ is now of the form
\begin{align*}
f(s) & = \sum_{l=1}^{L} \kerfour_l(s) \left(\sum_{j=1}^K u_{l,j} \exp(\iota 2\pi s t_{l,j})\right)
\end{align*}
where we recall that $\kerfour_l(s) = \sqrt{2\pi} \ \kerfnvar_l \exp (-2\pi^2 s^2 \kerfnvar_l^2)$. 
Before stating our result, it will be helpful to define certain terms for ease of notation, later on. 
\begin{enumerate}[label=\textnormal{(\arabic*)}]
\item For $l=1,\dots,L$,
%
\begin{align} \label{eq:gen_proof_temp26}
\mup{l} := \frac{2}{\sep_l(1-c)} + 1, \quad 
\hemant{\widetilde{C}_l := \left[4\pi K \mup{l} + \frac{2}{C\sqrt{K}}\left(\urel + 16\urel^2 \right)^{-1} \right]}.
\end{align}
with constants $c \in (0,1)$ and $C = 10 + \frac{1}{2\sqrt{2}}$ (from \hemant{Corollary \ref{corr:moitra_MP}}). 

\item \label{itm:defs_mainthm_terms_2} For $l=1,\dots,L-1$, and a constant $\widetilde{c} > 1$,  
\begin{align}
\bar{C}_{l,1} &:= \widetilde{C}_l + \hemant{2\pi\widetilde{c}\mup{l}},  \
\bar{C}_{l,2} := \frac{\hemant{2\pi\widetilde{c}}}{(2\pi^2(\kerfnvar_{l+1}^2 - \kerfnvar_l^2))^{1/2}}, \label{eq:gen_proof_temp8} 
\\ 
D_l &:= \hemant{\frac{K \urel (L-l)\kerfnvar_L}{\kerfnvar_l B(\urel,K)}}, \quad
\bar{C}_{l,3} := \left\{
\begin{array}{rl}
D_l \quad ; & l = 1 \\
2 D_l  \quad ; & l > 1.
\end{array} \right. \label{eq:gen_proof_temp9}
\end{align}
\hemant{where $B(\urel,K)$ is as defined in Corollary \ref{corr:moitra_MP}.}

\item For $l = 1,\dots,L$, define
\begin{align} \label{eq:gen_proof_temp27}
E_l(\varepsilon) := 
\left\{
\begin{array}{rl}
\hemant{\left(\bar{C}_{l,1} + \bar{C}_{l,2} \log^{1/2}\left(\frac{\bar{C}_{l,3}}{\varepsilon}\right)\right)}  \quad ; & l < L \\
\hemant{\widetilde{C}_L} \quad ; & l = L.
\end{array} \right.
\end{align}
\hemant{where $\varepsilon \in (0,1)$}.

\item For $l = 2,\dots,L-1$, define
\begin{align}
F_l(\varepsilon) &:= C_{l,1}' + C_{l,2}' \log^{1/2}\left(\frac{2 D_l}{\varepsilon}\right); \label{eq:gen_proof_temp28} \\
\text{where} \quad C_{l,1}' &:= \hemant{2\pi}(\widetilde{c} + 1)\mup{l}, \ C_{l,2}' = \hemant{\bar{C}_{l,2}}. \label{eq:gen_proof_temp29}
\end{align}
Here, $\widetilde{c} > 1$ is the same constant as in \ref{itm:defs_mainthm_terms_2}.
\end{enumerate}
We are now ready to state our main theorem for approximate recovery of the 
source parameters for each group.
%
%
%
%
\begin{theorem} \label{thm:gen_case_main}
For a constant $c \in (0,1)$, let $0 < \varepsilon_L < c\sep_L/2$, $\frac{2}{\sep_L - 2\varepsilon_L} \leq m_L < \mup{L}$ and $s_L = 0$. 
Moreover, for $l = L-1, \dots, 1$, say we choose $\varepsilon_l,m_l,s_l$ as follows.
\begin{enumerate} 
\item $\hemant{0 <  \varepsilon_l < c\sep_l/2}$ additionally satisfies the following conditions.
\begin{enumerate} 

\item\label{itm:main_thm_epscond_2} \hemant{$(2\pi \mup{L} + E_{L-1}(\varepsilon_{L-1})) \varepsilon_{L-1} \urel
\leq \varepsilon_{L} e^{-2\pi^2(\kerfnvar_{L}^2-\kerfnvar_1^2)\mup{L}^2} \frac{\kerfnvar_{L} B(\urel,K) }{K (L-1) \kerfnvar_{L-1}}$}. 

\item\label{itm:main_thm_epscond_3} If $l < L-1$, then $\varepsilon_{l} \leq \varepsilon_{l+1}$, 
$E_l(\varepsilon_{l}) \hemant{\varepsilon_l} \leq E_{l+1}(\varepsilon_{l+1}) \hemant{\varepsilon_{l+1}}$ and 
\begin{equation} \label{eq:app_epsconds1c_3}
\hemant{(F_{l+1}(\varepsilon_{l}) + E_{l}(\varepsilon_{l})) \varepsilon_l \urel
\leq \varepsilon_{l+1} e^{-(\kerfnvar_{l+1}^2-\kerfnvar_1^2)\frac{F_{l+1}^2(\varepsilon_{l+1})}{2}} \frac{\kerfnvar_{l+1} B(\urel,K)}{2K l \kerfnvar_{l}}.}
\end{equation}  
%
\end{enumerate}

\item $\frac{2}{\sep_l - 2\varepsilon_l} \leq m_l < \mup{l}$, and 
$S_l \leq s_l \leq \widetilde{c} S_l$ (for constant $\widetilde{c} > 1$) where
\begin{align*}
S_l = m_l + \frac{1}{(2\pi^2(\kerfnvar_{l+1}^2 - \kerfnvar_l^2))^{1/2}} 
\log^{1/2}\left(\hemant{\frac{b_l K \urel (L-l) \kerfnvar_L}{\varepsilon_l B(\urel,K) \kerfnvar_l}} \right),
\end{align*}
with $b_l = \hemant{1}$ if $l = 1$, and $b_l = \hemant{2}$ otherwise. 
\end{enumerate}
Then, for each $l=1,\dots,L$, there exists a permutation $\perm_l: [K] \rightarrow [K]$ such that 
\begin{align*}
d_w(\widehat t_{l,\perm_l(j)}, t_{l,j}) &\leq \varepsilon_l, \quad 
\abs{\widehat u_{l,\perm_l(j)} - u_{l,j}} < E_l(\varepsilon_l) \hemant{\varepsilon_l \umax}; \quad j=1,\dots,K.
\end{align*}
\end{theorem}
%
%
\paragraph{Interpreting Theorem \ref{thm:gen_case_main}.} Before proceeding to the proof, we make some useful observations. 
\begin{itemize} 
\item [(a)] We first choose the sampling parameters ($\varepsilon,m,s$) for the outermost kernel $\kerfour_L$, then 
for $\kerfour_{L-1}$, and so on. For the $l^{th}$ group ($1 \leq l \leq L$), we first choose $\varepsilon_L$ (accuracy), 
then $m_l$ (number of samples), and finally $s_l$ (sampling offset).

\item [(b)] The choice of $\varepsilon_L \in (0,c\sep_L/2)$, while free, dictates the choice of 
$\varepsilon_1,\dots,\varepsilon_{L-1}$.  To begin with, condition \ref{itm:main_thm_epscond_2} essentially
requires $\varepsilon_{L-1}$ to be sufficiently small with respect to $\varepsilon_L$. Similarly, 
for $l = 1,\dots,L-2$, the conditions in \ref{itm:main_thm_epscond_3} require $\varepsilon_l$ to be sufficiently small with respect to 
$\varepsilon_{l+1}$. It ensures that during the estimation of the parameters for the $(l+1)^{th}$ group, 
the estimation errors carrying forward from the previous groups ($1$ to $l$) are sufficiently small.

\item [(c)] For each $l$ ($< L$), the lower bound on $s_l$ is to ensure that we are sufficiently deep in the tails 
of $\kerfour_{l+1}, \kerfour_{l+2}, \dots, \kerfour_L$. The upper bound on $s_l$ is to control the estimation errors of the 
source amplitudes for group $l$ (see \eqref{eq:mmp_corr_est_err}).
\end{itemize}

\paragraph{Order wise dependencies.} We now discuss the scaling of the terms involved, assuming $\umax,\umin \asymp 1$.
\begin{itemize}
\item[(i)] For $l=1,\dots,L$, we have $\mup{l} \asymp \frac{1}{\sep_l}$, $\widetilde{C}_l  \asymp \frac{K}{\sep_l}$.

\item[(ii)] For $p=1,\dots,L-1$ we have
\begin{align*}
\bar{C}_{p,1} \asymp \frac{K}{\sep_p}, \
\bar{C}_{p,2} \asymp \frac{1}{(\kerfnvar_{p+1}^2 - \kerfnvar_p^2)^{1/2}}, \
D_p \asymp \frac{K^{3/2}(L-p)\kerfnvar_L}{\kerfnvar_p}, \ \bar{C}_{p,3} \asymp D_p.
\end{align*}

\item[(iii)] For $l = 1,\dots,L$, we have
\begin{align} \label{eq:E_L_ord_dep}
E_l(\varepsilon_l) \asymp  
\left\{
\begin{array}{rl}
\hemant{\frac{K}{\sep_l} + \frac{1}{(\kerfnvar_{l+1}^2 - \kerfnvar_l^2)^{1/2}} 
\log^{1/2}\left(\frac{K^{3/2}(L-l)\kerfnvar_L}{\varepsilon_l\kerfnvar_l}\right)}   \quad ; & l < L \\
\\
 \frac{K}{\sep_L} \quad ; & l = L.
\end{array} \right.
\end{align}

\item[(iv)] For $q = 2,\dots,L-1$, we have
\begin{align} \label{eq:orddep_F_gen}
F_q(\varepsilon_q) \asymp \frac{1}{\sep_q} + \frac{1}{(\kerfnvar_{q+1}^2 - \kerfnvar_q^2)^{1/2}} \log^{1/2}\left(\frac{K^{3/2} (L-q) \kerfnvar_L}{\varepsilon_q\kerfnvar_q}\right).
\end{align}
\end{itemize}
%
%
\paragraph{Conditions on $\varepsilon_l$.} 
Theorem \ref{thm:gen_case_main} has several conditions on $\varepsilon_l$, which might be difficult to digest at first glance. 
On a top level, the conditions dictate that the accuracies 
satisfy $\varepsilon_1 \leq \varepsilon_2 \leq \cdots \leq \varepsilon_L$. In fact, they require a stronger condition in the 
sense that for each $1 \leq l \leq L-1$, $\varepsilon_l$ is required to be sufficiently smaller than $\varepsilon_{l+1}$ (the choice 
of $\varepsilon_L \lesssim \sep_L/2$ is free). This places the strongest assumption on $\varepsilon_1$ meaning that the source parameters 
corresponding to the ``outermost kernel'' in the Fourier domain should be estimated with the highest accuracy. Below, we state 
the conditions appearing on $\varepsilon_l$ in the Theorem up to positive constants; the details are deferred to 
Appendix \ref{app:sec:conditions_eps_gen}.
\begin{enumerate}
\item Condition \ref{itm:main_thm_epscond_2} in Theorem \ref{thm:gen_case_main} holds 
if $\varepsilon_{L-1}, \varepsilon_L$ satisfy
\begin{align} \label{eq:eps_main_alpha}
\varepsilon_{L-1} \lesssim \alpha (\varepsilon_L)^{\frac{1}{1-\theta}}
\end{align}
for any given $\theta \in (0,1/2)$. Here, $\alpha > 0$ depends on 
$\theta,K,L,\sep_{L-1},\sep_{L},\kerfnvar_{L},\kerfnvar_{L-1},\kerfnvar_{1}$.

\item For $l < L - 1$, let us look at condition \ref{itm:main_thm_epscond_3} in Theorem \ref{thm:gen_case_main}. 
The requirement $E_l(\varepsilon_l) \hemant{\varepsilon_l} \leq E_{l+1}(\varepsilon_{l+1}) \hemant{\varepsilon_{l+1}}$ 
holds if $\varepsilon_l,\varepsilon_{l+1}$ satisfy
\begin{equation} \label{eq:eps_main_lambda}
\varepsilon_l \lesssim \lambda_l \log^{\frac{1}{2(1-\theta)}} \left(\frac{K^{3/2}\kerfnvar_{L} \hemant{(L-l)}}{\varepsilon_{l+1}\kerfnvar_{l+1}}\right) \varepsilon_{l+1}^{\frac{1}{1-\theta}}
\end{equation}
for any given $\theta \in (0,1/2)$. Here, $\lambda_l > 0$ depends on 
$\sep_l,\sep_{l+1},\kerfnvar_{l},\kerfnvar_{l+1},\kerfnvar_{l+2},L,K,\theta$.
 
Furthermore, the condition in \eqref{eq:app_epsconds1c_3} is satisfied if $\varepsilon_l, \varepsilon_{l+1}$ satisfy
\begin{equation} \label{eq:eps_main_beta}
\varepsilon_l \lesssim \beta_l (\varepsilon_{l+1})^{\frac{1+\gamma_l}{1-\theta}}
\end{equation}
for any given $\theta \in (0,1/2)$. Here, $\beta_l > 0$ depends on 
$L,K,\sep_l$, $\sep_{l+1},\sep_{L-1},\sep_{L}$, $\kerfnvar_{l},\kerfnvar_{l+1}$, $\kerfnvar_{l+2},
\kerfnvar_{L}$, $\kerfnvar_{L-1},\kerfnvar_{1},\theta$, while 
$\gamma_l > 0$ depends on $\kerfnvar_1,\kerfnvar_{l+1},\kerfnvar_{l+2},\sep_{l+1}$. 
Note that the dependence on $\varepsilon_{l+1}$ is stricter in \eqref{eq:eps_main_beta} as compared to 
\eqref{eq:eps_main_lambda}.
\end{enumerate}
%
%
%
\paragraph{Effect of separation between $\kerfnvar_l, \kerfnvar_{l+1}$ for $l = 1,\dots,L-1$.} 
The interaction between $\kerfnvar_{L-1},\kerfnvar_L$ occurs in the same manner as explained for the case of two kernels, 
the reader is invited to verify this. 
We analyze the interaction between $\kerfnvar_{l},\kerfnvar_{l+1}$ below for $l < L-1$.
\begin{enumerate}

\item Consider the scenario where $\kerfnvar_l \rightarrow \kerfnvar_{l+1}$ (with other terms fixed). 
We see that $s_l$ has to be suitably large now in order to be able to distinguish between $\kerfour_l$ 
and $\kerfour_{l+1}$. Moreover, conditions \eqref{eq:eps_main_lambda}, \eqref{eq:eps_main_beta} 
become stricter in the sense that $\lambda_l,\beta_l \rightarrow 0$. 

\item Now say $\kerfnvar_{l+1}$ is fixed, and $\kerfnvar_{l} \rightarrow 0$ 
(and hence $\kerfnvar_1,\dots,\kerfnvar_{l-1} \rightarrow 0$). 
In this case, the conditions $E_1(\varepsilon_1) \hemant{\varepsilon_l} \leq \cdots 
\leq E_{l+1}(\varepsilon_{l+1}) \hemant{\varepsilon_{l+1}}$ become 
vacuous as the estimation error arising from stages $1,\dots,l-1$ themselves approach $0$.
However, $s_1,\dots,s_l$ now increase accordingly in order to distinguish within $\kerfour_1,\dots,\kerfour_l$. 
Hence, to control the estimation error of the amplitudes, i.e, $E_i(\varepsilon_i)$; $1 \leq i \leq l$, 
$\varepsilon_1,\dots,\varepsilon_l$ have to be suitably small. 
\end{enumerate}
%
\begin{proof}[Proof of Theorem \ref{thm:gen_case_main}]
The proof is divided in to three main steps.
\begin{itemize}
\item \textbf{Recovering source parameters for first group.} 
For $i=-m_1,\ldots,m_1-1$, we have  
\begin{align}
\frac{f(s_1+i)}{\kerfour_1(s_1+i)} &= \sum_{j=1}^K u_{1,j} \exp\left(\iota 2\pi (s_1+i)t_{1,j}\right)  
+ \sum_{l=2}^{L}\frac{\kerfour_l(s_1+i)}{\kerfour_1(s_1+i)} \sum_{j=1}^K \ u_{l,j} \exp \left(\iota 2\pi (s_1+i)t_{l,j}\right)
\nonumber \\
& = \sum_{j=1}^K \underbrace{u_{1,j} \exp\left( \iota 2\pi s_1 t_{1,j}\right)}_{u'_{1,j}} \exp\left( \iota 2\pi i t_{1,j}\right)
\nonumber \\
& \hspace{2cm} + \underbrace{\sum_{l=2}^{L} \frac{\kerfnvar_l}{\kerfnvar_1}\exp\left(-2\pi^2(s_1+i)^2(\kerfnvar_l^2-\kerfnvar_1^2) \right)
\sum_{j=1}^K u_{l,j} \exp (\iota 2\pi (s_1+i)t_{l,j})}_{\noiseint_{1,i}} \nonumber \\
& = \sum_{j=1}^K u'_{1,j} \exp\left(\iota 2\pi i t_{1,j}\right) + \noiseint_{1,i}. \label{eq:gen_proof_temp1}
\end{align}
Here, $\noiseint_{1,i}$ is the perturbation due to the tail of $\kerfour_2,\kerfour_3,\dots,\kerfour_L$.
Since the stated choice of $s_1$ implies $s_1 > m_1$, this means $\min_{i} (s_1 + i)^2 = (s_1-m_1)^2$, and hence clearly 
\begin{align*}
\abs{\noiseint_{1,i}} 
&\leq K\umax \sum_{l=2}^{L} \frac{\kerfnvar_l}{\kerfnvar_1}\exp\left(-2\pi^2(s_1-m_1)^2(\kerfnvar_l^2-\kerfnvar_1^2)\right) \\ 
&\leq K\umax (L-1) \frac{\kerfnvar_{L}}{\kerfnvar_1}\exp\left(-2\pi^2(s_1-m_1)^2(\kerfnvar_2^2-\kerfnvar_1^2)\right).
\end{align*}
We obtain estimates $\widehat t_{1,j}$, $\est{u}_{1,j}$, $j=1,\ldots,K$ via the MMP method.
Invoking Corollary \ref{corr:moitra_MP}, we have for $\varepsilon_1 < c \sep_1/2$, and 
$\frac{2}{\sep_1 - 2\varepsilon_1} + 1 \leq m_1 < \frac{2}{\sep_1(1-c)} + 1 (= \mup{1})$ that if 
$s_1$ satisfies 
\begin{align} \label{eq:gen_proof_temp2}
K\umax (L-1) \frac{\kerfnvar_{L}}{\kerfnvar_1}\exp\left(-2\pi^2(s_1-m_1)^2(\kerfnvar_2^2-\kerfnvar_1^2)\right)
\leq 
\varepsilon_1 \hemant{\umin B(\urel,K)}
\end{align}
then there exists a permutation $\phi_1: [K] \rightarrow [K]$ such that for each $j=1,\dots,K$,  
\begin{align}
d_w(\widehat t_{1,\phi_1(j)}, t_{1,j}) \leq \varepsilon_1, \quad
\abs{\widehat u_{1,\phi_1(j)} - u_{1,j}} 
< \left(\widetilde{C}_1 + 2 \pi s_1\right) \hemant{\umax} \varepsilon_1. \label{eq:gen_proof_temp3}
\end{align}
Clearly, the condition 
\begin{equation}
s_1 \geq m_1 + \frac{1}{(2\pi^2(\kerfnvar_2^2 - \kerfnvar_1^2))^{1/2}} 
\log^{1/2}\left(\hemant{\frac{K \urel (L-1) \kerfnvar_L}{\kerfnvar_1 \varepsilon_1 B(\urel,K)}  } \right)
\label{eq:low_s1}
\end{equation}
implies \eqref{eq:gen_proof_temp2}. Moreover, since $s_1 \leq \widetilde{c} S_1$ and  $m_1 < \mup{1}$, we obtain  
\begin{align}
s_1 &< \widetilde{c} \left(\mup{1} + \frac{1}{(2\pi^2(\kerfnvar_2^2 - \kerfnvar_1^2))^{1/2}} 
\log^{1/2}\left(\hemant{\frac{K (L-1) \urel \kerfnvar_L}{\kerfnvar_1 \varepsilon_1 B(\urel,K)}} \right)\right). 
\label{eq:gen_proof_temp4}
\end{align}
Plugging \eqref{eq:gen_proof_temp4} in \eqref{eq:gen_proof_temp3}, we obtain
\begin{align} \label{eq:gen_proof_temp5}
\abs{\widehat u_{1,\phi_1(j)} - u_{1,j}} 
< \left(\bar{C}_{1,1} + \bar{C}_{1,2} \log^{1/2}\left(\frac{\bar{C}_{1,3}}{\varepsilon_1}\right)\right) \varepsilon_1 \hemant{\umax} = E_1(\varepsilon_1) \hemant{\varepsilon_1 \umax}; \quad j=1,\dots,K, 
\end{align}
where $\bar{C}_{1,1}, \bar{C}_{1,2}, \bar{C}_{1,3} > 0$ are constants defined in \eqref{eq:gen_proof_temp8}, \eqref{eq:gen_proof_temp9}, 
and $E_p(\cdot)$ is defined in \eqref{eq:gen_proof_temp27}.
%
%

\item \textbf{Recovering source parameters for $l^{th} (1 < l < L)$ group.}
Say we are at the $l^{th}$ iteration for $1 < l < L$, having estimated the source parameters 
up to the $(l-1)^{th}$ group. 
Say that for each $p = 1,\dots,l-1$ and $j=1,\dots,K$ the following holds.
\begin{align} \label{eq:gen_proof_temp10}
d_w(\est{t}_{p,\phi_p(j)}, t_{p,j}) \leq \varepsilon_p, \quad  \abs{\est{u}_{p,\phi_p(j)} - u_{p,j}} < E_p(\varepsilon_p) \hemant{\umax \varepsilon_p}, 
\end{align}
for some permutations $\phi_p: [K] \rightarrow [K]$, with 
\begin{enumerate} 
\item $\varepsilon_1 \leq \cdots \leq \varepsilon_{l-1}; \quad \quad E_1(\varepsilon_1) \hemant{\varepsilon_1} \leq \cdots \le E_{l-1}(\varepsilon_{l-1}) \hemant{\varepsilon_{l-1}}$; \label{eq:gen_proof_temp20} 

\item $\varepsilon_{p} < c \sep_p/2$ ;  \label{eq:gen_proof_temp21} 

\item $\hemant{(F_{q+1}(\varepsilon_{q}) + E_{q}(\varepsilon_{q})) \urel \varepsilon_q 
\leq \varepsilon_{q+1} e^{-(\kerfnvar_{q+1}^2-\kerfnvar_1^2)\frac{F_{q+1}^2(\varepsilon_{q+1})}{2}} \frac{\kerfnvar_{q+1} B(\urel,K)}{2K q \kerfnvar_{q}}}$,  $1 \leq q \leq l-2$.
\end{enumerate}
For $i=-m_l,\ldots,m_l-1$, we have  
\begin{align*}
&\frac{f(s_l + i) - \sum_{p=1}^{l-1} \est{f}_p(s_l + i)}{\kerfour_l(s_l+i)} \\
&= \underbrace{\sum_{p=1}^{l-1}\frac{\kerfour_p(s_l+i)}{\kerfour_l(s_l + i)} \sum_{j=1}^K[u_{p,j}\exp(\iota2\pi(s_l+i)t_{p,j}) - \est{u}_{p,\perm_p(j)} \exp(\iota2\pi(s_l+i)\est{t}_{p,\perm_p(j)})]}_{\noiseint_{l,i,past}} \\
&+ \underbrace{\sum_{q=l+1}^{L} \frac{\kerfour_q(s_l+i)}{\kerfour_l(s_l + i)} \sum_{j=1}^{K}u_{q,j}\exp(\iota2\pi(s_l+i)t_{q,j})}_{\noiseint_{l,i,fut}} \\
&+ \sum_{j=1}^K \underbrace{u_{l,j} \exp(\iota2\pi s_l t_{l,j})}_{u'_{l,j}} \exp(\iota2\pi i t_{l,j}) 
= \sum_{j=1}^K u'_{l,j} \exp(\iota2\pi i t_{l,j}) + \noiseint_{l,i,past} + \noiseint_{l,i,fut}.
\end{align*}
Here, $\noiseint_{l,i,past}$ denotes perturbation due to the estimation errors of the source parameters in the past. 
Moreover, $\noiseint_{l,i,fut}$ denotes perturbation due to the tails of the kernels that are yet to be processed. 
\begin{enumerate}
\item[(i)] \underline{\textit{Bounding $\noiseint_{l,i,past}$.}} 
To begin with, note 
\begin{align*}
\noiseint_{l,i,past} = 
\sum_{p=1}^{l-1}\frac{\kerfnvar_p}{\kerfnvar_l} \exp(2\pi^2(\kerfnvar_l^2 - \kerfnvar_p^2) (s_l + i)^2) 
&\sum_{j=1}^K[u_{p,j}\exp(\iota2\pi(s_l+i)t_{p,j}) \\ 
&- \est{u}_{p,\perm_p(j)} \exp(\iota2\pi(s_l+i)\est{t}_{p,\perm_p(j)}].
\end{align*}
Using Proposition \ref{app:prop_useful_res_2}, we have for each $p=1,\dots,l-1$ and $j=1,\dots,K$ that   
\begin{align}
& \abs{u_{p,j} \exp(\iota 2\pi (s_l+i)t_{p,j}) - \est{u}_{p,\perm_p(j)} \exp (\iota 2\pi (s_l+i)\widehat t_{p,\perm_p(j)})} \nonumber \\
&\leq 2\pi\umax \abs{s_l + i} d_w(t_{p,j}, \est{t}_{p,\perm_p(j)}) + \abs{u_{p,j} - \est{u}_{p,\perm_p(j)}} \nonumber \\
&< 2\pi \umax \abs{s_l+i} \varepsilon_p + E_p(\varepsilon_p) \hemant{\varepsilon_p \umax}, \label{eq:gen_proof_temp23}
\end{align}
where the last inequality is due to \eqref{eq:gen_proof_temp10}.
Since $s_l > m_l$, hence $(s_l + i)^2 < (s_l + m_l)^2$ for all $i=-m_l,\dots,m_l-1$. 
With the help of \eqref{eq:gen_proof_temp23}, we then readily obtain 
\begin{align}
\abs{\noiseint_{l,i,past}} 
&< \sum_{p=1}^{l-1} \left(\left(\frac{\kerfnvar_p}{\kerfnvar_l} \exp(2\pi^2(\kerfnvar_l^2 - \kerfnvar_p^2) (s_l + m_l)^2)\right) 
(2\pi\umax (s_l+m_l) \varepsilon_p + E_p(\varepsilon_p) \hemant{\varepsilon_p \umax}) K\right) \nonumber \\
&\leq \left(K \frac{\kerfnvar_{l-1}}{\kerfnvar_l} e^{2\pi^2(\kerfnvar_l^2 - \kerfnvar_1^2)(s_l + m_l)^2} \right) 
\left(\sum_{p=1}^{l-1} 2\pi\umax(s_l+m_l)\varepsilon_p + E_p(\varepsilon_p) \hemant{\varepsilon_p \umax} \right) \nonumber \\
&\leq \left(K(l-1) \frac{\kerfnvar_{l-1}}{\kerfnvar_l} e^{2\pi^2(\kerfnvar_l^2 - \kerfnvar_1^2)(s_l + m_l)^2} \right) 
\left(2\pi\umax(s_l+m_l)\varepsilon_{l-1} + E_{l-1}(\varepsilon_{l-1}) \hemant{\varepsilon_{l-1} \umax} \right), \label{eq:gen_proof_temp11}
\end{align}
where in the last inequality, we used \ref{eq:gen_proof_temp20}.

\item[(ii)] \underline{\textit{Bounding $\noiseint_{l,i,fut}$.}} We have
\begin{align*}
\noiseint_{l,i,fut} = \sum_{q=l+1}^{L} \frac{\kerfnvar_q}{\kerfnvar_l} \exp(-2\pi^2(s_l+i)^2(\kerfnvar_q^2 - \kerfnvar_l^2)) 
\left(\sum_{j=1}^K u_{q,j} \exp(\iota2\pi(s_l+i)t_{q,j})\right).
\end{align*}
Since $s_l > m_l$, we have $(s_l+i)^2 \geq (s_l-m_l)^2$ for all $i=-m_l,\dots,m_l-1$. 
This, along with the fact $\frac{\kerfnvar_q}{\kerfnvar_l} \leq \frac{\kerfnvar_L}{\kerfnvar_l}$ gives us
\begin{align*}
\abs{\noiseint_{l,i,fut}} \leq \frac{\kerfnvar_L}{\kerfnvar_l} K\umax (L-l) \exp(-2\pi^2(s_l-m_l)^2(\kerfnvar_{l+1}^2 - \kerfnvar_l^2)).
\end{align*}
It follows that if $s_l \geq S_l$ where
\begin{align*}
S_l = m_l + \frac{1}{(2\pi^2(\kerfnvar_{l+1}^2 - \kerfnvar_l^2))^{1/2}} 
\log^{1/2}\left(\hemant{\frac{2 K \urel (L-l) \kerfnvar_L}{\varepsilon_l \kerfnvar_l B(\urel,K)}} \right)
\end{align*}
then for $i=-m_l,\dots,m_l-1$, we  have
\begin{align} \label{eq:gen_proof_temp12}
\abs{\noiseint_{l,i,fut}} < \varepsilon_l \hemant{\frac{\umin B(\urel,K)}{2}}.
\end{align}

\item[(iii)] \underline{\textit{Back to $\noiseint_{l,i,past}$.}}
We will now find conditions which ensure that the same bound as \eqref{eq:gen_proof_temp12} holds on 
$\abs{\noiseint_{l,i,past}}$, uniformly for all $i$. To this end, since $s_l \leq \widetilde{c} S_l$ and $m_l < \mup{l}$, we obtain 
\begin{align} \label{eq:gen_proof_temp24}
s_l < \widetilde{c} \left(\mup{l} + \frac{1}{(2\pi^2(\kerfnvar_{l+1}^2 - \kerfnvar_l^2))^{1/2}} 
\log^{1/2}\left(\hemant{\frac{2 K \urel (L-l) \kerfnvar_L}{\varepsilon_l \kerfnvar_l B(\urel,K)}}\right)\right). 
\end{align}
This then gives us the bound 
\begin{align*}
\hemant{2\pi}(s_l + m_l) 
&< \hemant{2\pi} (\widetilde{c} + 1) \mup{l} + \frac{\hemant{2\pi} \widetilde{c}}{(2\pi^2(\kerfnvar_{l+1}^2 - \kerfnvar_l^2))^{1/2}}
\log^{1/2}\left(\hemant{\frac{2 K \urel (L-l) \kerfnvar_L}{\varepsilon_l \kerfnvar_l B(\urel,K)}}\right) \\
&= C'_{l,1} + C'_{l,2} \log^{1/2}\left(\frac{2D_{l}}{\varepsilon_l}\right) = F_{l}(\varepsilon_l),
\end{align*}
where we recall the definition of $F_l$, and constants $C_{l,1}', C_{l,2}', D_l > 0$ from \eqref{eq:gen_proof_temp28}, \eqref{eq:gen_proof_temp29}.  
\hemant{Since $\varepsilon_l < 1 < D_l$  and $\varepsilon_l \geq \varepsilon_{l-1}$, 
hence} $\hemant{2\pi} (s_l + m_l) < F_{l}(\varepsilon_l) \leq F_{l}(\varepsilon_{l-1})$. Using this in 
\eqref{eq:gen_proof_temp11}, we obtain
\begin{align}
\abs{\noiseint_{l,i,past}} 
< \left(K(l-1) \frac{\kerfnvar_{l-1}}{\kerfnvar_l} \hemant{e^{(\kerfnvar_l^2 - \kerfnvar_1^2) \frac{F^2_l(\varepsilon_l)}{2}}} \right) 
\hemant{\left( F_l(\varepsilon_{l-1}) + E_{l-1}(\varepsilon_{l-1}) \right) \hemant{\umax\varepsilon_{l-1}}} \label{eq:gen_proof_temp13}
\end{align}
Therefore if $\varepsilon_{l-1}$ satisfies the condition 
\begin{align*}
 \hemant{(F_l(\varepsilon_{l-1}) + E_{l-1}(\varepsilon_{l-1}) ) \umax\varepsilon_{l-1} 
\leq \varepsilon_l e^{-(\kerfnvar_l^2-\kerfnvar_1^2) \frac{F^2_l(\varepsilon_l)}{2}} \frac{\umin\kerfnvar_l B(\urel,K)}{2 K (l-1) \kerfnvar_{l-1}}}
\end{align*}
then it implies 
$\abs{\noiseint_{l,i,past}} < \varepsilon_l \hemant{\frac{\umin B(\urel,K)}{2}}$.
Together with \eqref{eq:gen_proof_temp12}, this gives
\begin{align*}
\abs{\noiseint_{l,i}} \leq \abs{\noiseint_{l,i,fut}} + \abs{\noiseint_{l,i,past}} 
< \varepsilon_l \hemant{\umin B(\urel,K)}.
\end{align*}
We obtain estimates $\widehat t_{l,j}$, $\est{u}_{l,j}$, $j=1,\ldots,K$ via the MMP method. Invoking Corollary \ref{corr:moitra_MP}, if $\varepsilon_l < c \sep_l/2$, then for the stated choice of $m_l$, there exists a permutation $\phi_l: [K] \rightarrow [K]$ such that for each $j=1,\dots,K$,  
\begin{align}
d_w(\widehat t_{l,\perm_l(j)}, t_{l,j}) &\leq \varepsilon_l; \quad
\abs{\widehat u_{l,\perm_l(j)} - u_{l,j}} 
< \left(\widetilde{C}_l + 2\pi s_l \right) \varepsilon_l \hemant{\umax} < E_l(\varepsilon_l) \hemant{\varepsilon_{l} \umax}.   
\label{eq:gen_proof_temp15}
\end{align}
The last inequality follows readily using \eqref{eq:gen_proof_temp24}.
\end{enumerate}
%
%

\item \textbf{Recovering source parameters for last group.}
Say that for each $p = 1,\dots,L-1$ and $j=1,\dots,K$ the following holds.
\begin{align} \label{eq:gen_proof_temp18}
d_w(\est{t}_{p,\phi_p(j)}, t_{p,j}) \leq \varepsilon_p, \quad  
\abs{\widehat u_{p,\phi_p(j)} - u_{p,j}} <  E_p(\varepsilon_p) \hemant{\varepsilon_p \umax},
\end{align}
for some permutations $\phi_p: [K] \rightarrow [K]$, with
\begin{enumerate} 
\item $\varepsilon_1 \leq \cdots \leq \varepsilon_{L-1}$; 
$E_1(\varepsilon_1) \hemant{\varepsilon_1} \leq \cdots \leq E_{L-1}(\varepsilon_{L-1}) \hemant{\varepsilon_{L-1}}$; \label{eq:gen_proof_temp19} 

\item $\varepsilon_{p} < c \sep_p/2$;  \label{eq:gen_proof_temp22}

\item \hemant{$(F_{q+1}(\varepsilon_{q}) + E_{q}(\varepsilon_{q})) \varepsilon_{q} \urel 
\leq \varepsilon_{q+1} e^{-(\kerfnvar_{q+1}^2-\kerfnvar_1^2)\frac{F_{q+1}^2(\varepsilon_{q+1})}{2}} 
\frac{\kerfnvar_{q+1} B(\urel,K)}{2 K q \kerfnvar_{q}}$},  $1 \leq q \leq L-2$.
\end{enumerate}
We proceed by noting that for each $i = -m_L,\dots,m_L-1$
\begin{align*}
&\frac{f(s_L + i) - \sum_{p=1}^{L-1} \est{f}_p(s_L + i)}{\kerfour_L(s_L + i)} \\
&= \underbrace{\sum_{p=1}^{L-1}\frac{\kerfour_p(s_L+i)}{\kerfour_L(s_L + i)} 
\sum_{j=1}^K[u_{p,j}\exp(\iota2\pi(s_L+i)t_{p,j}) - \est{u}_{p,\perm_p(j)} \exp(\iota2\pi(s_L+i)\est{t}_{p,\perm_p(j)}]}_{\noiseint_{L,i}} \\
&+ \sum_{j=1}^K \underbrace{u_{L,j} \exp(\iota2\pi s_L t_{L,j})}_{u'_{L,j}} \exp(\iota2\pi i t_{L,j}) 
= \sum_{j=1}^K u'_{L,j} \exp(\iota2\pi i t_{L,j}) + \noiseint_{L,i}.
\end{align*}
%
%
Using Proposition \ref{app:prop_useful_res_2}, we have for each $p=1,\dots,L-1$ and $j=1,\dots,K$ that   
\begin{align}
& \abs{u_{p,j} \exp(\iota 2\pi (s_L+i)t_{p,j}) - \est{u}_{p,\perm_p(j)} \exp (\iota 2\pi (s_L+i)\widehat t_{p,\perm_p(j)}} \nonumber \\
&\leq 2\pi\umax \abs{s_L + i} d_w(t_{p,j}, \est{t}_{p,\perm_p(j)}) + \abs{u_{p,j} - \est{u}_{p,\perm_p(j)}} \nonumber \\
&< 2\pi \umax \abs{s_L+i} \varepsilon_p + E_p(\varepsilon_p) \hemant{\varepsilon_p \umax}, \label{eq:gen_proof_temp25}
\end{align}
where the last inequality follows from \eqref{eq:gen_proof_temp18}.
Since $s_L = 0$, hence $(s_L + i)^2 < m_L^2 < \mup{L}^2$ for all $i=-m_L,\dots,m_L-1$. 
Using \eqref{eq:gen_proof_temp25}, we then readily obtain 
\begin{align}
\abs{\noiseint_{L,i}} 
&< \sum_{p=1}^{L-1} \left(\left(\frac{\kerfnvar_p}{\kerfnvar_L} e^{2\pi^2(\kerfnvar_L^2 - \kerfnvar_p^2) \mup{L}^2}\right) 
(2\pi\umax\mup{L}\varepsilon_p + E_p(\varepsilon_p) \hemant{\varepsilon_p \umax})K\right) \nonumber\\
&\leq \left(K \frac{\kerfnvar_{L-1}}{\kerfnvar_L} e^{2\pi^2(\kerfnvar_L^2 - \kerfnvar_1^2)\mup{L}^2} \right) 
\hemant{\sum_{p=1}^{L-1}\left(2\pi\umax\mup{L}\varepsilon_p + E_p(\varepsilon_p) \varepsilon_p \umax \right)} \nonumber\\
&\leq \left(K(L-1) \frac{\kerfnvar_{L-1}}{\kerfnvar_L} e^{2\pi^2(\kerfnvar_L^2 - \kerfnvar_1^2)\mup{L}^2} \right) 
\left(2\pi\umax\mup{L}\varepsilon_{L-1} + E_{L-1}(\varepsilon_{L-1}) \hemant{\varepsilon_{L-1} \umax} \right) 
\label{eq:gen_proof_temp16}
\end{align}
where in the last inequality, we used \eqref{eq:gen_proof_temp19}.

We obtain estimates $\widehat t_{L,j}$, $\est{u}_{L,j}$, $j=1,\ldots,K$ via the MMP method.
Invoking Corollary \ref{corr:moitra_MP} and assuming $\varepsilon_L < c \sep_L/2$, 
it follows for the stated conditions on $m_L$, that it suffices if 
$\varepsilon_{L-1}$ satisfies 
\begin{align*}
\hemant{(2\pi\mup{L} + E_{L-1}(\varepsilon_{L-1})) \varepsilon_{L-1} \umax 
\leq \varepsilon_L e^{-2\pi^2(\kerfnvar_L^2 - \kerfnvar_1^2)\mup{L}^2} 
\frac{\umin \kerfnvar_L B(\urel,K)}{K (L-1) \kerfnvar_{L-1}}.}
\end{align*}
Indeed, there then exists a permutation $\phi_L: [K] \rightarrow [K]$ such that for each $j=1,\dots,K$,  
\begin{align}
d_w(\widehat t_{L,\perm_L(j)}, t_{L,j}) 
&\leq \varepsilon_L; \quad \abs{\widehat u_{L,\perm_L(j)} - u_{L,j}} 
< \widetilde{C}_L \varepsilon_L \hemant{\umax} 
= E_L(\varepsilon_L) \hemant{\varepsilon_{L} \umax}.   
\label{eq:gen_proof_temp17}
\end{align}
This completes the proof.
\end{itemize}
\end{proof} 
\section{Unmixing Gaussians in Fourier domain: Noisy case} \label{sec:multi_gauss_unmix_analysis_noisy}
We now analyze the noisy setting where we acquire noisy values of the 
of the Fourier transform of $f$ at the sampling location (frequency) $s$.  
In particular, at stage $p$ ($1 \leq p \leq L$) in Algorithm \ref{algo:mmp_gaussian_unmix}, and frequency $s$, 
let $w_p(s)$ denote the additive observation noise on the clean Fourier sample $f(s)$.  
Denoting the noisy measurement by $\widetilde{f}(s)$, this means that at stage $p$,
\begin{align*}
\widetilde{f}(s) 
&= f(s) + w_p(s) \\
&= \sum_{l=1}^{L} \kerfour_l(s) \left(\sum_{j=1}^K u_{l,j} \exp(\iota 2\pi s t_{l,j})\right) + w_p(s).
\end{align*}
In addition to the terms defined at the beginning of Section \ref{sec:gen_noiseless_case}, 
we will need an additional term (defined below) which will be used in the statement of our theorem.
\begin{align}
F'_l(\varepsilon) &:= C_{l,1}' + C_{l,2}' \log^{1/2}\left(\frac{3 D_l}{\varepsilon}\right); \quad 
l = 2,\dots,L-1, \label{eq:gen_proof_temp28bis} 
\end{align}
where $C_{l,1}'$, $C_{l,2}'$ are as defined in \eqref{eq:gen_proof_temp29}.
%
%
\begin{theorem} \label{thm:gen_case_main_noisy}
For a constant $c \in (0,1)$, let $0 < \varepsilon_L < c\sep_L/2$, $\frac{2}{\sep_L - 2\varepsilon_L} \leq m_L < \mup{L}$ 
and $s_L = 0$. Moreover, for $l = L-1, \dots, 1$, say we choose $\varepsilon_l,m_l,s_l$ as follows.
\begin{enumerate} 
\item $0 < \varepsilon_l < c\sep_l/2$ additionally satisfies the following conditions.
\begin{enumerate} 

\item\label{itm:main_thm_epscond_2_noisy} \hemant{$(2\pi \mup{L} + E_{L-1}(\varepsilon_{L-1})) \varepsilon_{L-1} \urel 
\leq \varepsilon_{L} e^{-2\pi^2(\kerfnvar_{L}^2-\kerfnvar_1^2)\mup{L}^2} \frac{\kerfnvar_{L} B(\urel,K)}{2 K (L-1) \kerfnvar_{L-1}}$.} 

\item\label{itm:main_thm_epscond_3_noisy} If $l < L-1$, then $\varepsilon_{l} \leq \varepsilon_{l+1}$, 
\hemant{$E_l(\varepsilon_{l}) \varepsilon_{l} \leq E_{l+1}(\varepsilon_{l+1}) \varepsilon_{l+1}$ and 
$$(F'_{l+1}(\varepsilon_{l})  + E_{l}(\varepsilon_{l})) \varepsilon_{l} \urel 
\leq \varepsilon_{l+1} e^{-(\kerfnvar_{l+1}^2-\kerfnvar_1^2)\frac{F_{l+1}^{'^2}(\varepsilon_{l+1})}{2}} \frac{\kerfnvar_{l+1} B(\urel,K)}{3 K l \kerfnvar_{l}}.$$}  
\end{enumerate}

\item $\frac{2}{\sep_l - 2\varepsilon_l} \leq m_l < \mup{l}$, and 
$S_l \leq s_l \leq \widetilde{c} S_l$ (for constant $\widetilde{c} > 1$) where
\begin{align*}
S_l = m_l + \frac{1}{(2\pi^2(\kerfnvar_{l+1}^2 - \kerfnvar_l^2))^{1/2}} 
\log^{1/2}\left(\hemant{\frac{b_l K (L-l) \kerfnvar_L \urel}{\varepsilon_l \kerfnvar_l B(\urel,K)}} \right),
\end{align*}
with \hemant{$b_l = 2$ if $l = 1$, and $b_l = 3$} otherwise.
\end{enumerate}
Assume that the noise satisfies the conditions
\begin{align}
    \left\Vert \frac{w_1(\cdot)}{\kerfour_1(s_1+\cdot)}\right\Vert_{\infty} 
		& \le \varepsilon_1 \hemant{\frac{\umin B(\urel,K)}{2}}, \label{noise_bnd} \\
\left\Vert \frac{w_l(\cdot)}{\kerfour_l(s_l+\cdot)}\right\Vert_{\infty} 
& \le \varepsilon_l \hemant{\frac{\umin B(\urel,K)}{3}}, 
\quad l=2,\ldots,L-1, \quad \text{and} \label{noise_bnd1_tmp} \\
\left\Vert \frac{w_L(\cdot)}{\kerfour_L(s_L+\cdot)}\right\Vert_{\infty} 
& \le \varepsilon_L \hemant{\frac{\umin B(\urel,K)}{2}}. \label{noiseL}
\end{align}
Then, for each $l=1,\dots,L$, there exists a permutation $\perm_l: [K] \rightarrow [K]$ such that 
\begin{align*}
d_w(\widehat t_{l,\perm_l(j)}, t_{l,j}) &\leq \varepsilon_l, \quad 
\abs{\widehat u_{l,\perm_l(j)} - u_{l,j}} < E_l(\varepsilon_l) \hemant{\varepsilon_l \umax}; \quad j=1,\dots,K.
\end{align*}
\end{theorem}
Since the organisation and the arguments of the proof of Theorem 
\ref{thm:gen_case_main_noisy} are almost identical to that of Theorem \ref{thm:gen_case_main}, 
we defer this proof to Appendix \ref{app:sec:gen_unmix_noisy_thm} where we pinpoint the main differences between the noiseless 
and the noisy settings. 
%
\paragraph{Interpreting Theorem \ref{thm:gen_case_main_noisy}}
Theorem \ref{thm:gen_case_main_noisy} is almost the same as Theorem \ref{thm:gen_case_main} 
barring the conditions on the magnitude of external noise (and minor differences in some constants). 
Specifically, conditions \eqref{noise_bnd} - \eqref{noiseL} state that at stage $l$, 
the magnitude of the noise should be small relative to the desired accuracy parameter $\varepsilon_l$. 
This is examined in more detail below. For convenience, we will now assume $\umin,\umax \asymp 1$. 
\begin{enumerate}
\item \textbf{(Condition on $\norm{w_l(\cdot)}_{\infty}$ for $1 \leq l \leq L-1$.)}
Let us start with the case $2 \leq l \leq L-1$. Condition \eqref{noise_bnd1_tmp} states that 
\begin{equation} \label{eq:noisy_interp_1}
\left| \frac{w_l(i)}{\kerfour_l(s_l+i)} \right| \lesssim \frac{\varepsilon_l}{\sqrt{K}}; \quad i=-m_l,\dots,m_l-1.
\end{equation}
Now, $\abs{\frac{w_l(i)}{\kerfour_l(s_l+i)}} = \frac{\abs{w_l(i)}}{\sqrt{2\pi}\kerfnvar_l} e^{2\pi^2(s_l + i)^2 \kerfnvar_l^2}$. 
Since $s_l > m_l$, therefore $s_l + i > 0$ for the given range of $i$. 
Hence, $(s_l+i)^2 < (s_l+m_l)^2 < F_l^{'^2}(\varepsilon_l)$ for each $i$. 
\hemant{Since $F_l^{'}(\varepsilon) \asymp F_l(\varepsilon)$, therefore} using the order wise  
dependency from \eqref{eq:orddep_F_gen}, we obtain for each $i$ that 
\begin{equation} \label{eq:noisy_interp_2}
\frac{\abs{w_l(i)}}{\sqrt{2\pi}\kerfnvar_l} e^{2\pi^2(s_l + i)^2 \kerfnvar_l^2} 
\lesssim \frac{\abs{w_l(i)}}{\kerfnvar_l}\left(\frac{K^{3/2}L \kerfnvar_L}{\varepsilon_l\kerfnvar_l}\right)^{C(\kerfnvar_l,\kerfnvar_{l+1},\sep_l)},
\end{equation}
where $C(\kerfnvar_l,\kerfnvar_{l+1},\sep_l) > 0$ depends only on $\kerfnvar_l,\kerfnvar_{l+1},\sep_l$. 
Hence from \eqref{eq:noisy_interp_1}, 
\eqref{eq:noisy_interp_2}, we see that \eqref{noise_bnd1_tmp} is satisfied if
\begin{equation} \label{eq:main_noisy_interp_inter}
\norm{w_l}_{\infty} 
\lesssim \frac{(\varepsilon_l\kerfnvar_l)^{1+C(\kerfnvar_l,\kerfnvar_{l+1},\sep_l)}}{\sqrt{K} (K^{3/2} L \kerfnvar_L)^{C(\kerfnvar_l,\kerfnvar_{l+1},\sep_l)}}; \quad 2 \leq l \leq L-1.
\end{equation}
In a similar manner, one can easily show that \eqref{noise_bnd} is satisfied if
\begin{equation} \label{eq:main_noisy_interp_1}
\norm{w_1}_{\infty} 
\lesssim \frac{(\varepsilon_1\kerfnvar_1)^{1+C(\kerfnvar_1,\kerfnvar_{2},\sep_1)}}{\sqrt{K} (K^{3/2} L \kerfnvar_L)^{C(\kerfnvar_1,\kerfnvar_{2},\sep_1)}}.
\end{equation}

\item \textbf{(Condition on $\norm{w_L}_{\infty}$.)} 
In this case, $s_L = 0$ and so $(s_L + i)^2 \leq m_L^2$. Therefore for each $i=-m_L,\dots,m_L-1$, we obtain
\begin{equation} \label{eq:noisy_interp_3}
\left| \frac{w_L(i)}{\kerfour_L(s_L+i)} \right|
= \frac{\abs{w_L(i)}}{\sqrt{2\pi}\kerfnvar_L} e^{2\pi^2(i)^2 \kerfnvar_L^2} 
\lesssim \frac{\abs{w_L(i)}}{\kerfnvar_L} e^{2\pi^2 \kerfnvar_L^2/\sep_L^2}.
\end{equation}
Hence from \eqref{eq:noisy_interp_3}, we see that \eqref{noiseL} is satisfied if
\begin{equation} \label{eq:main_noisy_interp_L}
\norm{w_L}_{\infty} \lesssim \kerfnvar_L e^{-\kerfnvar_L^2/\sep_L^2} \frac{\varepsilon_L}{\sqrt{K}}.
\end{equation}
\end{enumerate}
\eqref{eq:main_noisy_interp_1}, \eqref{eq:main_noisy_interp_inter}, \eqref{eq:main_noisy_interp_L} show the 
conditions that the noise level is required to satisfy at the different levels. From the discussion following 
Theorem \ref{thm:gen_case_main}, we know that the $\varepsilon_i$'s gradually become smaller and smaller as we move 
from $i=L$ to $i=1$ (with $\varepsilon_1$ being the smallest). Therefore the condition on $\norm{w_1}_{\infty}$ 
is the strictest, while the condition on $\norm{w_L}_{\infty}$ is the mildest.

\paragraph{Corollary for the case $L = 1$. }
As noted earlier, the case $L = 1$ is not interesting in the absence of external noise as we 
can exactly recover the source parameters. The situation is more interesting in the presence of noise 
as shown in the following Corollary of Theorem \ref{thm:gen_case_main_noisy} for the case $L=1$.
%
\begin{corollary}
For a constant $c \in (0,1)$, let $0 < \varepsilon_1 < c\sep_1/2$, 
$\frac{2}{\sep_1 - 2\varepsilon_1} \leq m_1 < \frac{2}{\sep_1(1-c)} + 1$ 
and $s_1 = 0$. Moreover, denoting $C = 10 + \frac{1}{\sqrt{2}}$, assume that the noise satisfies
\begin{align} \label{eq:ext_noisecond_l_1}
    \left\Vert \frac{w_1(\cdot)}{\kerfour_1(\cdot)}\right\Vert_{\infty} 
		& \le \varepsilon_1 \frac{\umin}{10 C\sqrt{K}} \left( 1 + 48 \hemant{\urel} \right)^{-1}.
\end{align}
Then, there exists a permutation $\perm_1: [K] \rightarrow [K]$ such that for $j=1,\dots,K$, we have
\begin{align*}
d_w(\widehat t_{1,\perm_1(j)}, t_{1,j}) 
&\leq \varepsilon_1, \\
\abs{\widehat u_{1,\perm_1(j)} - u_{1,j}} 
&< \hemant{\left(4\pi K \left(\frac{2}{\sep_1(1-c)}+1 \right) + \frac{2}{C\sqrt{K}}\left(\urel + 16\urel^2\right)^{-1} \right) \varepsilon_1 \umax.}
\end{align*}
\end{corollary}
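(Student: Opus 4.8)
The plan is to recognize the $L=1$ case as a single invocation of the MMP perturbation bound (Corollary \ref{corr:moitra_MP}, equivalently Theorem \ref{thm:mmp_main_thm}) with offset $s_0 = 0$. When $L=1$ we have $f(s) = \kerfour_1(s)\sum_{j=1}^K u_{1,j}\exp(\iota 2\pi s t_{1,j})$, so at the unique stage of Algorithm \ref{algo:mmp_gaussian_unmix} the deflation sum $\sum_{p=1}^{0}\est f_p$ is empty, and, since $s_1 = 0$, the samples handed to the MMP routine are
\begin{align*}
\frac{\ftil(i)}{\kerfour_1(i)} = \sum_{j=1}^K u_{1,j}\exp(\iota 2\pi i t_{1,j}) + \eta_i, \qquad \eta_i := \frac{w_1(i)}{\kerfour_1(i)}, \quad i = -m_1,\dots,m_1-1.
\end{align*}
This is precisely the noisy sum-of-exponentials model of Section \ref{sec:prob_setup_prelim} with source parameters $(u_{1,j},t_{1,j})_{j=1}^K$, cutoff $m = m_1$, offset $s_0 = 0$, minimum separation $\sep = \sep_1$, Vandermonde condition number $\condnum = \condnum_1$, and noise vector $(\eta_i)_i$ with $\noisemax = \norm{w_1(\cdot)/\kerfour_1(\cdot)}_{\infty}$.

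First I would check that the hypotheses of Corollary \ref{corr:moitra_MP} hold for this model with $\varepsilon = \varepsilon_1$ and the given $c$: the accuracy condition $0 \le \varepsilon_1 < c\sep_1/2$ and the admissible window $\frac{2}{\sep_1 - 2\varepsilon_1} \le m_1 < \frac{2}{\sep_1(1-c)}+1$ are assumed verbatim, and the noise hypothesis \eqref{eq:ext_noisecond_l_1} — which is already stated for the deconvolved noise $w_1(\cdot)/\kerfour_1(\cdot)$ — is at least as strong as the bound $\noisemax \le \varepsilon_1 \frac{\umin}{5C'\sqrt K}(1+48\umax/\umin)^{-1}$ required by Corollary \ref{corr:moitra_MP} (with $C' = 10 + \tfrac{1}{2\sqrt2}$), since the constant $C = 20 + \tfrac1{\sqrt2}$ in \eqref{eq:ext_noisecond_l_1} satisfies $10 C \ge 5 C'$; in fact the extra factor of roughly $4$ is exactly what lets the $\noisemax$-dependent part of the amplitude bound be sharpened, yielding the constant $C$ (rather than $C'$) in the stated error bound. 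The only point worth a remark is that $1/\kerfour_1(i) = (\sqrt{2\pi}\kerfnvar_1)^{-1}e^{2\pi^2 i^2 \kerfnvar_1^2}$ grows with $\abs i$, but with $s_1 = 0$ and $\abs i \le m_1 < \mup{1}$ this amplification is bounded by the fixed quantity $e^{2\pi^2\mup{1}^2\kerfnvar_1^2}$, so phrasing the hypothesis in terms of $w_1(\cdot)/\kerfour_1(\cdot)$ costs nothing further (an explicit bound on $\norm{w_1}_\infty$ would have to carry this exponential factor, exactly as in the discussion following Theorem \ref{thm:gen_case_main_noisy}).

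Then Corollary \ref{corr:moitra_MP} produces a permutation $\perm_1$ with $d_w(\widehat t_{1,\perm_1(j)},t_{1,j}) \le \varepsilon_1$ and $\norm{\widehat u_{1,\perm_1} - u_1}_{\infty} < (\tilde C + 2\pi\umax s_0)\varepsilon_1$, with $\tilde C = 4\pi K\umax(\tfrac{2}{\sep_1(1-c)}+1) + \tfrac{\umin}{C\sqrt K}(1+16\umax^2/\umin^2)^{-1}$; setting $s_0 = s_1 = 0$ annihilates the $2\pi\umax s_0$ term and gives exactly the claimed amplitude bound, finishing the argument. There is no real obstacle here: all the analysis lives inside Theorem \ref{thm:mmp_main_thm}/Corollary \ref{corr:moitra_MP}, and the entire content of the corollary is the observation that for $L=1$ with $s_1=0$ the deflation and tail-perturbation terms vanish, so the measured perturbation after deconvolution is just the (normalized) external noise, whose size is controlled by hypothesis \eqref{eq:ext_noisecond_l_1}.
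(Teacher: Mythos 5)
Your proposal is correct and follows essentially the same route as the paper: for $L=1$ the deflation and tail-perturbation terms are absent, so the corollary is just Theorem \ref{thm:gen_case_main_noisy} specialized to a single group (equivalently, one application of Corollary \ref{corr:moitra_MP} with offset $s_0=0$ to the deconvolved samples, which is all that theorem's proof does in this case). Your observation that the strengthened noise hypothesis (with $10C = 20C'$ in place of the $5C'$ required by Corollary \ref{corr:moitra_MP}, where $C' = 10+\frac{1}{2\sqrt{2}}$) is what licenses the sharper constant $C = 2C'$ in the noise-dependent part of the amplitude bound is exactly the bookkeeping needed to reconcile the stated estimate with Corollary \ref{corr:moitra_MP}.
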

Assuming $\umin,\umax \asymp 1$, we see from \eqref{eq:main_noisy_interp_L} 
that \eqref{eq:ext_noisecond_l_1} is satisfied if  
$\norm{w_1}_{\infty} \lesssim \kerfnvar_1 e^{-\kerfnvar_1^2/\sep_1^2} \frac{\varepsilon_1}{\sqrt{K}}$. 
%
%
\section{Experiments} \label{sec:experiments}
In this section, we present some numerical experiments for our method\footnote{\hemant{Code available here: \url{https://hemant-tyagi.github.io}}} pertaining to the error in the 
recovery of the locations of the spikes. Our setup is as follows. We fix $L = 4$ groups, and consider 
$K \in \set{2,3,4,5}$. We fix the variance parameters of the Gaussian kernel to be $\kerfnvar_L = 0.01$
and $\kerfnvar_l = \kerfnvar_{l+1}/2$ for $l=1,\dots,L-1$. The minimum separation parameter $\sep$ is set 
to $0.05$ and $m_l$ is fixed to $1/\sep + 5$ for each $l = 1,\dots,L$. Furthermore, we fix the sampling 
parameters as $s_L = 0$, $\varepsilon_L = 0.01$, and for each $l=1,\dots,L-1$ choose
\begin{equation} \label{eq:exp_sl_const}
 \varepsilon_l = \varepsilon_{l+1}^2, \quad 
s_l = m_l + \frac{C}{\sqrt{2\pi^2 (\kerfnvar_{l+1}^2 - \kerfnvar_{l}^2)}}\log^{1/2}\left(\frac{\kerfnvar_L}{\kerfnvar_l \varepsilon_l}\right)
\end{equation}
for a suitably chosen constant $C > 0$. This is in line with our theory since $\varepsilon_l$ is smaller than 
$\varepsilon_{l+1}$ and $s_l$ too is of the form specified in Theorem \ref{thm:gen_case_main}.

In each Monte Carlo run, we choose $K \in \set{2,3,4,5}$, and then randomly generate $K$ spikes for each $l = 1,\dots,L$. 
In particular, the amplitudes of the spikes are generated by first uniformly sampling values in $[\umin,\umax]$ with $\umin = 3$ 
and $\umax = 10$, and then randomly assigning each one of them a negative sign with probability $1/2$. 
Moreover, the spike locations are sampled uniformly at random in $(0,1)$ with the minimum separation $\sep_l$ 
for each group ensured to be greater than or equal to $\sep$. 
Thereafter, each Fourier sample (for every group) is corrupted with zero-mean i.i.d Gaussian noise. 
For each group $l$, upon obtaining the estimated spike locations $\est{t}_{l,j}$ and 
amplitudes $\est{u}_{l,j}$ for $j=1,\dots,K$ (via Algorithm \ref{algo:mmp_gaussian_unmix}), 
we ``match'' (for each group) the estimated set of spikes with the input spikes based on a 
simple heuristic. We first find the 
estimated spike location that has the smallest wrap around distance from an input spike 
location -- this gives us a match. This pair is then removed, and we repeat the process on the remaining sets of spikes. 
This finally gives us a permutation $\perm:[K] \rightarrow [K]$ where $t_{l,j}$ would 
ideally be close to $\est{t}_{l,\perm(j)}$. Finally, we evaluate the performance of our algorithm by examining 
(a) the maximum wrap around distance $d_{w,l,\max} := \max_{j} d_w(t_{l,j}, \est{t}_{l,\phi(j)})$, and 
(b) the average wrap around distance $d_{w,l,\text{avg}}:= (1/K) \sum_{j=1}^K d_w(t_{l,j}, \est{t}_{l,\phi(j)})$. 
This is repeated over $400$ Monte Carlo trials.

Figures \ref{fig:max_loc_err_noiseless}, \ref{fig:mean_loc_err_noiseless} show scatter plots for the 
above notions of error and the minimum separation $\sep_l$ for each group, in the absence of external noise.
Figures \ref{fig:max_loc_err_noise}, \ref{fig:mean_loc_err_noise} show the same, but with external 
Gaussian noise (standard deviation $5 \times 10^{-5}$). Both these plots are for $C = 0.6$ 
in \eqref{eq:exp_sl_const}, we found this choice to give the best result. We observe that for $l=1,2$, 
the spike locations are recovered near exactly in all the trials and for all choices of $K$ -- both 
in the noiseless and noisy settings. For $l=3,4$, the performance is reasonably good for $K=2,3$. 
In particular, for both the noiseless and noisy settings, 
$d_{w,l,\max}$ is less than $0.05$ in at least $82\%$ of the trials while $d_{w,l,\text{avg}}$ is 
less than $0.05$ in at least $93\%$ of trials. While the performance is seen to drop as $K$ increases -- especially 
in terms of $d_{w,l,\max}$ -- the performance in terms of $d_{w,l,\text{avg}}$ is still significantly 
better than $d_{w,l,\max}$. In particular, for $K = 4$, $d_{w,l,\text{avg}}$ is 
less than $0.05$ in at least $86\%$ of the trials (in both noiseless and noisy settings), while 
for $K = 5$, the same is true in at least $73\%$ of the trials.

As mentioned earlier, we found the choice $C = 0.6$ to give the best performance, in general. In the noiseless 
scenario however, the choice $C = 1$ results in near exact recovery for all groups as shown in 
Appendix \ref{sec:app_exps} (see Figures \ref{fig:max_loc_err_noiseless_C1},\ref{fig:mean_loc_err_noiseless_C1}). 
However in the presence of external Gaussian noise of standard deviation $5 \times 10^{-5}$, the recovery performance 
breaks down for groups $l=2,3,4$ (see Figures \ref{fig:max_loc_err_noise_C1},\ref{fig:mean_loc_err_noise_C1}). 
Since we are sampling relatively deeper in the Fourier tail (compared to that when $C = 0.6$), the deconvolution 
step blows up the noise significantly, leading to the worse recovery performance. Finally, in the noiseless case, 
we observed that the recovery performance breaks down for larger values of $C$ (i.e., for $C > 4$) due to numerical 
errors creeping in the deconvolution step.

%
\begin{figure}[!ht]
\centering
\subcaptionbox[]{$l = 1, K = 2$}[ 0.24\textwidth ]
{\includegraphics[width=0.24\textwidth]{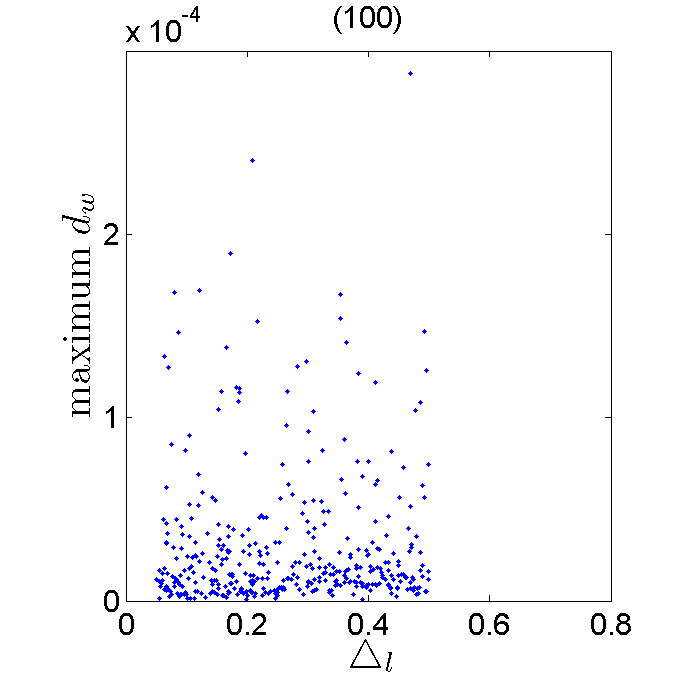} }
\subcaptionbox[]{$l = 2, K = 2$}[ 0.24\textwidth ]
{\includegraphics[width=0.24\textwidth]{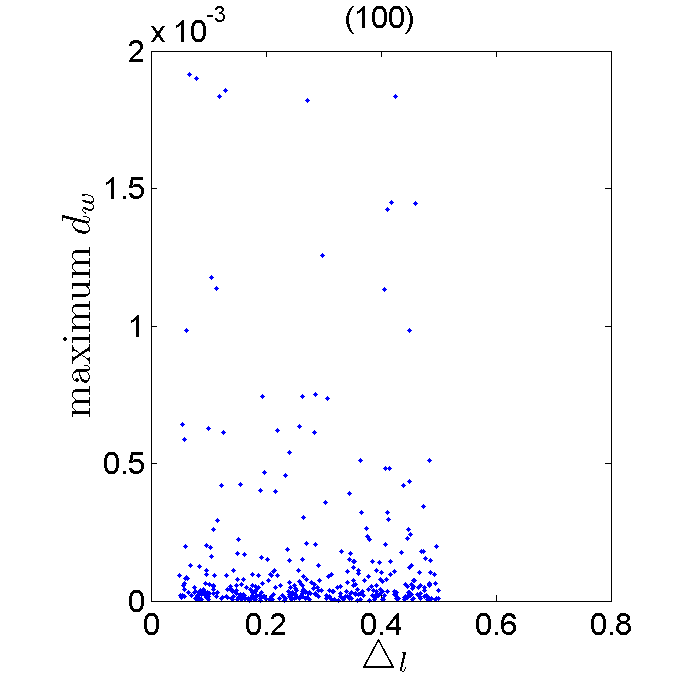} }
%
\subcaptionbox[]{$l = 3, K = 2$}[ 0.24\textwidth ]
{\includegraphics[width=0.24\textwidth]{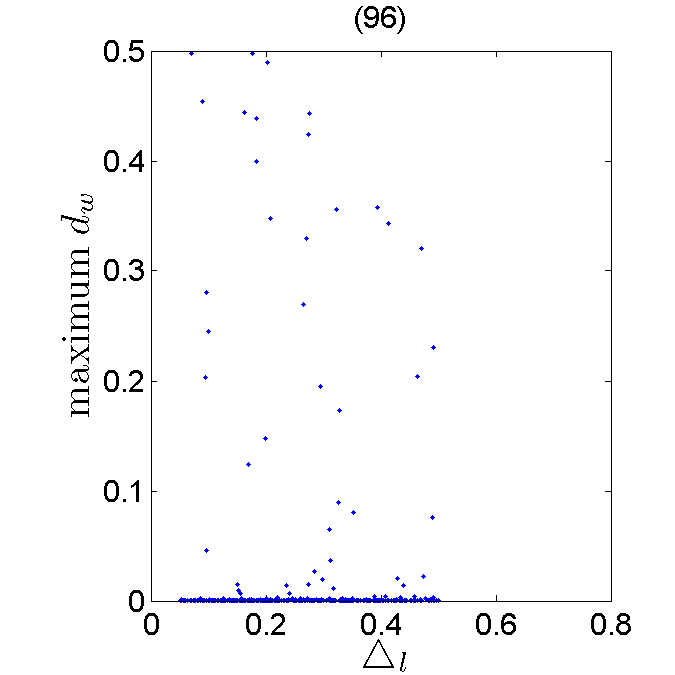} }
%
\subcaptionbox[]{$l = 4, K = 2$}[ 0.24\textwidth ]
{\includegraphics[width=0.24\textwidth]{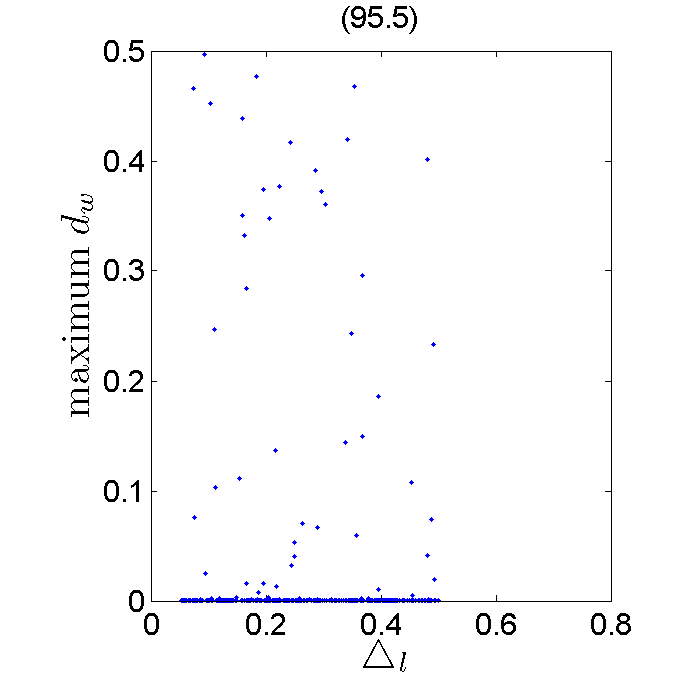} }
%

\subcaptionbox[]{$l = 1, K = 3$}[ 0.24\textwidth ]
{\includegraphics[width=0.24\textwidth]{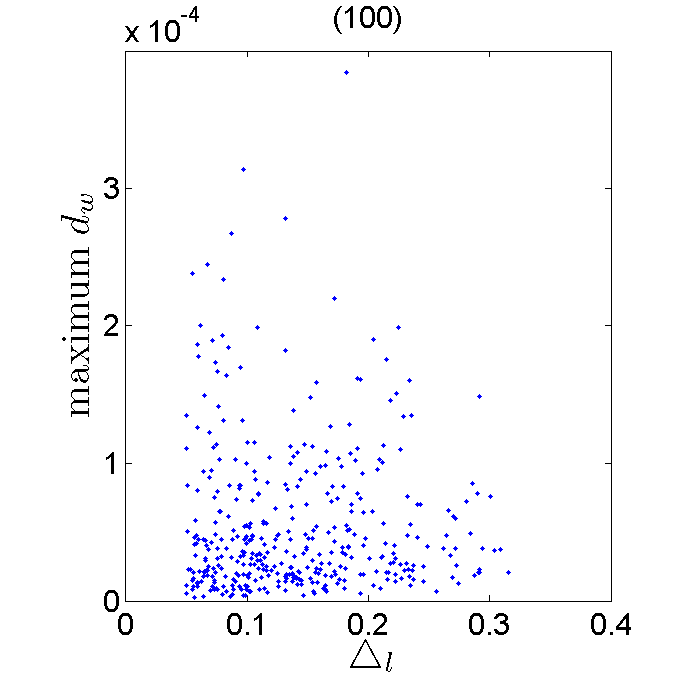} }
\subcaptionbox[]{$l = 2, K = 3$}[ 0.24\textwidth ]
{\includegraphics[width=0.24\textwidth]{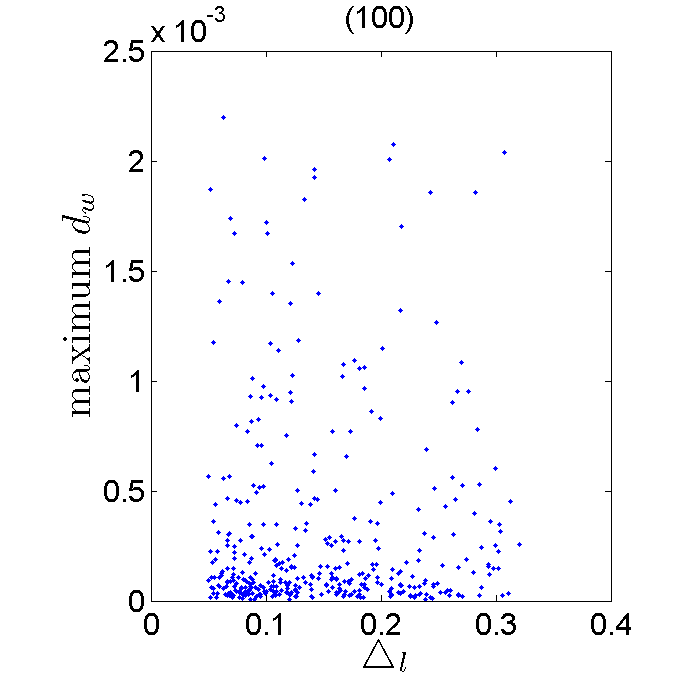} }
%
\subcaptionbox[]{$l = 3, K = 3$}[ 0.24\textwidth ]
{\includegraphics[width=0.24\textwidth]{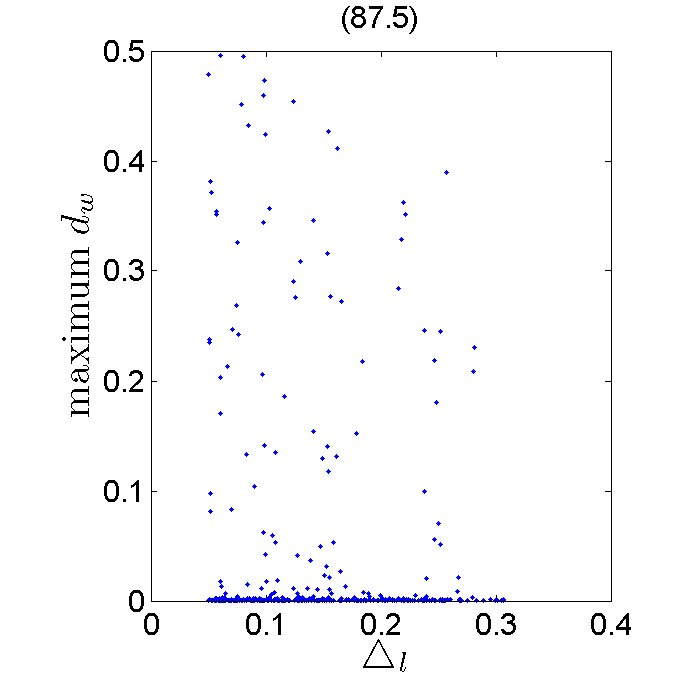} }
%
\subcaptionbox[]{$l = 4, K = 3$}[ 0.24\textwidth ]
{\includegraphics[width=0.24\textwidth]{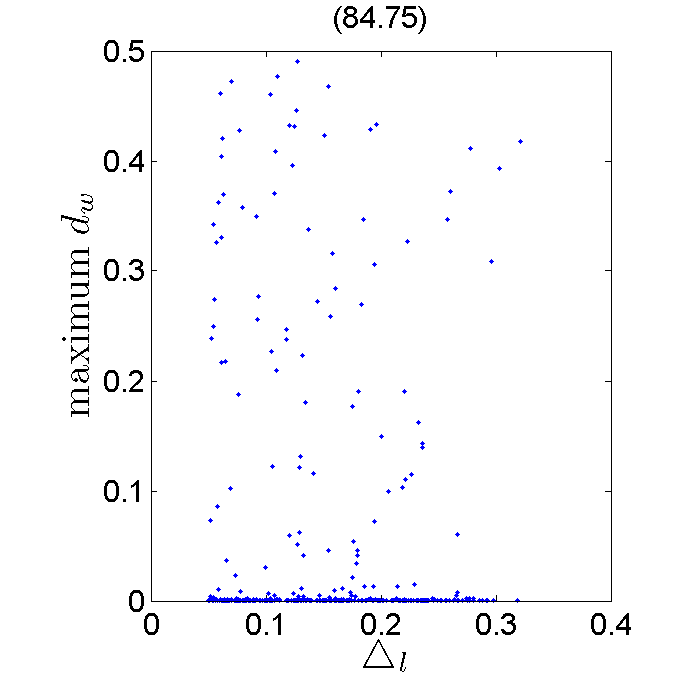} }
%

\subcaptionbox[]{$l = 1, K = 4$}[ 0.24\textwidth ]
{\includegraphics[width=0.24\textwidth]{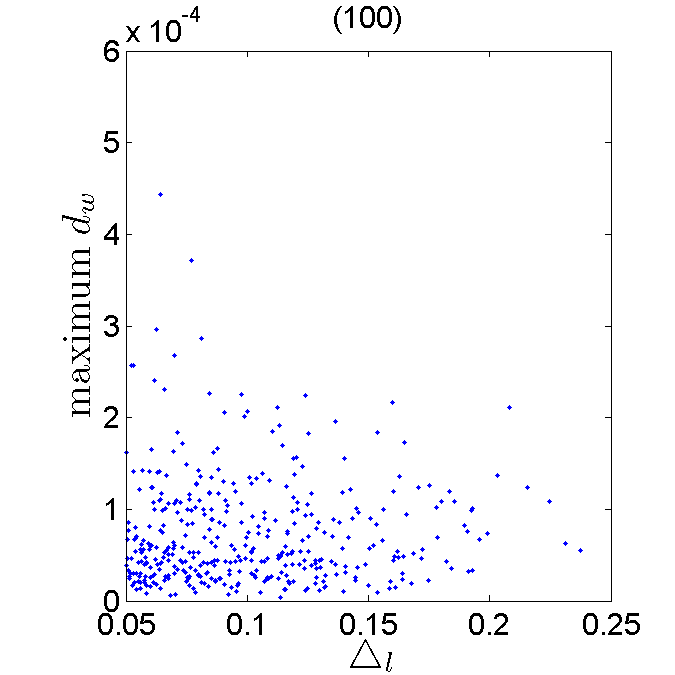} }
\subcaptionbox[]{$l = 2, K = 4$}[ 0.24\textwidth ]
{\includegraphics[width=0.24\textwidth]{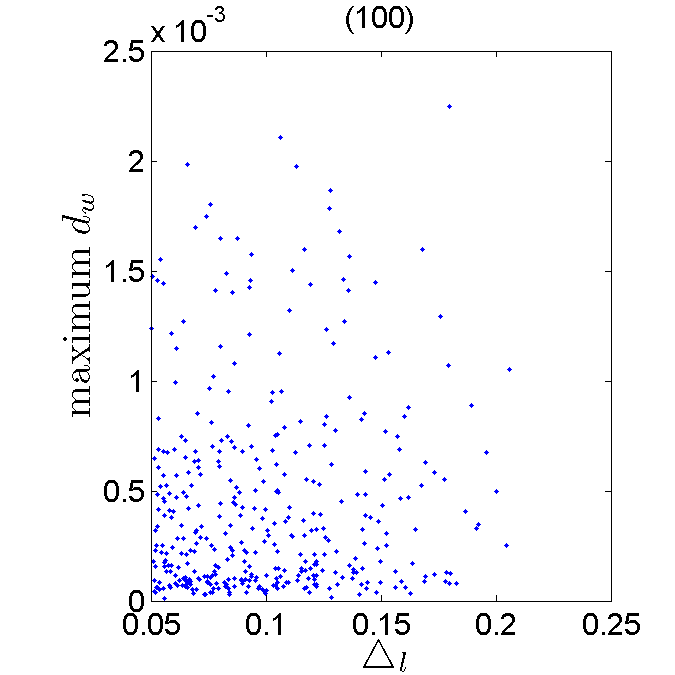} }
%
\subcaptionbox[]{$l = 3, K = 4$}[ 0.24\textwidth ]
{\includegraphics[width=0.24\textwidth]{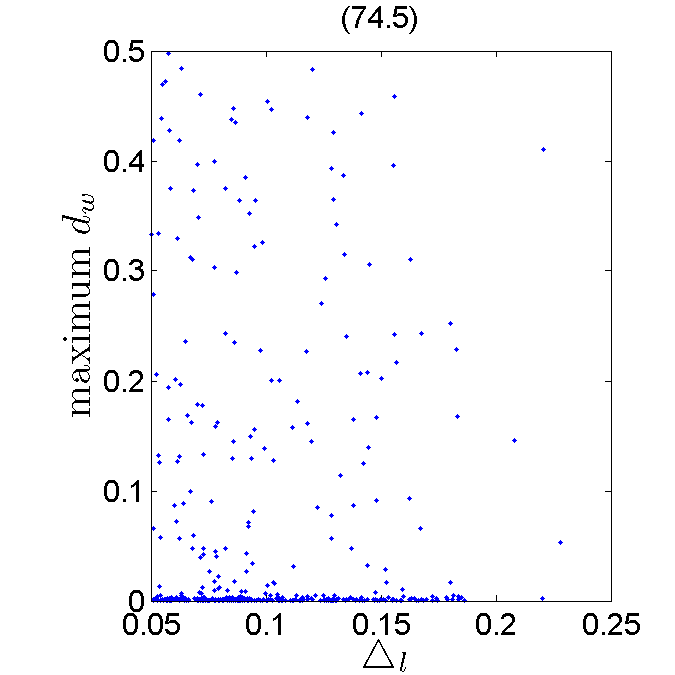} }
%
\subcaptionbox[]{$l = 4, K = 4$}[ 0.24\textwidth ]
{\includegraphics[width=0.24\textwidth]{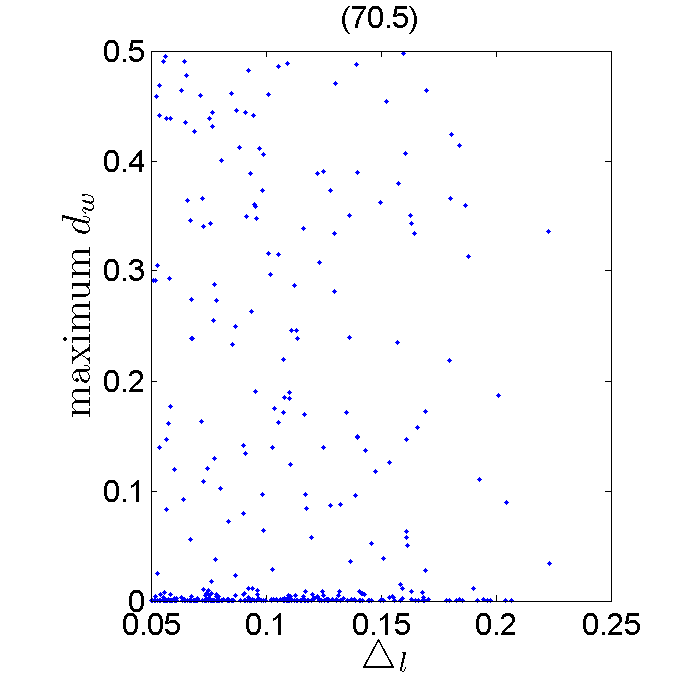} }
%

\subcaptionbox[]{$l = 1, K = 5$}[ 0.24\textwidth ]
{\includegraphics[width=0.24\textwidth]{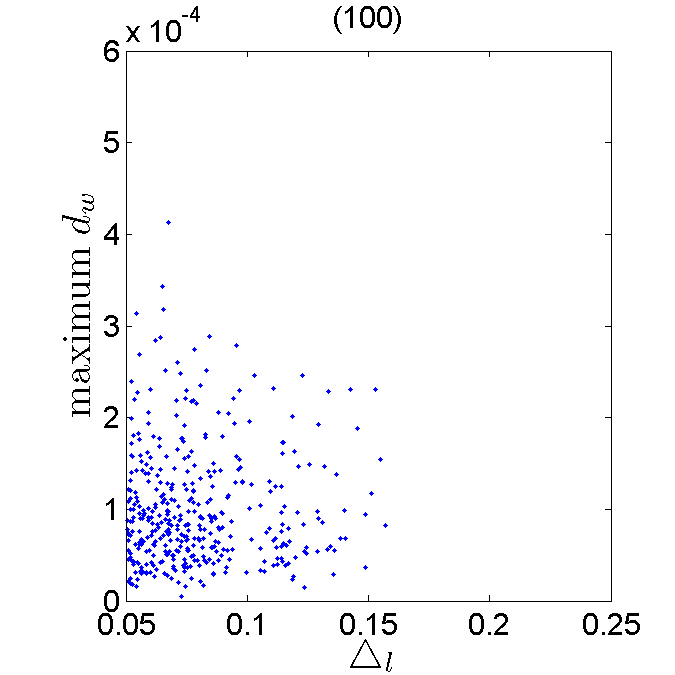} }
\subcaptionbox[]{$l = 2, K = 5$}[ 0.24\textwidth ]
{\includegraphics[width=0.24\textwidth]{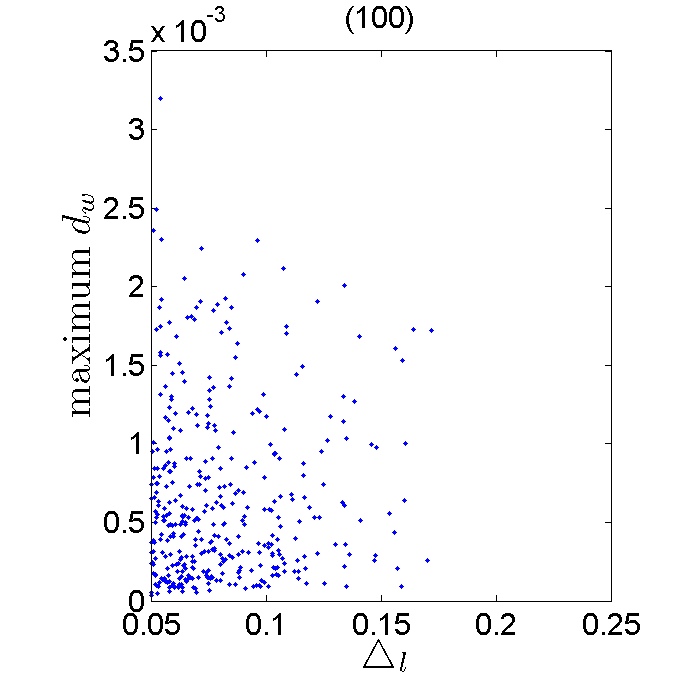} }
%
\subcaptionbox[]{$l = 3, K = 5$}[ 0.24\textwidth ]
{\includegraphics[width=0.24\textwidth]{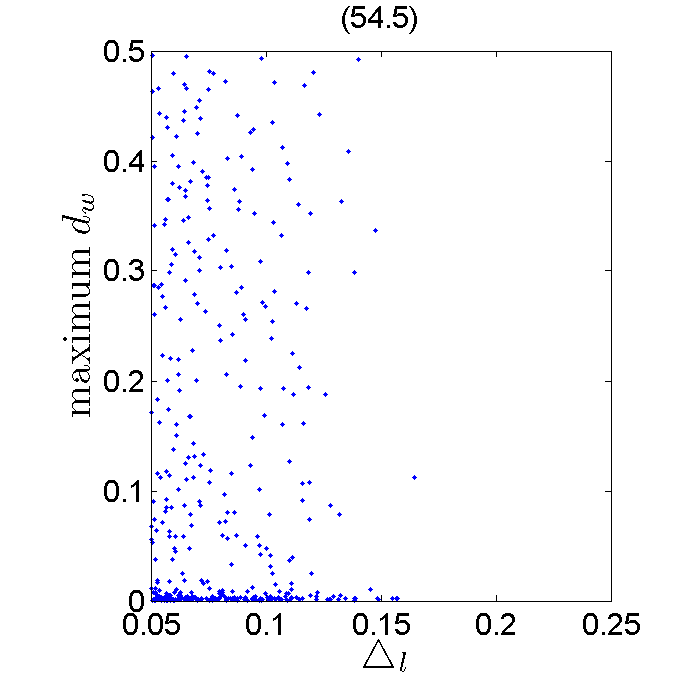} }
%
\subcaptionbox[]{$l = 4, K = 5$}[ 0.24\textwidth ]
{\includegraphics[width=0.24\textwidth]{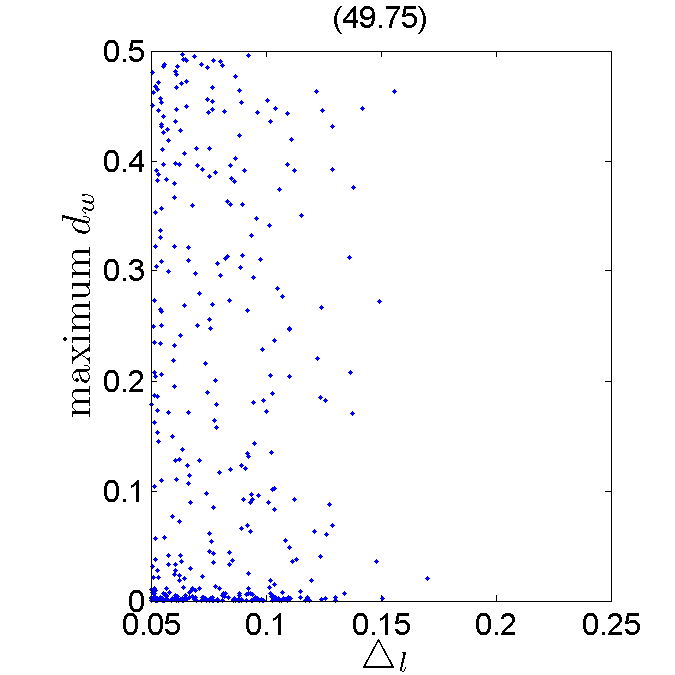} }
\captionsetup{width=0.98\linewidth}
\caption[Short Caption]{Scatter plots for maximum wrap around error ($d_{w,l,\max}$) v/s minimum separation ($\sep_l$) 
for $400$ Monte Carlo trials, 
with no external noise. This is shown for $K \in \set{2,3,4,5}$ with $L = 4$ and $C = 0.6$. For each sub-plot, we mention 
the percentage of trials with $d_{w,l,\max} \leq 0.05$ in parenthesis.}
\label{fig:max_loc_err_noiseless}
\end{figure}

%
\begin{figure}[!ht]
\centering
\subcaptionbox[]{$l = 1, K = 2$}[ 0.24\textwidth ]
{\includegraphics[width=0.24\textwidth]{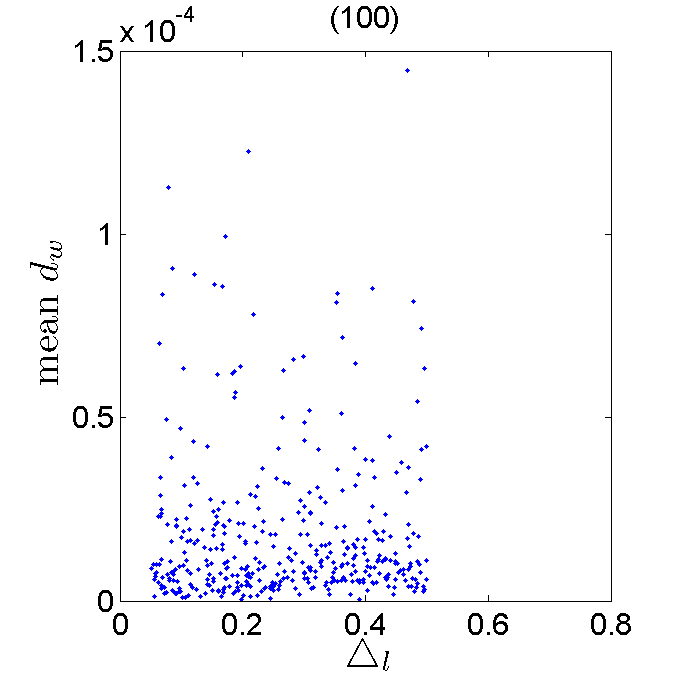} }
\subcaptionbox[]{$l = 2, K = 2$}[ 0.24\textwidth ]
{\includegraphics[width=0.24\textwidth]{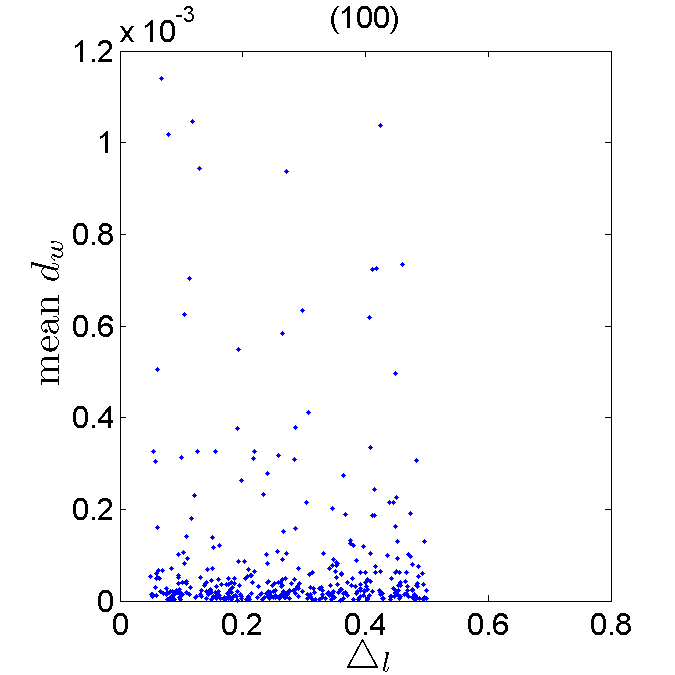} }
%
\subcaptionbox[]{$l = 3, K = 2$}[ 0.24\textwidth ]
{\includegraphics[width=0.24\textwidth]{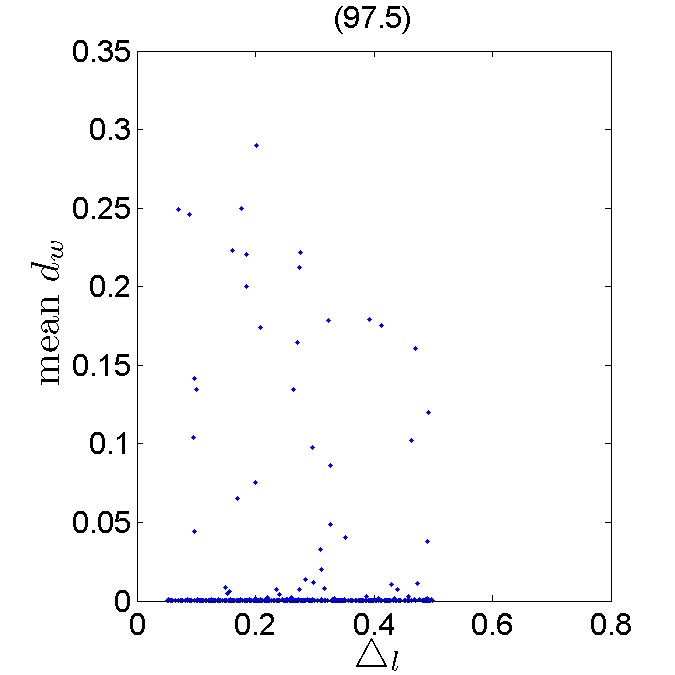} }
%
\subcaptionbox[]{$l = 4, K = 2$}[ 0.24\textwidth ]
{\includegraphics[width=0.24\textwidth]{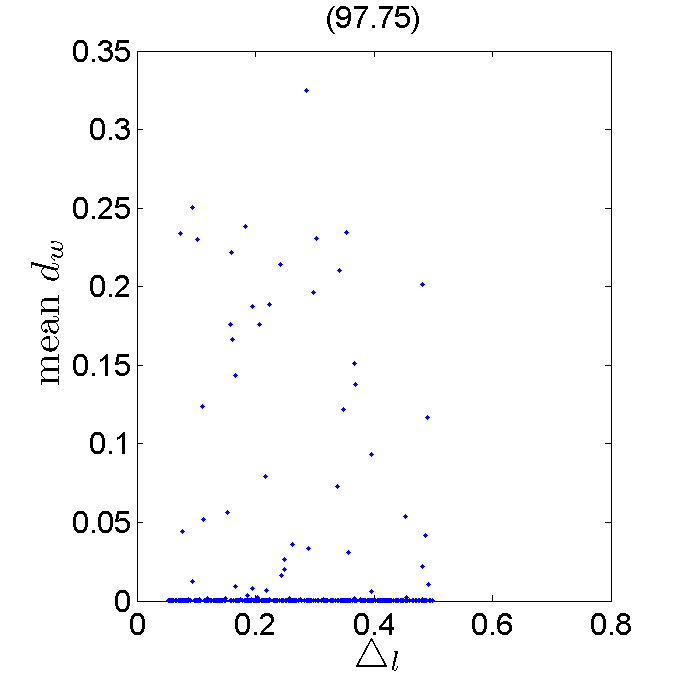} }
%

\subcaptionbox[]{$l = 1, K = 3$}[ 0.24\textwidth ]
{\includegraphics[width=0.24\textwidth]{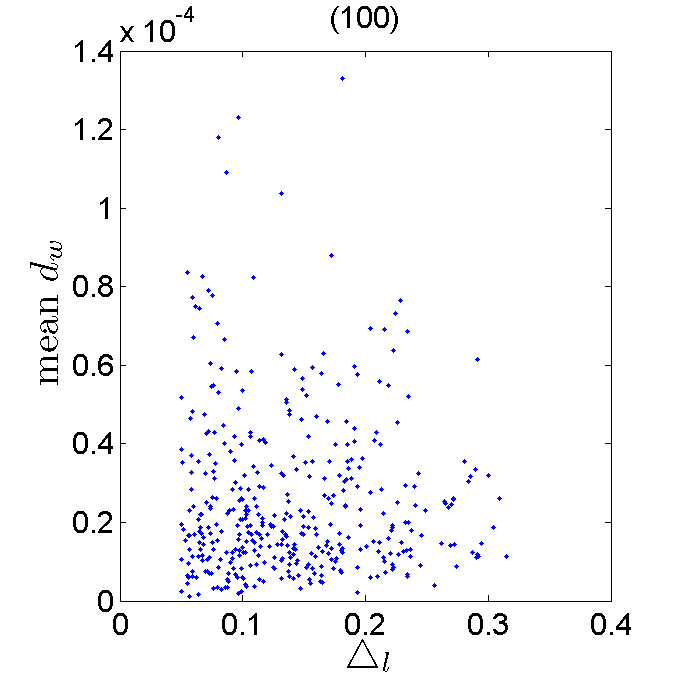} }
\subcaptionbox[]{$l = 2, K = 3$}[ 0.24\textwidth ]
{\includegraphics[width=0.24\textwidth]{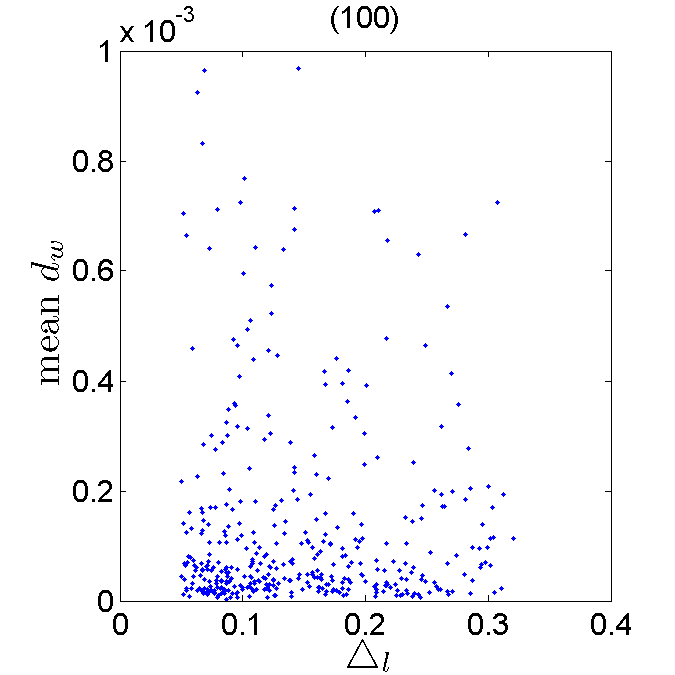} }
%
\subcaptionbox[]{$l = 3, K = 3$}[ 0.24\textwidth ]
{\includegraphics[width=0.24\textwidth]{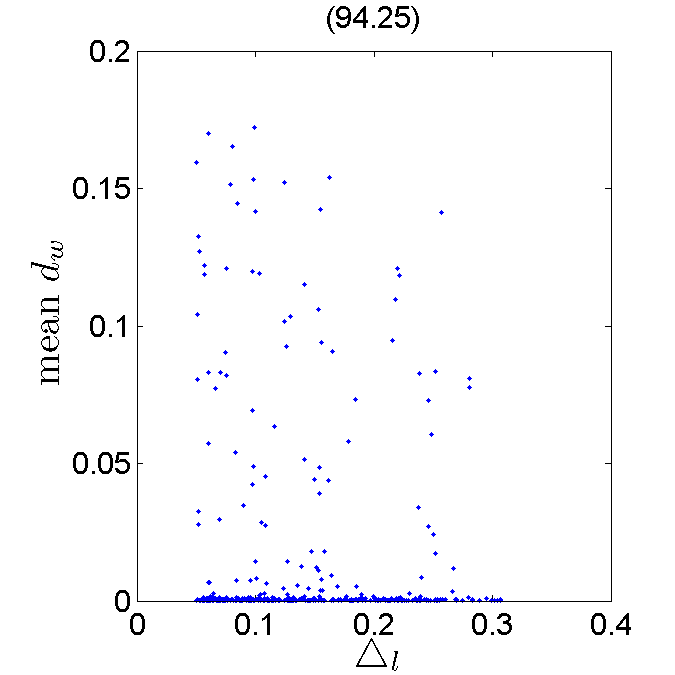} }
%
\subcaptionbox[]{$l = 4, K = 3$}[ 0.24\textwidth ]
{\includegraphics[width=0.24\textwidth]{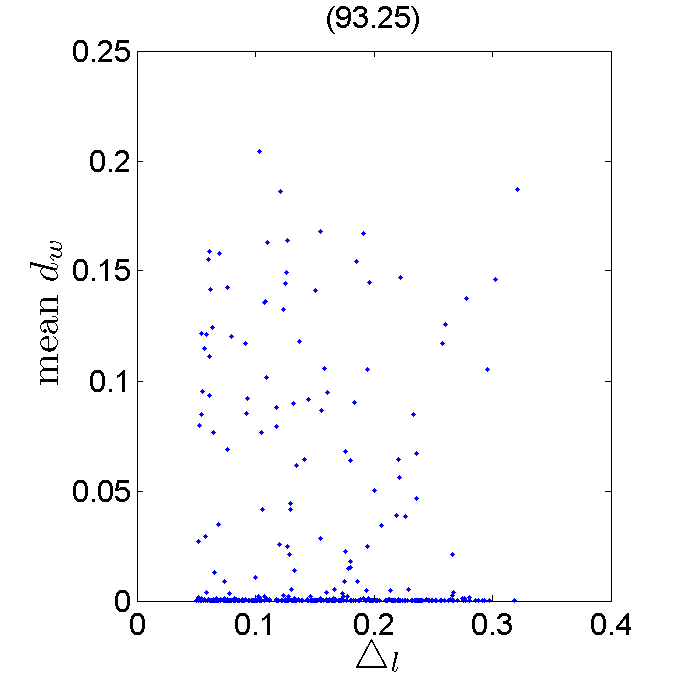} }
%

\subcaptionbox[]{$l = 1, K = 4$}[ 0.24\textwidth ]
{\includegraphics[width=0.24\textwidth]{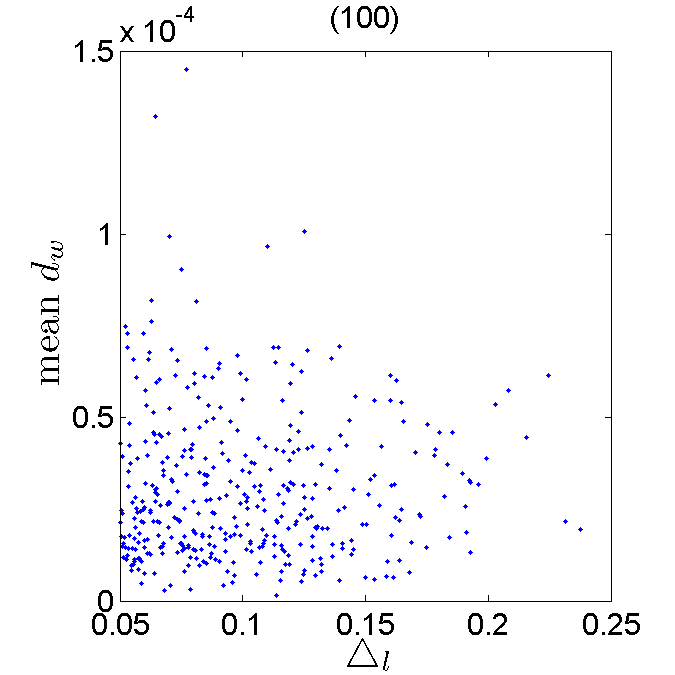} }
\subcaptionbox[]{$l = 2, K = 4$}[ 0.24\textwidth ]
{\includegraphics[width=0.24\textwidth]{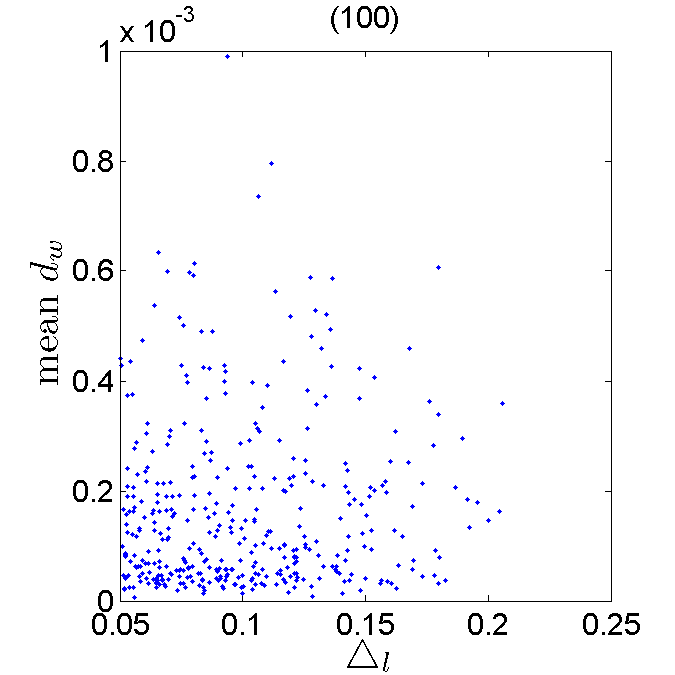} }
%
\subcaptionbox[]{$l = 3, K = 4$}[ 0.24\textwidth ]
{\includegraphics[width=0.24\textwidth]{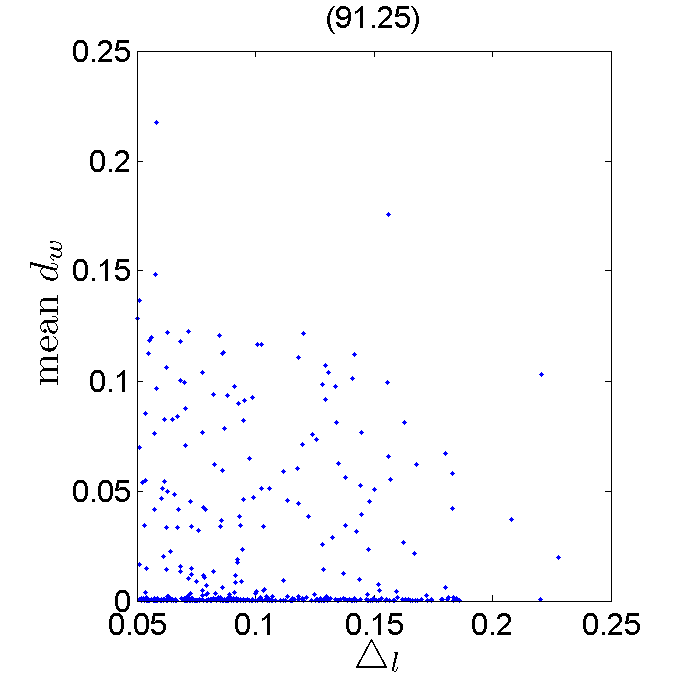} }
%
\subcaptionbox[]{$l = 4, K = 4$}[ 0.24\textwidth ]
{\includegraphics[width=0.24\textwidth]{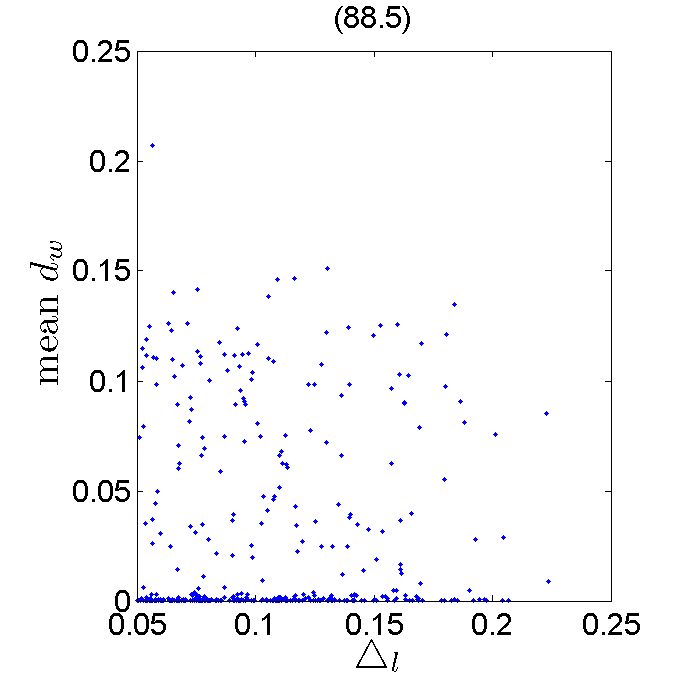} }
%

\subcaptionbox[]{$l = 1, K = 5$}[ 0.24\textwidth ]
{\includegraphics[width=0.24\textwidth]{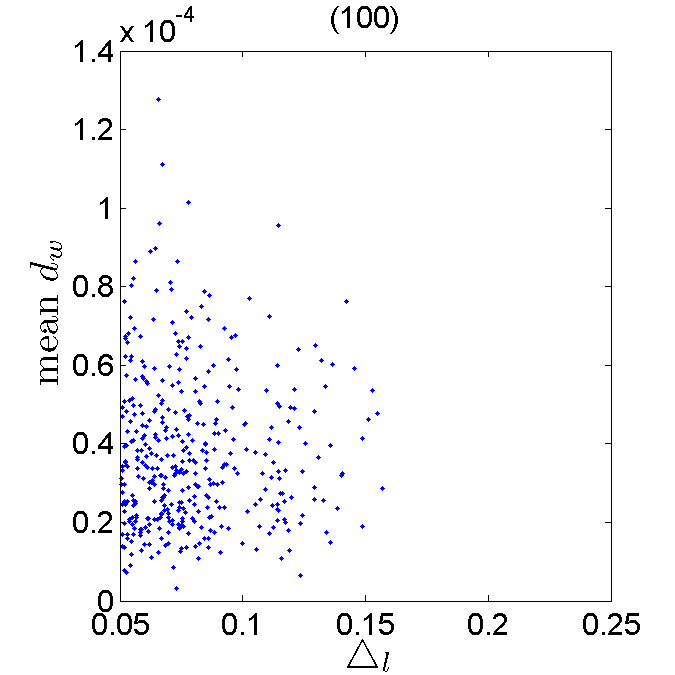} }
\subcaptionbox[]{$l = 2, K = 5$}[ 0.24\textwidth ]
{\includegraphics[width=0.24\textwidth]{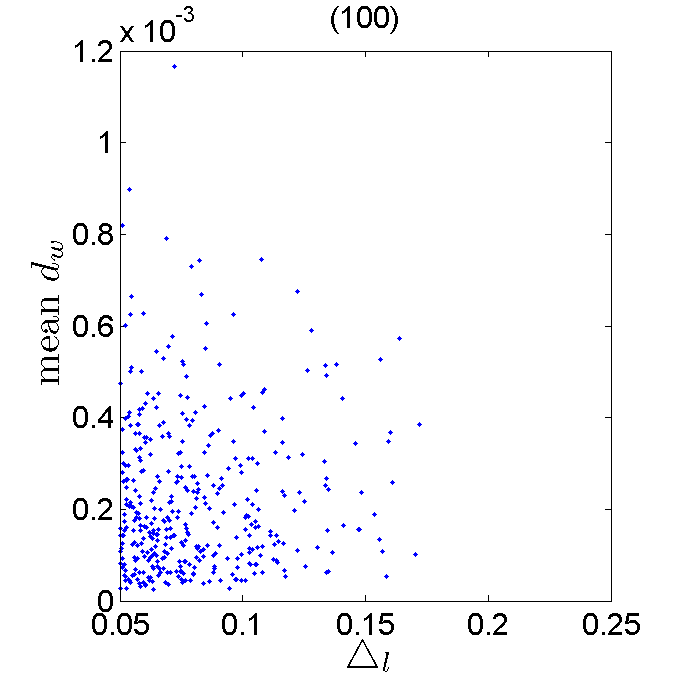} }
%
\subcaptionbox[]{$l = 3, K = 5$}[ 0.24\textwidth ]
{\includegraphics[width=0.24\textwidth]{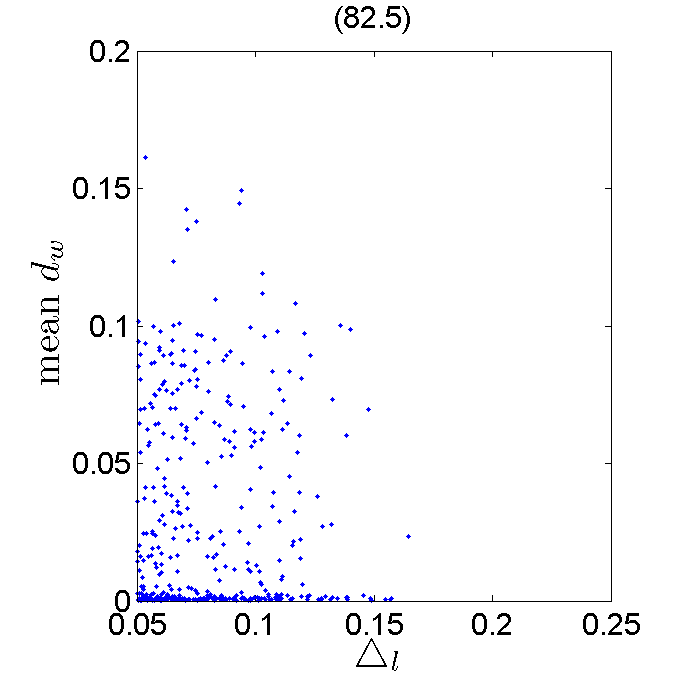} }
%
\subcaptionbox[]{$l = 4, K = 5$}[ 0.24\textwidth ]
{\includegraphics[width=0.24\textwidth]{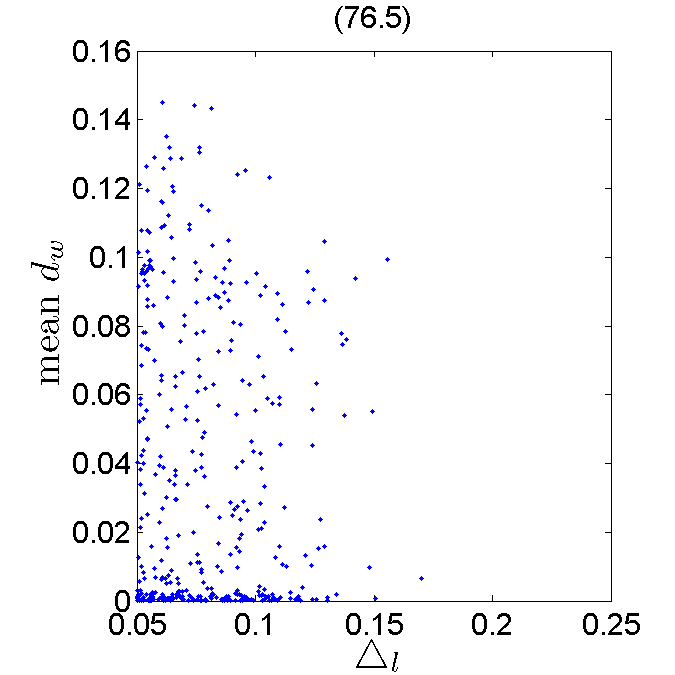} }
\captionsetup{width=0.98\linewidth}
\caption[Short Caption]{Scatter plots for the mean wrap around error ($d_{w,l,\text{avg}}$) v/s minimum separation ($\sep_l$) 
for $400$ Monte Carlo trials, with no external noise. This is shown for $K \in \set{2,3,4,5}$ with $L = 4$ and $C = 0.6$.
For each sub-plot, we mention the percentage of trials with $d_{w,l,\text{avg}} \leq 0.05$ in parenthesis.}
\label{fig:mean_loc_err_noiseless}
\end{figure}

%
\begin{figure}[!ht]
\centering
\subcaptionbox[]{$l = 1, K = 2$}[ 0.24\textwidth ]
{\includegraphics[width=0.24\textwidth]{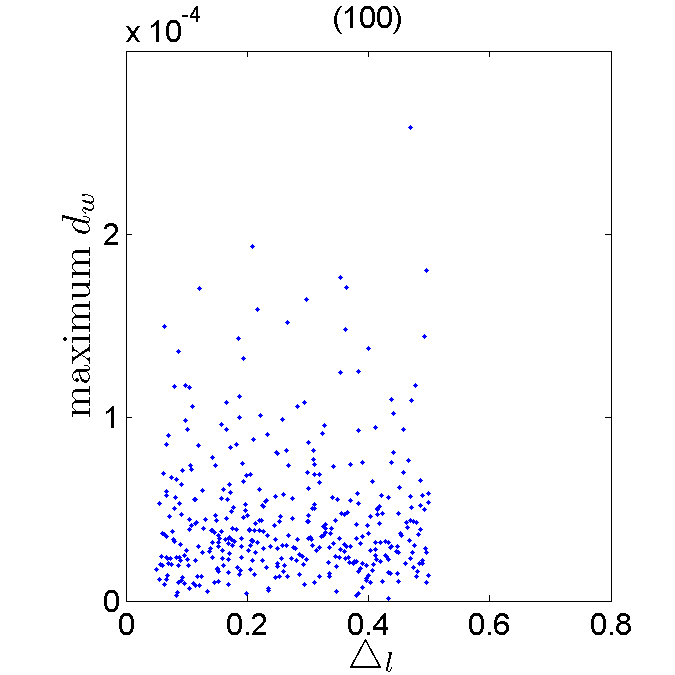} }
\subcaptionbox[]{$l = 2, K = 2$}[ 0.24\textwidth ]
{\includegraphics[width=0.24\textwidth]{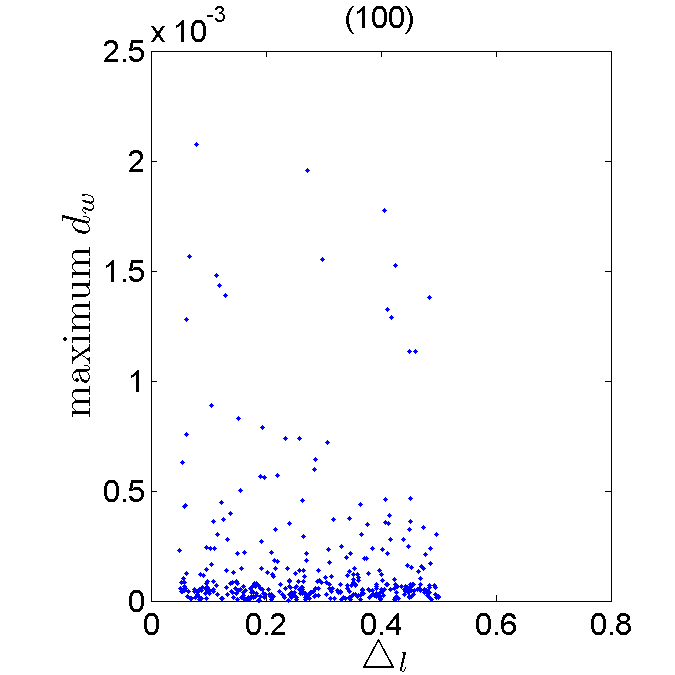} }
%
\subcaptionbox[]{$l = 3, K = 2$}[ 0.24\textwidth ]
{\includegraphics[width=0.24\textwidth]{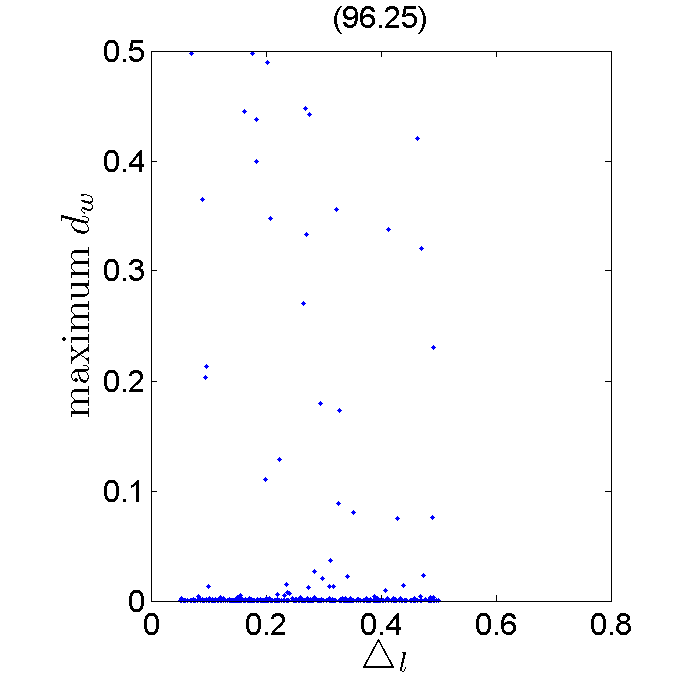} }
%
\subcaptionbox[]{$l = 4, K = 2$}[ 0.24\textwidth ]
{\includegraphics[width=0.24\textwidth]{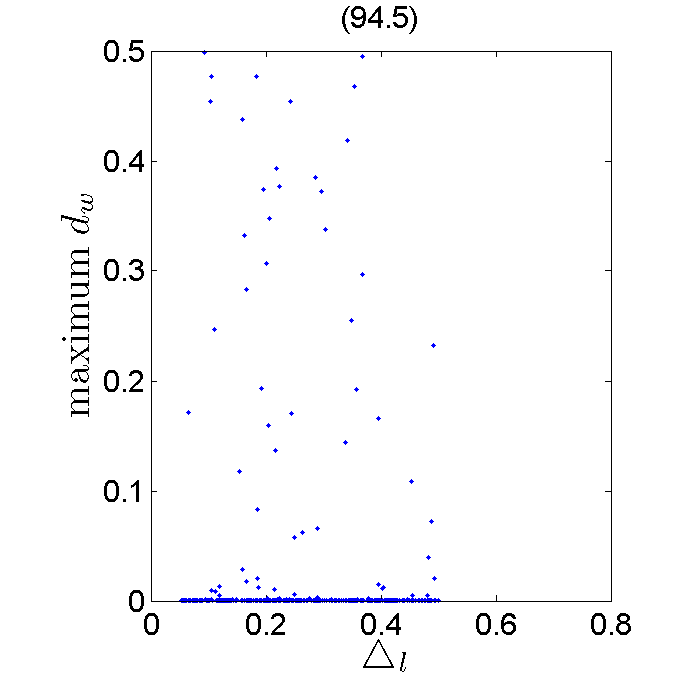} }
%

\subcaptionbox[]{$l = 1, K = 3$}[ 0.24\textwidth ]
{\includegraphics[width=0.24\textwidth]{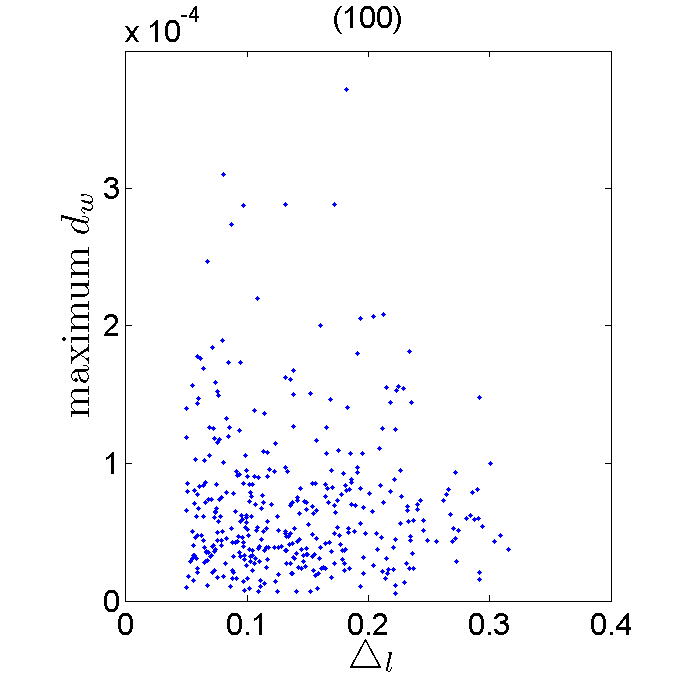} }
\subcaptionbox[]{$l = 2, K = 3$}[ 0.24\textwidth ]
{\includegraphics[width=0.24\textwidth]{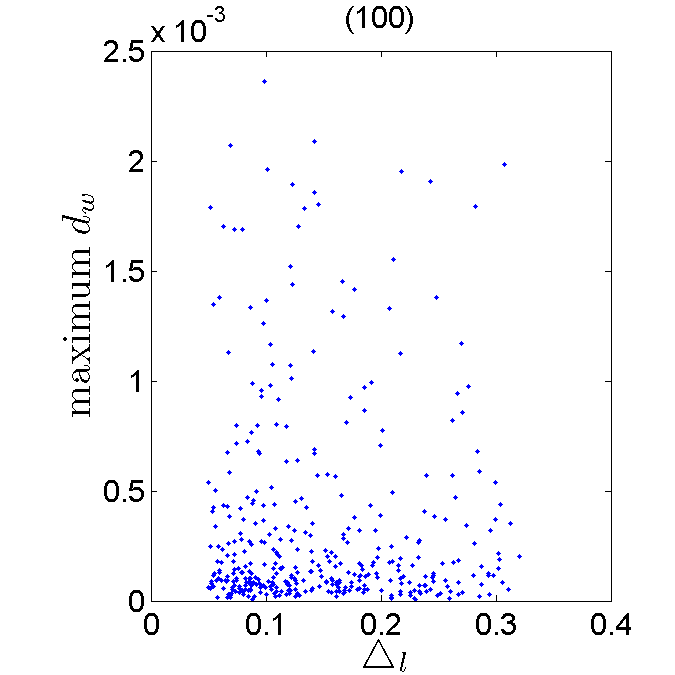} }
%
\subcaptionbox[]{$l = 3, K = 3$}[ 0.24\textwidth ]
{\includegraphics[width=0.24\textwidth]{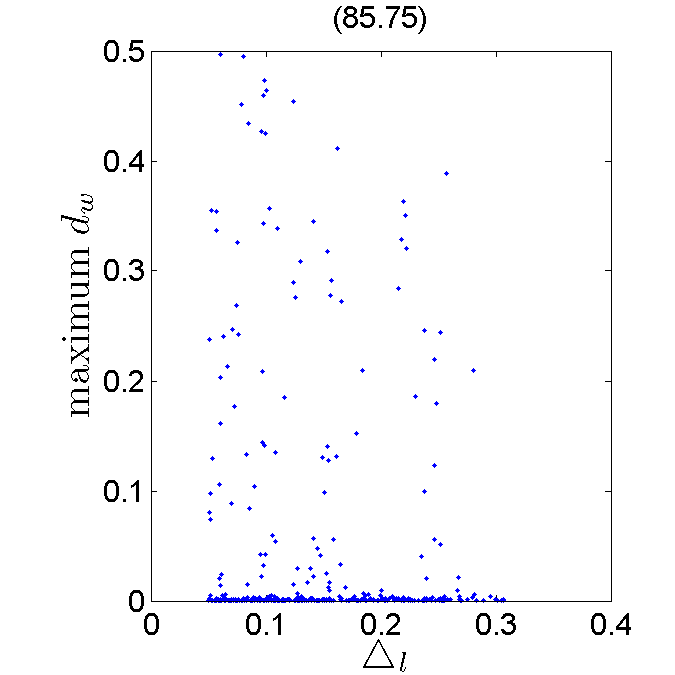} }
%
\subcaptionbox[]{$l = 4, K = 3$}[ 0.24\textwidth ]
{\includegraphics[width=0.24\textwidth]{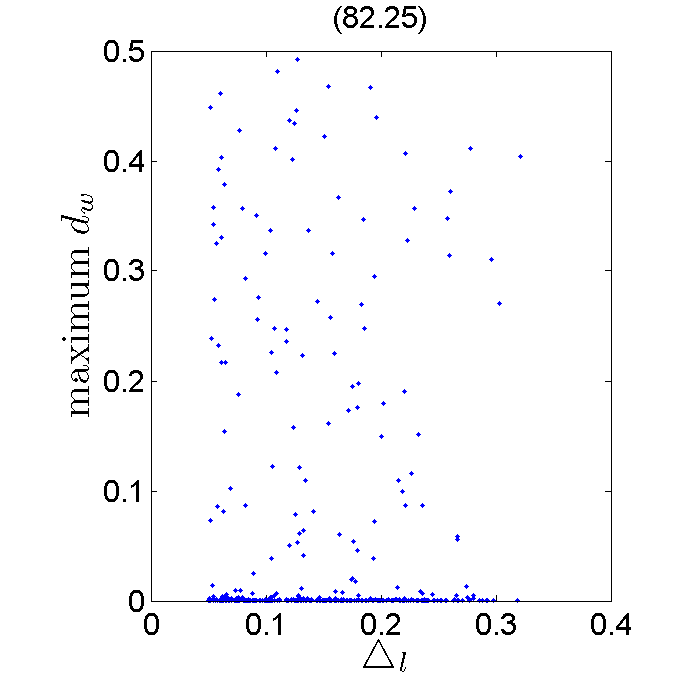} }
%

\subcaptionbox[]{$l = 1, K = 4$}[ 0.24\textwidth ]
{\includegraphics[width=0.24\textwidth]{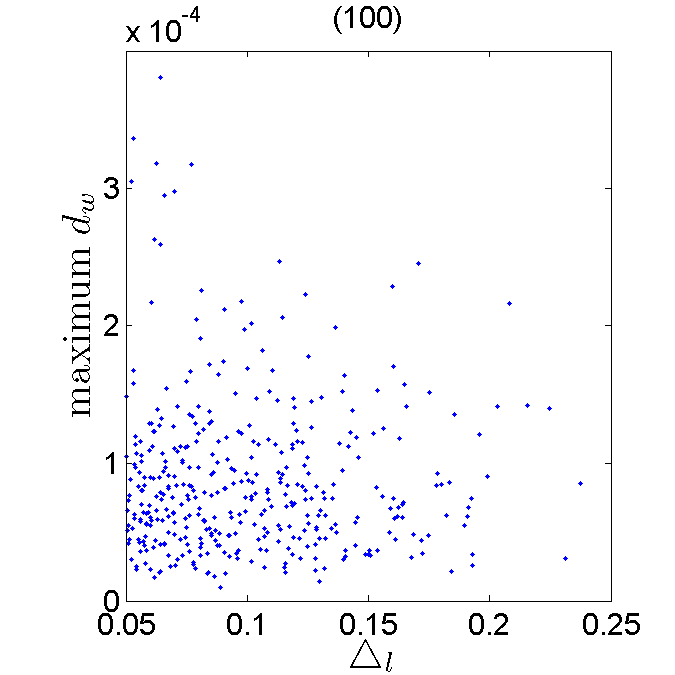} }
\subcaptionbox[]{$l = 2, K = 4$}[ 0.24\textwidth ]
{\includegraphics[width=0.24\textwidth]{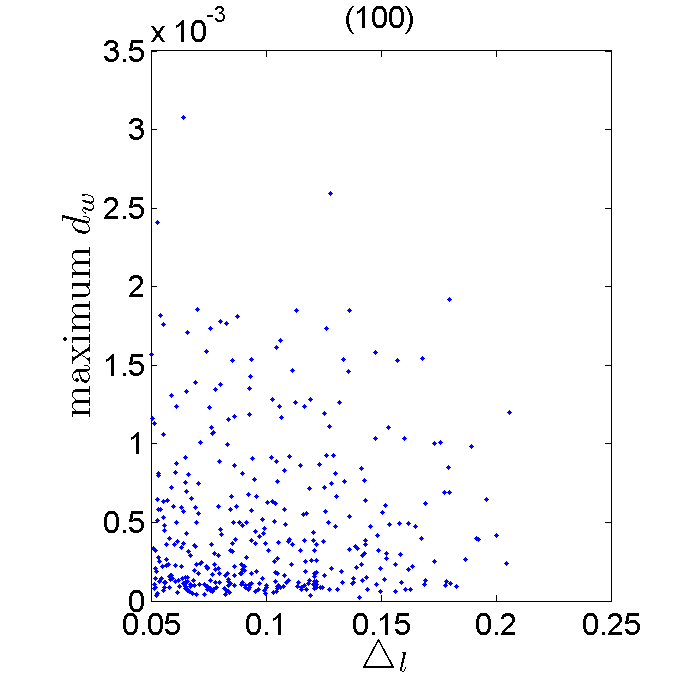} }
%
\subcaptionbox[]{$l = 3, K = 4$}[ 0.24\textwidth ]
{\includegraphics[width=0.24\textwidth]{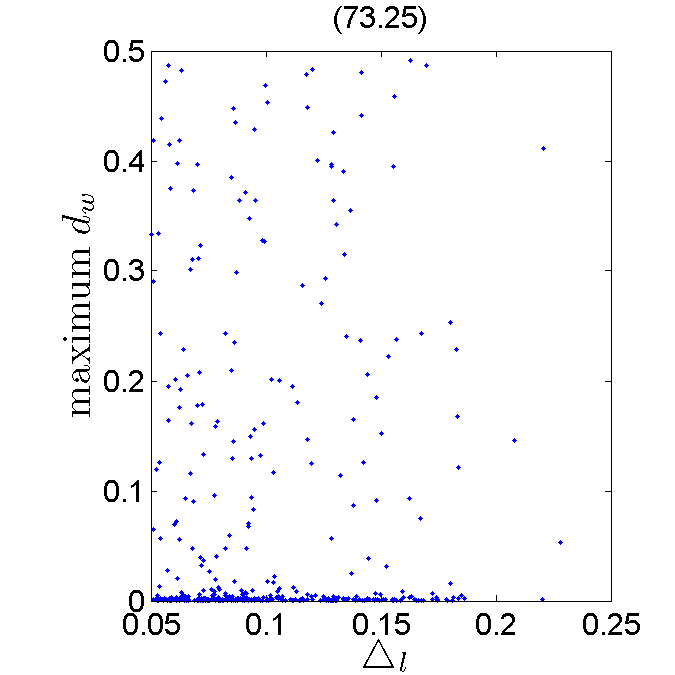} }
%
\subcaptionbox[]{$l = 4, K = 4$}[ 0.24\textwidth ]
{\includegraphics[width=0.24\textwidth]{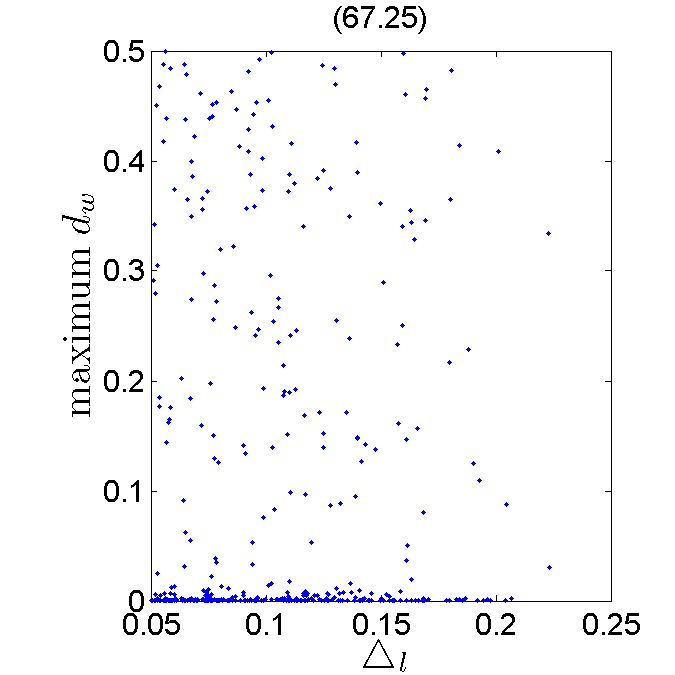} }
%

\subcaptionbox[]{$l = 1, K = 5$}[ 0.24\textwidth ]
{\includegraphics[width=0.24\textwidth]{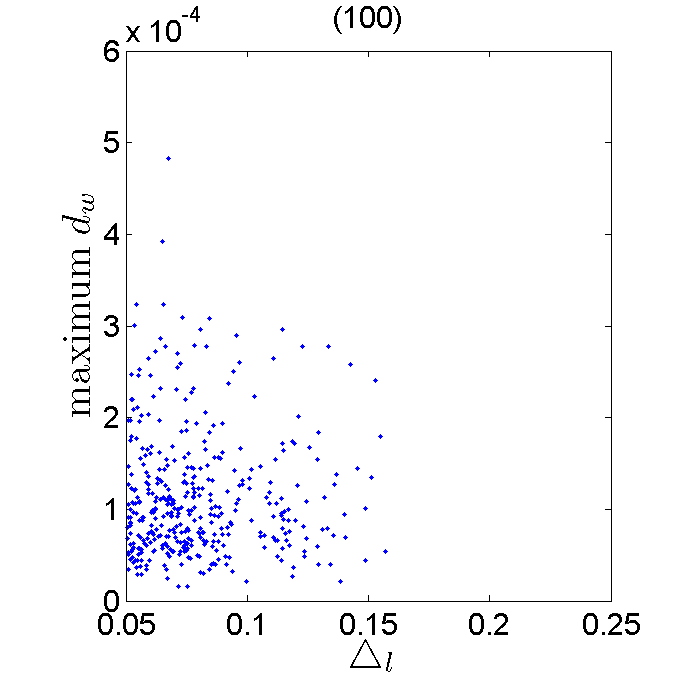} }
\subcaptionbox[]{$l = 2, K = 5$}[ 0.24\textwidth ]
{\includegraphics[width=0.24\textwidth]{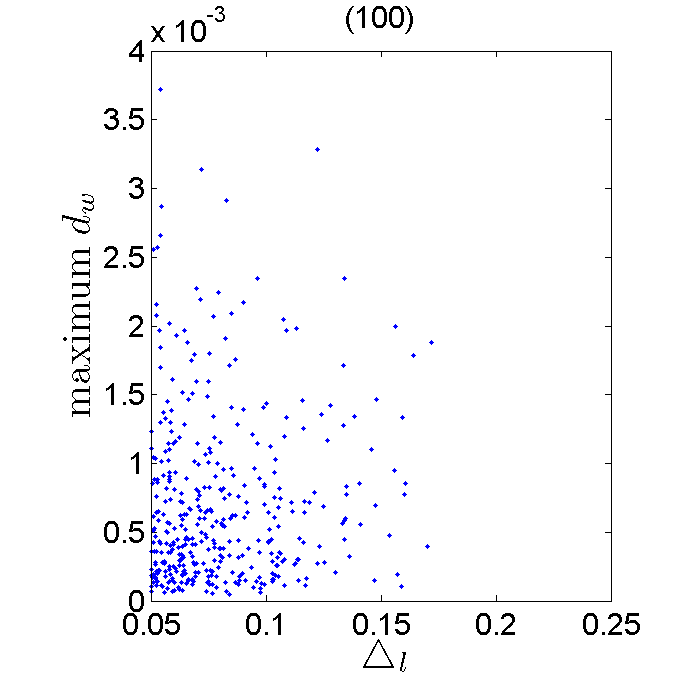} }
%
\subcaptionbox[]{$l = 3, K = 5$}[ 0.24\textwidth ]
{\includegraphics[width=0.24\textwidth]{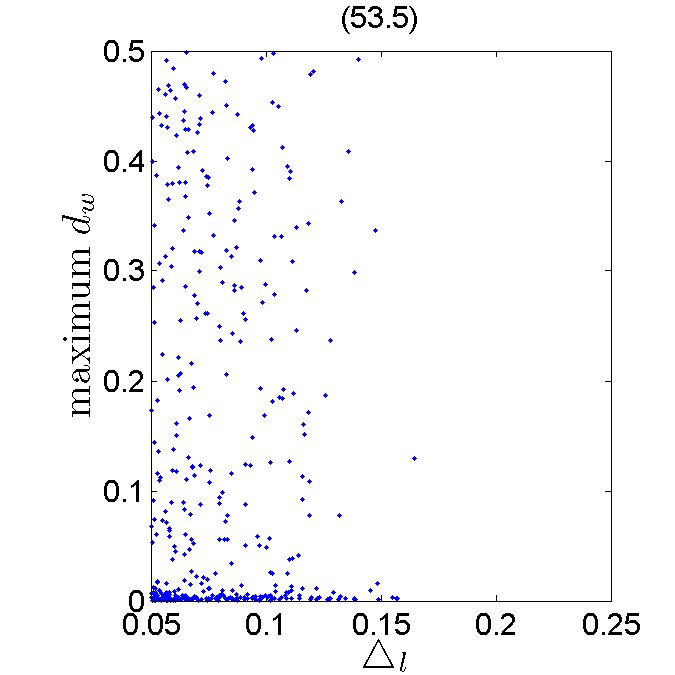} }
%
\subcaptionbox[]{$l = 4, K = 5$}[ 0.24\textwidth ]
{\includegraphics[width=0.24\textwidth]{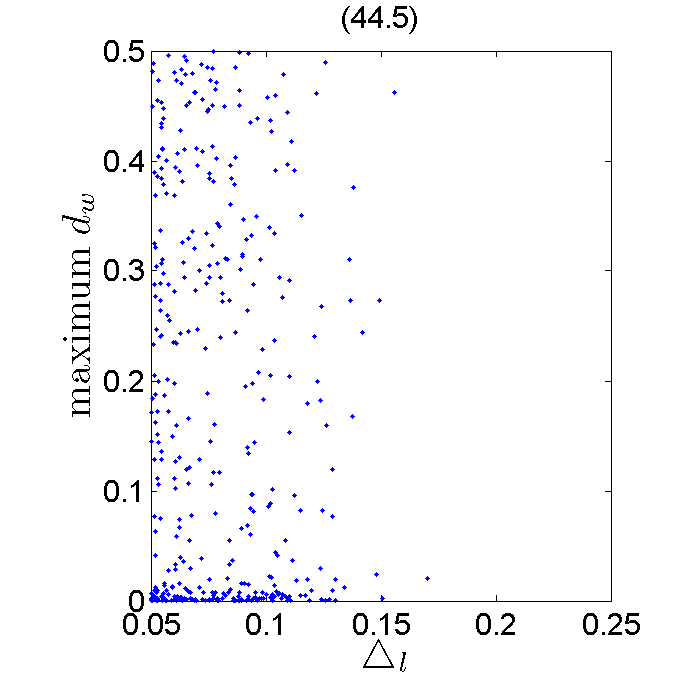} }
\captionsetup{width=0.98\linewidth}
\caption[Short Caption]{Scatter plots for maximum wrap around error ($d_{w,l,\max}$) v/s 
minimum separation ($\sep_l$) for $400$ Monte Carlo trials, 
with external Gaussian noise (standard deviation $5 \times 10^{-5}$). This is shown for $K \in \set{2,3,4,5}$ 
with $L = 4$ and $C = 0.6$. For each sub-plot, we mention the percentage of trials with 
$d_{w,l,\max} \leq 0.05$ in parenthesis.}
\label{fig:max_loc_err_noise}
\end{figure}

%
\begin{figure}[!ht]
\centering
\subcaptionbox[]{$l = 1, K = 2$}[ 0.24\textwidth ]
{\includegraphics[width=0.24\textwidth]{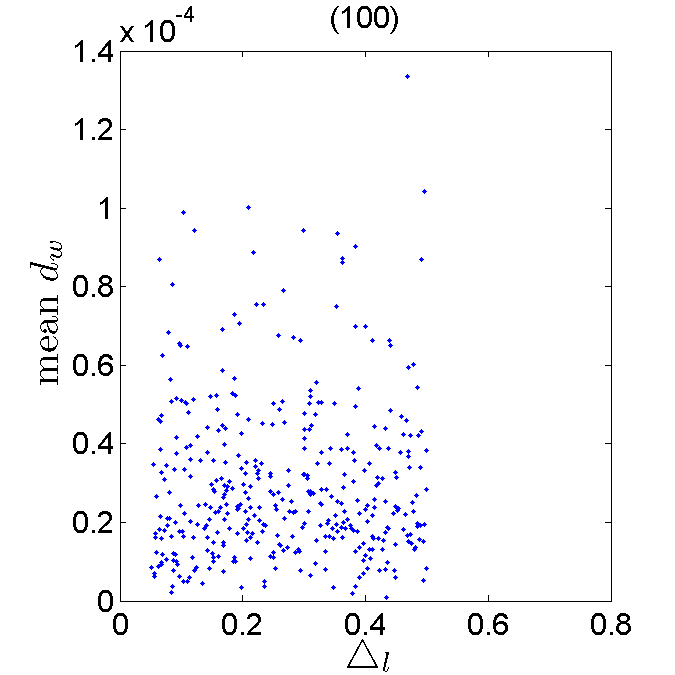} }
\subcaptionbox[]{$l = 2, K = 2$}[ 0.24\textwidth ]
{\includegraphics[width=0.24\textwidth]{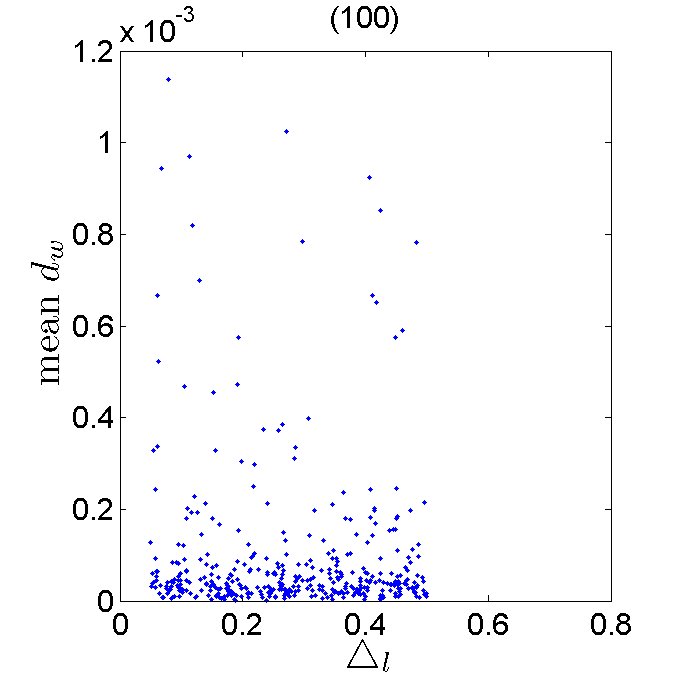} }
%
\subcaptionbox[]{$l = 3, K = 2$}[ 0.24\textwidth ]
{\includegraphics[width=0.24\textwidth]{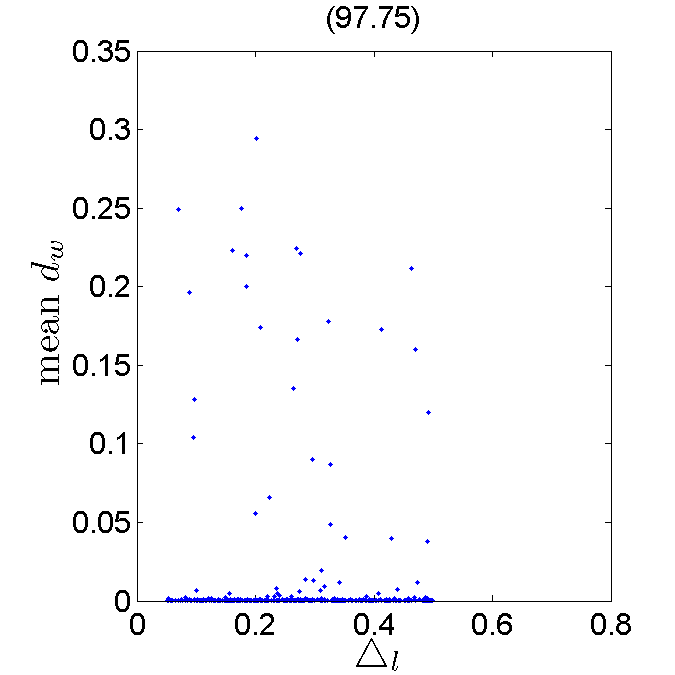} }
%
\subcaptionbox[]{$l = 4, K = 2$}[ 0.24\textwidth ]
{\includegraphics[width=0.24\textwidth]{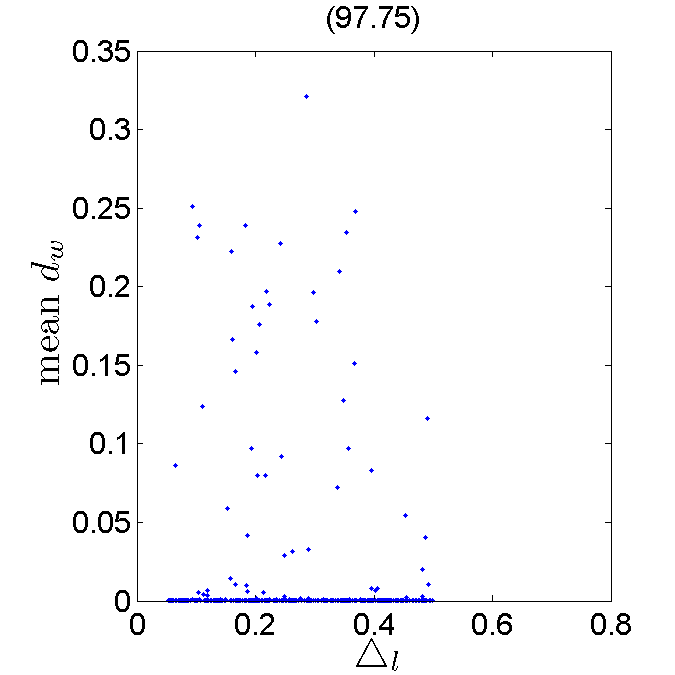} }
%

\subcaptionbox[]{$l = 1, K = 3$}[ 0.24\textwidth ]
{\includegraphics[width=0.24\textwidth]{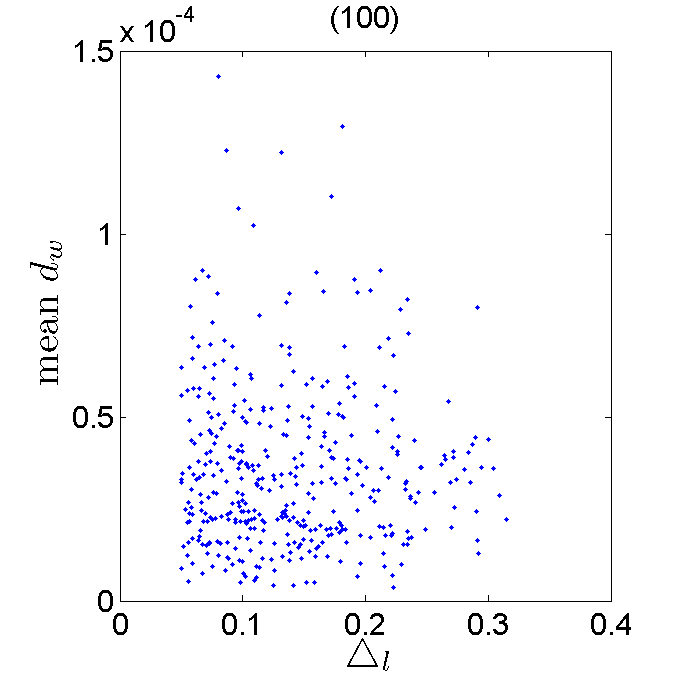} }
\subcaptionbox[]{$l = 2, K = 3$}[ 0.24\textwidth ]
{\includegraphics[width=0.24\textwidth]{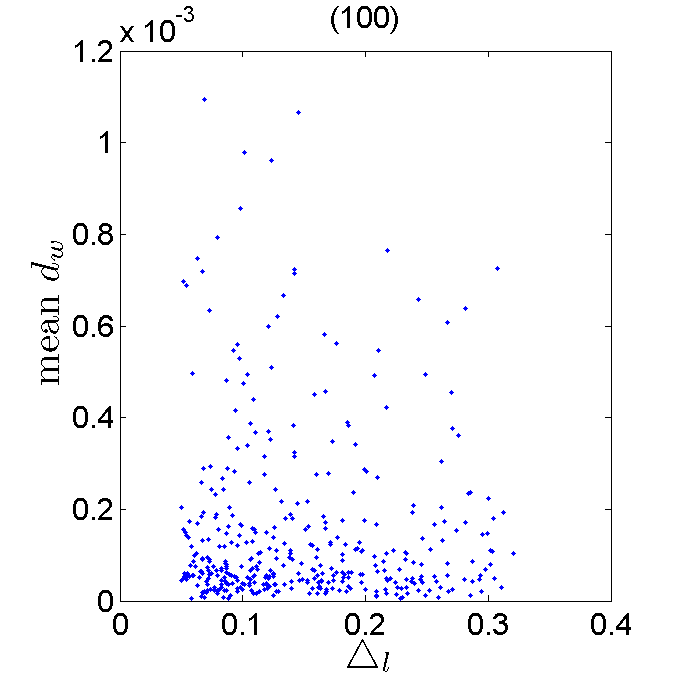} }
%
\subcaptionbox[]{$l = 3, K = 3$}[ 0.24\textwidth ]
{\includegraphics[width=0.24\textwidth]{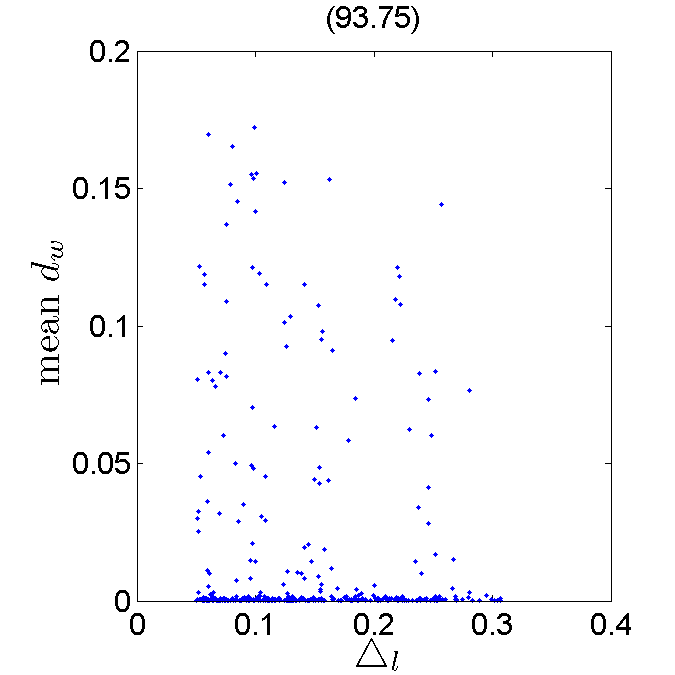} }
%
\subcaptionbox[]{$l = 4, K = 3$}[ 0.24\textwidth ]
{\includegraphics[width=0.24\textwidth]{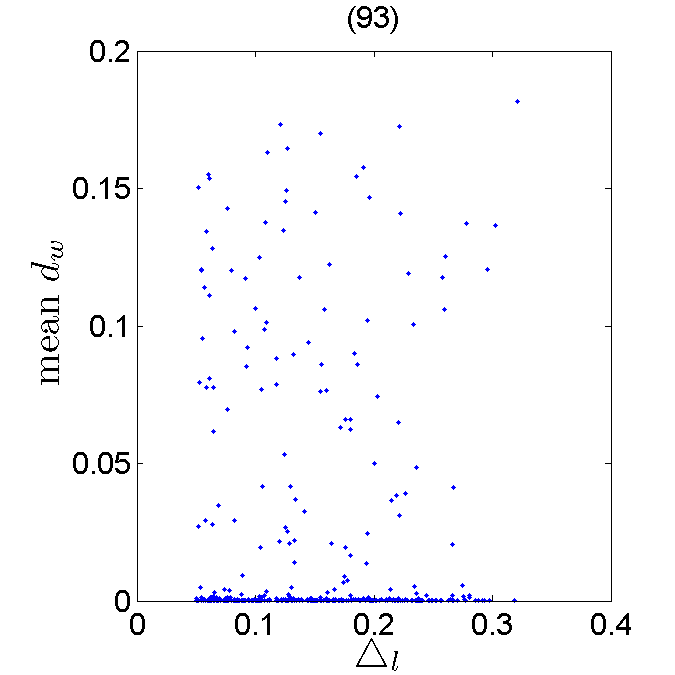} }
%

\subcaptionbox[]{$l = 1, K = 4$}[ 0.24\textwidth ]
{\includegraphics[width=0.24\textwidth]{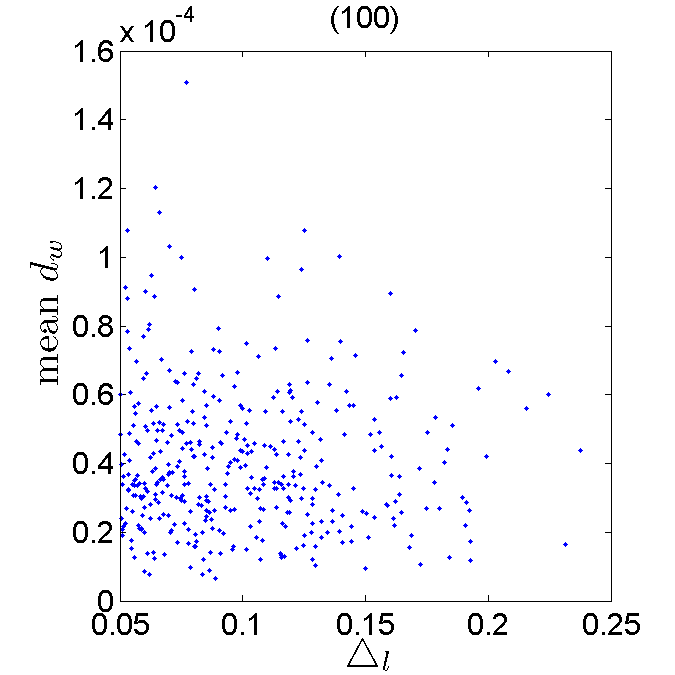} }
\subcaptionbox[]{$l = 2, K = 4$}[ 0.24\textwidth ]
{\includegraphics[width=0.24\textwidth]{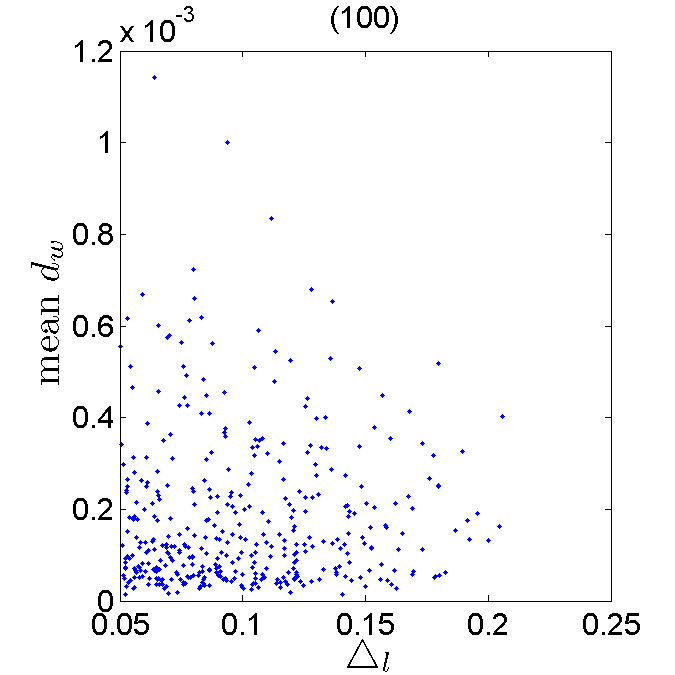} }
%
\subcaptionbox[]{$l = 3, K = 4$}[ 0.24\textwidth ]
{\includegraphics[width=0.24\textwidth]{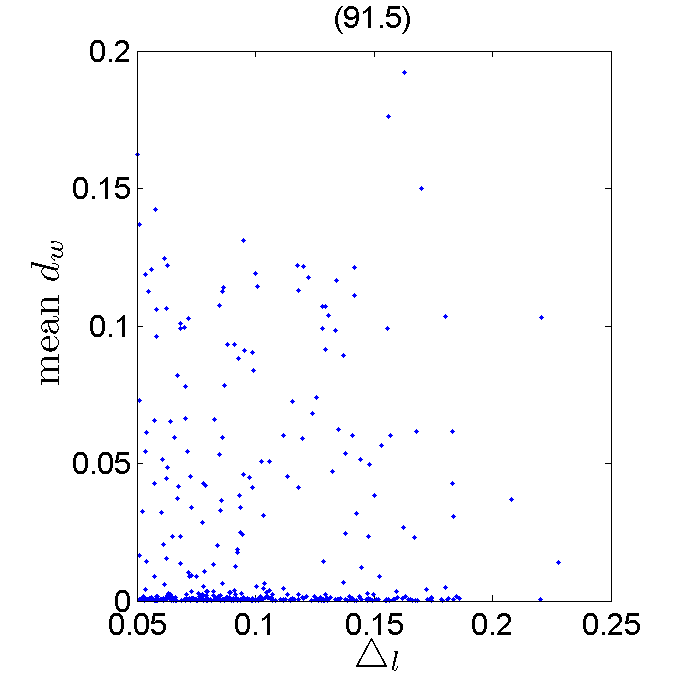} }
%
\subcaptionbox[]{$l = 4, K = 4$}[ 0.24\textwidth ]
{\includegraphics[width=0.24\textwidth]{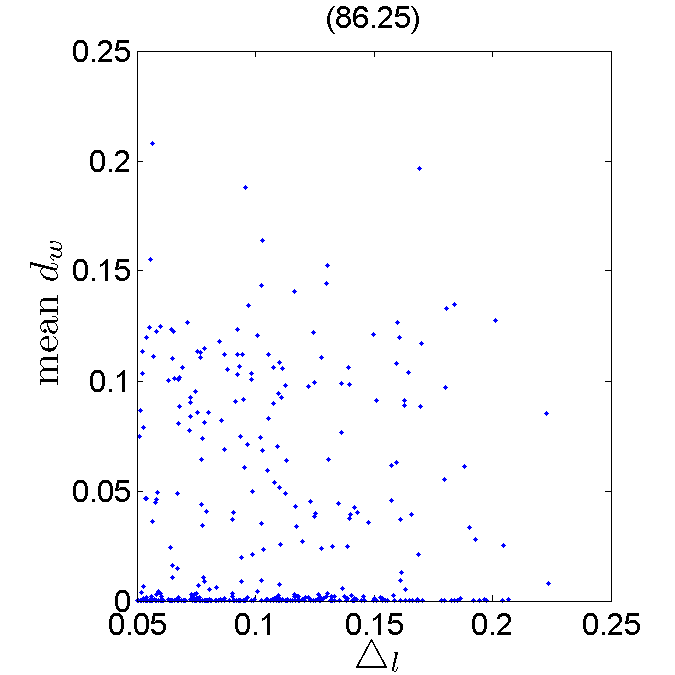} }
%

\subcaptionbox[]{$l = 1, K = 5$}[ 0.24\textwidth ]
{\includegraphics[width=0.24\textwidth]{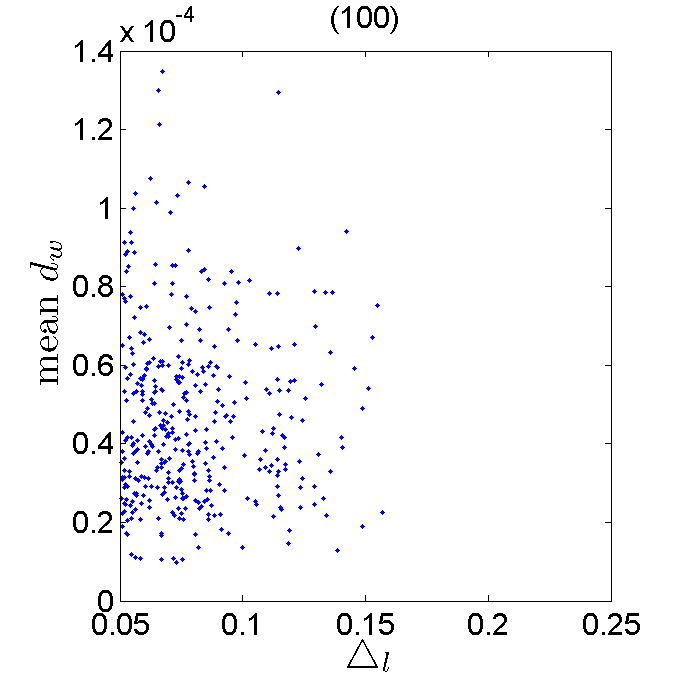} }
\subcaptionbox[]{$l = 2, K = 5$}[ 0.24\textwidth ]
{\includegraphics[width=0.24\textwidth]{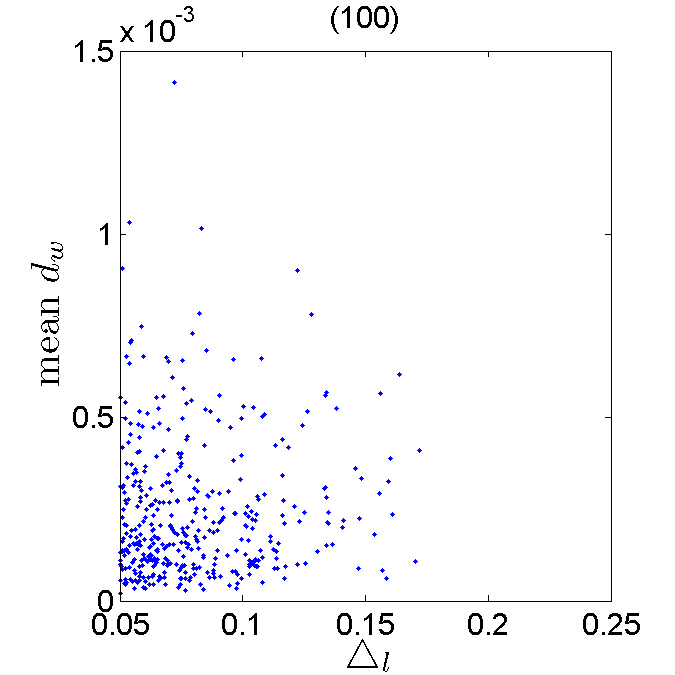} }
%
\subcaptionbox[]{$l = 3, K = 5$}[ 0.24\textwidth ]
{\includegraphics[width=0.24\textwidth]{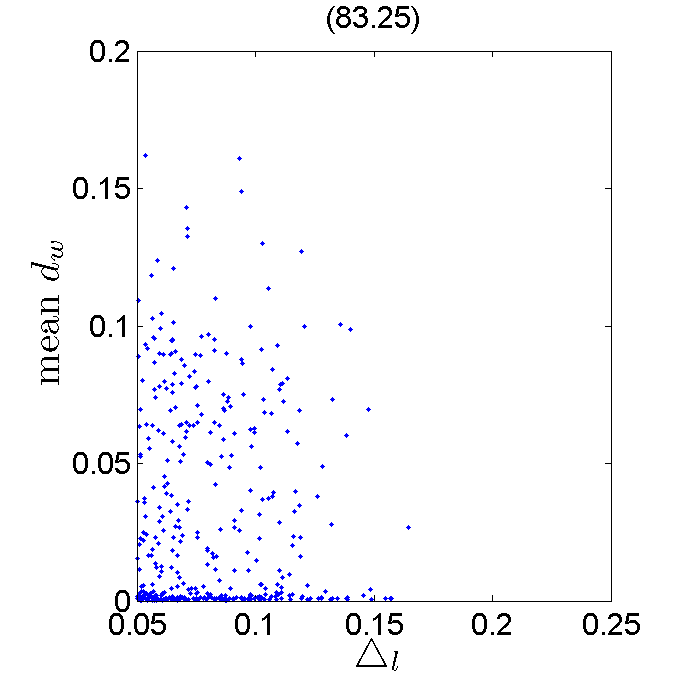} }
%
\subcaptionbox[]{$l = 4, K = 5$}[ 0.24\textwidth ]
{\includegraphics[width=0.24\textwidth]{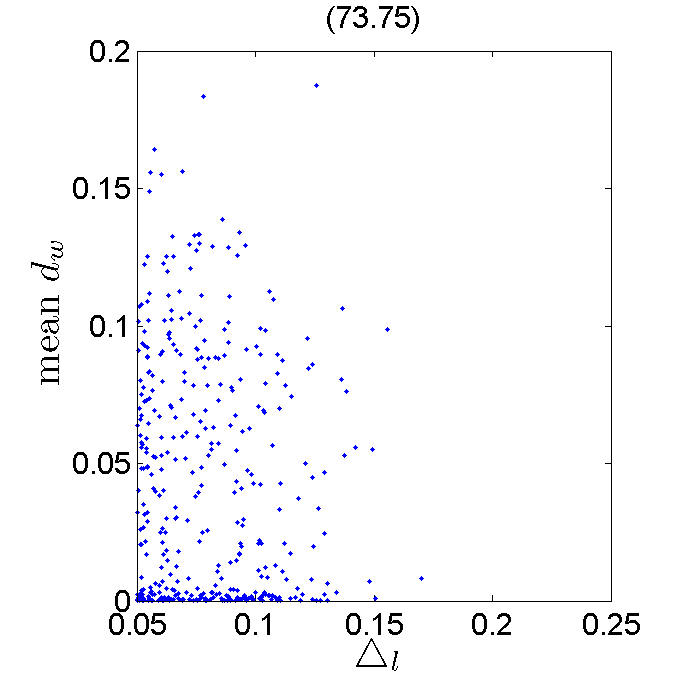} }
\captionsetup{width=0.98\linewidth}
\caption[Short Caption]{Scatter plots for mean wrap around error ($d_{w,l,\text{avg}}$) v/s minimum separation ($\sep_l$) 
for $400$ Monte Carlo trials, with external Gaussian noise (standard deviation $5 \times 10^{-5}$). 
This is shown for $K \in \set{2,3,4,5}$ with $L = 4$ and $C = 0.6$. For each sub-plot, we mention the percentage of trials with 
$d_{w,l,\text{avg}} \leq 0.05$ in parenthesis.}
\label{fig:mean_loc_err_noise}
\end{figure} 
\section{Discussion and Concluding remarks}
We now compare with closely related work and conclude with directions for future research.
%
%
\subsection{Related work on the multi-kernel unmixing super-resolution problem} \label{subsec:rel_work}
Despite is natural role in many practical problems,  the study of super-resolution under the 
presence of multiple kernels has not attracted much theoretical activity until recently. 
In \cite{li2017stable}, the authors introduce an interesting variant of the atomic norm approach to 
sparse measure reconstruction. They prove that the resulting convex optimization problem recovers 
the original measure in the noiseless case and they provide an interesting error bound in the noisy setting. 
The assumptions underlying this latter result is a standard separation assumption on the spike localization 
and a uniform random prior on the Fourier 
coefficients (when considered to lie in $\mathbb R/\mathbb Z$) of the point spread functions.
In comparison, our assumptions are quite different. In particular, we do not make any 
assumption about the randomness of the Fourier coefficients of the point spread function. 
Moreover, we use Moitra's Modified MP method instead of the atomic norm penalization considered in \cite{li2017stable}. 
As a main benefit of our approach, we do not need any hyper-parameter tuning when the signal is 
sufficiently larger than the noise level\footnote{what sufficiently larger means is elaborated 
on in Theorem \ref{thm:gen_case_main_noisy}}.

Another interesting work on multi-kernel super-resolution is the technique developed in \cite{slawski2010sparse}, where the setting is very close to the one of the present paper. A set of relevant modifications of the LASSO estimator and Matching Pursuit method, combined with post-processing techniques, are proposed in \cite{slawski2010sparse} and shown to perform well on real datasets. However, to the best of our knowledge, the practical value of these methods is not rigorously supported by theoretical results.

%
%


\subsection{Future directions} \label{subsec:fut_direcs}
In this paper, we provide a simple and intuitive algorithm for multi-kernel super-resolution, and also 
provide strong theoretical results for our approach. There are several directions for extending the results 
in this paper, we list two of them below. 

Firstly, our analysis assumes that the kernel variance parameters (i.e., $\kerfnvar_l$) are known exactly. In general, 
the analysis can be extended to the case where upper and lower estimates are available for each $\kerfnvar_l$.
Hence one could consider estimating the variance terms, and using these estimates with our algorithm. 
The choice of the method for estimating $(\kerfnvar_l)_l$ should be investigated with great care. 
One possible avenue is to use Lepski's method \cite{goldenshluger2011bandwidth} and its many recent 
variants and improvements (see for e.g., \cite{comte2013anisotropic}).

Next, it is natural to extend the techniques developed here to the multi-dimensional setting as 
multivariate signals are of high importance in many applications such as DNA sequencing and Mass Spectrometry.
This could be investigated using for instance \hemant{multivariate Prony-like methods} such as in
\cite{peter2015prony,kunis2016multivariate,Hua92multi,Andersson18,Ehler18,Pottsmultiexp13,Cuyt18}. 
\steph{Another interesting avenue is the one discovered in \cite{batenkov2018v2} where refined bounds on the condition number of the Vandermonde matrices are devised. It would be of particular interest to understand how such bounds could be employed in our framework in order to accomodate separation conditions below the threshold discovered by Moitra in \cite{moitra15}.}

\bibliographystyle{plain}
\bibliography{unmix_gauss_biblio}

\appendix

\section{Modified matrix pencil method} \label{sec:mmp_results}
%
%
%
\subsection{Proof of Moitra's theorem} \label{app:subsec_moitraMP_proof}
We now outline the steps of Moitra's proof \cite[Theorem 2.8]{moitra15} for completeness. In particular, we note that 
the result in \cite[Theorem 2.8]{moitra15} does not detail the constants appearing in the bounds, while we will do so here. 

To begin with, recall the notion of chordal metric for measuring distance 
between complex numbers.
%
\begin{definition} \label{def:chord_metric}
The chordal metric for $u,v \in \mathbb{C}$ is defined as
\begin{equation*}
\chord(u,v) := \frac{\abs{u-v}}{\sqrt{1+\abs{u}^2} \sqrt{1+\abs{v}^2}}.
\end{equation*}
\end{definition}
Denoting $s(u),s(v) \in \mathbb{S}^2$ to be points with $u,v$ as their respective stereographic projections on the plane, 
we also have $\chord(u,v) = \frac{1}{2}\norm{s(u) - s(v)}$. It is useful to note that if $(u,v) \in \mathbb{R}^2$ is 
the stereographic projection of a point $(a,b,c) \in \mathbb{S}^2$, then
\begin{align*}
(a,b,c) = \left(\frac{2u}{1+u^2+v^2}, \frac{2v}{1+u^2+v^2}, \frac{-1+u^2+v^2}{1+u^2+v^2} \right).
\end{align*} 
%
\begin{definition} \label{def:match_dist}
Let $(\lambda_i)_{i=1}^n$ denote generalized eigenvalues of ($A,B$), and also let $({\widehat \lambda}_i)_{i=1}^n$ 
denote generalized eigenvalues of  $(\widehat A, \widehat B)$. Then the matching distance with respect to $\chord$ is defined as 
\begin{equation*}
\mdist[(A,B), (\widehat A, \widehat B)] := \min_{\perm} \max_i \chord(\lambda_i, {\widehat \lambda}_{\perm(i)}),
\end{equation*}
where $\perm:[n] \rightarrow [n]$ denotes a permutation.
\end{definition}
The key to the analysis in \cite[Theorem 2.8]{moitra15} are the following two technical results. 
\begin{theorem}(\cite[Corollary VI.2.5]{stewart1990matrix}) \label{thm:sun_stewart}
Let $(A, B)$ and $(\hemant{\widehat A, \widehat B})$ be 
regular\footnote{\hemant{A pair $(A,B)$ is regular if there exist scalars 
$\alpha,\beta$ with $(\alpha,\beta) \neq (0,0)$ such that 
$\text{det}(\beta A - \alpha B) \neq 0$ \cite[Definition VI.1.2]{stewart1990matrix}. For $(A,B)$ to be regular, 
it suffices that $A$ and/or $B$ is full rank.}} 
pairs and further suppose that for some nonsingular
$\hemant{X,Y}$ we have $(X A \hemant{Y^H}, X B \hemant{Y^H}) = (I, D)$ where D is diagonal. 
Also let $e_i$ be the $i^{th}$ row of $X(A -\widehat A)\hemant{Y^H}$, 
$f_i$ be the $i^{th}$ row of $X(B - \widehat B)\hemant{Y^H}$ and set $\rho = \max_i\{\Vert e_i\Vert_1 + \Vert f_i\Vert_1\}$. 
If the following regions
\begin{align*}
		\mathcal G_i & = \left\{\mu \mid \chi(\hemant{D_{ii}},\mu) \le \rho \right\}
\end{align*}
are disjoint then the matching distance of the generalized eigenvalues of $(\hemant{\widehat B, \widehat A})$ to $\{D_{ii}\}_i$ 
is at most $\rho$ with respect to the chordal metric.
\end{theorem}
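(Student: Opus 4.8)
The plan is to read this as a Gershgorin-disk theorem for the generalized eigenvalue problem measured in the chordal metric, splitting the argument into an elementary \emph{localization} step (every perturbed generalized eigenvalue lies in some $\mathcal G_i$) and a topological \emph{counting} step (disjointness of the $\mathcal G_i$ forces exactly one perturbed eigenvalue per disk, which is exactly what a bound on the matching distance asks for). First I would conjugate everything by $X$ and $Y^H$. Writing $\widetilde E := X(A-\widehat A)Y^H$ and $\widetilde F := X(B-\widehat B)Y^H$ — whose $k$-th rows are $e_k$ and $f_k$ — one has $X\widehat A Y^H = I-\widetilde E$ and $X\widehat B Y^H = D-\widetilde F$; since $X,Y$ are nonsingular, $\langle\alpha,\beta\rangle$ (with $(\alpha,\beta)\neq(0,0)$, $\mu=\beta/\alpha$) is a generalized eigenvalue of $(\widehat B,\widehat A)$ iff $\det\bigl((\alpha D-\beta I)-(\alpha\widetilde F-\beta\widetilde E)\bigr)=0$, and likewise $D_{11},\dots,D_{nn}$ are exactly the generalized eigenvalues of $(B,A)$. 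I would keep the homogeneous pair $\langle\alpha,\beta\rangle$ throughout so that an eigenvalue ``at infinity'' ($\alpha=0$) needs no separate treatment, and use the projective form $\chi(D_{kk},\langle\alpha,\beta\rangle)=|\alpha D_{kk}-\beta|\bigl(\sqrt{1+|D_{kk}|^2}\sqrt{|\alpha|^2+|\beta|^2}\bigr)^{-1}$ of the chordal distance.

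For the localization step: take a generalized eigenvalue $\langle\alpha,\beta\rangle$ of $(\widehat B,\widehat A)$ with a nonzero null vector $v$ of $(\alpha D-\beta I)-(\alpha\widetilde F-\beta\widetilde E)$, and choose $k$ with $|v_k|=\norm{v}_\infty>0$. Its $k$-th coordinate reads $(\alpha D_{kk}-\beta)v_k=\sum_j(\alpha\widetilde F_{kj}-\beta\widetilde E_{kj})v_j$; the triangle inequality followed by division by $|v_k|$ gives $|\alpha D_{kk}-\beta|\le|\alpha|\,\norm{f_k}_1+|\beta|\,\norm{e_k}_1$. Dividing by $\sqrt{1+|D_{kk}|^2}\sqrt{|\alpha|^2+|\beta|^2}$, bounding $\sqrt{1+|D_{kk}|^2}\ge1$, then applying Cauchy--Schwarz to $(|\alpha|,|\beta|)$ and $(\norm{f_k}_1,\norm{e_k}_1)$, and finally $\sqrt{a^2+b^2}\le a+b$, one obtains $\chi(D_{kk},\langle\alpha,\beta\rangle)\le\norm{e_k}_1+\norm{f_k}_1\le\rho$. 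Hence every generalized eigenvalue of $(\widehat B,\widehat A)$ lies in $\bigcup_{i=1}^n\mathcal G_i$.

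For the counting step, note first that if $\rho\ge1$ each $\mathcal G_i$ is the whole Riemann sphere (since $\chi\le1$), so the disjointness hypothesis fails and there is nothing to prove; thus assume $\rho<1$. Disjointness of the $\mathcal G_i$ then forces the $D_{ii}$ to be distinct. I would homotope along $(\widehat B_t,\widehat A_t):=\bigl(B-t(B-\widehat B),\,A-t(A-\widehat A)\bigr)$, $t\in[0,1]$: applying the localization bound with the perturbation scaled by $t$ shows all generalized eigenvalues of $(\widehat B_t,\widehat A_t)$ lie in $\bigcup_i\{\mu:\chi(D_{ii},\mu)\le t\rho\}\subseteq\bigcup_i\mathcal G_i$, a proper subset of the sphere, so $\widehat B_t-\mu\widehat A_t$ is never identically singular and the pair stays regular along the path; its $n$ generalized eigenvalues (roots of the homogeneous characteristic polynomial) therefore move continuously on the sphere. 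At $t=0$ each disk $\mathcal G_i$ contains the single eigenvalue $D_{ii}$, and since the $\mathcal G_i$ are disjoint no eigenvalue can pass from one to another, so each $\mathcal G_i$ carries exactly one generalized eigenvalue of $(\widehat B_t,\widehat A_t)$, with multiplicity one, for all $t$. At $t=1$ the assignment $D_{ii}\mapsto(\text{the eigenvalue in }\mathcal G_i)$ is the matching permutation, and each matched pair is within chordal distance $\rho$ by localization; hence the matching distance is at most $\rho$.

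The localization inequality is routine; the delicate part — and the main obstacle — is the counting step: rigorously upgrading ``contained in $\bigcup_i\mathcal G_i$'' to ``exactly one eigenvalue per disk''. This requires (i) that regularity of the pencil is preserved along the homotopy, and (ii) continuity of the generalized-eigenvalue map into the Riemann sphere (essentially a Rouch\'e/degree argument carried out in projective coordinates rather than affine ones), together with care about multiplicities — ruling out a disk absorbing an eigenvalue of multiplicity $>1$, which the disjointness of the $\mathcal G_i$ and the count at $t=0$ preclude.
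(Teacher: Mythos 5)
The paper does not actually prove this statement: it is imported verbatim from Stewart and Sun (Corollary VI.2.5 of \cite{stewart1990matrix}) and used as a black box in Appendix A.1, so there is no in-paper proof to compare against. Your argument is, in substance, the standard proof of that corollary, and I believe it is correct. The localization step is exactly the Gershgorin theorem for regular pencils: passing to the equivalent pair $(I-\widetilde E,\,D-\widetilde F)$, picking the coordinate of largest modulus of a null vector of $\alpha(D-\widetilde F)-\beta(I-\widetilde E)$, and estimating the projective chordal distance via $\sqrt{1+|D_{kk}|^2}\ge 1$ and Cauchy--Schwarz gives $\chi(D_{kk},\langle\alpha,\beta\rangle)\le \|e_k\|_1+\|f_k\|_1\le\rho$, with no regularity needed for this direction. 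The counting step via the linear homotopy is also the standard route, and your observation that the localization bound itself rules out singularity of the pencil along the path (a singular pencil would force every point of the sphere into $\bigcup_i\mathcal G_i$) is the right way to avoid circularity. Two places deserve one more sentence each in a fully rigorous write-up: (i) the reason $\bigcup_i\mathcal G_i$ is a \emph{proper} subset of the sphere when $n\ge 2$ is that a connected space cannot be covered by finitely many pairwise disjoint nonempty closed sets (each $\mathcal G_i$ is nonempty since it contains $D_{ii}$); this is precisely what powers the regularity-preservation argument, so it should be stated rather than implied; and (ii) the continuity, as a multiset in $\mathbb{CP}^1$, of the roots of the nowhere-vanishing family of degree-$n$ binary forms $t\mapsto\det(\alpha\widehat B_t-\beta\widehat A_t)$, which you correctly flag as the projective Rouch\'e/degree ingredient. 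Neither is a gap in the mathematics, only in the level of detail; with those sentences added your proof is a complete, self-contained derivation of the cited result.
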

\hemant{Recall that for any pair $(A,B)$, and non-singular $X,Y$, the pair $(X A \hemant{Y^H}, X B \hemant{Y^H})$ 
is \emph{equivalent} to $(A,B)$. In particular, they both have the same generalized eigenvalues \cite[Theorem VI.1.8]{stewart1990matrix}.}
%
%
%
\begin{theorem}(\cite[Theorem 1.1]{moitra15}) \label{thm:moitra_vander_condbd}
Provided that $m > \frac{1}{\sep} + 1$, we have $\sigmamax^2 \leq m+\frac{1}{\sep}-1$ and 
$\sigmamin^2 \geq m-\frac{1}{\sep}-1$. Consequently, the condition number of $V$ satisfies 
\begin{align*}
	\condnum^2 & \le \frac{m+\frac{1}{\sep}-1}{m-\frac{1}{\sep}-1}.
\end{align*}
\end{theorem}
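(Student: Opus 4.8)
The plan is to pass from $V$ to its Gram matrix and reduce the whole statement to a single two-sided quadratic-form estimate. Since the $K$ nodes $t_1,\dots,t_K$ lie in $[0,1)$ and are pairwise at wrap-around distance at least $\sep$, there are at most $\lfloor 1/\sep\rfloor$ of them, so the hypothesis $m>1/\sep+1$ forces $K<m-1$; hence $V$ has full column rank and $\sigmamax^2,\sigmamin^2$ are, respectively, the largest and smallest eigenvalues of $G:=V^HV\in\mathbb C^{K\times K}$. Writing $D_m(\theta):=\sum_{l=0}^{m-1}e^{-\iota 2\pi l\theta}$, one checks that $G_{jk}=D_m(t_k-t_j)$, so every diagonal entry of $G$ equals $m$ and, for any unit vector $u$,
\[
 u^{H}Gu \;=\; \sum_{l=0}^{m-1}\Big|\sum_{j=1}^{K}u_{j}\,e^{-\iota 2\pi l t_{j}}\Big|^{2}.
\]
Thus it suffices to prove $m-\tfrac1\sep-1\le u^{H}Gu\le m+\tfrac1\sep-1$ for all unit $u$ — a statement about how faithfully the $\ell_2$-mass of a $\sep$-separated spike train is reflected in its first $m$ Fourier coefficients.

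\textbf{Upper bound.} For $\sigmamax$ I would invoke the sharp large-sieve inequality (Selberg, Montgomery--Vaughan): for $\sep$-separated frequencies and a window of $m$ consecutive integers, $\sum_{l=0}^{m-1}\big|\sum_j u_j e^{-\iota 2\pi l t_j}\big|^2\le (m-1+\sep^{-1})\,\norm{u}_2^2$, which is exactly $\norm{V}^2\le m-1+1/\sep$. A first attempt using only the entrywise estimate $\abs{G_{jk}}\le 1/\abs{\sin(\pi\,d_w(t_j,t_k))}\le 1/(2\,d_w(t_j,t_k))$ together with Gershgorin's theorem loses a logarithm — it yields only $\sigmamin^2\gtrsim m-\sep^{-1}\log(1/\sep)$ — so a genuine refinement is needed. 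The sharp constant comes from the Beurling--Selberg extremal-function method: one takes a real trigonometric polynomial of degree $\le m-1$ that majorizes the indicator of an arc of length $\sep$ with the optimal integral $\sep+1/m$, translates a copy to each node, and pairs the resulting degree-$(m-1)$ polynomial against $\big|\sum_j u_j e^{-\iota 2\pi (\cdot) t_j}\big|^2$; it is the $\sep$-separation of the nodes that forces the off-diagonal contributions to collapse into the single $1/\sep$ correction term.

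\textbf{Lower bound and conclusion.} For $\sigmamin$ I would run the same argument with a Beurling--Selberg \emph{minorant} of the arc (optimal integral $\sep-1/m$), which gives $u^{H}Gu\ge(m-1-\sep^{-1})\norm{u}_2^2$; this quantity is positive precisely when $m>1/\sep+1$, which is exactly why that hypothesis appears. Combining the two estimates yields $\sigmamax^2\le m+1/\sep-1$ and $\sigmamin^2\ge m-1/\sep-1$, and dividing gives $\condnum^2\le\frac{m+1/\sep-1}{m-1/\sep-1}$. The hard part will be the construction and analysis of the extremal trigonometric polynomials on the circle — verifying the majorization/minorization, the degree bound $m-1$, and that their integrals attain $\sep\pm 1/m$ — since it is precisely the optimality of these functions that upgrades a soft $O(1/\sep)$ estimate (or Gershgorin's $O(\sep^{-1}\log\sep^{-1})$ bound) to the clean additive constants $+(1/\sep-1)$ and $-(1/\sep+1)$ claimed here.
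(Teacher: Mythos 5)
The paper does not actually prove this statement: Theorem \ref{thm:moitra_vander_condbd} is imported as a black box from Moitra \cite[Theorem 1.1]{moitra15}, so there is no in-paper argument to compare yours against, and the right benchmark is Moitra's original proof. Your architecture is essentially that proof: pass to the Gram matrix $G=V^HV$, observe that
\begin{equation*}
u^HGu=\sum_{l=0}^{m-1}\Big|\sum_{j=1}^{K}u_j e^{-\iota 2\pi l t_j}\Big|^2,
\end{equation*}
note that $K\le 1/\sep<m-1$ so $V$ has full column rank, and obtain the two-sided bound $(m-1)\mp 1/\sep$ on the quadratic form from the sharp large sieve together with Beurling--Selberg extremal functions. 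All of that is the correct route, and your remark that Gershgorin on $G$ loses a logarithm is also accurate.

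The one substantive issue is the orientation in which you deploy the extremal functions. You propose majorizing/minorizing an \emph{arc of length $\sep$} by a trigonometric polynomial of degree $\le m-1$ (integrals $\sep\pm 1/m$) and translating copies to the nodes. The argument that actually produces the clean constants goes the other way around: one majorizes/minorizes the indicator of the \emph{integer window} $\{0,\dots,m-1\}$, an interval of length $m-1$, by entire functions $C^{\pm}$ of exponential type $2\pi\sep$ (Fourier transform supported in $[-\sep,\sep]$) whose integrals are exactly $(m-1)\pm 1/\sep$. Poisson summation converts $\sum_{l\in\mathbb Z}C^{\pm}(l)\,\big|\sum_j u_je^{-\iota2\pi lt_j}\big|^2$ into $\sum_{j,k}u_j\bar u_k\sum_{\nu\in\mathbb Z}\widehat{C^{\pm}}(\nu+t_k-t_j)$, and the $\sep$-separation of the nodes modulo $1$ kills every off-diagonal term, leaving $\widehat{C^{\pm}}(0)\,\norm{u}_2^2=\big((m-1)\pm 1/\sep\big)\norm{u}_2^2$. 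Your arc-side normalization $\sep\pm 1/m$ does not rescale to $m-1\pm1/\sep$ (it is off by the additive $1$ that makes the bound sharp), and for the lower bound on $\sigmamin$ there is no duality to fall back on --- the $K\times m$ matrix $V^H$ has a nontrivial kernel, so $\min_{\norm{u}_2=1}\norm{Vu}_2$ must be attacked directly on the $K$-dimensional side, which is exactly what the window-side minorant does. This is a repairable bookkeeping error rather than a wrong idea, but as written the stated constants would not come out.
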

%
%
%
\paragraph{Step 1: Recovering each $t_j$.} 
Let $\Uhat$ denote the top $K$ left singular vectors of $\widetilde H_0$. The following lemma shows that there exists an 
orthonormal basis for the column span of $V$ that is well aligned to $\Uhat$.
\begin{lemma}(\cite[Lemma 2.7]{moitra15}) \label{lem:subspace_bound}
		\hemant{If $\norm{E}_2 < \sigma_K(VD_{u'}V^H)$, then there} exists a matrix $U$ such that 
		\begin{align*}
		\norm{U-\widehat U}_2 & \le \frac{2 \norm{E}_2 }{\sigma_K(V\hemant{D_{u'}}V^H)- \norm{E}_2}
		\end{align*}
		and the columns of $U$ form an orthonormal basis for $V$ and those of $\widehat U$ form an orthogonal 
		basis of the $K$ ``largest'' singular vectors of \hemant{$\widetilde H_0$}.
	\end{lemma}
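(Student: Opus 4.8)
The plan is to prove this from scratch as a singular-subspace perturbation estimate, using only Weyl's inequality and the elementary theory of principal angles. The structural input is that $H_0:=VD_{u'}V^H$ has \emph{exact} rank $K$: in the regime under consideration $m>\frac{1}{\sep}+1\ge K$, so the Vandermonde matrix $V\in\mathbb{C}^{m\times K}$ built from the pairwise distinct nodes $\alpha_j$ has full column rank, and $D_{u'}$ is invertible; hence $\mathrm{col}(H_0)=\mathrm{col}(V)$, $\sigma_{K+1}(H_0)=0$, and $\sigma_K(H_0)=\sigma_K(VD_{u'}V^H)$ is exactly the positive quantity appearing in the statement. I would fix the compact SVD $H_0=U_0\Sigma_0W_0^H$, with $U_0,W_0\in\mathbb{C}^{m\times K}$ having orthonormal columns and $\Sigma_0$ the diagonal matrix of the $K$ nonzero singular values of $H_0$; then the columns of $U_0$ form an orthonormal basis of $\mathrm{col}(V)$. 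I complete $U_0$ to an orthonormal basis of $\mathbb{C}^m$ via $U_{0,\perp}\in\mathbb{C}^{m\times(m-K)}$ and set $P:=U_0U_0^H$.

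First I would control the principal angles between $\mathrm{col}(V)$ and $\mathrm{col}(\widehat U)$, i.e.\ the quantity $\delta:=\norm{(I-P)\widehat U}_2=\norm{U_{0,\perp}^H\widehat U}_2$. Let $\widehat\Sigma_1$ collect the $K$ largest singular values of $\widetilde{H}_0=H_0+E$ and $\widehat W_1$ the associated right singular vectors, so that $\widetilde{H}_0\widehat W_1=\widehat U\widehat\Sigma_1$. By Weyl's inequality $\sigma_K(\widetilde{H}_0)\ge\sigma_K(H_0)-\norm{E}_2>0$, so $\widehat\Sigma_1$ is invertible and $\widehat U=\widetilde{H}_0\widehat W_1\widehat\Sigma_1^{-1}$. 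Since $U_{0,\perp}^H H_0=U_{0,\perp}^H U_0\Sigma_0W_0^H=0$, this yields $U_{0,\perp}^H\widehat U=(U_{0,\perp}^H E)\widehat W_1\widehat\Sigma_1^{-1}$, and therefore
\begin{equation*}
\delta=\norm{U_{0,\perp}^H\widehat U}_2\le\frac{\norm{E}_2}{\sigma_K(\widetilde{H}_0)}\le\frac{\norm{E}_2}{\sigma_K(H_0)-\norm{E}_2}.
\end{equation*}

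Next I would turn the subspace estimate into a bound on $\norm{U-\widehat U}_2$ for a \emph{suitably rotated} basis $U$; note that an arbitrary orthonormal basis of $\mathrm{col}(V)$ will not do, since it may differ from $\widehat U$ by a full $K\times K$ rotation. Taking an SVD $U_0^H\widehat U=Q_1\Gamma Q_2^H$ with $\Gamma=\mathrm{diag}(\cos\theta_1,\dots,\cos\theta_K)$ the cosines of the principal angles, I would set $U:=U_0Q_1Q_2^H$, so that $U$ still spans $\mathrm{col}(V)$ and $U^H\widehat U=Q_2\Gamma Q_2^H$ is Hermitian and positive semidefinite. Then $\norm{U-\widehat U}_2^2=\norm{2I-2U^H\widehat U}_2=2\bigl(1-\min_i\cos\theta_i\bigr)\le 2\max_i\sin^2\theta_i=2\delta^2$, using $1-\cos\theta\le\sin^2\theta$ on $[0,\pi/2]$; combining with the previous display gives $\norm{U-\widehat U}_2\le\sqrt2\,\delta\le 2\norm{E}_2/(\sigma_K(VD_{u'}V^H)-\norm{E}_2)$, which is the claimed inequality, and $U$ is an orthonormal basis of $\mathrm{col}(V)$ by construction.

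I do not expect a genuine obstacle here — this lemma is one of the easier ingredients of Moitra's analysis — but the two points requiring care are: (i) using the exact rank-$K$ structure of $H_0$ so that the effective gap in the first step is $\sigma_K(H_0)-\norm{E}_2$ (not merely $\sigma_K(H_0)$, which would give a weaker constant in intermediate steps); and (ii) the fact that $U$ must be the polar/Procrustes-aligned basis — a careless choice forces either a larger constant than $2$ or an extra smallness assumption on $\norm{E}_2$. As an alternative to the first two steps one could simply invoke Wedin's $\sin\Theta$ theorem (see \cite{stewart1990matrix}), but the self-contained argument above is short enough that I would keep it.
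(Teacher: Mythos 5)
Your proof is correct. It is worth noting, though, that it takes a more self-contained route than the paper does: the paper does not re-prove this lemma at all, but imports it from Moitra (\cite[Lemma 2.7]{moitra15}) and merely patches the statement in the accompanying remark, observing that Wedin's $\sin\Theta$ bound naturally produces the denominator $\sigma_K(VD_{u'}V^H+E)$ and that Weyl's inequality then yields the stated denominator $\sigma_K(VD_{u'}V^H)-\norm{E}_2$. You instead derive the one-sided subspace bound directly: the identity $U_{0,\perp}^H\widehat U=(U_{0,\perp}^HE)\widehat W_1\widehat\Sigma_1^{-1}$, valid because $H_0$ has exact rank $K$ so that $U_{0,\perp}^HH_0=0$, is precisely the special case of Wedin's theorem needed here, and it gives $\max_i\sin\theta_i\le\norm{E}_2/\sigma_K(\widetilde H_0)$ with no black box. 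Your second step is also a genuine addition: the lemma as stated is silent about \emph{which} orthonormal basis $U$ of $\mathrm{col}(V)$ is meant, and your Procrustes-aligned choice $U=U_0Q_1Q_2^H$ together with the elementary inequality $1-\cos\theta\le\sin^2\theta$ makes the passage from a $\sin\Theta$ bound to a bound on $\norm{U-\widehat U}_2$ explicit, in fact with constant $\sqrt2$ rather than $2$. What the paper's citation-based route buys is brevity; what yours buys is a transparent accounting of where the constant comes from and of the role of the hypothesis $\norm{E}_2<\sigma_K(VD_{u'}V^H)$ (it guarantees $\sigma_K(\widetilde H_0)>0$ via Weyl, so $\widehat\Sigma_1$ is invertible). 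Both arguments are sound and yield the same statement.
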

%
%
%
\begin{remark}
In the original statement in \cite[Lemma 2.7]{moitra15} the denominator term is  $\sigma_K(V\hemant{D_{u'}}V^H)$. 
But from Wedin's bound \cite{Wedin1972}, we observe that the denominator should be 
$\sigma_K(V\hemant{D_{u'}}V^H + E)$ instead. Since 
$\sigma_K(V\hemant{D_{u'}}V^H + E) \geq \sigma_K(V\hemant{D_{u'}}V^H) - \norm{E}_2$ (Weyl's inequality \cite{Weyl1912}),
the statement in Lemma \ref{lem:subspace_bound} follows.
\end{remark}
By projecting $H_0, H_1$ to $U$ we obtain
\begin{align*}
	A  = U^H H_0 U =  U^HV\hemant{D_{u'}}V^H U \quad \text{and} \quad B  = U^H H_1 U =  U^HV\hemant{D_{u'}}D_\alpha V^H U.
\end{align*}
As easily checked, the generalized eigenvalues of ($B,A$) are $\alpha_1,\dots,\alpha_K$. Similarly,
\begin{align*}
	\widehat A = \widehat U^H\widetilde H_0\widehat U = \widehat U^H(V\hemant{D_{u'}}V^H+E)\widehat U \quad \text{and} \quad \widehat B = \widehat U^H\widetilde H_1\widehat U  = \widehat U^H(V\hemant{D_{u'}}D_\alpha V^H+F)\widehat U.
\end{align*}
The generalized eigenvalues of ($\widehat B, \widehat A$) are perturbed versions of that of $\hemant{(B,A)}$.  
Since\footnote{\hemant{Note that $D_{u'}$ and $D_u$ have the same singular values 
as $\abs{u'_i} = \abs{u_i}$ for each $i$.}} $\sigma_K(V\hemant{D_{u'}}V^H)  \ge \sigma_{\min}^2 u_{\min}$, 
therefore \hemant{if $\norm{E}_2 < \sigma^2_{\min} u_{\min}$ holds, then} Lemma \ref{lem:subspace_bound} 
gives us the bound 
\begin{align} \label{eq:tau_def}
	\norm{U-\widehat U}_2 & \le \frac{2 \norm{E}_2}{\sigma^2_{\min} u_{\min}- \norm{E}_2} = \hemant{\tau}.
\end{align}
%
%
\hemant{Let us define the matrices $X := \hemant{D^{-1/2}_{u'}}V^\dagger U$, 
$Y:= \hemant{(D^{-1/2}_{u'})^H} V^\dagger U$. Clearly, both $X$ and Y are non-singular. Moreover, one can easily 
verify that $X A Y^H = I$ and $X B Y^H = D_{\alpha}.$} We now bound the $\hemant{\ell_2}$-norm of each row of 
$X(A-\widehat A)\hemant{Y^H}$ and $X(B-\widehat B)\hemant{Y^H}$ using what we have seen so far. 
To begin with, following the steps in \cite{moitra15}, we obtain
\begin{align*}
  \hemant{\norm{A - \widehat A}_2} &\leq \hemant{\Vert E \Vert_2 + 	\sigma^2_{\max} \umax \hemant{\tau} (2+\hemant{\tau})} \\
	\Rightarrow \Vert X(A-\widehat A) \hemant{Y^H} \Vert_2 & \le \frac{\Vert E \Vert_2 + 
	\sigma^2_{\max} \hemant{\umax \tau} (2+\hemant{\tau})}{u_{\min}\ \sigma^2_{\min}}  
	= \frac{\Vert E\Vert_2}{\umin \sigmamin^2} + \kappa^2 \frac{u_{\max}}{u_{\min}} \ \hemant{\tau}(2+\hemant{\tau}).
\end{align*}
The same bound holds on $\hemant{\norm{X(B-\widehat B)\hemant{Y^H}}_2}$ with $E$ replaced by $F$.
%
Let $e_i,f_i$ denote the $i^{th}$ row of $X(A-\widehat A)\hemant{Y^H}$, $X(B-\widehat B)\hemant{Y^H}$ respectively and let 
$\rho = \max_{i} \left\{ \Vert e_i\Vert_1+ \Vert f_i\Vert_1\right\}$. \hemant{Since for each $i$, 
$\norm{e_i}_2 \leq \norm{X(A-\widehat A)\hemant{Y^H}}_2$ and $\norm{f_i}_2 \leq \norm{X(B-\widehat B)\hemant{Y^H}}_2$, 
therefore by using} the fact $\rho \le \sqrt{K} \ \max_{i} \left\{ \Vert e_i\Vert_2 + \Vert f_i\Vert_2\right\}$, 
we get 
\begin{align}
	\rho & \le \sqrt{K} \left(\frac{\Vert E\Vert_2+\Vert F\Vert_2}{u_{\min} \ \sigma^2_{\min}} + 
	\hemant{\frac{2\kappa^2 u_{\max}}{u_{\min}}\left[\tau(2+\tau)\right]}\right). \label{boundrho}
\end{align}
%
%
Assume that 
%
	$\Vert E\Vert_2, \ \Vert F\Vert_2 \le \delta \ \frac{u_{\min}\sigma^2_{\min}}{2}$ 
%
for some $\delta \in (0,1)$. Then,
\begin{align*}
	\hemant{\tau(2+\tau)} \leq \frac{\delta}{1-\delta/2}\left(2+\frac{\delta}{1-\delta/2}\right) 
	= \frac{2\delta}{(1-\delta/2)^2} 	\le 8\ \delta.
\end{align*}
%
\hemant{Applying these bounds to \eqref{boundrho} leads to the bound} 
\begin{align*}
	\rho & \le \sqrt{K} \ \delta \left( 1+16 \ \kappa^2 \frac{u_{\max}}{u_{\min}}\right).
\end{align*}
On the other hand, one can easily verify that $\Vert E\Vert_2, \Vert F\Vert_2 \le m \noisemax$. Therefore, if 
$\noisemax \le \delta \ \frac{u_{\min}\ \sigma^2_{\min}}{2 m}$, then we get
$\rho \le \sqrt{K} \ \delta \left( 1+16 \ \kappa^2 \frac{u_{\max}}{u_{\min}}\right).$

%
%
\hemant{Our immediate goal now is to use Theorem \ref{thm:sun_stewart}. $(A,B)$ are clearly regular 
since $A,B$ are full rank matrices. We will show that if $\delta$ is small enough, then $\widehat A$ is 
full rank, which in turn implies that $(\widehat A, \widehat B)$ is regular. To this end, recall that 
if $\norm{E}_2 < \sigma^2_{\min} u_{\min}$ then 
$\norm{A - \widehat A}_2 \leq \Vert E \Vert_2 + \sigma^2_{\max} \umax \tau (2+\tau)$ 
holds with $\tau$ as in \eqref{eq:tau_def}. If $\norm{E}_2 \leq \frac{\delta}{2} \sigma^2_{\min} u_{\min}$ for 
$\delta \in (0,1)$ then we have seen that this implies $\tau(2+\tau) \leq 8\delta$; this in turn implies that 
$\norm{A - \widehat A}_2 \leq 8\delta \sigma_{\max}^2\umax + \frac{\delta}{2}\umin\sigma_{\min}^2$. Finally, 
using Weyl's inequality, we have that 
\begin{equation*}
\sigma_{K}(\widehat A) 
\geq \sigma_K(A) - \norm{A - \widehat A}_2 
\geq \sigma^2_{\min} u_{\min} - 8\delta \sigma_{\max}^2\umax - \frac{\delta}{2}\umin\sigma_{\min}^2.
\end{equation*}
Hence $\sigma_{K}(\widehat A) > 0$ if
\begin{equation} \label{eq:delt_cond_Ahat_frank}
\delta 
< \frac{\sigma^2_{\min} u_{\min}}{8\sigma_{\max}^2\umax + \frac{1}{2}\umin\sigma_{\min}^2} 
= \frac{2}{16\kappa^2\frac{\umax}{\umin} + 1}.
\end{equation}
}

Now let us note that $\chord(\alpha_i,\alpha_j) \geq 2 d_w(t_i,t_j) \geq 2\sep$ (see Appendix \ref{app:rel_dw_chord}) for 
$i \neq j$. So if $u$ is such that $\chord(\alpha_i,u) \leq \epsilon$, then this would imply  
$\chord(\alpha_j,u) \geq 2\sep - \epsilon$, and hence if additionally $\epsilon < \sep$ holds, then the regions 
$\set{u \ | \ \chord(\alpha_i,u) \leq \epsilon}$ will be disjoint. Denote $(\widehat \lambda_j)_{j=1}^{n}$ to be the 
generalized eigenvalues of ($\hemant{\widehat B, \widehat A}$).  Therefore, if $\delta \in (0,1)$ satisfies 
$\delta < \frac{\sep}{\sqrt{K}} \left( 1+16 \ \kappa^2 \frac{u_{\max}}{u_{\min}}\right)^{-1}$ 
(note that the bound is already less than $1$ \hemant{and also subsumes \eqref{eq:delt_cond_Ahat_frank}}),  
then from Theorem \ref{thm:sun_stewart} we know that there exists a permutation $\perm: [k] \rightarrow [k]$ such that 
$\chord(\widehat \lambda_{\perm(i)}, \alpha_i) \leq \sqrt{K} \ \delta \left( 1+16 \ \kappa^2 \frac{u_{\max}}{u_{\min}}\right)$ for $i = 1,\dots,n$.
Let $\widehat \alpha_i = {\widehat \lambda_i}/\abs{\widehat \lambda_i}$, and set 
$\sqrt{K} \ \delta \left( 1+16 \ \kappa^2 \frac{u_{\max}}{u_{\min}}\right) = \varepsilon/C$ where $C = 10 + \frac{1}{2\sqrt{2}}$ and 
$0 \leq \varepsilon < \min\set{1,C\sep}$. This means $\chord(\widehat \lambda_{\perm(i)}, \alpha_i) \leq \varepsilon/C$ holds, and so,  
from\footnote{The Proposition requires $0 \leq \varepsilon/C \leq 1/4$, which is the case here.} Proposition \ref{app:prop_wrap_chord_notdisk} in 
Appendix \ref{app:rel_dw_chord_notdisk}, this implies $d_w(\est{t}_{\perm(i)},t_i) \leq \varepsilon$. 
Putting it together with the earlier condition on $\delta$, we get that if 
\begin{equation} \label{eq:noisebd_moitra_1}
\noisemax \leq \varepsilon \frac{\umin \sigma_{\min}^2}{2mC\sqrt{K}} \left( 1+16 \ \kappa^2 \frac{u_{\max}}{u_{\min}}\right)^{-1} 
\end{equation}
is satisfied, for $0 \leq \varepsilon < \min\set{1,C\sep}$, then $d_w(\est{t}_{\perm(i)},t_i) \leq \varepsilon$.
%
\paragraph{Step 2: Recovering each $u'_j$.} 
Note that $d_w(\widehat t_i, \widehat t_j) \geq \sep - 2\varepsilon$ for all $i\neq j$, so we assume $\varepsilon < \sep/2$ from now. 
Also recall that we form the Vandermonde matrix $\widehat V \in \mathbb{C}^{m \times K}$ using $\widehat \alpha_j = \exp(-\iota 2\pi \widehat t_j)$. 
Then, the estimate $\widehat u = \widehat V^{\dagger} v \in \mathbb{C}^K$ satisfies 
$$\widehat u' = \widehat V^{\dagger} v = \widehat V^{\dagger} Vu' + \widehat V^{\dagger} \eta.$$

Denote $\tilde{\perm} = \perm^{-1}$ and let $u'_{\tilde{\perm}} \in \mathbb{C}^K$ denote the permuted version of $u'$ w.r.t $\tilde{\perm}$. 
Also, let $V_{\tilde{\perm}}$ be formed by permuting the columns of $V$ w.r.t $\tilde{\perm}$. Then, 
\begin{align}
\widehat u' - u'_{\tilde{\perm}} 
&= ( \widehat V^{\dagger} V_{\tilde{\perm}} u'_{{\tilde{\perm}}} - u'_{\tilde{\perm}}) + \widehat V^{\dagger} \eta \nonumber \\
&= (\widehat V^{\dagger} V_{\tilde{\perm}} - \widehat V^{\dagger} \widehat V)u'_{\tilde{\perm}} + \widehat V^{\dagger} \eta \nonumber \\
\Rightarrow \norm{\widehat u' - u'_{\tilde{\perm}}}_2 &\leq \norm{\widehat V^{\dagger}}_2 (\norm{V_{\tilde{\perm}} - \widehat V}_2 
\norm{u'_{\tilde{\perm}}}_2 + \norm{\eta}_2).  
\label{eq:uest_bd_temp1}
\end{align}
Now, $\widehat V^{\dagger} = (\widehat V^H \widehat V)^{-1} \widehat V^H$ so 
$\norm{\widehat V^{\dagger}}_2 = (\sigmamin(\widehat V))^{-1} \leq (m - \frac{1}{\sep-2\varepsilon} -1)^{-1/2}$ (cf., Theorem \ref{thm:moitra_vander_condbd}). 
Next, the magnitude of each entry in $V_{\tilde{\perm}} - \widehat V$ can be verified to be upper bounded by 
$2\pi m \max_{i} d_w(t_i,\widehat t_{\tilde{\perm}(i)}) \leq 2\pi m \varepsilon$ (see Proposition \ref{app:prop_useful_res} in Appendix \ref{subsec:app_another_use_res}). 
Thus, $\norm{V_{\tilde{\perm}} - \widehat V}_2 \leq \norm{V_{\tilde{\perm}} - \widehat V}_F \leq 2\pi m^{3/2} K^{1/2} \varepsilon$. 
Also, $\norm{u'_{\tilde{\perm}}}_2 \leq \sqrt{K}\umax$ and using \eqref{eq:noisebd_moitra_1}, we obtain
$$\norm{\eta}_2 \leq \sqrt{m} \noisemax \leq \varepsilon \frac{u_{\min} \sigma^2_{\min}}{2 C \sqrt{m K}}\left( 1+16 \ \kappa^2 \frac{u_{\max}}{u_{\min}}\right)^{-1}.$$
Plugging these bounds in \eqref{eq:uest_bd_temp1}, we obtain
\begin{align} \label{eq:uest_bd_temp2}
\norm{\widehat u' - u'_{\tilde{\perm}}}_{\infty} \leq \norm{\widehat u' - u'_{\tilde{\perm}}}_2 \leq \frac{2\pi m^{3/2} K \umax \varepsilon + \varepsilon \frac{u_{\min} \sigma^2_{\min}}{2 C \sqrt{m K}}\left( 1+16 \ \kappa^2 \frac{u_{\max}}{u_{\min}}\right)^{-1}}{(m - \frac{1}{\sep-2\varepsilon} -1)^{1/2}}.
\end{align}
\paragraph{Step 3: Recovering each $u_j$.} 
Given $\widehat u' \in \mathbb{C}^K$, we obtain our final estimate $\widehat u_j = \exp(-\iota 2\pi s_0 \widehat t_j) \widehat u'_j$ for $j=1,\dots,K$.
Denoting 
\begin{align*}
\widehat D &= \text{diag}(\exp(-\iota2\pi s_0 \widehat t_1),\cdots,\exp(-\iota2\pi s_0 \widehat t_k)), \\  
D_{\tilde{\perm}} &= \text{diag}(\exp(-\iota 2\pi s_0 t_{\tilde{\perm}(1)}),\cdots,\exp(-\iota 2\pi s_0 t_{\tilde{\perm}(k)})), 
\end{align*}
note that $\widehat u = \widehat{D} \widehat u'$ and $u_{\tilde{\perm}} = D_{\tilde{\perm}} u'_{\tilde{\perm}}$. From this, we obtain
\begin{align}
\widehat u - u_{\tilde{\perm}} 
&= \widehat{D} \widehat u' - D_{\tilde{\perm}} u'_{\tilde{\perm}} \nonumber \\
&= \widehat{D} \widehat u' - \widehat{D}u'_{\tilde{\perm}} + \widehat{D}u'_{\tilde{\perm}} - D_{\tilde{\perm}} u'_{\tilde{\perm}} \nonumber \\
&= \widehat D (\widehat u' - u'_{\tilde{\perm}}) + (\widehat D - D_{\tilde{\perm}}) u'_{\tilde{\perm}} \nonumber \\
\Rightarrow \norm{\widehat u - u_{\tilde{\perm}}}_{\infty} 
&\leq \norm{\widehat u' - u'_{\tilde{\perm}}}_{\infty} + \norm{\widehat D - D_{\tilde{\perm}}}_{\infty} \norm{u'_{\tilde{\perm}}}_{\infty} \label{eq:uest_bd_temp3}
\end{align}
where in the last line, we used $\norm{\widehat D}_{\infty} = 1$. Using Proposition \ref{app:prop_useful_res} in Appendix \ref{subsec:app_another_use_res} ,
we readily obtain the bound $\norm{\widehat D - D_{\tilde{\perm}}}_{\infty} \leq 2\pi s_0 \max_i d_w(\widehat t_i, t_{\tilde{\perm}(i)}) \leq 2\pi s_0 \varepsilon$.
Using $\norm{u'_{\tilde{\perm}}}_{\infty} = \umax$ and \eqref{eq:uest_bd_temp2} in \eqref{eq:uest_bd_temp3}, we finally obtain 
\begin{align} \label{eq:uest_bd_fin}
\norm{\widehat u - u_{\tilde{\perm}}}_{\infty} 
&\leq \frac{2\pi m^{3/2} K \umax \varepsilon + \varepsilon \frac{u_{\min} \sigma^2_{\min}}{2 C \sqrt{m K}}\left( 1+16 \ \kappa^2 \frac{u_{\max}}{u_{\min}}\right)^{-1}}{(m - \frac{1}{\sep-2\varepsilon} -1)^{1/2}} 
+ 2\pi\umax s_0 \varepsilon.
\end{align}
%
%

\subsection{Proof of Corollary \ref{corr:moitra_MP}} \label{app:subsec_corr_moitra_MP}
We need only make the following simple observations. 
Firstly, $\sigmamin^2 \leq \sigmamax^2 \leq m + \frac{1}{\sep} + 1 < 2m$, since $m > 1/\sep + 1$. 
Also, since $\condnum^2 \geq 1$, we get 
\begin{align*}
\frac{u_{\min} \sigma^2_{\min} \varepsilon}{2 C\sqrt{m K}} \left(1+16 \ \kappa^2 \frac{u_{\max}}{u_{\min}}\right)^{-1} 
&< \frac{u_{\min}}{C\sqrt{K}} \left( 1+16 \frac{u_{\max}}{u_{\min}}\right)^{-1} \sqrt{m} \varepsilon.
\end{align*}
Next, one can easily verify that for $m \geq \frac{2}{\sep-2\varepsilon} + 1$, we have 
$\frac{\sqrt{m}}{\sqrt{m - \frac{1}{\sep-2\varepsilon} - 1}} \leq 2$. Using these observations in  \eqref{eq:mp_thm_bd}, along with  
$m < \frac{2}{\sep(1-c)} + 1$, we obtain the stated bound on $\norm{\hat u_{\perm} - u}_{\infty}$.

Next, let us note that the condition $m \geq \frac{2}{\sep - 2\varepsilon} + 1 \geq \frac{2}{\sep} + 1$ leads to the following 
simple observations.
\begin{itemize}
\item[(i)] 
\begin{equation*}
\frac{m}{\sigma_{\min}^2} \leq \frac{m}{m-\frac{1}{\sep} - 1} \leq \frac{1}{1-\frac{1/\sep + 1}{2/\sep + 1}} \leq 5/2.
\end{equation*}

\item[(ii)]
\begin{equation*}
\condnum^2 = \frac{m + \frac{1}{\sep} - 1}{m - \frac{1}{\sep} - 1} \leq 3.
\end{equation*}
\end{itemize}
Plugging these in \eqref{eq:moitra_thm_noise_cond}, we have that it suffices for $\noisemax$ to satisfy
\begin{equation*}
\noisemax \leq \varepsilon \frac{\umin}{5C\sqrt{K}} \left( 1+48 \frac{u_{\max}}{u_{\min}}\right)^{-1}
\leq \varepsilon \frac{\umin \sigma_{\min}^2}{2mC\sqrt{K}} \left( 1+16 \ \kappa^2 \frac{u_{\max}}{u_{\min}}\right)^{-1}.
\end{equation*}
%
\section{Auxiliary results} \label{sec:aux_results}
%
\subsection{Useful relation involving wrap around metric and chordal metric for points on the unit disk} \label{app:rel_dw_chord}
For $t_1,t_2 \in [0,1)$ let $\alpha_1 = \exp(\iota 2\pi t_1)$, $\alpha_2 = \exp(\iota 2\pi t_2)$ denote their 
representations on the unit disk. We have 
\begin{align*}
\abs{\alpha_1 - \alpha_2} 
&= \abs{1 - \exp(\iota 2\pi (t_2-t_1))} \\
&= \sqrt{(1 - \cos(2\pi (t_2-t_1)))^2 + \sin^2(2\pi (t_2-t_1))} \\
&= \sqrt{2 - 2\cos(2\pi (t_2-t_1))} \\ 
&= 2\abs{\sin (\pi (t_2-t_1))} \\
&= 2\sin (\pi \abs{t_2-t_1}) \qquad (\text{Since } t_2-t_1 \in (-1,1)) \\
&= 2\sin (\pi - \pi \abs{t_2-t_1}) = 2\sin (\pi d_w(t_1,t_2)).
\end{align*}
Since $\sin x \leq x$ for $x \geq 0$,  therefore we obtain $\abs{\alpha_1 - \alpha_2} \leq 2 \pi d_w(t_1,t_2)$. 
Also, since $d_w(t_1,t_2) \in [0,1/2]$ and $\sin x \geq \frac{2x}{\pi}$ for $x \in [0,\pi/2]$, we obtain 
$d_w(t_1,t_2) \leq \abs{\alpha_1 - \alpha_2}/4$. To summarize, 
\begin{equation} \label{eq:chord_wrap_temp1}
\frac{\abs{\alpha_1 - \alpha_2}}{2\pi} \leq d_w(t_1,t_2) \leq \frac{\abs{\alpha_1 - \alpha_2}}{4}.
\end{equation}
Finally, from the definition of chordal metric (see Definition \ref{def:chord_metric}) we know that 
$\chord(\alpha_1,\alpha_2) = \abs{\alpha_1 - \alpha_2}/2$. Thus from \eqref{eq:chord_wrap_temp1}, we obtain 
\begin{align} \label{eq:chord_wrap_temp2}
\frac{\chord(\alpha_1,\alpha_2)}{\pi} \leq d_w(t_1,t_2) \leq \frac{\chord(\alpha_1,\alpha_2)}{2}.
\end{align}
\subsection{Useful relation involving wrap around metric and chordal metric when one point lies on 
the unit disk, and the other does not} \label{app:rel_dw_chord_notdisk}
We will prove the following useful Proposition.
\begin{proposition} \label{app:prop_wrap_chord_notdisk}
For $\alpha_1,\alpha_2 \in \mathbb{C}$ where $\abs{\alpha_1} = 1$, let $t_1,t_2 \in [0,1)$ 
be such that $\alpha_1 = \exp(\iota 2\pi t_1)$, $\alpha_2/\abs{\alpha_2} = \exp(\iota 2\pi t_2)$. 
If $\chord(\alpha_1,\alpha_2) \leq \epsilon$ for some $0 \leq \epsilon \leq 1/4$, then $d_w(t_1,t_2) \leq (20 + \frac{1}{\sqrt{2}})\epsilon$. 
\end{proposition}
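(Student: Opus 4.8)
The plan is to reduce the ``off-disk'' case to the ``on-disk'' case by projecting $\alpha_2$ radially onto the unit circle, then use the triangle inequality for the chordal metric together with the already-established bound \eqref{eq:chord_wrap_temp2}. Write $\alpha_2 = r e^{\iota 2\pi t_2}$ with $r = \abs{\alpha_2} > 0$, and set $\alpha_2' := \alpha_2/\abs{\alpha_2} = e^{\iota 2\pi t_2}$, the radial projection of $\alpha_2$ onto the unit disk. Since $\abs{\alpha_1} = \abs{\alpha_2'} = 1$, equation \eqref{eq:chord_wrap_temp2} applies to the pair $(\alpha_1,\alpha_2')$ and gives $d_w(t_1,t_2) \le \tfrac{1}{2}\chord(\alpha_1,\alpha_2')$. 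The triangle inequality for $\chord$ (which holds because $\chord$ is, up to the factor $\tfrac12$, the Euclidean distance of stereographic lifts, see the discussion after Definition \ref{def:chord_metric}) yields $\chord(\alpha_1,\alpha_2') \le \chord(\alpha_1,\alpha_2) + \chord(\alpha_2,\alpha_2') \le \epsilon + \chord(\alpha_2,\alpha_2')$. So it remains to bound $\chord(\alpha_2,\alpha_2')$ in terms of $\epsilon$.

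The main obstacle, and the only place requiring real work, is controlling $\chord(\alpha_2,\alpha_2')$; this amounts to showing that $\chord(\alpha_1,\alpha_2)\le\epsilon$ forces $r=\abs{\alpha_2}$ to be close to $1$. First I would get a crude bound on $r$: from $\chord(\alpha_1,\alpha_2)\le\epsilon\le 1/4$ and $\abs{\alpha_1}=1$, the formula $\chord(\alpha_1,\alpha_2) = \frac{\abs{\alpha_1-\alpha_2}}{\sqrt{2}\sqrt{1+r^2}}$ gives $\abs{\alpha_1-\alpha_2} \le \epsilon\sqrt{2}\sqrt{1+r^2}$, and since $\abs{r-1} = \big|\,\abs{\alpha_2}-\abs{\alpha_1}\,\big| \le \abs{\alpha_1-\alpha_2}$, we obtain $\abs{r-1}\le \epsilon\sqrt{2}\sqrt{1+r^2}$. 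For $\epsilon\le 1/4$ a short estimate shows $r$ lies in a bounded interval around $1$ (say $r\le 2$), hence $\abs{r-1}\le \epsilon\sqrt{2}\sqrt{5} = \epsilon\sqrt{10}$. Then, since $\alpha_2$ and $\alpha_2'$ are collinear with the origin, $\abs{\alpha_2 - \alpha_2'} = \abs{r-1}$, and therefore
\begin{equation*}
\chord(\alpha_2,\alpha_2') = \frac{\abs{\alpha_2-\alpha_2'}}{\sqrt{1+r^2}\sqrt{1+1}} = \frac{\abs{r-1}}{\sqrt{2}\sqrt{1+r^2}} \le \frac{\abs{r-1}}{\sqrt{2}} \le \frac{\epsilon\sqrt{10}}{\sqrt{2}} = \sqrt{5}\,\epsilon.
\end{equation*}

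Combining the pieces: $\chord(\alpha_1,\alpha_2') \le \epsilon + \sqrt{5}\,\epsilon$, so $d_w(t_1,t_2) \le \tfrac12(1+\sqrt 5)\epsilon$. This constant $\tfrac12(1+\sqrt5)\approx 1.618$ is in fact far smaller than the stated $20 + \tfrac{1}{\sqrt 2}$, so the claimed bound follows a fortiori; I would present the argument with whatever slightly lossier constants make the intermediate inequalities cleanest (e.g. a more generous range for $r$), since the proposition only asserts $d_w(t_1,t_2)\le(20+\tfrac1{\sqrt2})\epsilon$. The one technical point worth stating carefully is the triangle inequality for $\chord$ and the crude containment $r\in[1/2,2]$ (or similar), both of which are elementary given $\epsilon\le 1/4$; everything else is direct substitution into the definition of the chordal metric and an appeal to \eqref{eq:chord_wrap_temp2}.
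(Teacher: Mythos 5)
Your proof is correct, and it takes a genuinely different and cleaner route than the paper. The paper's proof works directly with the Cartesian coordinates of the stereographic lifts $s(\alpha_1), s(\alpha_2) \in \mathbb{S}^2$, extracts three component-wise inequalities from $\norm{s(\alpha_1)-s(\alpha_2)}_2 \le 2\epsilon$, and then grinds through estimates on $\frac{x}{\sqrt{x^2+y^2}}-a$ (including an auxiliary claim that $(1-2\epsilon)^{-3/2}\le 1+12\sqrt{2}\epsilon$ for $\epsilon\le 1/4$) to show $\chord(\alpha_1,\alpha_2/\abs{\alpha_2})\le(20+\tfrac{1}{\sqrt2})\epsilon$ before invoking \eqref{eq:chord_wrap_temp2}. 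Your decomposition via the triangle inequality for $\chord$ (legitimate, since $\chord$ is half the Euclidean distance of the lifts, as noted after Definition \ref{def:chord_metric}) replaces all of that with the single observation that radially projecting $\alpha_2$ onto the circle moves it by at most roughly $\epsilon$ in chordal distance. In fact your argument simplifies further than you noticed: from $\abs{r-1}\le\abs{\alpha_1-\alpha_2}=\sqrt{2}\sqrt{1+r^2}\,\chord(\alpha_1,\alpha_2)$ and $\chord(\alpha_2,\alpha_2')=\frac{\abs{r-1}}{\sqrt{2}\sqrt{1+r^2}}$, the $\sqrt{1+r^2}$ factors cancel exactly and give $\chord(\alpha_2,\alpha_2')\le\chord(\alpha_1,\alpha_2)\le\epsilon$ with no need for the crude containment of $r$ at all (which is just as well, since $\epsilon=1/4$ actually permits $r$ up to about $2.09$, slightly outside your stated $r\le2$; this is the only loose stitch in your write-up, and you already flagged it). The payoff of your route is a constant of $\tfrac12(1+\sqrt5)\epsilon$ — or even $\epsilon$ with the exact cancellation — versus the paper's $(10+\tfrac{1}{2\sqrt2})\epsilon$, so the stated bound follows a fortiori; the paper's approach buys nothing here except explicit coordinates, at the cost of a much longer computation and a far worse constant that then propagates into $C=10+\tfrac{1}{2\sqrt2}$ throughout the main theorems.
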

\begin{proof}
We begin by representing $\alpha_1,\alpha_2$ in Cartesian coordinates in $\mathbb{R}^2$ where
\begin{align*}
	\alpha_1  = (a,b), \quad	\alpha_2 = (x,y).
\end{align*}
%
Let $s(\alpha_1), s(\alpha_2) \in \mathbb{S}^2$ be such that 
$\alpha_1, \alpha_2$ are their respective stereographic projections. Then, 
\begin{align*}
	s(\alpha_2) & = \left(\frac{2x}{1+x^2+y^2},\frac{2y}{1+x^2+y^2}, \frac{-1+x^2+y^2}{1+x^2+y^2} \right) \\
	s(\alpha_1) & = (a,b,0).
\end{align*}
Since $\chi(\alpha_1,\alpha_2) = \frac{\norm{s(\alpha_1) - s(\alpha_2)}_2}{2} \le \epsilon$, this implies 
\begin{align} \label{app:aux_temp2}
	\frac{\vert -1+x^2+y^2\vert}{1+x^2+y^2}  \le 2\epsilon, \quad 
	\left\vert \frac{2x}{1+x^2+y^2} -a\right\vert \le 2\epsilon, \quad  
	\left\vert \frac{2y}{1+x^2+y^2} -b\right\vert \le 2\epsilon.
\end{align}
Let us assume $\epsilon < 1/2$ from now.
\begin{itemize}
\item The first inequality in \eqref{app:aux_temp2} implies  
\begin{align} \label{app:aux_temp5}
	\frac{1-2\epsilon}{1+2\epsilon} & \le x^2+y^2 \le \frac{1+2\epsilon}{1-2\epsilon}.
\end{align}

\item The second inequality in \eqref{app:aux_temp2} implies  
\begin{align} \label{app:aux_temp6}
	(a-2\epsilon) \left(\frac{1+x^2+y^2}{2} \right) & \le 	x\le (a+2\epsilon) \left(\frac{1+x^2+y^2}{2}\right).
\end{align}
		
\item  The third inequality in \eqref{app:aux_temp2} implies 
\begin{align} \label{app:aux_temp7}
		(b-2\epsilon) \left(\frac{1+x^2+y^2}{2} \right) & \le y\le (b+2\epsilon) \left(\frac{1+x^2+y^2}{2}\right).
\end{align}
\end{itemize}
Now from \eqref{app:aux_temp5},\eqref{app:aux_temp6} we get
\begin{align}
\frac{x}{\sqrt{x^2+y^2}} -a 
&\le \frac{(a+2\epsilon)\sqrt{1+2\epsilon}}{(1-2\epsilon)^{3/2}} -a \nonumber \\
&= \frac{a\sqrt{1+2\epsilon}}{(1-2\epsilon)^{3/2}} - a + \frac{2\epsilon\sqrt{1+2\epsilon}}{(1-2\epsilon)^{3/2}} \nonumber \\
&\leq \left(\frac{\sqrt{1+2\epsilon}}{(1-2\epsilon)^{3/2}} - 1\right) + \frac{2\epsilon\sqrt{1+2\epsilon}}{(1-2\epsilon)^{3/2}} \label{app:aux_temp8}
\end{align}
where we used $\abs{a} \leq 1$. Note that $\sqrt{1+2\epsilon} \leq 1 + \epsilon$. Moreover, we have 
the following 
\begin{claim}
For $0 \leq \epsilon \leq 1/4$, we have $(1-2\epsilon)^{-3/2} \leq (1+12\sqrt{2}\epsilon)$.
\end{claim}
\begin{proof}
Consider the function $g(\epsilon) = (1+12\sqrt{2}\epsilon) - (1-2\epsilon)^{-3/2}$. We have 
$g'(\epsilon) = 12\sqrt{2} - 3(1-2\epsilon)^{-5/2}$ and so $g'(\epsilon) \geq 0$ 
when $0 \leq \epsilon \leq 1/4$. Thus $g$ is increasing for this range of $\epsilon$, 
and so $g(\epsilon) \geq g(0) = 0$, which completes the proof.
\end{proof}
Applying these observations to \eqref{app:aux_temp8}, we get 
\begin{align}
\frac{x}{\sqrt{x^2+y^2}} -a  
&\leq [(1+\epsilon)(1+12\sqrt{2}\epsilon) - 1] + 2\epsilon(1+\epsilon)2^{3/2} \nonumber \\
&= [(12\sqrt{2} + 1)\epsilon + 12\sqrt{2}\epsilon^2] + 4\sqrt{2}(\epsilon + \epsilon^2) \nonumber \\
&\leq [(12\sqrt{2} + 1)\epsilon + 3\sqrt{2}\epsilon] + 5\sqrt{2}\epsilon \qquad (\text{Using } \epsilon \leq 1/4) \nonumber \\
&\leq (20\sqrt{2} + 1)\epsilon. \label{app:aux_temp9}
\end{align}
The reader is invited to verify that $a - \frac{x}{\sqrt{x^2+y^2}} \leq (20\sqrt{2} + 1)\epsilon$, which together 
with \eqref{app:aux_temp9} implies that $\abs{\frac{x}{\sqrt{x^2+y^2}} -a} \leq (20\sqrt{2} + 1)\epsilon$. In an identical 
fashion, one obtains the same bound on $\abs{\frac{y}{\sqrt{x^2+y^2}} - b}$. From these observations, we then obtain 
%
%
%
%
\begin{align*}
\chord(\alpha_1,\alpha_2/\abs{\alpha_2}) 
= \frac{1}{2}\sqrt{\left(\frac{x}{\sqrt{x^2+y^2}} -a \right)^2 + \left(\frac{y}{\sqrt{x^2+y^2}} - b\right)^2} 
\leq \left(20 + \frac{1}{\sqrt{2}}\right)\epsilon.
\end{align*}
Using \eqref{eq:chord_wrap_temp2}, this in turn implies that 
$d_w(\hemant{t_1,t_2}) \leq (10 + \frac{1}{2\sqrt{2}}) \epsilon$.
\end{proof}

\subsection{More useful results} \label{subsec:app_another_use_res}
\begin{proposition} \label{app:prop_useful_res}
For any $t_1,t_2 \in [0,1)$, and integer $n$, we have
\begin{align*}
\abs{\exp(\iota 2\pi n t_1) - \exp(\iota 2\pi n t_2)} \leq 2\abs{n}\pi d_w(t_1,t_2).
\end{align*}
\end{proposition}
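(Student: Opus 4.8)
The claim is that $\abs{\exp(\iota 2\pi n t_1) - \exp(\iota 2\pi n t_2)} \leq 2\abs{n}\pi d_w(t_1,t_2)$ for $t_1, t_2 \in [0,1)$ and integer $n$. The plan is to reduce everything to a direct trigonometric computation, essentially mimicking the calculation already carried out in Appendix~\ref{app:rel_dw_chord} for the case $n=1$, and then invoking the contraction property of the wrap-around distance under integer multiplication. First I would factor out one of the exponentials: since $\abs{\exp(\iota 2\pi n t_1)} = 1$, we have
\[
\abs{\exp(\iota 2\pi n t_1) - \exp(\iota 2\pi n t_2)} = \abs{1 - \exp(\iota 2\pi n (t_2 - t_1))} = 2\abs{\sin(\pi n (t_2 - t_1))},
\]
using the identity $\abs{1 - e^{\iota\theta}} = 2\abs{\sin(\theta/2)}$ established in Appendix~\ref{app:rel_dw_chord}.

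Next I would apply the elementary bound $\abs{\sin x} \leq \abs{x}$, valid for all real $x$, to get
\[
2\abs{\sin(\pi n (t_2 - t_1))} \leq 2\pi \abs{n} \abs{t_2 - t_1}.
\]
This is already almost the statement, except that $\abs{t_2 - t_1}$ appears rather than $d_w(t_1,t_2)$. The remaining point is to replace $\abs{t_2 - t_1}$ by $d_w(t_1,t_2)$, which requires a small extra argument: by periodicity of $\exp(\iota 2\pi n \cdot)$ with period $1$ (since $n$ is an integer), we have $\exp(\iota 2\pi n t) = \exp(\iota 2\pi n (t+1))$, so we may equally well bound the left-hand side by $2\pi\abs{n}\abs{t_2 - t_1 + 1}$ or $2\pi\abs{n}\abs{t_2 - t_1 - 1}$. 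Taking the minimum over these shifts gives exactly $2\pi\abs{n} d_w(t_1,t_2)$, since $d_w(t_1,t_2) = \min\{\abs{t_2-t_1}, 1 - \abs{t_2 - t_1}\}$ and $\abs{t_2 - t_1} \in [0,1)$ forces one of $\abs{t_2-t_1\pm 1}$ to equal $1 - \abs{t_2-t_1}$. Alternatively, and perhaps more cleanly, one observes that $\abs{\sin(\pi n \theta)}$ is periodic in $\theta$ with period $1/\abs{n}$, hence a fortiori periodic with period $1$, so $\abs{\sin(\pi n(t_2-t_1))} = \abs{\sin(\pi n \cdot \delta)}$ where $\delta$ is the representative of $t_2 - t_1$ modulo $1$ of smallest absolute value, i.e.\ $\abs{\delta} = d_w(t_1,t_2)$; then $\abs{\sin(\pi n \delta)} \leq \pi\abs{n}\abs{\delta} = \pi\abs{n} d_w(t_1,t_2)$ finishes it.

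There is essentially no hard part here: the only mild subtlety is making sure the reduction from $\abs{t_2 - t_1}$ to $d_w(t_1,t_2)$ is justified, and this follows purely from the $1$-periodicity of the integer-frequency exponential together with the definition of $d_w$ in \eqref{eq:wrap_arnd_dist}. I would write the argument in three lines: (i) the factoring and sine identity, (ii) the periodicity reduction to the minimal representative $\delta$ with $\abs{\delta} = d_w(t_1,t_2)$, (iii) the bound $\abs{\sin x}\le\abs{x}$. This is a strict generalization of Appendix~\ref{app:rel_dw_chord} (which is the $n=1$ case without the periodicity step, since there $\abs{t_2-t_1}<1$ was handled directly), so if one wanted to be economical one could simply cite that computation and add the periodicity observation.
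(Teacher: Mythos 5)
Your proof is correct and takes essentially the same route as the paper's: both reduce the left-hand side to $2\abs{\sin(\pi n (t_2-t_1))}$ via the identity $\abs{1-e^{\iota\theta}}=2\abs{\sin(\theta/2)}$, and both then use the integer-$n$ periodicity to bound it by $2\pi\abs{n}$ times the smaller of $\abs{t_2-t_1}$ and $1-\abs{t_2-t_1}$, i.e.\ $2\pi\abs{n}\,d_w(t_1,t_2)$ (the paper phrases the periodicity step as $\abs{\sin(\pi\abs{n}\,\abs{t_2-t_1})}=\abs{\sin(\pi\abs{n}(1-\abs{t_2-t_1}))}$, which is your minimal-representative argument in disguise).
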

\begin{proof}
On one hand, 
\begin{align}
\abs{\exp(\iota 2\pi n t_1) - \exp(\iota 2\pi n t_2)} 
&= \abs{1 - \exp(\iota 2\pi n(t_2-t_1))} \nonumber \\
&= 2\abs{\sin (\pi n  (t_2-t_1))} \nonumber \\
&\leq 2\pi \abs{n} \abs{t_2 - t_1} \quad (\text{Since } \abs{\sin x} \leq \abs{x}, \ \forall x\in \mathbb{R}). \label{eq:app_ano_useres_temp1}
\end{align}
On the other hand, note that 
\begin{align} \label{eq:app_ano_useres_temp2}
2\abs{\sin (\pi n  (t_2-t_1))} 
= 2\abs{\sin (\pi \abs{n}  \abs{t_2-t_1})} 
= 2\abs{\sin (\pi \abs{n} (1 - \abs{t_2-t_1}))} 
\leq 2 \pi \abs{n} (1 - \abs{t_2 - t_1}).
\end{align}
The bound follows from \eqref{eq:app_ano_useres_temp1}, \eqref{eq:app_ano_useres_temp2} and by noting the definition of $d_w$.
\end{proof}
%
%
\begin{proposition} \label{app:prop_useful_res_2}
For any $t_1,t_2 \in [0,1)$, integer $n$, and $u_1,u_2 \in \mathbb{C}$, we have
\begin{align*}
\abs{u_1\exp(\iota 2\pi n t_1) - u_2 \exp(\iota 2\pi n t_2)} \leq 2\pi \abs{u_1} \abs{n} d_w(t_1,t_2) + \abs{u_1-u_2}.
\end{align*}
\end{proposition}
\begin{proof}
Indeed, 
\begin{align*}
&\abs{u_1\exp(\iota 2\pi n t_1) - u_2 \exp(\iota 2\pi n t_2)} \\
&= \abs{u_1\exp(\iota 2\pi n t_1) - u_1\exp(\iota 2\pi n t_2) + u_1\exp(\iota 2\pi n t_2) - u_2 \exp(\iota 2\pi n t_2)} \\
&\leq \abs{u_1} \abs{\exp(\iota 2\pi n t_1) - \exp(\iota 2\pi n t_2)} + \abs{u_1-u_2} \\
&\leq 2\pi \abs{u_1} \abs{n} d_w(t_1,t_2) + \abs{u_1-u_2}
\end{align*}
where in the last inequality, we used Proposition \ref{app:prop_useful_res}.
\end{proof}
\section{Conditions on $\varepsilon_l$ in Theorem \ref{thm:gen_case_main}} \label{app:sec:conditions_eps_gen}
We treat conditions \ref{itm:main_thm_epscond_2}, \ref{itm:main_thm_epscond_3} separately below. In this section, 
for a given $i \in \mathbb{N}$, we will denote $C_i(u_i,v_i,w_i,\dots)$ to be a positive term 
depending only on the parameters $u_i,v_i,w_i,\dots$.
\begin{enumerate}
\item Condition \ref{itm:main_thm_epscond_2} is equivalent to
\begin{align}
\frac{K}{\sep_{L-1}}\varepsilon_{L-1} + \frac{1}{(\kerfnvar_L^{2} - \kerfnvar_{L-1}^2)^{1/2}} 
\log^{1/2}\left(\frac{K^{3/2}\kerfnvar_{L}}{\varepsilon_{L-1}\kerfnvar_{L-1}} \right)\varepsilon_{L-1} 
&\lesssim \frac{\varepsilon_{L} \kerfnvar_L e^{-2\pi^2(\kerfnvar_L^2 - \kerfnvar_1^2)/\sep_L^2}}{K^{3/2}L\kerfnvar_{L-1}} \label{eq:app_epsconds1b_1}\\
\Leftrightarrow C_1(K,\sep_{L-1},\kerfnvar_L,\kerfnvar_{L-1}) \log^{1/2}\left(\frac{K^{3/2}\kerfnvar_{L}}{\varepsilon_{L-1}\kerfnvar_{L-1}} \right)\varepsilon_{L-1} 
&\lesssim C_2(\kerfnvar_{L},\kerfnvar_{L-1},\kerfnvar_{1},K,L,\sep_L) \varepsilon_L. \label{eq:app_epsconds1b_2}
\end{align}
Since for any $0 < \widetilde\theta < 1$, we have\footnote{Here, we use the fact $\log x \leq n x^{1/n}$ for $n,x > 0$.} 
\begin{equation} \label{eq:app_epsconds1b_3}
\log\left(\frac{K^{3/2}\kerfnvar_{L}}{\varepsilon_{L-1}\kerfnvar_{L-1}}\right) 
\leq 
\frac{1}{\widetilde\theta}\left(\frac{K^{3/2}\kerfnvar_{L}}{\varepsilon_{L-1}\kerfnvar_{L-1}}\right)^{\widetilde\theta},
\end{equation}
therefore \eqref{eq:app_epsconds1b_2} is satisfied if for any given $\theta \in (0,1/2)$, it holds that
\begin{equation} \label{eq:app_epsconds1b_4}
\varepsilon_{L-1}^{1-\theta} \lesssim \sqrt{\theta} \frac{C_2(\kerfnvar_{L},\kerfnvar_{L-1},\kerfnvar_{1},K,L,\sep_L)}{C_1(K,\sep_{L-1},\kerfnvar_L,\kerfnvar_{L-1})} \left(\frac{\kerfnvar_{L-1}}{\kerfnvar_L  K^{3/2}}\right)^{\theta} \varepsilon_{L}.
\end{equation}
From this, \eqref{eq:eps_main_alpha} follows easily.

\item We now look at condition \ref{itm:main_thm_epscond_3} starting with the condition 
$E_l(\varepsilon_l) \varepsilon_l \leq E_{l+1}(\varepsilon_{l+1}) \varepsilon_{l+1}$. Using the order dependency of $E_{l}(\cdot)$ 
from \eqref{eq:E_L_ord_dep}, this is the same as
\begin{align} \label{eq:app_epsconds1c_1}
& C_3(K,\sep_l,\kerfnvar_l,\kerfnvar_{l+1}) \log^{1/2}\left(\frac{K^{3/2}\kerfnvar_{L} (L-l)}{\varepsilon_{l}\kerfnvar_{l}}\right) \varepsilon_l \\
& \hspace{1cm}\lesssim C_4(K,\sep_{l+1},\kerfnvar_{l+1},\kerfnvar_{l+2}) \log^{1/2}\left(\frac{K^{3/2}\kerfnvar_{L} (L-l)}{\varepsilon_{l+1}\kerfnvar_{l+1}}\right) \varepsilon_{l+1}.
\end{align}
Using \eqref{eq:app_epsconds1b_3}, it follows that  $E_l(\varepsilon_l) \varepsilon_l \leq E_{l+1}(\varepsilon_{l+1}) \varepsilon_{l+1}$ 
is ensured if 
\begin{equation}\label{eq:app_epsconds1c_2}
\varepsilon_{l}^{1-\theta} \lesssim 
\sqrt{\theta}\frac{C_4(K,\sep_{l+1},\kerfnvar_{l+1},\kerfnvar_{l+2})}{C_3(K,\sep_l,\kerfnvar_l,\kerfnvar_{l+1})} \left(\frac{\kerfnvar_{l}}{\kerfnvar_L (L-l) K^{3/2}}\right)^{\theta} 
\log^{1/2}\left(\frac{K^{3/2}\kerfnvar_{L} (L-l)}{\varepsilon_{l+1}\kerfnvar_{l+1}}\right) \varepsilon_{l+1}
\end{equation} 
holds for any given $\theta \in (0,1/2)$.

Now consider the other condition, namely \eqref{eq:app_epsconds1c_3}.
%
%
%
%
Using the order dependency of $F_{l+1}(\varepsilon_l)$ and $E_l(\varepsilon_l)$ from 
\eqref{eq:orddep_F_gen},\eqref{eq:E_L_ord_dep} respectively, this can be equivalently written as 
\begin{align} \label{eq:app_epsconds1c_4}
&\left(\frac{1}{\sep_{l+1}} + \frac{1}{(\kerfnvar_{l+2}^2 -\kerfnvar_{l+1}^2)^{1/2}} 
\log^{1/2}\left( \frac{K^{3/2} (L-l) \kerfnvar_L}{\varepsilon_l \kerfnvar_{l+1}}\right) \right)\varepsilon_l  \\
&\quad +\left(\frac{K}{\sep_l} + \frac{1}{(\kerfnvar_{l+1}^2 -\kerfnvar_{l}^2)^{1/2}} \log^{1/2}\left( \frac{K^{3/2} (L-l) \kerfnvar_L}{\varepsilon_l \kerfnvar_{l}}\right) \right) \varepsilon_l
\lesssim \varepsilon_{l+1} e^{-(\kerfnvar_{l+1}^2 -\kerfnvar_{1}^2)\frac{F^2_{l+1}(\varepsilon_{l+1})}{2}} 
\frac{\kerfnvar_{l+1}}{\kerfnvar_l l K^{3/2}}.\nonumber
\end{align}
Since $\kerfnvar_{l+1} > \kerfnvar_l$, therefore the L.H.S of \eqref{eq:app_epsconds1c_4} is  
\begin{equation} \label{eq:app_epsconds1c_5}
\lesssim C_5(\sep_l,\sep_{l+1},\kerfnvar_l,\kerfnvar_{l+1},\kerfnvar_{l+2},K) 
\log^{1/2}\left(\frac{K^{3/2} (L-l) \kerfnvar_L}{\varepsilon_l \kerfnvar_{l}}\right) \varepsilon_l.
\end{equation}
Moreover, since 
\begin{equation*}
   (\kerfnvar_{l+1}^2 -\kerfnvar_{1}^2) \frac{F^2_{l+1}(\varepsilon_{l+1})}{2} 
	\lesssim C_6(\kerfnvar_1,\kerfnvar_{l+1}, \kerfnvar_{l+2}, \sep_{l+1}) 
	\log\left(\frac{K^{3/2} (L-l) \kerfnvar_L}{\varepsilon_{l+1} \kerfnvar_{l+1}}\right),
\end{equation*}
therefore the R.H.S of \eqref{eq:app_epsconds1c_4} is 
\begin{align} 
&\gtrsim \varepsilon_{l+1} \left(\frac{\kerfnvar_{l+1}}{\kerfnvar_{l} l K^{3/2}}\right) 
\left(\frac{\varepsilon_{l+1}\kerfnvar_{l+1}}{K^{3/2} (L-l) \kerfnvar_L}\right)^{C_6(\kerfnvar_1,\kerfnvar_{l+1}, \kerfnvar_{l+2}, \sep_{l+1})} \nonumber \\
&\gtrsim (\varepsilon_{l+1})^{1+C_6(\kerfnvar_1,\kerfnvar_{l+1}, \kerfnvar_{l+2}, \sep_{l+1})} 
C_7(\kerfnvar_l,\kerfnvar_{l+1},\kerfnvar_L,K,L). \label{eq:app_epsconds1c_6}
\end{align}
Therefore from \eqref{eq:app_epsconds1c_5},\eqref{eq:app_epsconds1c_6}, it follows that a sufficient condition for 
\eqref{eq:app_epsconds1c_4} to hold is
\begin{equation} \label{eq:app_epsconds1c_7}
\log^{1/2}\left(\frac{K^{3/2} (L-l) \kerfnvar_L}{\varepsilon_l \kerfnvar_{l}}\right) \varepsilon_l 
\lesssim (\varepsilon_{l+1})^{1+C_6(\kerfnvar_1,\kerfnvar_{l+1}, \kerfnvar_{l+2}, \sep_{l+1})} 
\frac{C_7(\kerfnvar_l,\kerfnvar_{l+1},\kerfnvar_L,K,L)}{C_5(\sep_l,\sep_{l+1},\kerfnvar_l,\kerfnvar_{l+1},\kerfnvar_{l+2},K)}. 
\end{equation}
Using \eqref{eq:app_epsconds1b_3}, we have that \eqref{eq:app_epsconds1c_7} holds if for any given $\theta \in (0,1/2)$, 
\begin{equation} \label{eq:app_epsconds1c_8}
\varepsilon_l^{1-\theta} 
\lesssim 
\sqrt{\theta} 
\frac{C_7(\kerfnvar_l,\kerfnvar_{l+1},\kerfnvar_L,K,L)}{C_5(\sep_l,\sep_{l+1},\kerfnvar_l,\kerfnvar_{l+1},\kerfnvar_{l+2},K)} 
(\varepsilon_{l+1})^{1+C_6(\kerfnvar_1,\kerfnvar_{l+1}, \kerfnvar_{l+2}, \sep_{l+1})} 
\left(\frac{\kerfnvar_l}{\kerfnvar_L (L-l) K^{3/2}}\right)^{\theta}.
\end{equation}
Comparing \eqref{eq:app_epsconds1c_8} and \eqref{eq:app_epsconds1c_2}, we see that the dependence on $\varepsilon_{l+1}$ 
in \eqref{eq:app_epsconds1c_8} is in general stricter than that in \eqref{eq:app_epsconds1c_2}. 
\end{enumerate} 
%
\section{Proof of Theorem \ref{thm:gen_case_main_noisy}} \label{app:sec:gen_unmix_noisy_thm}
As for the proof of Theorem \ref{thm:gen_case_main}, the proof of Theorem \ref{thm:gen_case_main_noisy} is divided into three main steps. We will only point out the relevant modifications in the proof of Theorem \ref{thm:gen_case_main} that we need to make in order to account for the noise term.
\begin{itemize}
\item \textbf{Recovering source parameters for first group.} 
For $i=-m_1,\ldots,m_1-1$, equation \eqref{eq:gen_proof_temp1} becomes 
\begin{align*}
\frac{\widetilde f(s_1+i)}{\kerfour_1(s_1+i)} 
& = \sum_{j=1}^K u'_{1,j} \exp\left(\iota 2\pi i t_{1,j}\right) + \noiseint_{1,i}+\frac{w_1(i)}{\kerfour_1(s_1+i)}.
\end{align*}
Since the noise term satisfies \eqref{noise_bnd},
invoking Corollary \ref{corr:moitra_MP} brings that for $\varepsilon_1 < c \sep_1/2$, and 
$\frac{2}{\sep_1 - 2\varepsilon_1} + 1 \leq m_1 < \frac{2}{\sep_1(1-c)} + 1 (= \mup{1})$, we only need to impose that $s_1$ satisfies 
\begin{align*} 
K\umax (L-1) \frac{\kerfnvar_{L}}{\kerfnvar_1}\exp\left(-2\pi^2(s_1-m_1)^2(\kerfnvar_2^2-\kerfnvar_1^2)\right)
\leq \hemant{\varepsilon_1 \frac{\umin B(\urel,K)}{2}}
\end{align*}
to recover the same error bound \eqref{eq:gen_proof_temp3} as in the noiseless case.
The lower and upper bounds on $s_1$ are then obtained in a similar way as \eqref{eq:low_s1} and \eqref{eq:gen_proof_temp4} in the proof of Theorem 
\ref{thm:gen_case_main}, and one finally gets the same error bound \eqref{eq:gen_proof_temp5} as in the noiseless case.
%
%

\item \textbf{Recovering source parameters for $l^{th} (1 < l < L)$ group.}
Say we are at the $l^{th}$ iteration for $1 < l < L$, having estimated the source parameters 
up to the $(l-1)^{th}$ group. 
Say that for each $p = 1,\dots,l-1$ and $j=1,\dots,K$ the following holds.
\begin{align} \label{eq:gen_proof_temp10bis}
d_w(\est{t}_{p,\phi_p(j)}, t_{p,j}) \leq \varepsilon_p, \quad  \abs{\est{u}_{p,\phi_p(j)} - u_{p,j}} < E_p(\varepsilon_p) \hemant{\varepsilon_p \umax}, 
\end{align}
for some permutations $\phi_p: [K] \rightarrow [K]$, with 
\begin{enumerate} 
\item $\varepsilon_1 \leq \cdots \leq \varepsilon_{l-1}; \quad \quad E_1(\varepsilon_1) \hemant{\varepsilon_1} \leq \cdots \leq E_{l-1}(\varepsilon_{l-1}) \hemant{\varepsilon_{l-1}}$; 

\item $\varepsilon_{p} < c \sep_p/2$ ; 

\item \hemant{$(F_{q+1}'(\varepsilon_{q}) + E_{q}(\varepsilon_{q})) \hemant{\varepsilon_{q} \umax} 
\leq \varepsilon_{q+1} e^{-(\kerfnvar_{q+1}^2-\kerfnvar_1^2)\frac{{F_{q+1}'^2}(\varepsilon_{q+1})}{2}} \frac{\kerfnvar_{q+1} B(\urel,K)}{3 K q \kerfnvar_{q}}$},  $1 \leq q \leq l-2$.
\end{enumerate}
For $i=-m_l,\ldots,m_l-1$, we have  
\begin{align*}
&\frac{\widetilde f(s_l + i) - \sum_{p=1}^{l-1} \est{f}_p(s_l + i)}{\kerfour_l(s_l+i)} = \sum_{j=1}^K u'_{l,j} \exp(\iota2\pi i t_{l,j}) 
+ \noiseint_{l,i,past} + \noiseint_{l,i,fut}+ \frac{w_l(i)}{\kerfour_l(s_l+i)}.
\end{align*}
As in the proof of Theorem \ref{thm:gen_case_main}, $\noiseint_{l,i,past}$ denotes perturbation due to the estimation errors of 
the source parameters in the past, and $\noiseint_{l,i,fut}$ denotes perturbation due to the tails of the kernels that are yet 
to be processed. Imposing that $s_l \geq S_l$ where
\begin{align*}
S_l = m_l + \frac{1}{(2\pi^2(\kerfnvar_{l+1}^2 - \kerfnvar_l^2))^{1/2}} 
\log^{1/2}\left(\hemant{\frac{3 K \urel (L-l) \kerfnvar_L}{\varepsilon_l \kerfnvar_l B(\urel,K)}} \right)
\end{align*}
we get that, for $i=-m_l,\dots,m_l-1$, 
\begin{align} \label{eq:gen_proof_temp12bis}
\abs{\noiseint_{l,i,fut}} < \varepsilon_l \hemant{\frac{\umin B(\urel,K)}{3}}
\end{align}
where we used the same steps as in the proof of Theorem \ref{thm:gen_case_main}.
%
Now, since $s_l \leq \widetilde{c} S_l$ and $m_l < \mup{l}$, we obtain the bound
%
%
%
%
%
\begin{align*}
2\pi(s_l + m_l) 
&< 2\pi(\widetilde{c} + 1) \mup{l} + \frac{2\pi \widetilde{c}}{(2\pi^2(\kerfnvar_{l+1}^2 - \kerfnvar_l^2))^{1/2}} 
\log^{1/2}\left(\hemant{\frac{3 K \urel (L-l) \kerfnvar_L}{\varepsilon_l \kerfnvar_l B(\urel,K)}}\right)  \\
&= C'_{l,1} + C'_{l,2} \log^{1/2}\left(\frac{3D_{l}}{\varepsilon_l}\right) = F'_{l}(\varepsilon_l),
\end{align*}
where we recall the definition of $F'_l$, and constants $C_{l,1}', C_{l,2}', D_l > 0$ from \eqref{eq:gen_proof_temp28bis}, \eqref{eq:gen_proof_temp29}.  
Since $\varepsilon_l < D_l$, hence if $\varepsilon_l \geq \varepsilon_{l-1}$ holds, 
then $2\pi(s_l + m_l) < F'_{l}(\varepsilon_l) \leq F'_{l}(\varepsilon_{l-1})$. Using this in 
\eqref{eq:gen_proof_temp11}, we obtain
\begin{align*}
\abs{\noiseint_{l,i,past}} 
< \hemant{\left(K(l-1) \frac{\kerfnvar_{l-1}}{\kerfnvar_l} e^{(\kerfnvar_l^2 - \kerfnvar_1^2)\frac{F^{'^2}_l(\varepsilon_l)}{2}} \right) 
\left(F'_l(\varepsilon_{l-1})  + E_{l-1}(\varepsilon_{l-1})\right) \varepsilon_{l-1} \umax.} 
\end{align*}
Therefore if $\varepsilon_{l-1}$ satisfies the condition 
\begin{align*}
\hemant{(F'_l(\varepsilon_{l-1}) + E_{l-1}(\varepsilon_{l-1}))\varepsilon_{l-1} \umax 
\leq \varepsilon_l e^{-(\kerfnvar_l^2-\kerfnvar_1^2) \frac{F^{'^2}_l(\varepsilon_l)}{2}} \frac{\umin\kerfnvar_l B(\urel,K)}{3 K (l-1) \kerfnvar_{l-1}}} 
\end{align*}
then it implies 
$\abs{\noiseint_{l,i,past}} < \hemant{\varepsilon_l \frac{\umin B(\urel,K)}{3}}$. Together with \eqref{eq:gen_proof_temp12bis}, \eqref{noise_bnd1_tmp} this gives \hemant{for each $i$ that}  
\begin{align*}
\left|\noiseint_{l,i}+ \frac{w_l(i)}{\bar g_l(s_l+i)}\right| 
\leq \abs{\noiseint_{l,i,fut}} + \abs{\noiseint_{l,i,past}} + \left|\frac{w_l(i)}{\bar g_l(s_l+i)}\right|  
< \varepsilon_l \hemant{\umin B(\urel,K)}.
\end{align*}
\hemant{Thereafter, we obtain the same bound \eqref{eq:gen_proof_temp15} as in the noiseless case.}
%
%
%
%
%
%

\item \textbf{Recovering source parameters for last group.}
Say that for each $p = 1,\dots,L-1$ and $j=1,\dots,K$ the following holds.
\begin{align} \label{eq:gen_proof_temp18bis}
d_w(\est{t}_{p,\phi_p(j)}, t_{p,j}) \leq \varepsilon_p, \quad  \abs{\widehat u_{p,\phi_p(j)} - u_{p,j}} <  E_p(\varepsilon_p) \hemant{\varepsilon_p \umax},
\end{align}
for some permutations $\phi_p: [K] \rightarrow [K]$, with
\begin{enumerate} 
\item $\varepsilon_1 \leq \cdots \leq \varepsilon_{L-1}$; $E_1(\varepsilon_1) \hemant{\varepsilon_1} \leq \cdots \leq E_{L-1}(\varepsilon_{L-1}) \hemant{\varepsilon_{L-1}}$; \label{eq:gen_proof_temp19bis} 

\item $\varepsilon_{p} < c \sep_p/2$;  

\item \hemant{$(F'_{q+1}(\varepsilon_{q}) + E_{q}(\varepsilon_{q})) \hemant{\varepsilon_{q} \urel} 
\leq \varepsilon_{q+1} e^{-(\kerfnvar_{q+1}^2-\kerfnvar_1^2)\frac{F_{q+1}^{'^2}(\varepsilon_{q+1})}{2}} \frac{\kerfnvar_{q+1} B(\urel,K)}{3 K q \kerfnvar_{q}}$}, $1 \leq q \leq L-2$.
\end{enumerate}
We proceed by noting that for each $i = -m_L,\dots,m_L-1$
\begin{align*}
\frac{\widetilde f(s_L + i) - \sum_{p=1}^{L-1} \est{f}_p(s_L + i)}{\kerfour_L(s_L + i)} 
& = \sum_{j=1}^K u'_{L,j} \exp(\iota2\pi i t_{L,j}) + \noiseint_{L,i}+\frac{w_L(i)}{\bar g_L(s_L+i)}.
\end{align*}
%
%
Using Proposition \ref{app:prop_useful_res_2}, we have for each $p=1,\dots,L-1$ and $j=1,\dots,K$ that   
\begin{align}
\abs{u_{p,j} \exp(\iota 2\pi (s_L+i)t_{p,j}) - \est{u}_{p,\perm_p(j)} \exp (\iota 2\pi (s_L+i)\widehat t_{p,\perm_p(j)})} 
&< 2\pi \umax \abs{s_L+i} \varepsilon_p + E_p(\varepsilon_p) \hemant{\varepsilon_p \umax}, \label{eq:gen_proof_temp25bis}
\end{align}
where we used \eqref{eq:gen_proof_temp18bis}.
Since $s_L = 0$, hence $(s_L + i)^2 < m_L^2 < \mup{L}^2$ for all $i=-m_L,\dots,m_L-1$. 
Using \eqref{eq:gen_proof_temp25bis}, we then easily obtain as in the proof of Theorem \ref{thm:gen_case_main} 
\begin{align}
\abs{\noiseint_{L,i}} 
&\leq \left(K(L-1) \frac{\kerfnvar_{L-1}}{\kerfnvar_L} e^{2\pi^2(\kerfnvar_L^2 - \kerfnvar_1^2)\mup{L}^2} \right) 
\left(2\pi\mup{L} + E_{L-1}(\varepsilon_{L-1})\right) \hemant{\umax \varepsilon_{L-1}} 
\end{align}
where in the last inequality, we used \eqref{eq:gen_proof_temp19bis}.

Invoking Corollary \ref{corr:moitra_MP} and assuming $\varepsilon_L < c \sep_L/2$, 
it follows for the stated conditions on $m_L$, that it suffices if $\varepsilon_{L-1}$ satisfies 
\begin{align*}
\hemant{(2\pi\mup{L} + E_{L-1}(\varepsilon_{L-1})) \urel \varepsilon_{L-1} 
\leq 
\varepsilon_L e^{-2\pi^2(\kerfnvar_L^2 - \kerfnvar_1^2)\mup{L}^2} \frac{\kerfnvar_L B(\urel,K)}{2 K (L-1) \kerfnvar_{L-1}}.}
\end{align*}
\hemant{Indeed, combining this last bound with \eqref{noiseL}, we obtain for each $i$ that $\abs{\noiseint_{L,i} + \frac{w_L(i)}{\bar g_l(s_L+i)}} < \varepsilon_L \umin B(\urel,K)$. Thereafter, we obtain the same bound \eqref{eq:gen_proof_temp17} as in the noiseless case.}
\end{itemize} 
\clearpage
\section{More experiments} \label{sec:app_exps}
%
%
%
\begin{figure}[!ht]
\centering
\subcaptionbox[]{$l = 1, K = 2$}[ 0.24\textwidth ]
{\includegraphics[width=0.24\textwidth]{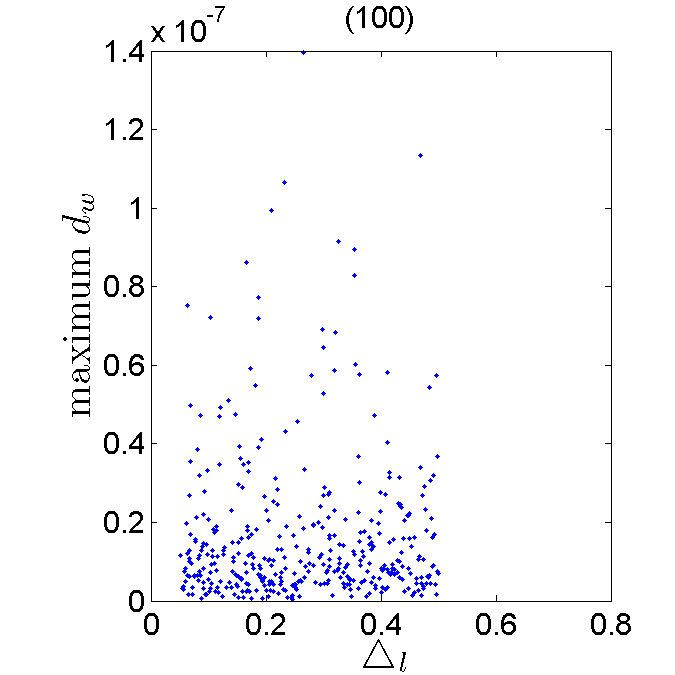} }
\subcaptionbox[]{$l = 2, K = 2$}[ 0.24\textwidth ]
{\includegraphics[width=0.24\textwidth]{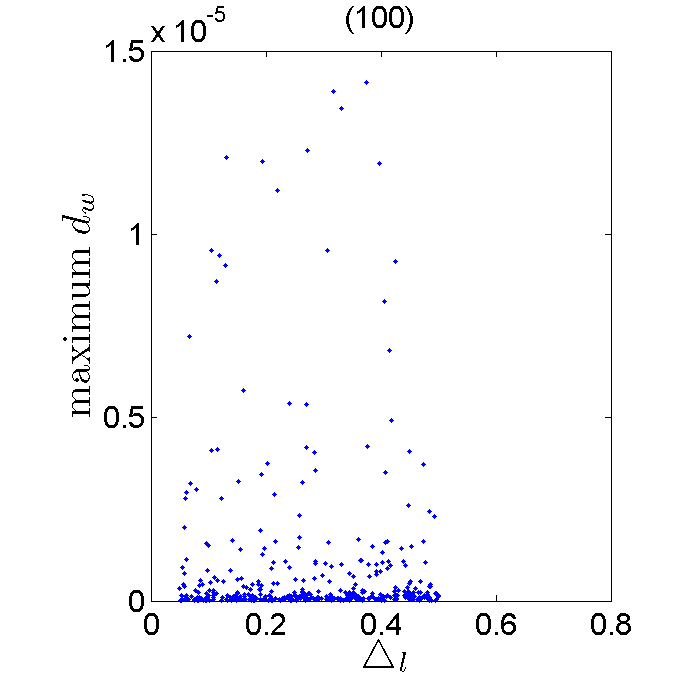} }
%
\subcaptionbox[]{$l = 3, K = 2$}[ 0.24\textwidth ]
{\includegraphics[width=0.24\textwidth]{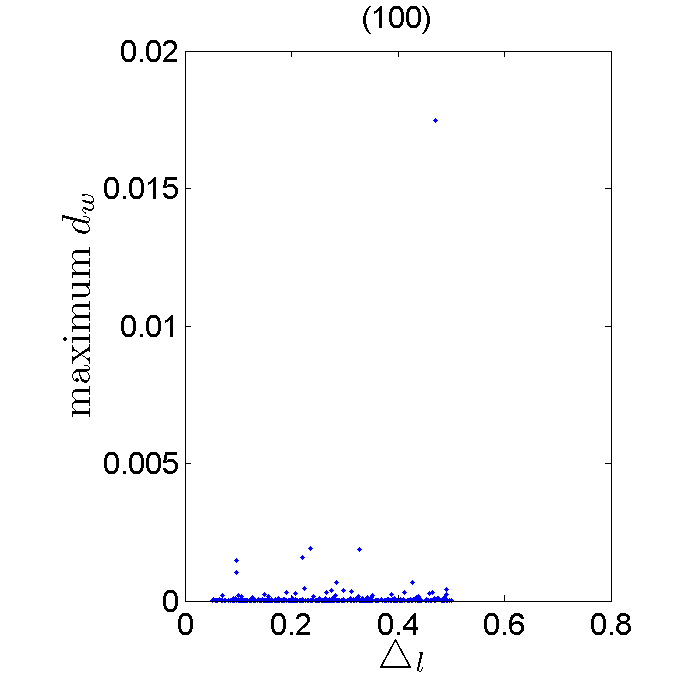} }
%
\subcaptionbox[]{$l = 4, K = 2$}[ 0.24\textwidth ]
{\includegraphics[width=0.24\textwidth]{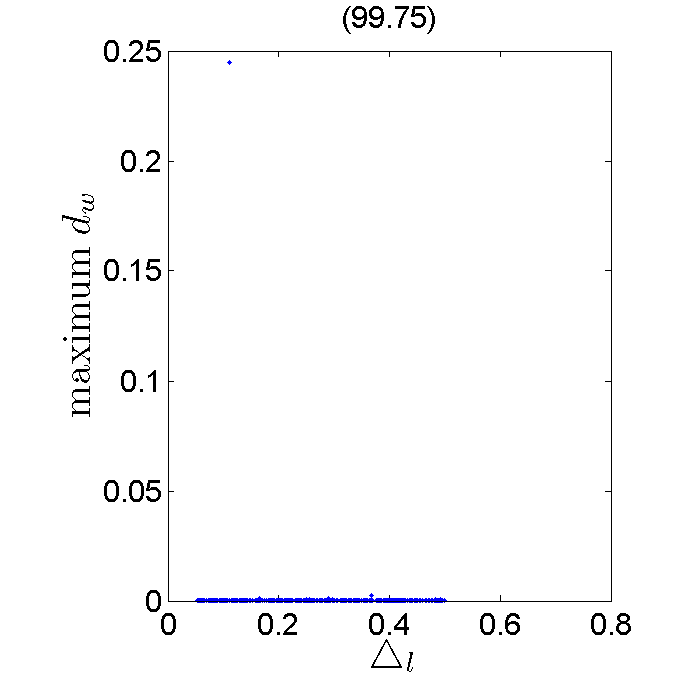} }
%

\subcaptionbox[]{$l = 1, K = 3$}[ 0.24\textwidth ]
{\includegraphics[width=0.24\textwidth]{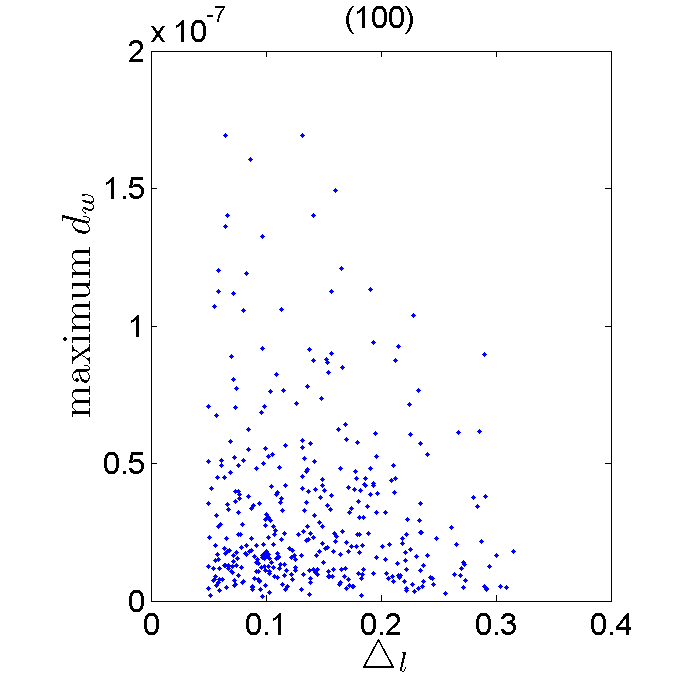} }
\subcaptionbox[]{$l = 2, K = 3$}[ 0.24\textwidth ]
{\includegraphics[width=0.24\textwidth]{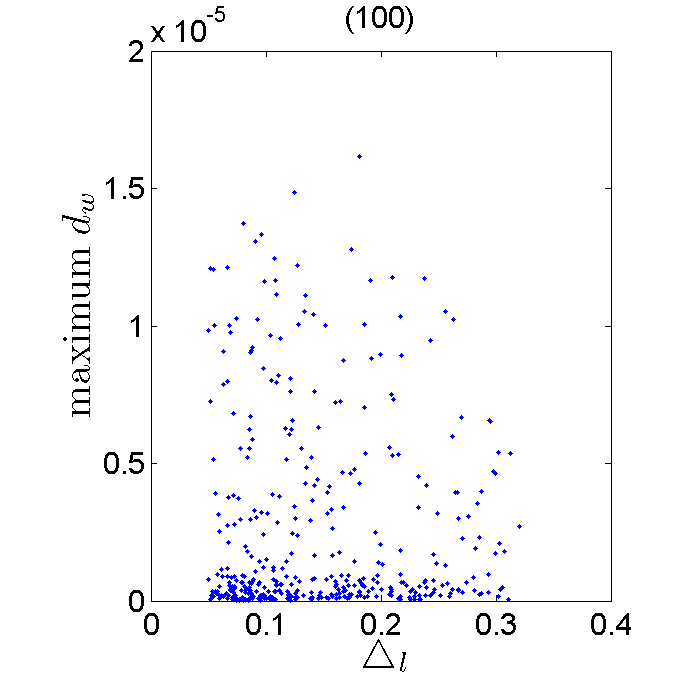} }
%
\subcaptionbox[]{$l = 3, K = 3$}[ 0.24\textwidth ]
{\includegraphics[width=0.24\textwidth]{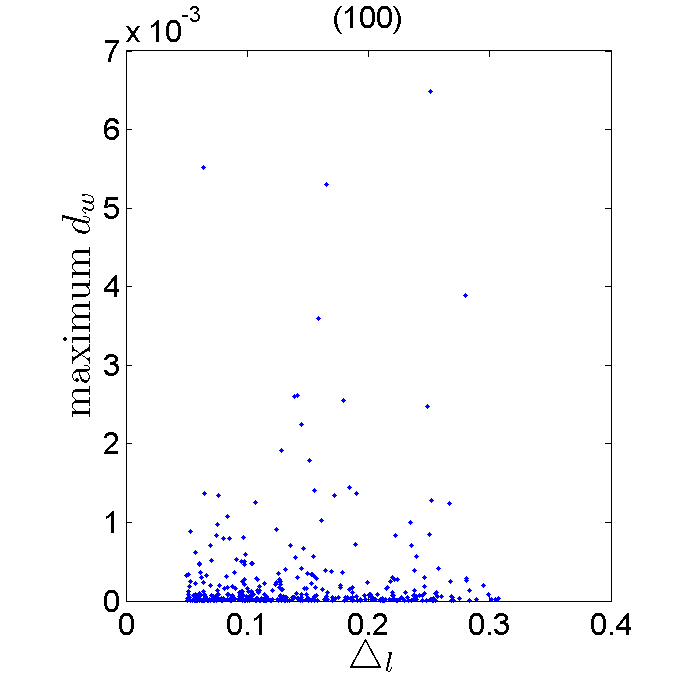} }
%
\subcaptionbox[]{$l = 4, K = 3$}[ 0.24\textwidth ]
{\includegraphics[width=0.24\textwidth]{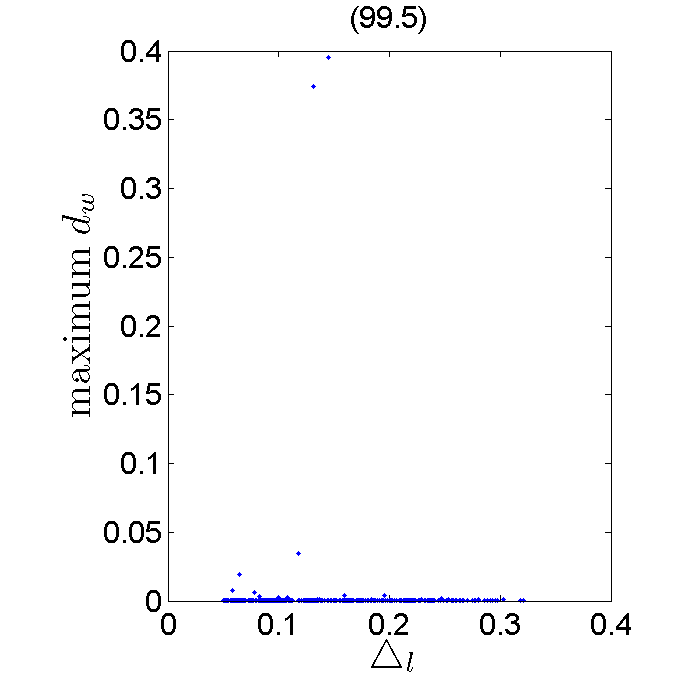} }
%

\subcaptionbox[]{$l = 1, K = 4$}[ 0.24\textwidth ]
{\includegraphics[width=0.24\textwidth]{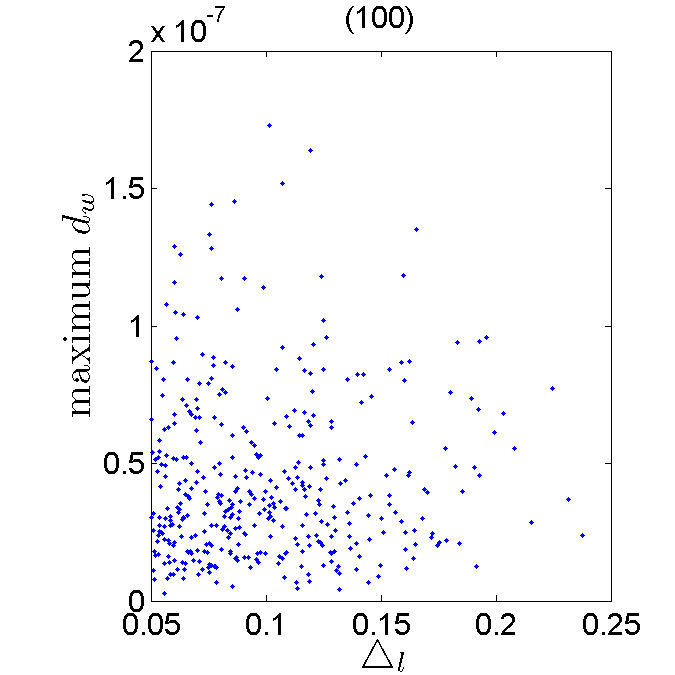} }
\subcaptionbox[]{$l = 2, K = 4$}[ 0.24\textwidth ]
{\includegraphics[width=0.24\textwidth]{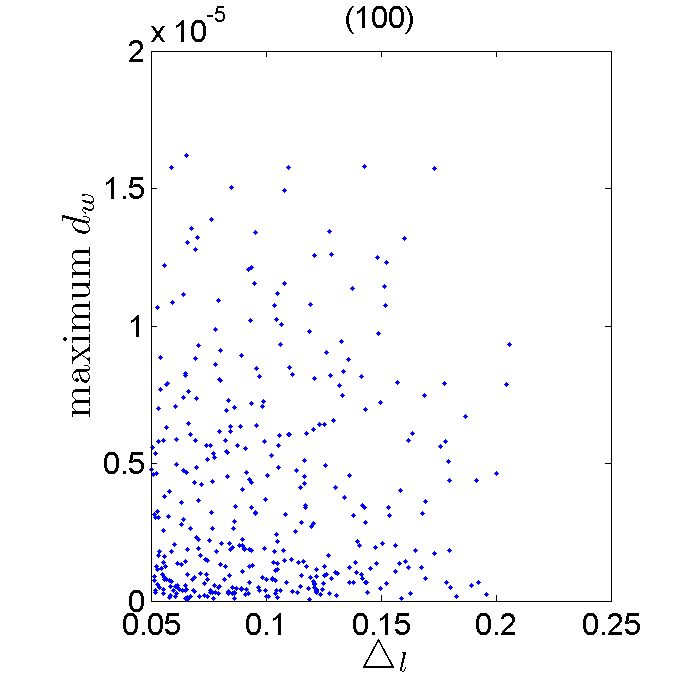} }
%
\subcaptionbox[]{$l = 3, K = 4$}[ 0.24\textwidth ]
{\includegraphics[width=0.24\textwidth]{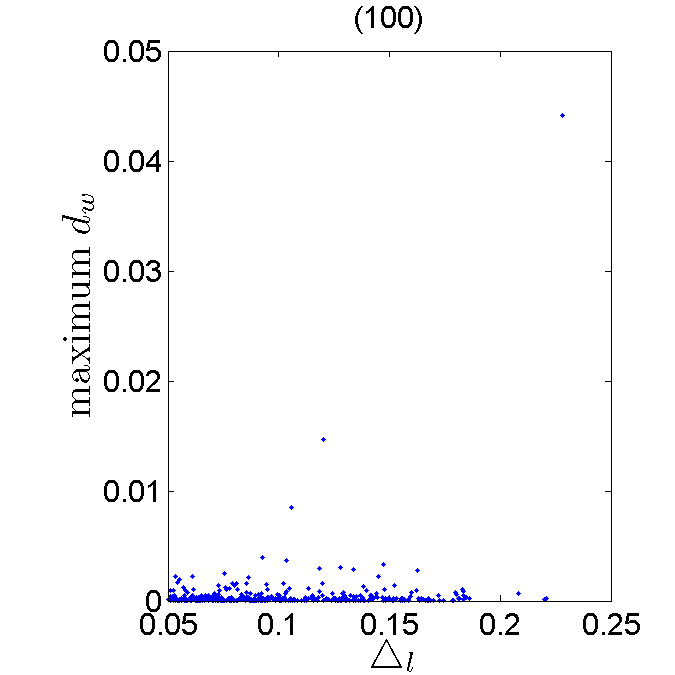} }
%
\subcaptionbox[]{$l = 4, K = 4$}[ 0.24\textwidth ]
{\includegraphics[width=0.24\textwidth]{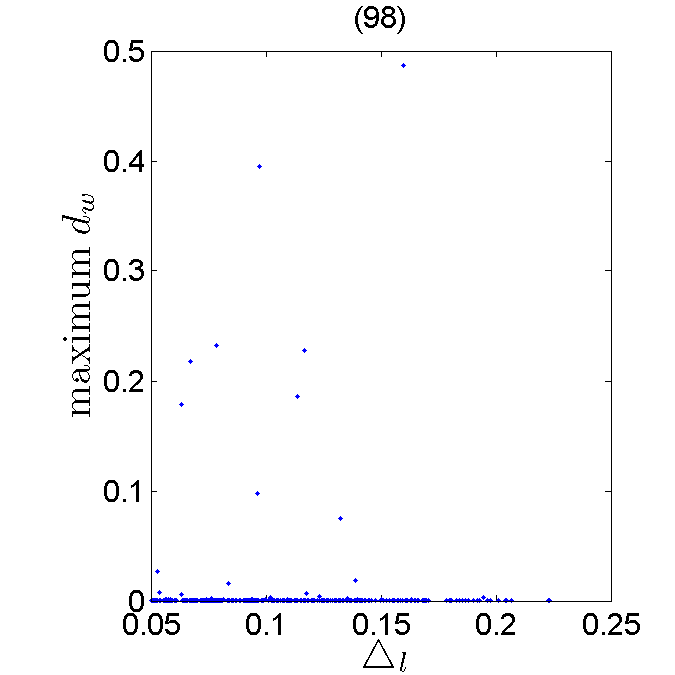} }
%

\subcaptionbox[]{$l = 1, K = 5$}[ 0.24\textwidth ]
{\includegraphics[width=0.24\textwidth]{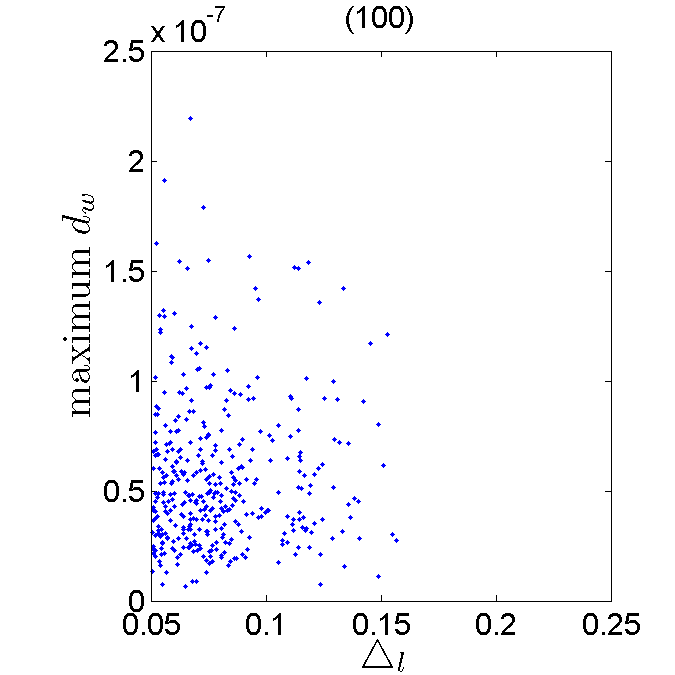} }
\subcaptionbox[]{$l = 2, K = 5$}[ 0.24\textwidth ]
{\includegraphics[width=0.24\textwidth]{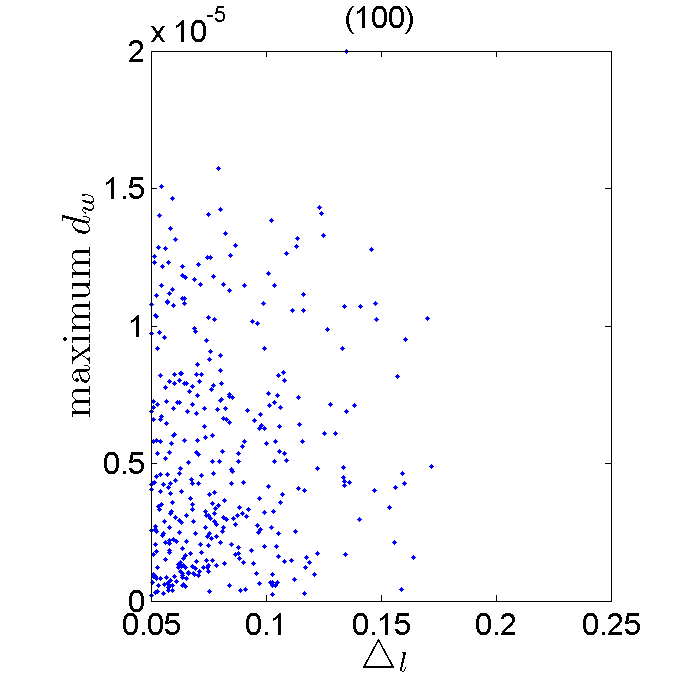} }
%
\subcaptionbox[]{$l = 3, K = 5$}[ 0.24\textwidth ]
{\includegraphics[width=0.24\textwidth]{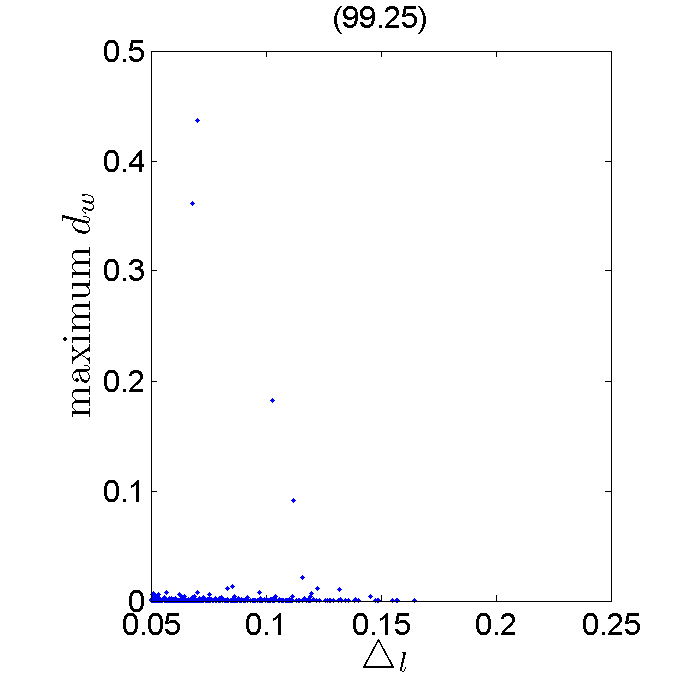} }
%
\subcaptionbox[]{$l = 4, K = 5$}[ 0.24\textwidth ]
{\includegraphics[width=0.24\textwidth]{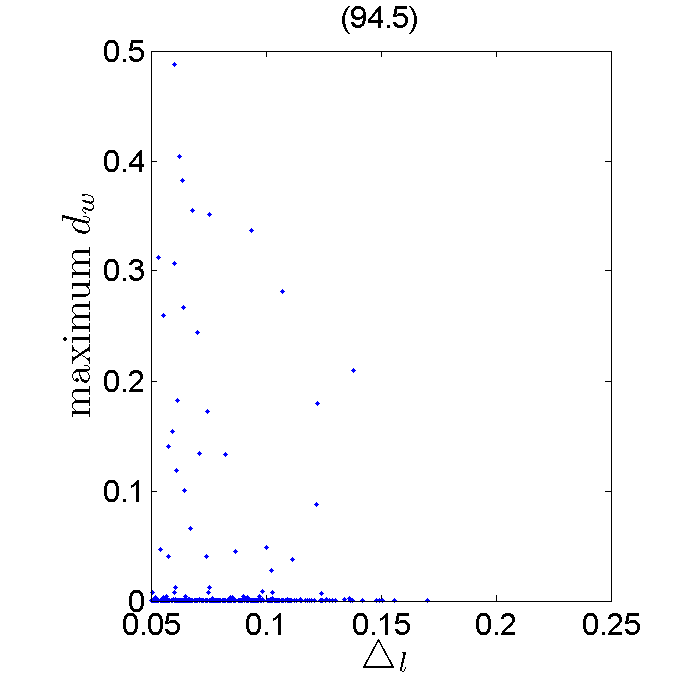} }
\captionsetup{width=0.98\linewidth}
\caption[Short Caption]{Scatter plots for maximum wrap around error ($d_{w,l,\max}$) v/s minimum separation ($\sep_l$) 
for $400$ Monte Carlo trials, with no external noise. This is shown for $K \in \set{2,3,4,5}$ with $L = 4$ 
and $C = 1$. For each sub-plot, we mention the percentage of trials with $d_{w,l,\max} \leq 0.05$ in parenthesis.}
\label{fig:max_loc_err_noiseless_C1}
\end{figure}

%
\begin{figure}[!ht]
\centering
\subcaptionbox[]{$l = 1, K = 2$}[ 0.24\textwidth ]
{\includegraphics[width=0.24\textwidth]{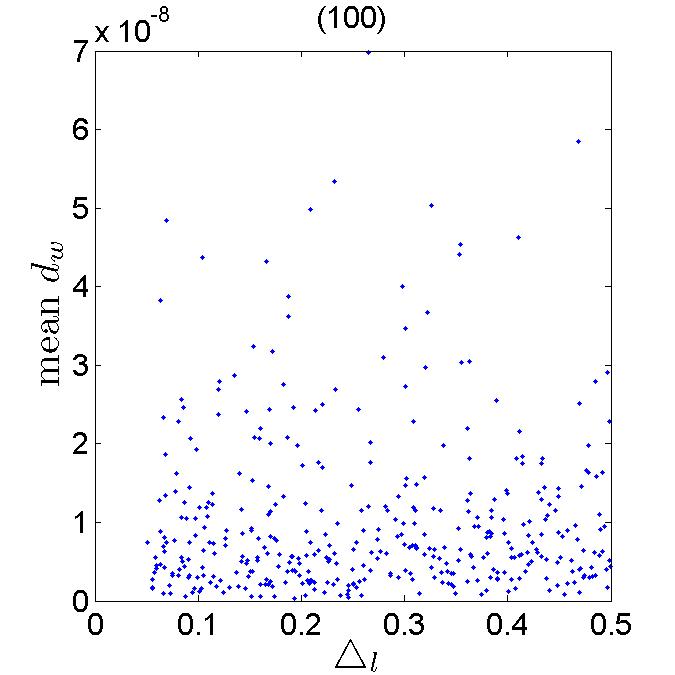} }
\subcaptionbox[]{$l = 2, K = 2$}[ 0.24\textwidth ]
{\includegraphics[width=0.24\textwidth]{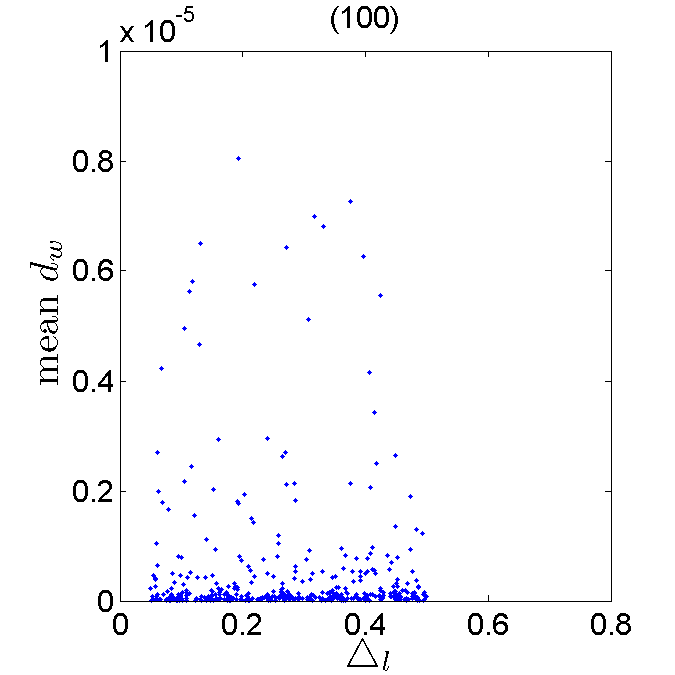} }
%
\subcaptionbox[]{$l = 3, K = 2$}[ 0.24\textwidth ]
{\includegraphics[width=0.24\textwidth]{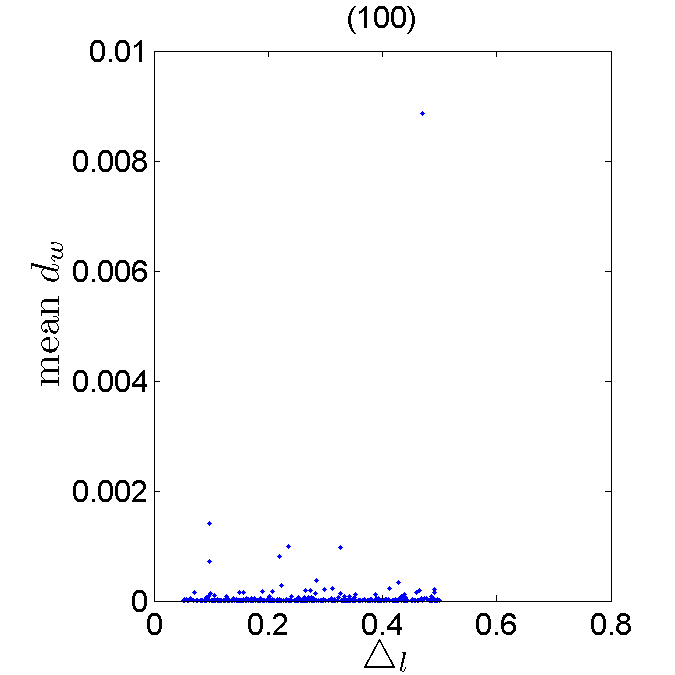} }
%
\subcaptionbox[]{$l = 4, K = 2$}[ 0.24\textwidth ]
{\includegraphics[width=0.24\textwidth]{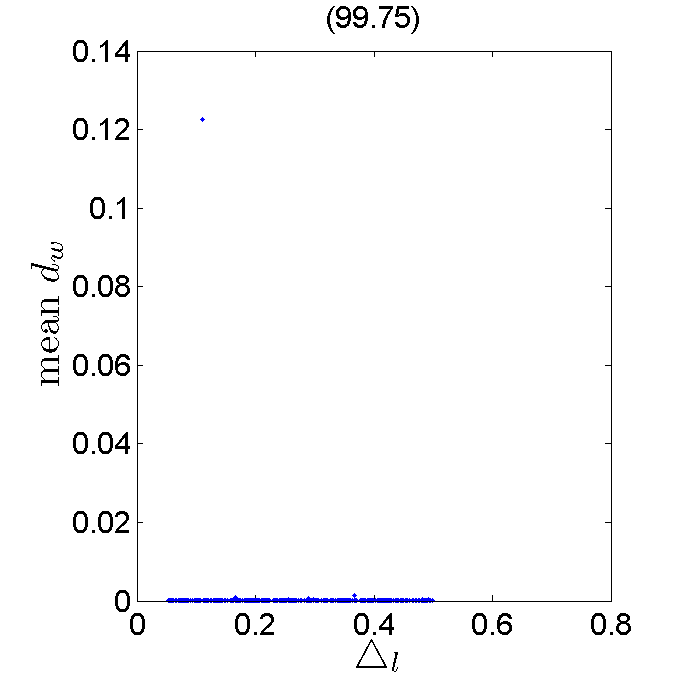} }
%

\subcaptionbox[]{$l = 1, K = 3$}[ 0.24\textwidth ]
{\includegraphics[width=0.24\textwidth]{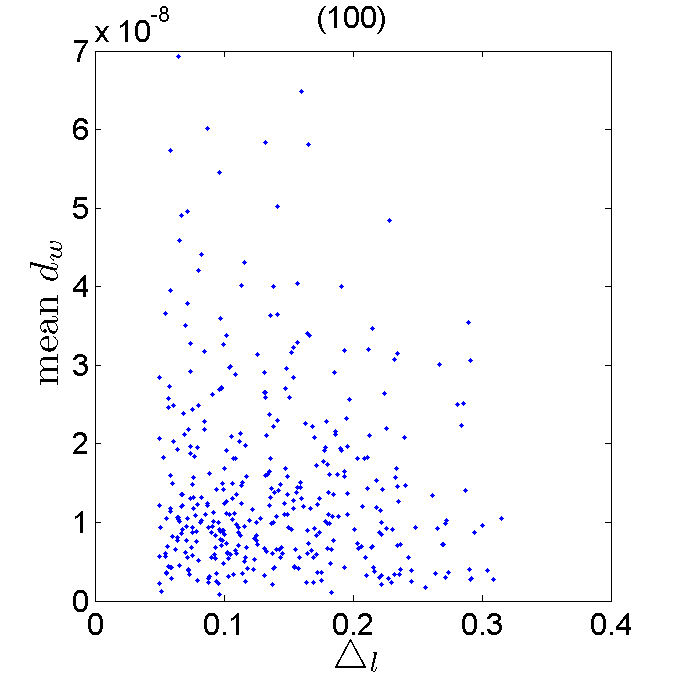} }
\subcaptionbox[]{$l = 2, K = 3$}[ 0.24\textwidth ]
{\includegraphics[width=0.24\textwidth]{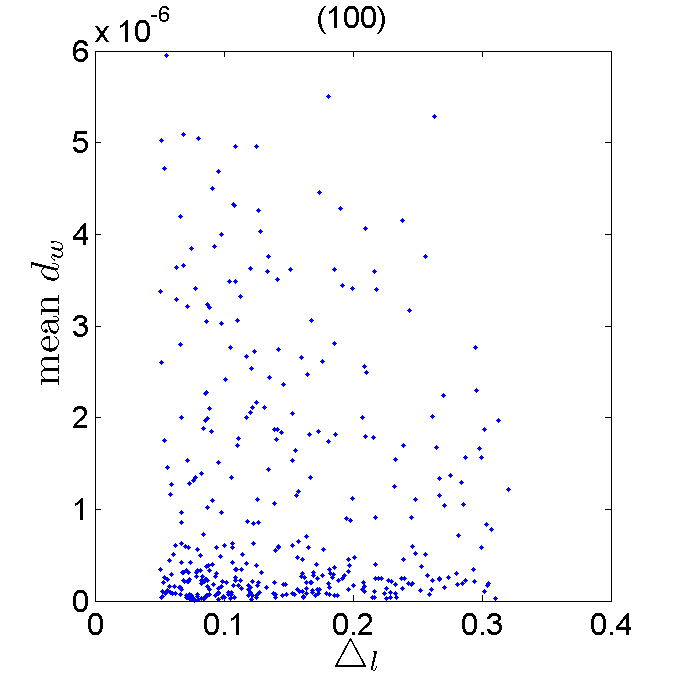} }
%
\subcaptionbox[]{$l = 3, K = 3$}[ 0.24\textwidth ]
{\includegraphics[width=0.24\textwidth]{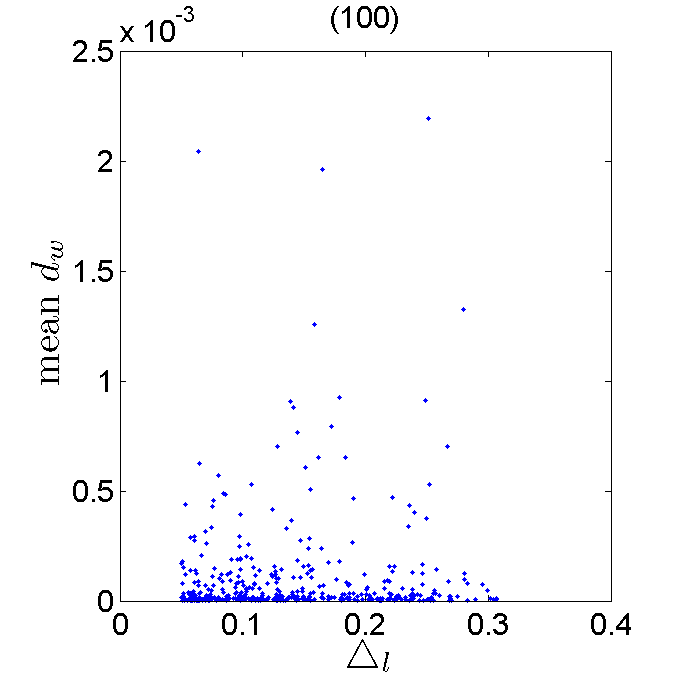} }
%
\subcaptionbox[]{$l = 4, K = 3$}[ 0.24\textwidth ]
{\includegraphics[width=0.24\textwidth]{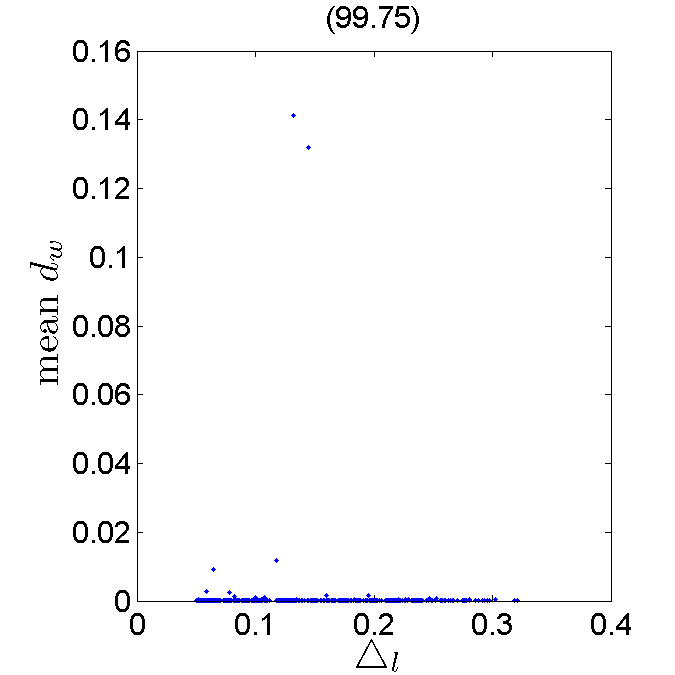} }
%

\subcaptionbox[]{$l = 1, K = 4$}[ 0.24\textwidth ]
{\includegraphics[width=0.24\textwidth]{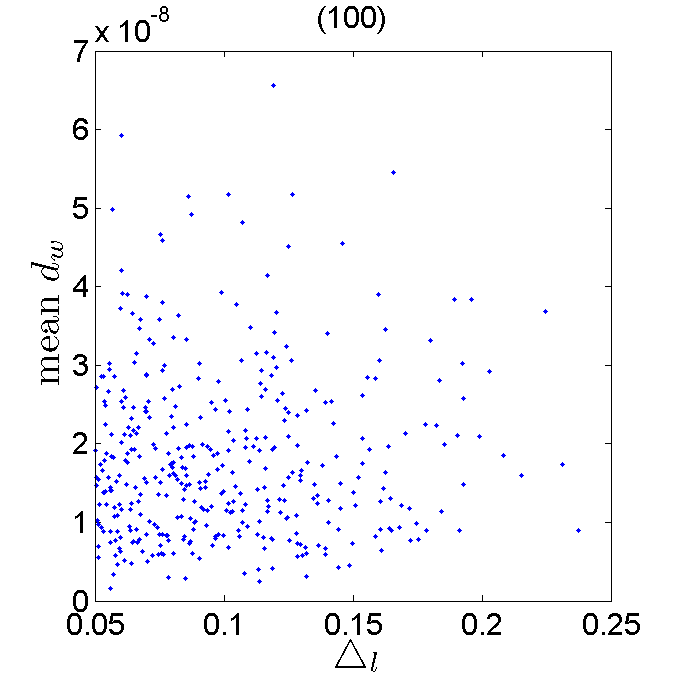} }
\subcaptionbox[]{$l = 2, K = 4$}[ 0.24\textwidth ]
{\includegraphics[width=0.24\textwidth]{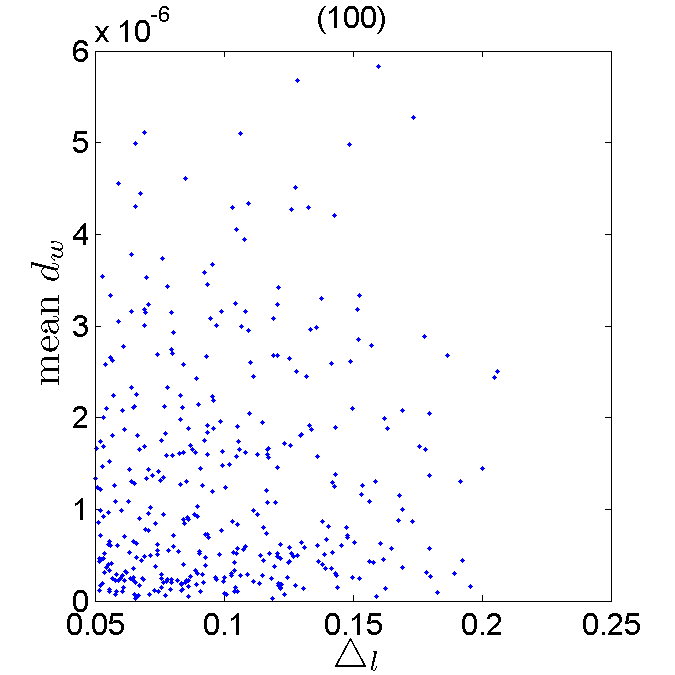} }
%
\subcaptionbox[]{$l = 3, K = 4$}[ 0.24\textwidth ]
{\includegraphics[width=0.24\textwidth]{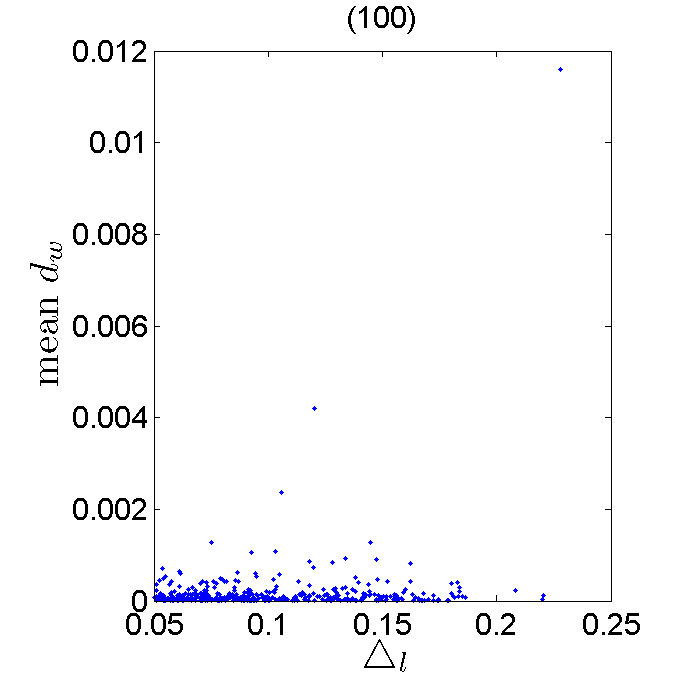} }
%
\subcaptionbox[]{$l = 4, K = 4$}[ 0.24\textwidth ]
{\includegraphics[width=0.24\textwidth]{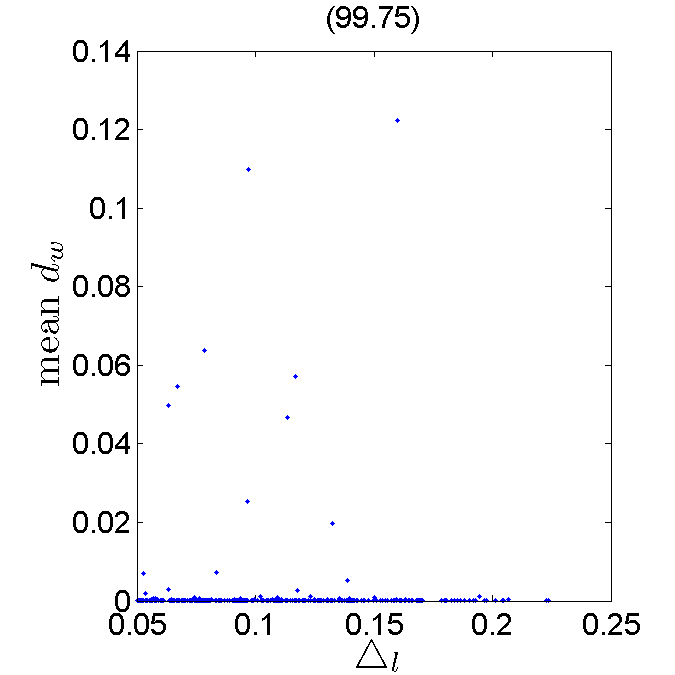} }
%

\subcaptionbox[]{$l = 1, K = 5$}[ 0.24\textwidth ]
{\includegraphics[width=0.24\textwidth]{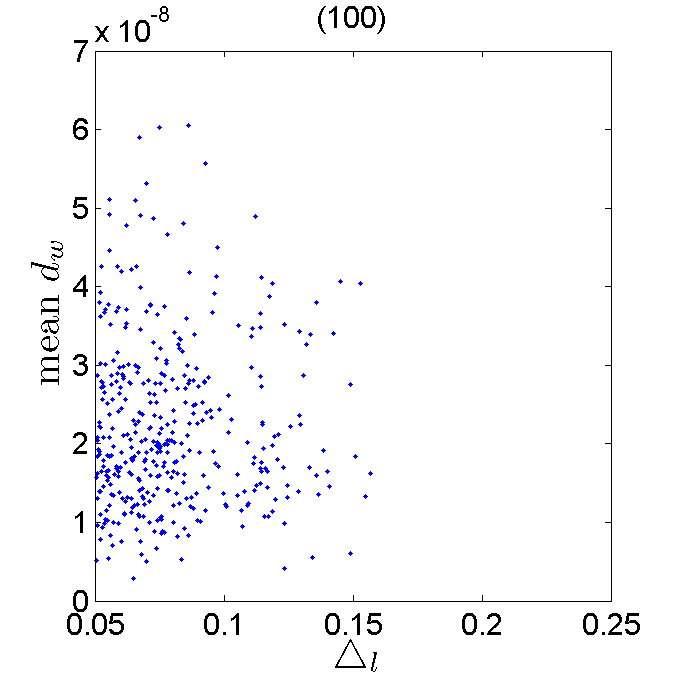} }
\subcaptionbox[]{$l = 2, K = 5$}[ 0.24\textwidth ]
{\includegraphics[width=0.24\textwidth]{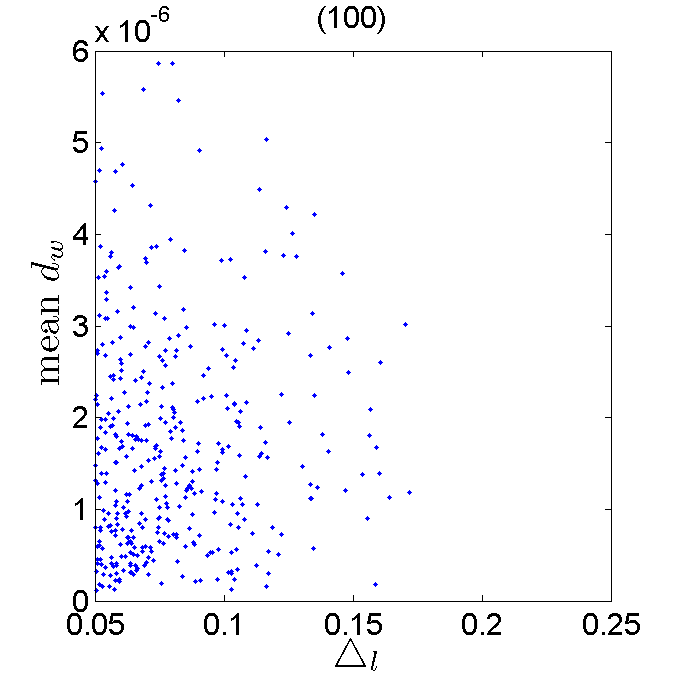} }
%
\subcaptionbox[]{$l = 3, K = 5$}[ 0.24\textwidth ]
{\includegraphics[width=0.24\textwidth]{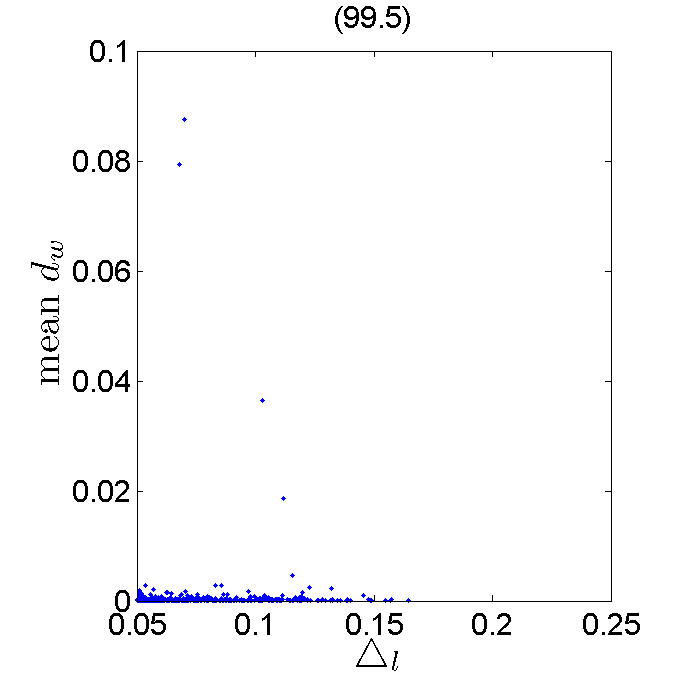} }
%
\subcaptionbox[]{$l = 4, K = 5$}[ 0.24\textwidth ]
{\includegraphics[width=0.24\textwidth]{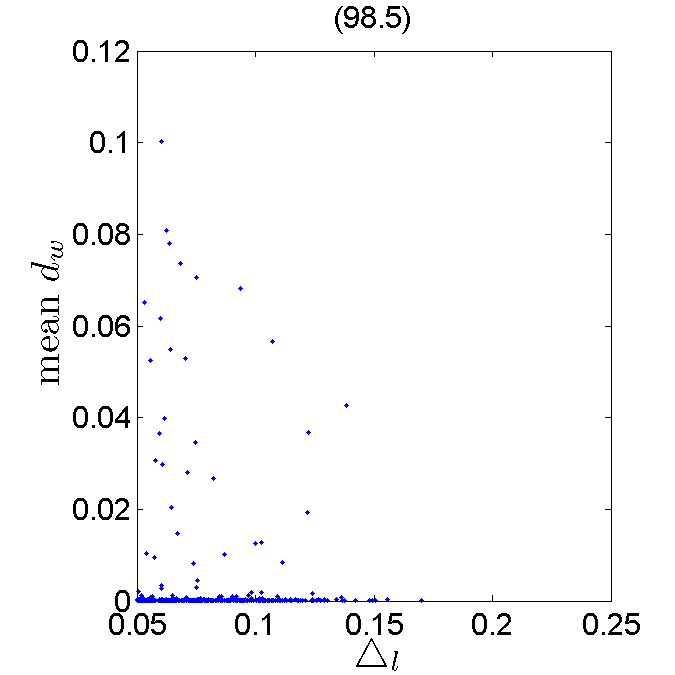} }
\captionsetup{width=0.98\linewidth}
\caption[Short Caption]{Scatter plots for the mean wrap around error ($d_{w,l,\text{avg}}$) v/s minimum separation ($\sep_l$) 
for $400$ Monte Carlo trials, with no external noise. This is shown for $K \in \set{2,3,4,5}$ with $L = 4$ and $C = 1$.
For each sub-plot, we mention the percentage of trials with $d_{w,l,\text{avg}} \leq 0.05$ in parenthesis.}
\label{fig:mean_loc_err_noiseless_C1}
\end{figure}

%
\begin{figure}[!ht]
\centering
\subcaptionbox[]{$l = 1, K = 2$}[ 0.24\textwidth ]
{\includegraphics[width=0.24\textwidth]{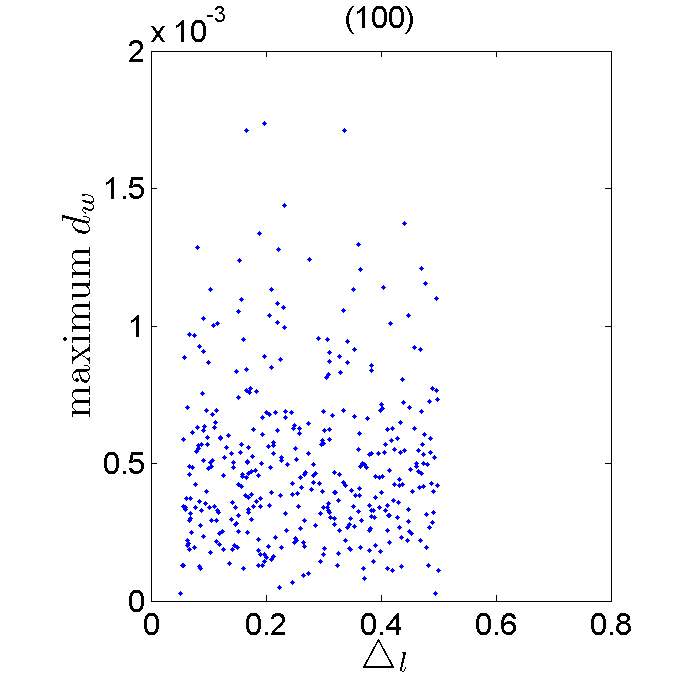} }
\subcaptionbox[]{$l = 2, K = 2$}[ 0.24\textwidth ]
{\includegraphics[width=0.24\textwidth]{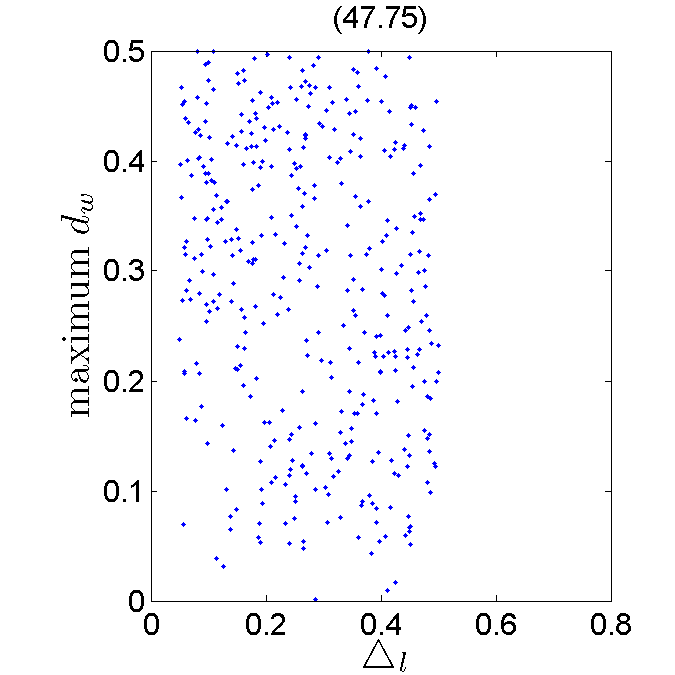} }
%
\subcaptionbox[]{$l = 3, K = 2$}[ 0.24\textwidth ]
{\includegraphics[width=0.24\textwidth]{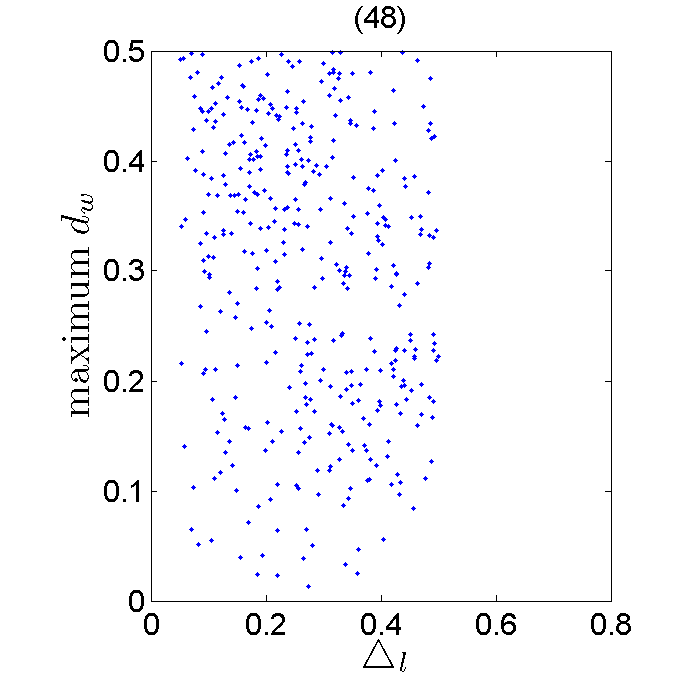} }
%
\subcaptionbox[]{$l = 4, K = 2$}[ 0.24\textwidth ]
{\includegraphics[width=0.24\textwidth]{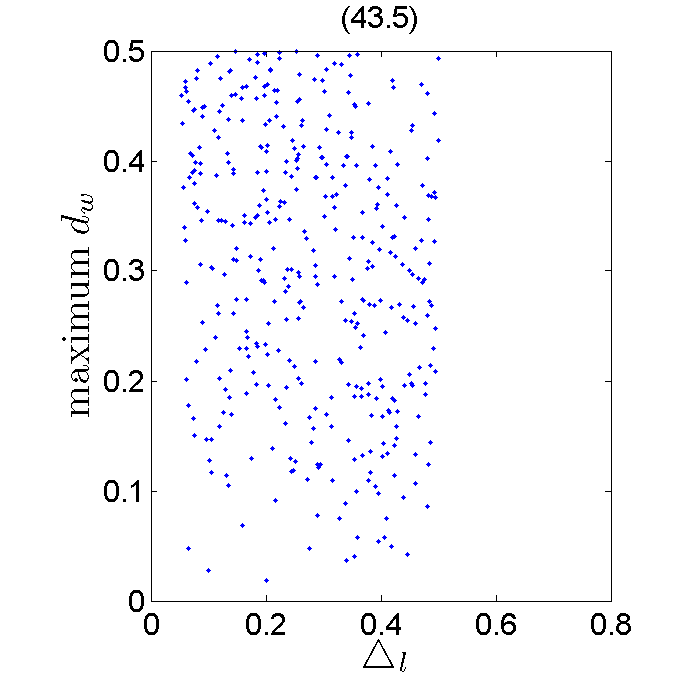} }
%

\subcaptionbox[]{$l = 1, K = 3$}[ 0.24\textwidth ]
{\includegraphics[width=0.24\textwidth]{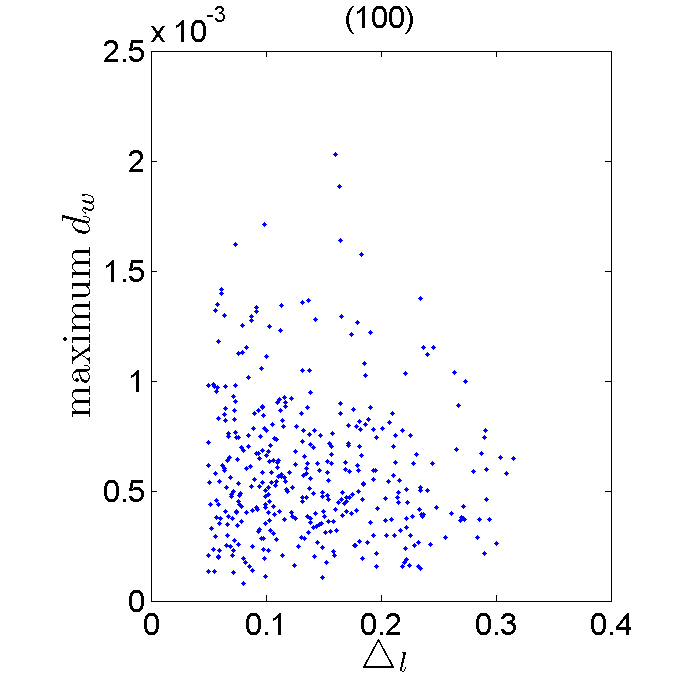} }
\subcaptionbox[]{$l = 2, K = 3$}[ 0.24\textwidth ]
{\includegraphics[width=0.24\textwidth]{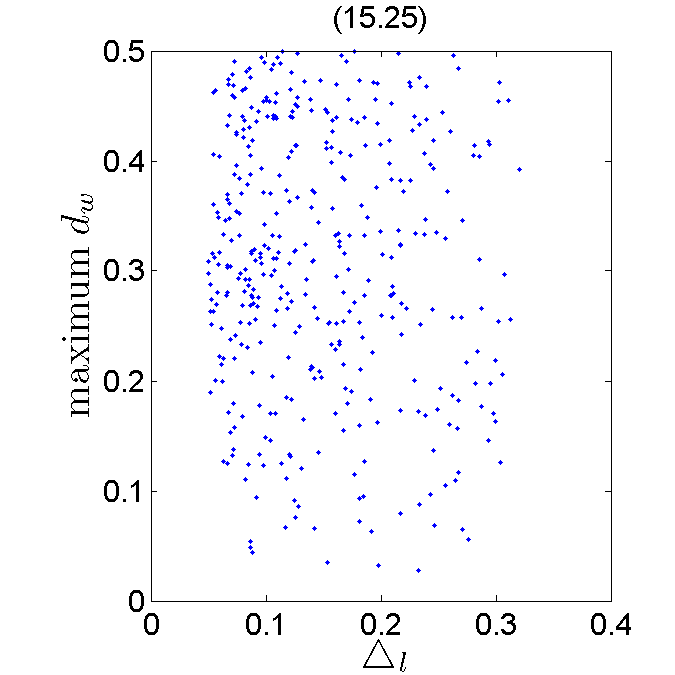} }
%
\subcaptionbox[]{$l = 3, K = 3$}[ 0.24\textwidth ]
{\includegraphics[width=0.24\textwidth]{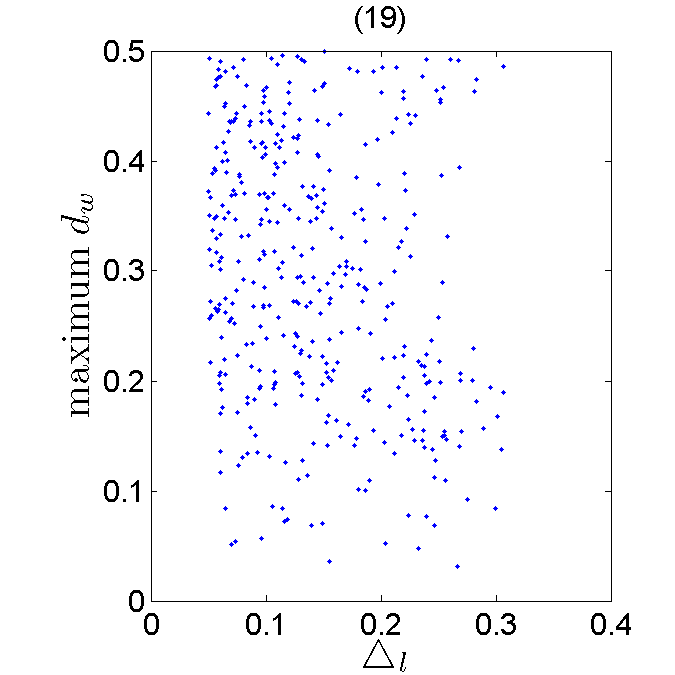} }
%
\subcaptionbox[]{$l = 4, K = 3$}[ 0.24\textwidth ]
{\includegraphics[width=0.24\textwidth]{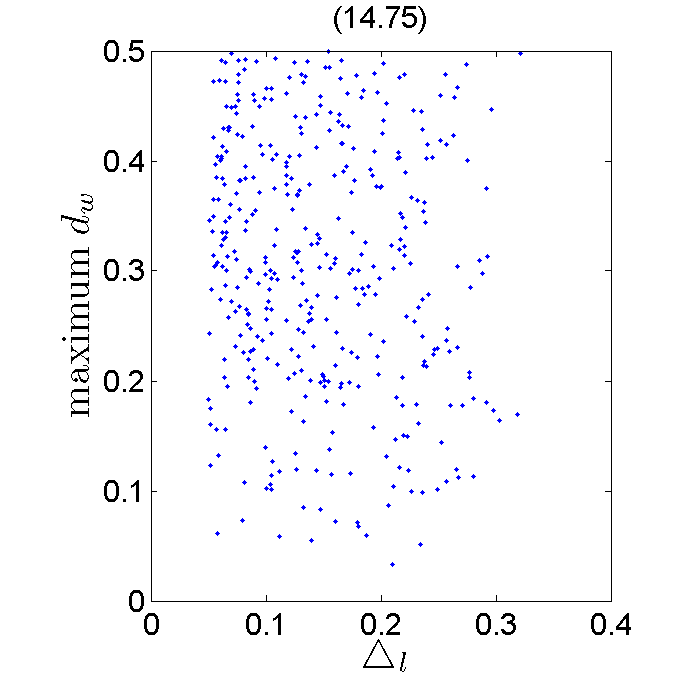} }
%

\subcaptionbox[]{$l = 1, K = 4$}[ 0.24\textwidth ]
{\includegraphics[width=0.24\textwidth]{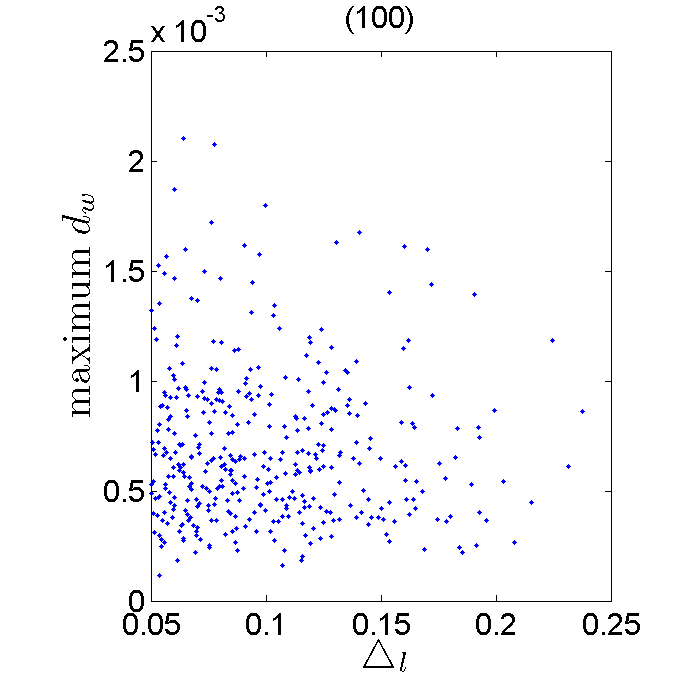} }
\subcaptionbox[]{$l = 2, K = 4$}[ 0.24\textwidth ]
{\includegraphics[width=0.24\textwidth]{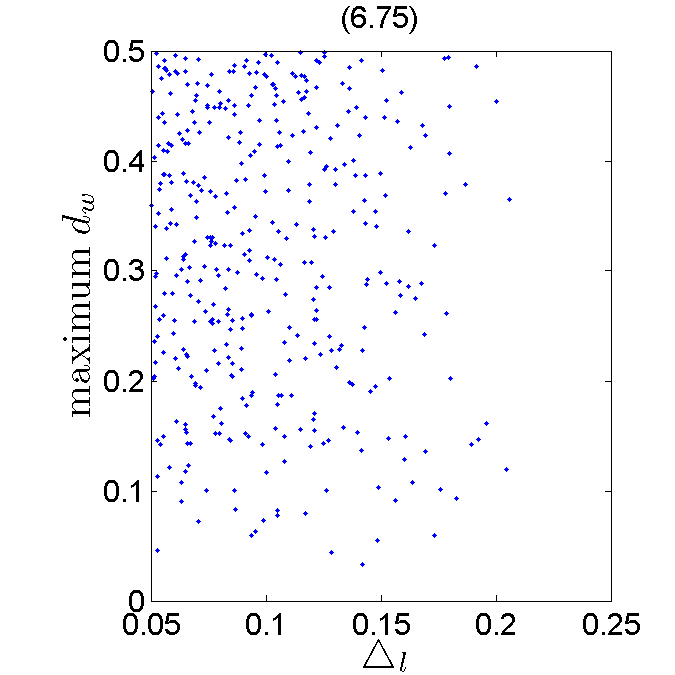} }
%
\subcaptionbox[]{$l = 3, K = 4$}[ 0.24\textwidth ]
{\includegraphics[width=0.24\textwidth]{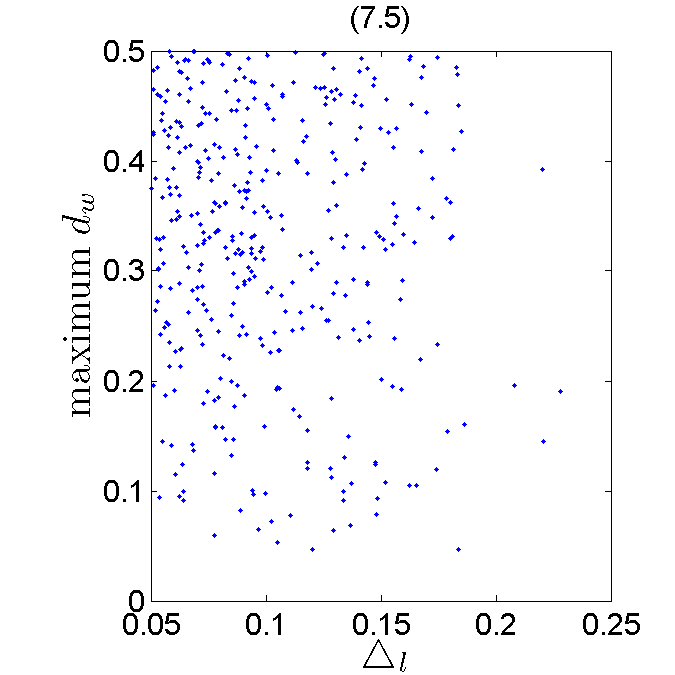} }
%
\subcaptionbox[]{$l = 4, K = 4$}[ 0.24\textwidth ]
{\includegraphics[width=0.24\textwidth]{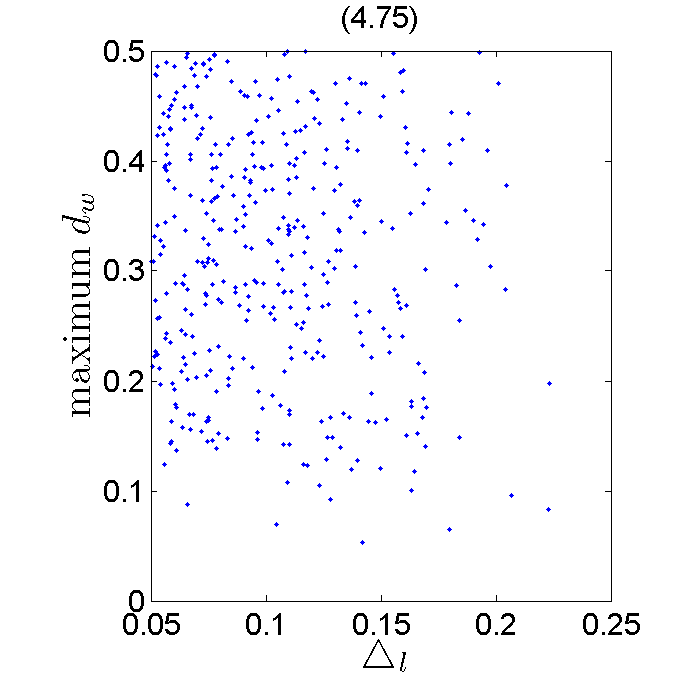} }
%

\subcaptionbox[]{$l = 1, K = 5$}[ 0.24\textwidth ]
{\includegraphics[width=0.24\textwidth]{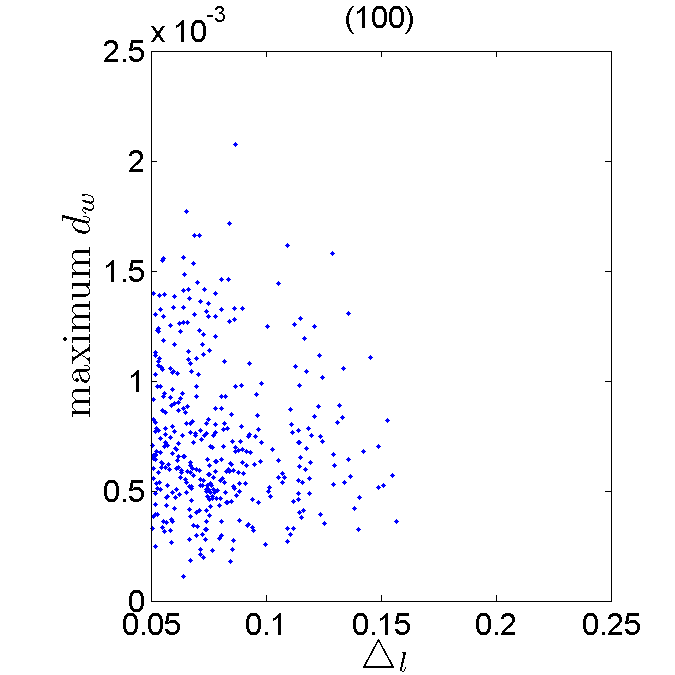} }
\subcaptionbox[]{$l = 2, K = 5$}[ 0.24\textwidth ]
{\includegraphics[width=0.24\textwidth]{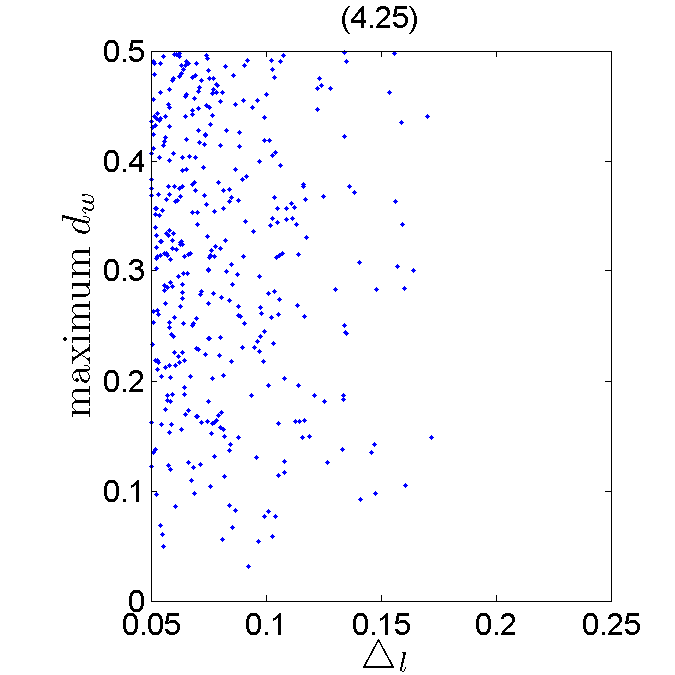} }
%
\subcaptionbox[]{$l = 3, K = 5$}[ 0.24\textwidth ]
{\includegraphics[width=0.24\textwidth]{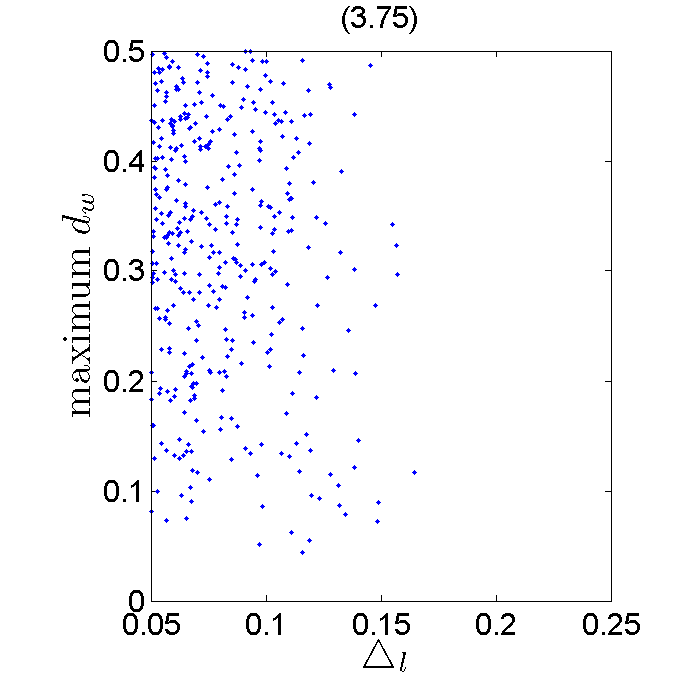} }
%
\subcaptionbox[]{$l = 4, K = 5$}[ 0.24\textwidth ]
{\includegraphics[width=0.24\textwidth]{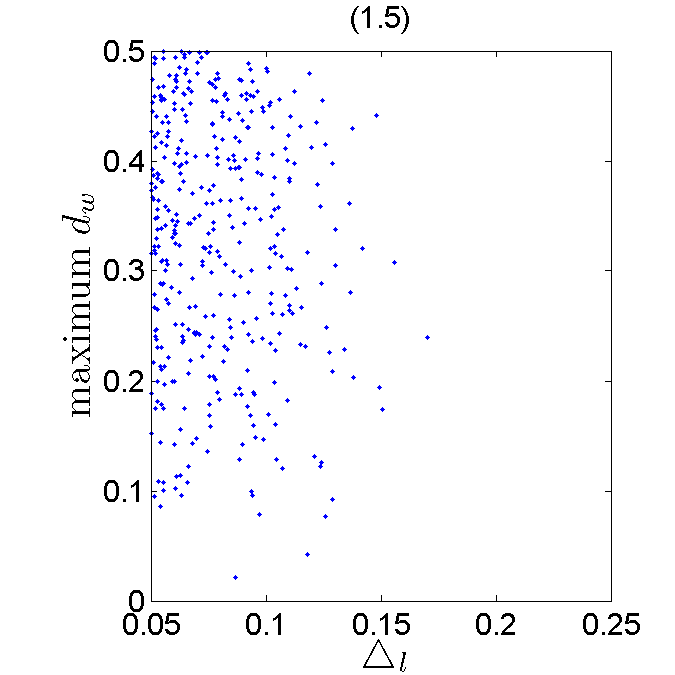} }
\captionsetup{width=0.98\linewidth}
\caption[Short Caption]{Scatter plots for maximum wrap around error ($d_{w,l,\max}$) v/s 
minimum separation ($\sep_l$) for $400$ Monte Carlo trials, 
with external Gaussian noise (standard deviation $5 \times 10^{-5}$). This is shown for $K \in \set{2,3,4,5}$ 
with $L = 4$ and $C = 1$. For each sub-plot, we mention the percentage of trials with 
$d_{w,l,\max} \leq 0.05$ in parenthesis.}
\label{fig:max_loc_err_noise_C1}
\end{figure}

%
\begin{figure}[!ht]
\centering
\subcaptionbox[]{$l = 1, K = 2$}[ 0.24\textwidth ]
{\includegraphics[width=0.24\textwidth]{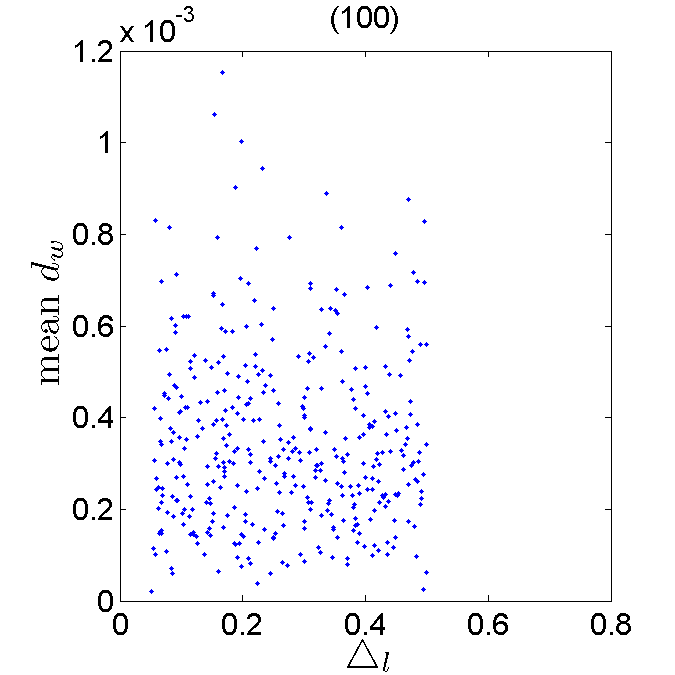} }
\subcaptionbox[]{$l = 2, K = 2$}[ 0.24\textwidth ]
{\includegraphics[width=0.24\textwidth]{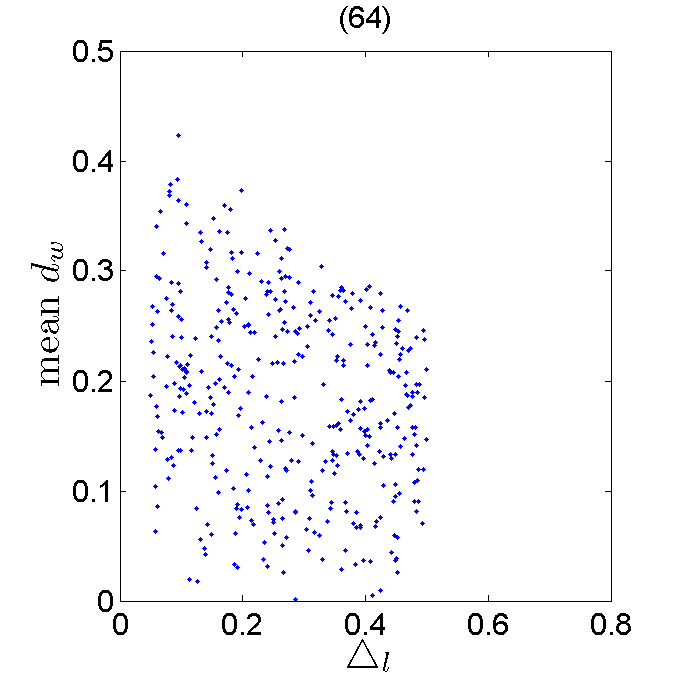} }
%
\subcaptionbox[]{$l = 3, K = 2$}[ 0.24\textwidth ]
{\includegraphics[width=0.24\textwidth]{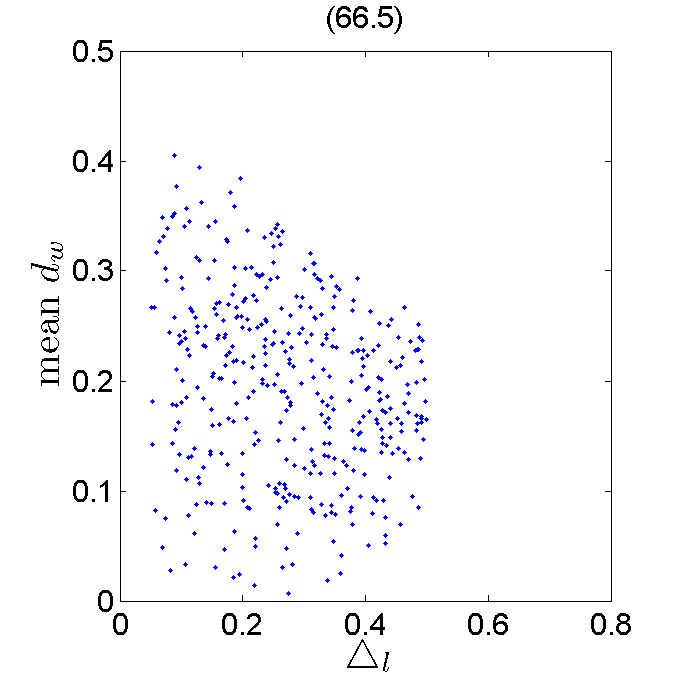} }
%
\subcaptionbox[]{$l = 4, K = 2$}[ 0.24\textwidth ]
{\includegraphics[width=0.24\textwidth]{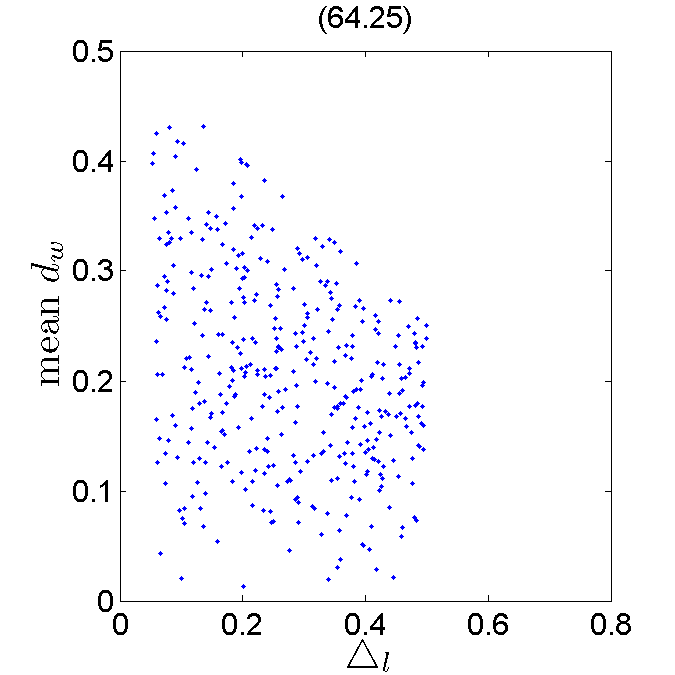} }
%

\subcaptionbox[]{$l = 1, K = 3$}[ 0.24\textwidth ]
{\includegraphics[width=0.24\textwidth]{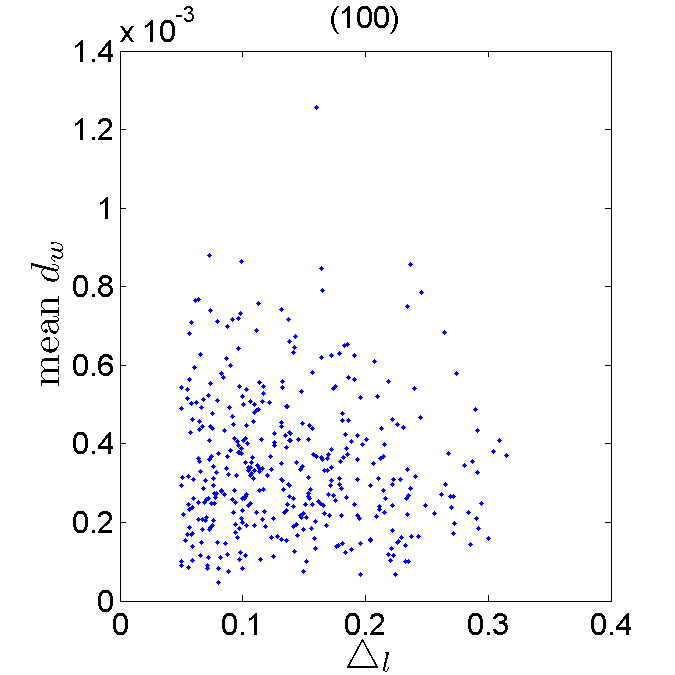} }
\subcaptionbox[]{$l = 2, K = 3$}[ 0.24\textwidth ]
{\includegraphics[width=0.24\textwidth]{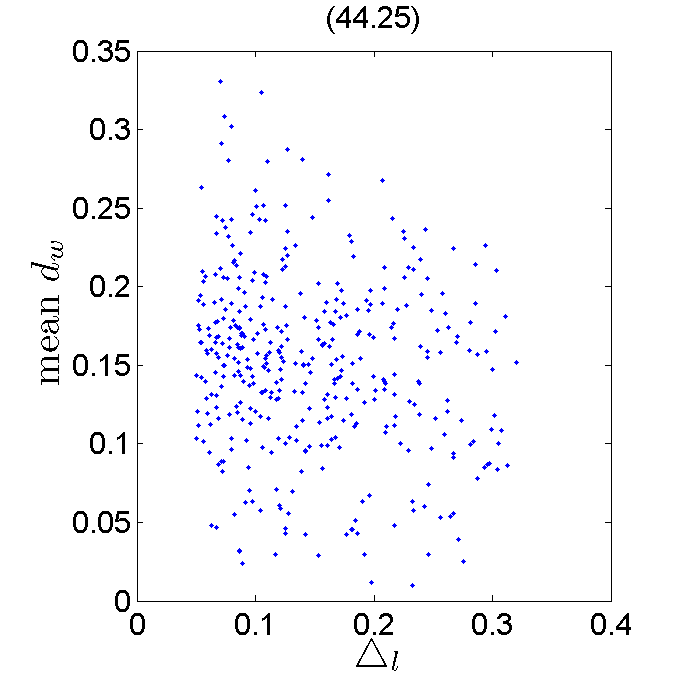} }
%
\subcaptionbox[]{$l = 3, K = 3$}[ 0.24\textwidth ]
{\includegraphics[width=0.24\textwidth]{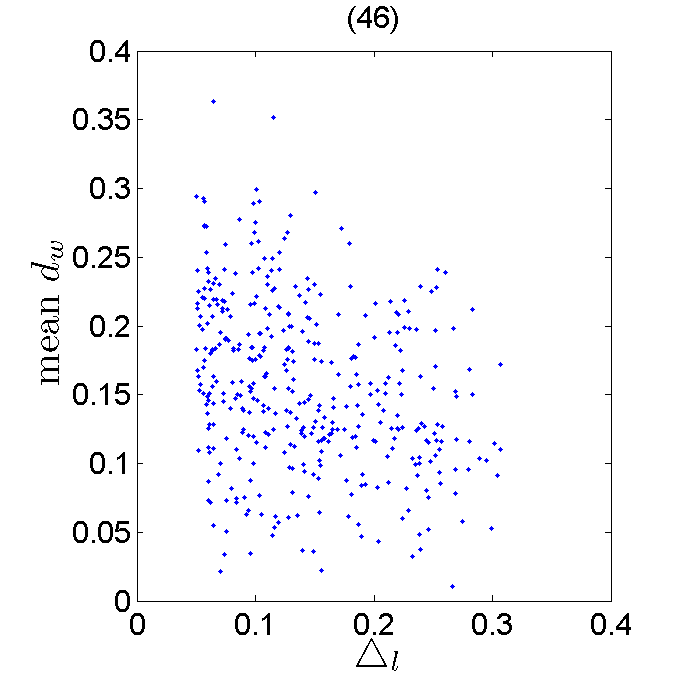} }
%
\subcaptionbox[]{$l = 4, K = 3$}[ 0.24\textwidth ]
{\includegraphics[width=0.24\textwidth]{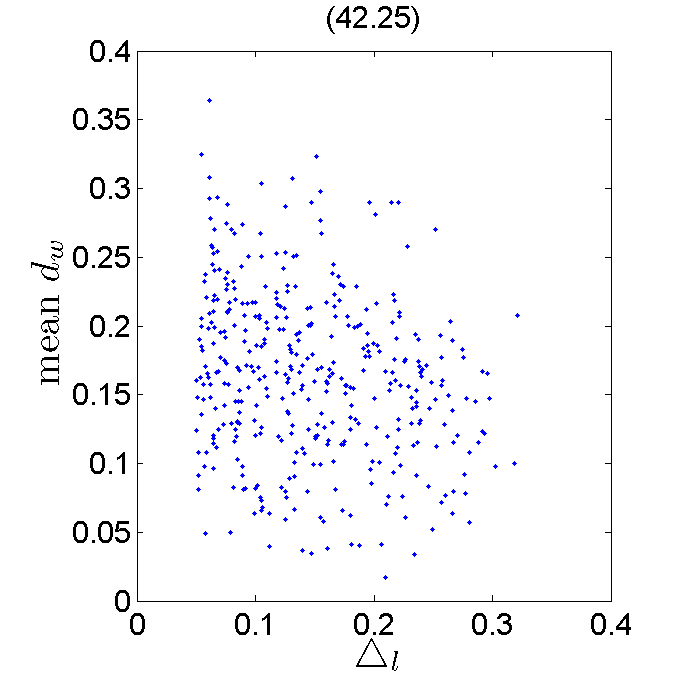} }
%

\subcaptionbox[]{$l = 1, K = 4$}[ 0.24\textwidth ]
{\includegraphics[width=0.24\textwidth]{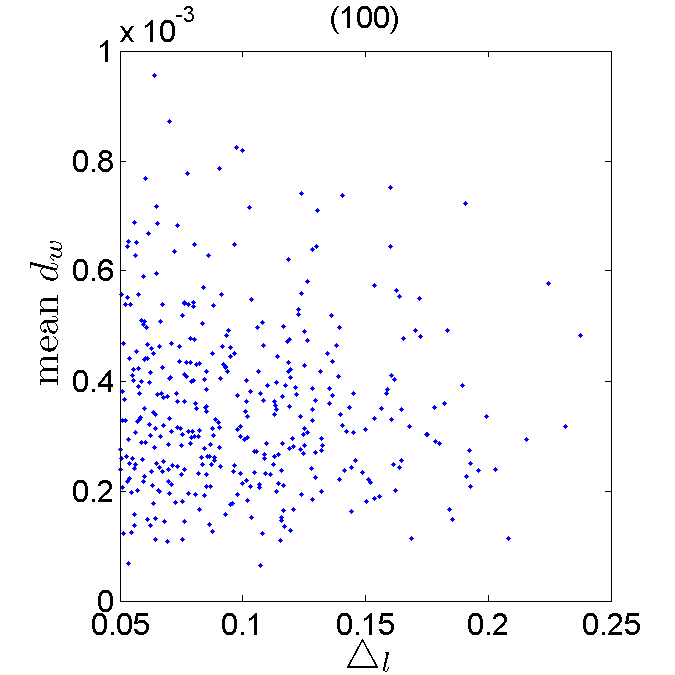} }
\subcaptionbox[]{$l = 2, K = 4$}[ 0.24\textwidth ]
{\includegraphics[width=0.24\textwidth]{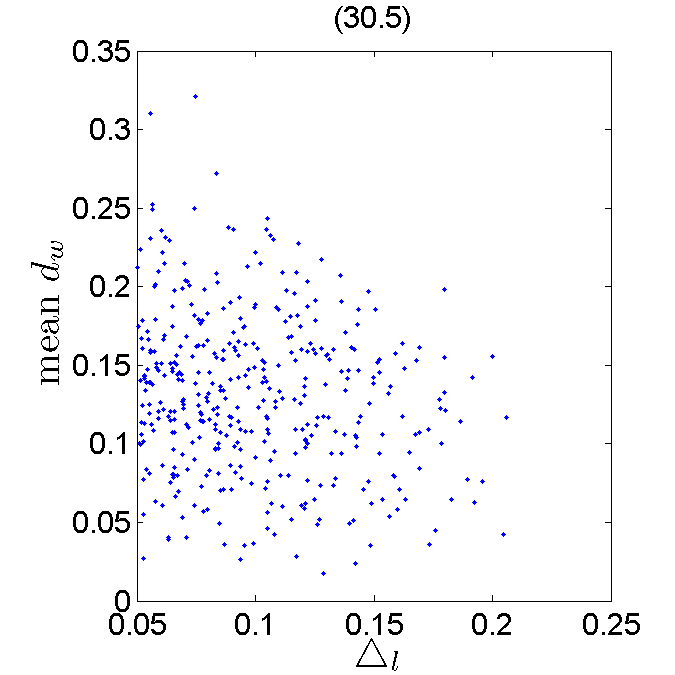} }
%
\subcaptionbox[]{$l = 3, K = 4$}[ 0.24\textwidth ]
{\includegraphics[width=0.24\textwidth]{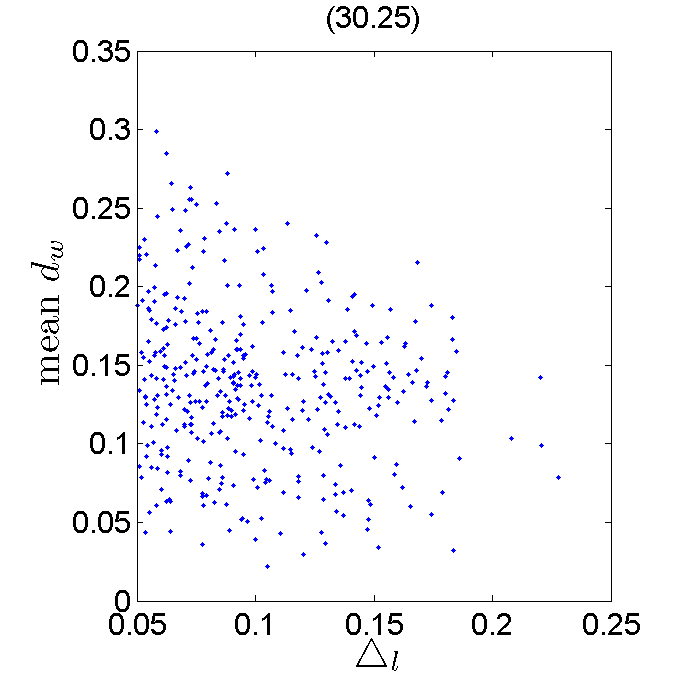} }
%
\subcaptionbox[]{$l = 4, K = 4$}[ 0.24\textwidth ]
{\includegraphics[width=0.24\textwidth]{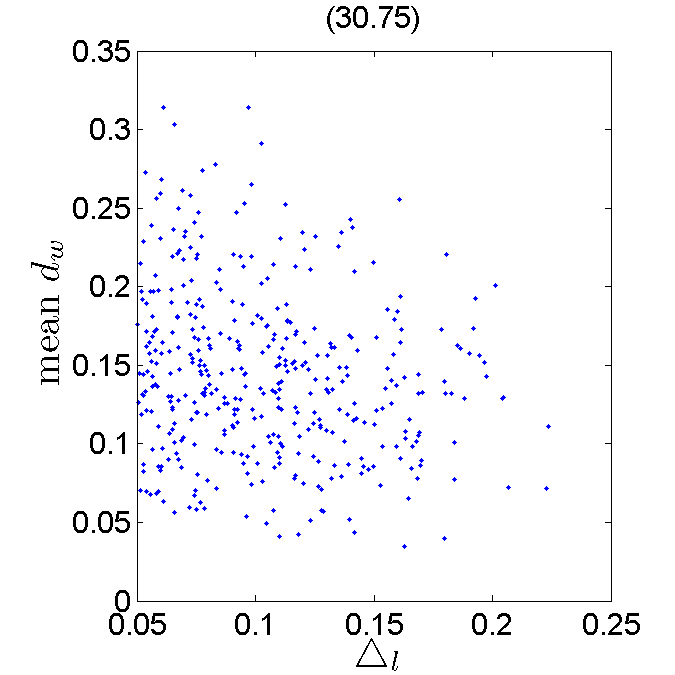} }
%

\subcaptionbox[]{$l = 1, K = 5$}[ 0.24\textwidth ]
{\includegraphics[width=0.24\textwidth]{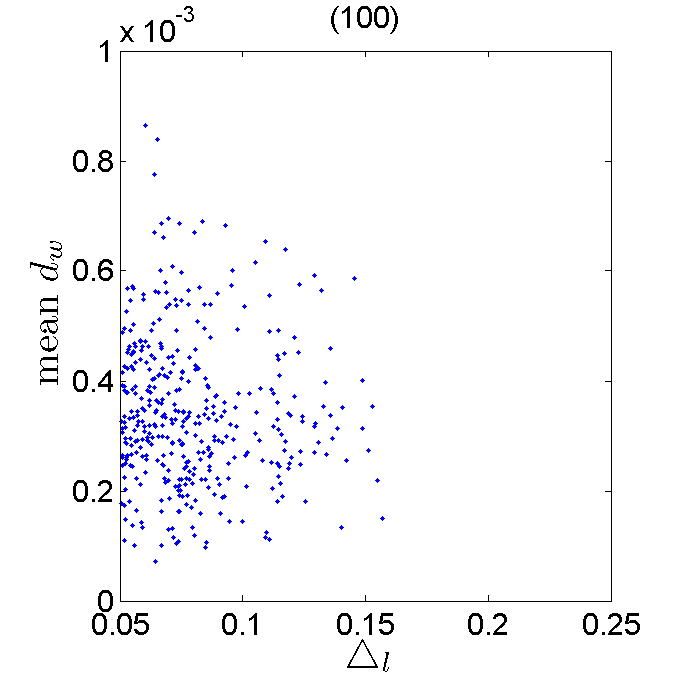} }
\subcaptionbox[]{$l = 2, K = 5$}[ 0.24\textwidth ]
{\includegraphics[width=0.24\textwidth]{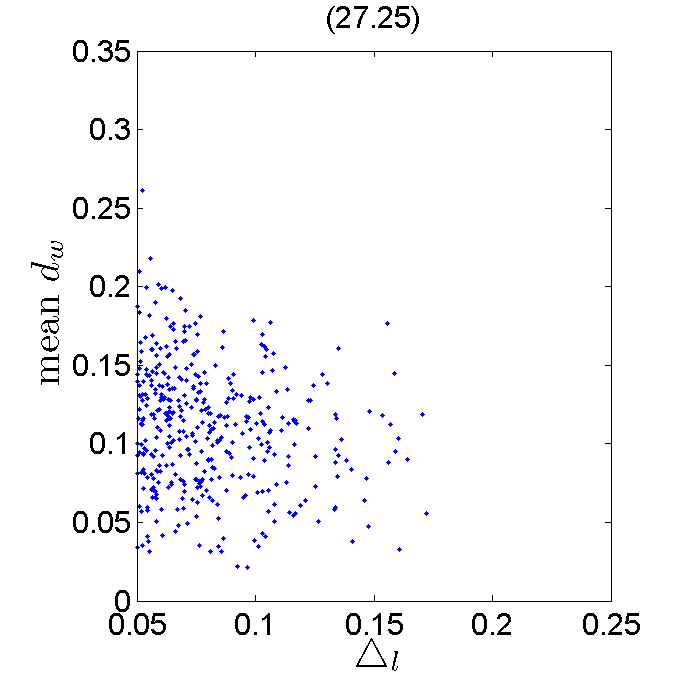} }
%
\subcaptionbox[]{$l = 3, K = 5$}[ 0.24\textwidth ]
{\includegraphics[width=0.24\textwidth]{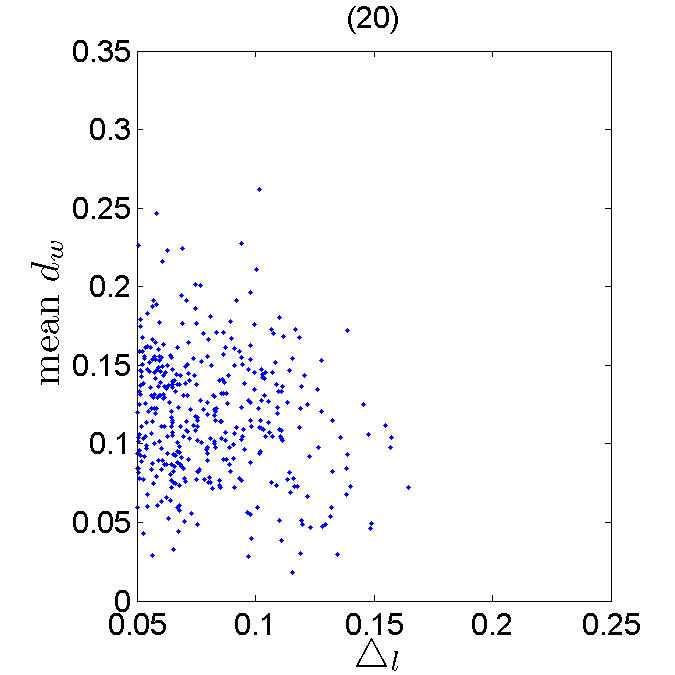} }
%
\subcaptionbox[]{$l = 4, K = 5$}[ 0.24\textwidth ]
{\includegraphics[width=0.24\textwidth]{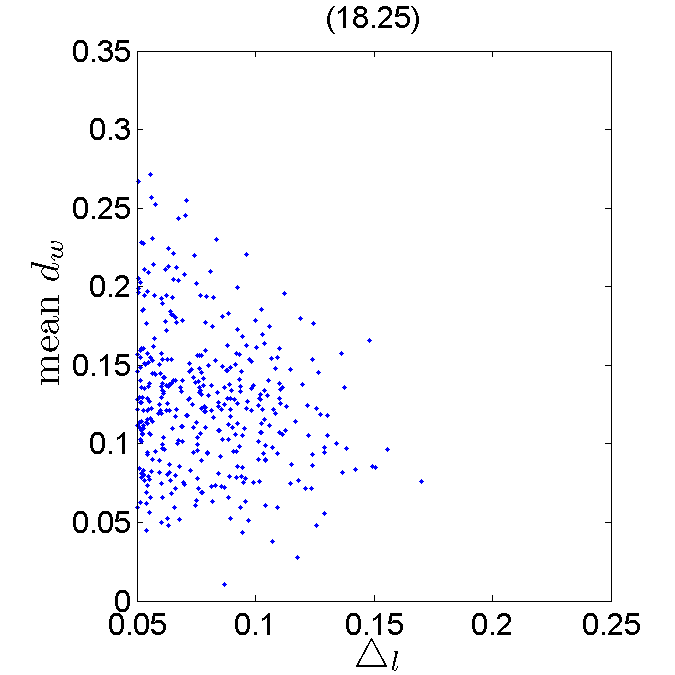} }
\captionsetup{width=0.98\linewidth}
\caption[Short Caption]{Scatter plots for mean wrap around error ($d_{w,l,\text{avg}}$) v/s minimum separation ($\sep_l$) 
for $400$ Monte Carlo trials, with external Gaussian noise (standard deviation $5 \times 10^{-5}$). 
This is shown for $K \in \set{2,3,4,5}$ with $L = 4$ and $C = 1$. For each sub-plot, we mention the percentage of trials with 
$d_{w,l,\text{avg}} \leq 0.05$ in parenthesis.}
\label{fig:mean_loc_err_noise_C1}
\end{figure}

\end{document}